\newcommand\blfootnote[1]{%
	\begingroup
	\renewcommand\thefootnote{}\footnote{#1}%
	\addtocounter{footnote}{-1}%
	\endgroup
}
\pgfplotsset{compat=newest}
\definecolor{Olive}{RGB}{128,128,0}
\tikzset{
	-Latex,auto,node distance =1 cm and 1 cm,semithick,
	state/.style ={ellipse, draw, minimum width = 0.7 cm},
	point/.style = {circle, draw, inner sep=0.04cm,fill,node contents={}},
	bidirected/.style={Latex-Latex,dashed},
	el/.style = {inner sep=2pt, align=left, sloped}
}
\newtheorem{theorem}{Theorem}
\newtheorem{lemma}{Lemma}
\newtheorem{assumption}{Assumption}
\newtheorem{remark}{Remark}
\newtheorem*{condition}{Condition}
\newenvironment{assumptionp}[1]{
  
  \assumptionalt
}{\endassumptionalt}
\begin{document}
\baselineskip 20pt
\setlength\abovedisplayskip{5pt}
    \setlength\belowdisplayskip{5pt}

\title{\bf {Difference-in-Differences with Sample Selection}}
\author{Gayani Rathnayake$^{a,b}$, Akanksha Negi$^a$, Otavio Bartalotti$^{a,c}$, Xueyan Zhao$^a$}
 
\date{First Version: August 29, 2024\\
Current Version: \today}
\maketitle
\blfootnote{$^a$Department of Econometrics and Business Statistics, Monash University, Australia. $^b$ Central Bank of Sri Lanka. $^c$IZA, Germany. Emails: \href{mailto:gayani.rathnayake@monash.edu}{gayani.rathnayake@monash.edu}, \href{mailto:akanksha.negi@monash.edu}{akanksha.negi@monash.edu}, \href{mailto:otavio.bartalotti@monash.edu}{otavio.bartalotti@monash.edu}, \href{mailto:xueyan.zhao@monash.edu}{xueyan.zhao@monash.edu}. We thank Alyssa Carlson, \`{A}ureo de Paula, Desire Kedagni, Pedro Sant'Anna, Rami Tabri, Quentin Brummet, Valentin Verdier, Vitor Possebom, Martin Huber, Giovanni Mellace, Philip Heiler, four anonymous referees, and participants at seminars at the University of Melbourne, 2025 IAAE Annual Meetings, and the 2025 Econometric Society World Congress for their useful suggestions and comments.}

\vspace{-3.5em}

\begin{abstract}
We consider the identification of average treatment effects on the treated (ATT) in difference-in-differences (DiD) settings in the presence of endogenous sample selection. We first establish that the conventional DiD estimand generally fails to recover causally meaningful treatment effects, even if selection and treatment assignment are independent. We then partially identify the ATT for individuals whose outcomes would be observed post-treatment under either counterfactual treatment state, and derive sharp bounds on this parameter under different sets of assumptions on the relationship between sample selection and treatment assignment. These identification results are extended to allow for covariates, repeated cross-section data, and two-by-two comparisons in staggered adoption designs. Furthermore, we present identification results for the ATT of three additional empirically relevant latent groups by imposing outcome mean dominance assumptions that have intuitive appeal in applications. Finally, two empirical illustrations demonstrate the approach's usefulness by revisiting (i) the effect of a job training program on earnings and (ii) the effect of a working-from-home policy on employee performance.
\end{abstract}

\noindent%
{\it Keywords: Sample selection, Partial identification, Difference-in-differences, Panel data, Heterogeneous treatment effects}

\medskip

\noindent {\em JEL Classifications: C14, C31, C33} 

\vfill

\maketitle

\section{Introduction}
Nonrandom sample selection is a pervasive challenge in empirical research that can compromise the validity of usual causal inference approaches. In many studies, the outcome of interest may only be observed for a non-random subset of the population due to issues such as attrition, survey non-response, and measurement error.\footnote{This is frequently observed in policy evaluation studies that employ a panel survey of individuals before and after a program is implemented (such as cash transfer, poverty alleviation, or a training subsidy program) to evaluate its impact \citep{holzer1993training,bobonis2011impact,asadullah2016evaluating}. Frequently, the follow-up survey will face the problem of non-ignorable attrition.} This can pose a significant problem for difference-in-differences (DiD) methods whose popularity in empirical research has grown over time \citep{goldsmith2024tracking}. In this paper, we address the challenges posed by endogenous sample selection in a DiD setting and propose a partial identification strategy for average treatment effects on the treated based on a latent subpopulation structure under alternative sets of identifying assumptions.

We first demonstrate that na\"{\i}vely applying DiD to units observed in both pre- and post-treatment periods without accounting for sample selection does \textit{not} identify a meaningful causal parameter; either the overall ATT or an adequately weighted average of treatment effects. When the goal is to recover the overall ATT, the canonical DiD estimand will generally be biased unless one imposes restrictive assumptions on untreated outcome trends and treatment effect heterogeneity that are unlikely to be true when sample selection is endogenous. Interestingly, \textit{even if the selection mechanism is independent of treatment assignment}, the bias in na\"{\i}ve DiD does not disappear unless selection is exogenous to the outcome of interest.\footnote{In Section \ref{simulation}, we demonstrate this through a simulation where the na\"{\i}ve DiD estimate exhibits an upward bias, even when the selection and treatment assignment mechanisms are independent.} 

Our first contribution is to propose a partial identification strategy for the average treatment effect on the treated (ATT) for individuals belonging to the latent group whose outcome would be observed regardless of their treatment state ($\tau_{OOO}$, with OOO referring to being ``always-observed'' in the pre-treatment period and in both the untreated and treated post-treatment counterfactual states, respectively) under different assumptions on the sample selection mechanism. The ATT for the always-observed latent group is of interest both as a component of the overall ATT and on its own. As discussed in \cite{bartalotti2023identifying}, $\tau_{OOO}$ can be seen as a measure of the effect of treatment on the intensive margin. Substantively, it captures the effect of the treatment for a stable subpopulation that can be identified based on data observed in both time periods.\footnote{For example, the OOO group could refer to (a) workers with stable labor force attachment in the context of a training program; (b) employees who would stay with a firm irrespective of whether they are offered a work-from-home option, (c) patients who would continue to receive medical care or (d) survive throughout the study period.}  Our approach combines the trimming procedure of \citet{lee2009} to address the identification challenges posed by endogenous sample selection and treatment assignment.

Identification of $\tau_{OOO}$ relies on parallel trends in outcomes (PTO) between the treated and untreated units within the same latent group, and a no-anticipation assumption. The trimming procedure requires knowledge of the latent groups' proportions \citep{imai2008sharp, lee2009, semenova2025generalized}, which we acquire by considering combinations of two assumptions that govern the relationship between selection and treatment assignment: (i) ``ignorability of treatment in potential selection'' (IS) and (ii) ``monotonicity of selection'' (MS) in treatment. IS requires that, conditional on being observed in the pre-treatment period, the probability of being observed post-treatment in each counterfactual state (treated or untreated) is independent of the treatment received. In turn, ``monotonicity of selection'' (MS) requires that treatment has an increasing effect on the probability of selection for all units in the post-treatment period.\footnote{MS is commonly used in the sample selection literature \citep{chen2015bounds, huber2015sharp} and is similar to the LATE monotonicity condition \citep{imbens1994identification}.}

We derive alternative bounds for the ATT of the OOO group under different sets of assumptions, both with and without imposing MS. Without MS, our first result establishes partial identification of the proportions of the ``always-observed'' latent group among the treated (untreated) under no anticipation in selection and IS for potential selection if untreated (treated). Then, $\tau_{OOO}$ can be partially identified if IS holds for both counterfactual treatment states, producing bounds that are general and allow flexible post-treatment selection patterns, including cases in which treatment induces individuals to enter or exit the sample. Our second result tightens the $\tau_{OOO}$ bounds by imposing MS which allows us to relax the IS assumption for one of the two counterfactual treatment states\footnote{Under positive (negative) MS, we only constrain potential selection in the untreated (treated) state, leaving selection in the treated (untreated) state unrestricted.} and point-identifies the latent group proportions. Following \cite{lee2009}, we establish the sharpness of both these bounds.

The second contribution is to extend the partial identification results for $\tau_{OOO}$ to include pre-treatment covariates, whereby we relax the unconditional PTO and IS assumptions to their conditional analogues. These two assumptions, along with alternative MS assumptions, allow us to identify the ATT for the always-observed within each covariate subpopulation. The resulting conditional bounds are then aggregated to obtain the identified set for the unconditional ATT for the OOO group. Importantly, we also consider the relaxed monotonicity framework of \cite{semenova2025generalized}, which permits groups with different observed characteristics to exhibit different monotonicity directions. This provides an intermediate case between the two baseline scenarios of no monotonicity and global monotonicity (discussed in Section \ref{identification}). Formal results and a detailed discussion are deferred to Appendix \ref{sec: covariates}. 

The third contribution of this paper is to extend the bounding approach to identify the ATT for additional latent groups on whom the researcher has limited information compared to the ``always-observed'' type. Similar to the OOO group, these latent subpopulations are defined based on their observed selection status in the pre-treatment period and their counterfactual selection behavior in the post-treatment period under both treated and untreated states. We combine basic support restrictions and cross-group mean dominance assumptions, that impose economically intuitive rankings on potential outcome means across latent types, to obtain informative identified sets for the group-specific ATTs. These parameters are policy-relevant in many empirical settings and can reveal meaningful treatment effect heterogeneity across different selection types. For example, in evaluating job training programs, policymakers may care about effects for NOO (units that were unemployed before treatment but employed post-treatment whether or not they receive training) or NNO (units who were unemployed before treatment but will be employed post-treatment if given training but not otherwise). For workplace policy evaluations (such as work-from-home (WFH) policies), firms may be interested in the effects of WFH for the ONO group (employees who are observed prior to implementation of the policy and will leave the company if WFH is not provided but would stay if it's offered). The shares of these latent groups in the population are identified based on MS and a strengthening of IS to a joint independence assumption that conditions on pre-treatment selection.  

Finally, we show that the overall ATT can be expressed as a weighted average of latent-group-specific ATTs. We obtain bounds on this parameter by combining the identified sets for each group along with appropriate population weights. This delivers partial identification for the overall population ATT, which is the typical estimand in DiD studies.  

We illustrate our approach with two empirical applications. The first evaluates the effect of the National Supported Work training program on the Aid to Families with Dependent Children sample of women \citep{lalonde1986evaluating} using the dataset from \citet{calonico2017women}. Here, we consider the sample selection problem arising from unemployment. For the second application, we evaluate the effect of a WFH policy on employee performance, considering the sample selection problem arising from employee attrition \citep{bloom2015does}. 

While the approach developed here focuses on a two-period panel, we also explore extensions to repeated cross-sections \citep{sant2026difference,abadie2005semiparametric,finkelstein2002effect, meyer1990workers} and staggered adoption DiD settings with multi-period panels \citep{callaway2021difference}. With repeated cross sections, since observations cannot be tracked across time, it becomes impossible to distinguish between individuals observed in both pre- and post-treatment periods from those observed in only one period. We develop identification results for the always-observed latent group by relying on the assumption of no compositional changes \citep{sant2026difference}. Additionally, we discuss how the current identification argument for the always-observed group can be adapted to multi-period staggered adoption settings by considering $2\times 2$ pre- and post-treatment comparisons for cohorts first treated in a specific period. These extensions are provided in appendices \ref{sec:extension_RC} and \ref{sec:extension_multi}, respectively.

This paper contributes to both the DiD and the sample selection literatures in causal inference. In panel data settings, the traditional sample selection literature has primarily focused on parametric or semiparametric models to achieve point identification of treatment effects
\citep{wooldridge1995selection,kyriazidou1997estimation,rochina1999new,semykina2010estimating}. \citet{lechner2016difference} study the implications of panel non-response in the outcome on the parallel trends assumption by comparing ordinary least squares and fixed effects estimates through simulations and applications.\footnote{They conclude that deviation of ordinary least squares and fixed effects estimation indicates nonignorable attrition.} 

Our paper directly relates to a more recent nonparametric, instrument-free strand that pursues partial identification of treatment effects using principal stratification \citep{frangakis2002principal}. Within this framework, \citet{zhang2003estimation}, \citet{zhang2008evaluating}, \citet{lee2009}, and \citet{chen2015bounds} derive bounds for the average treatment effect among always-observed units. \cite{honore2020selection} build on the trimming logic of \cite{lee2009} and obtain tighter bounds by imposing additional structure on the selection model. In contrast, we follow \cite{lee2009} in targeting treatment effects for latent principal strata and extend this framework to a DiD setting. Subsequent work has explored extensions of this framework along other directions. \citet{bartalotti2023identifying} extend this approach to marginal treatment effects for the always-observed group, while \citet{huber2015sharp} derive bounds for additional latent subpopulations that go beyond the always-observed. All these papers focus on the cross-section setting. In contrast, we incorporate pre-treatment information about selection and outcomes via IS and PTO to additionally account for the endogeneity of treatment with respect to outcome and selection. Our paper also builds on \cite{semenova2025generalized}, who incorporates pre-treatment covariates to relax the global monotonicity assumption used in \cite{lee2009} to a weaker \textit{conditional} monotonicity assumption. 
\citet{viviens2025difference} proposes an alternative relaxation of global monotonicity in settings where multiple sources of sample selection are observed and allows the direction of monotonicity to vary across sources.\footnote{\cite{viviens2025difference} provides an example in which students' outcomes are missing because they (i) dropped out of college or (ii) graduated from college. Obtaining a scholarship (treatment) would affect each source of missingness in a different monotone direction.}

A closely related work is \citet{ghanem2024correcting}, which studies attrition using the changes-in-changes (CiC) approach of \cite{athey2006identification} and point identifies treatment effects. Their approach relies on the assumption that the distribution of unobservables affecting outcomes remains stable over time within each treatment-response subgroup and that potential outcomes are strictly monotone in unobserved heterogeneity. We impose weaker restrictions on outcome dynamics and instead leverage restrictions on the selection mechanism to deliver bounds that are robust to a wider class of outcome heterogeneity and more general forms of sample selection beyond just follow-up non-response. Our approach does not require monotonicity between outcomes and unobservables and also delivers bounds for group-specific treatment effects across latent selection types, in addition to bounds for the overall ATT. \cite{viviens2025difference} also studies partial identification of the average and quantile treatment effects in a CiC framework that could be specialized to DiD under the more restrictive conditions needed for CiC. His approach imposes an absorbing-state restriction on missingness under which units not observed at baseline remain permanently unobserved. This restriction limits \citet{viviens2025difference}'s analysis to just four latent groups. His identification results for the always-observed subgroup, including the mixing proportions and bounds, coincide with ours for the monotonicity case. In contrast, we treat baseline non-observability as an integral part of the selection problem, allowing for a richer principal-strata structure that enables the study of additional latent groups. We also incorporate covariates and develop extensions beyond the canonical two-by-two framework.

Concurrently\footnote{We were only made aware after finishing our first draft paper that \citet{shin2024difference} also independently studies the same setting.}, \citet{shin2024difference} also studies missing outcomes in a DiD framework and partially identifies the ATT for the always-observed group using the trimming procedure by \citet{zhang2003estimation} and \citet{lee2009}. Similar to our approach, she considers identification with and without monotonicity of selection. Once Shin’s implicit conditioning on baseline observability is made explicit, her selection assumptions are equivalent to our IS assumption and the identified mixing proportions under each scenario are also equivalent. This implies that both approaches produce the same bounds for the always-observed group. The key difference lies in scope. Our framework allows for baseline non-observability, explicitly models all principal strata, and develops identification results for additional latent groups (ONO, NON, and NOO). In that sense, Shin (2024) can be viewed as a special case of our more general framework. We additionally extend identification for the always-observed group to settings with covariates and relaxed conditional monotonicity assumption in the spirit of \cite{semenova2025generalized}. We further extend the framework to repeated cross-section and staggered adoption settings. These features are not considered in \citet{shin2024difference}. Instead, the latter pursues point identification of the overall ATT using instrumental variables, whereas we bound the overall ATT without instruments.

The remainder of this article is organized as follows. Section \ref{model} introduces the principal stratification framework and identifying assumptions. Section \ref{Canonicalbias} discusses what na\"{\i}ve DiD identifies when sample selection is ignored. Section \ref{identification} develops the identification strategy for the always-observed latent subgroup and derives ATT bounds for this group both with and without the MS assumption. It also presents the corresponding identification extension that incorporates covariates. Section \ref{identification_other} presents results for three additional latent groups under outcome mean dominance assumptions, while Section \ref{estimation} discusses estimation and inference of the proposed bounds. Section \ref{simulation} presents simulation evidence under a range of data-generating processes. Section \ref{application} provides two empirical illustrations, and Section \ref{conclusion} concludes. Additional identification results, extensions, and technical proofs are collected in the appendices. 

\section{Model Framework}\label{model}
Consider a setting with two time periods denoted by $t=0, 1$. Treatment, $D_{t}$, is available only at period $t=1$, such that $D_{0}=0$ for everyone and $D_1 \equiv D$. For each unit, let $Y_{t}^{\ast}(0)$ and $Y_{t}^{\ast}(1)$ be two continuous latent potential outcomes and $Y_{t}^{\ast} = Y_{t}^{\ast}(0)\cdot(1-D) + Y_{t}^{\ast}(1)\cdot D$ be the realized outcome, which is only observed for a non-random subset of the population. To formalize this, let $S_{t}(0)$ and $S_{t}(1)$ be two potential binary selection indicators such that
\begin{equation}\label{s}
    S_{t} = S_{t}(0)\cdot(1-D) + S_{t}(1)\cdot D
\end{equation} and the researcher observes the data vector $(Y_{t},S_{t},D)$ where
\begin{equation}
    Y_{t} = S_{t}\cdot Y_{t}^{\ast} \label{y}
\end{equation} 
and $S_{t}\in {\{1,0}\}$ is the realized selection indicator, which equals one if the outcome for a unit is observed in period `$t$' and zero otherwise. For example, those with $S_{t}(0)=0$ and $S_{t}(1)=1$ are individuals for whom the outcome would not be observed if they are untreated but would be observed if treated.

\begin{assumption}[No anticipation on selection and outcome]\label{no anti} \ 
\begin{equation*}
    S_{0}=S_{0}(0)= S_{0}(1) \ \text{ and } \ Y^\ast_{0}=Y^\ast_{0}(0)= Y^\ast_{0}(1). 
\end{equation*}
\end{assumption}
Assumption \ref{no anti} formalizes the no-anticipation assumptions on selection and potential outcomes in the pre-treatment period. It states that there can be no anticipatory effects of the treatment assignment on sample selection or on the latent potential outcomes at baseline. This is plausible in situations where the treatment is not announced in advance, thereby discouraging individuals from basing their decision to be observed in the sample on whether they will receive the treatment in the future.

We consider the principal stratification framework introduced by \citet{frangakis2002principal} to divide the population into latent subgroups based on the potential sample selection indicators in both periods. This results in sixteen groups, which can be reduced to the eight groups presented in Table \ref{latentgroups} since  $S_{0}(0) = S_{0}(1)$ (Assumption \ref{no anti}).\footnote{Following the nomenclature used in \citet{lee2009}, \citet{huber2015sharp} and \citet{bartalotti2023identifying}, we use ``O'' and ``N'' to denote observed and not observed, respectively.} Let $G$ denote the principal strata or latent group to which a unit belongs, with `g' denoting the group denomination. 

\begin{table}[!h]
\caption{Latent groups based on the sample selection}
    \label{latentgroups}
     \begin{adjustbox}{center=\textwidth}
\begin{tabular}{cccc}
\hline
\multicolumn{1}{l}{\textbf{$S_{0}(0)$}} & \multicolumn{1}{l}{\textbf{$S_{1}(0)$}} & \multicolumn{1}{l}{\textbf{$S_{1}(1)$}} & \multicolumn{1}{l}{$G=g$} \\ \hline
 0                                         & 0                                         & 0                                         & NNN                                      \\
 0                                         & 0                                         & 1                                         & NNO                                      \\
 0                                         & 1                                         & 0                                         & NON                                      \\
 0                                         & 1                                         & 1                                         & NOO                                      \\
 1                                         & 0                                         & 0                                         & ONN                                      \\
 1                                         & 0                                         & 1                                         & ONO                                      \\
 1                                         & 1                                         & 0                                         & OON                                      \\
 1                                         & 1                                         & 1                                         & OOO                             \\ \hline        
\end{tabular}
\end{adjustbox}
\end{table}

Following \citet{lee2009}, we define our target parameter to be the ATT for the subpopulation that is always observed, denoted by OOO, and indicates that selection equals one in all periods and under both counterfactual treatment states. Formally,
\begin{equation}\label{att_ooo}
    \tau_{OOO}=\mathbbm{E}[Y_{1}^{\ast}(1)-Y_{1}^{\ast}(0)|D=1,S_{0}(0)=1,S_{1}(0)=1,S_{1}(1)=1].
\end{equation}
For $\tau_{OOO}$, we require a less restrictive version of parallel trends in outcomes that applies only to the OOO group.
\begin{assumption}[Parallel trends for the OOO group]\label{PT_OOO} \
\begin{equation}\label{stag}
    \mathbbm{E}[Y_{1}^{\ast}(0)-Y_{0}^{\ast}(0)|D=1, OOO]=\mathbbm{E} [Y_{1}^{\ast}(0)-Y_{0}^{\ast}(0)|D=0, OOO]. 
    \end{equation} 
\end{assumption}
Assumption \ref{PT_OOO} states that the changes in the potential outcomes for always-observed (OOO) individuals, in the absence of treatment, would have been the same across the treatment and control groups. This assumption is weaker than requiring parallel trends for the full population of treated and control units, since that also includes other latent types beyond the OOO group. If additional pre-treatment periods are available, one can estimate a placebo DiD using only pre-treatment data. As shown in Appendix \ref{pretrend}, a non-zero estimate may reflect violations of parallel trends within the OOO or ONO group, cross-group trend differentials across latent groups, or any joint combination thereof. Therefore, such a test cannot isolate or falsify the plausibility of parallel trends for the always-observed group alone.\footnote{We thank an anonymous referee for suggesting this approach.} 

While not required for the most general results in Section \ref{sec:ident_nomon}, we consider a monotonicity assumption that is widely used in the literature \citep{lee2009,huber2015sharp,chen2015bounds,bartalotti2023identifying}, which requires that the treatment affect sample selection in only one direction. 
\begin{assumption}[Positive monotone sample selection]\label{monotone}\ 
	\begin{align*}
		\mathbbm{P}[S_{1}(1)\geq S_{1}(0)]&=1. 
	\end{align*}
\end{assumption}
Without loss of generality, Assumption \ref{monotone} assumes that treatment increases the probability of selection or has a non-decreasing effect on sample selection for all individuals. Positive selection implies that there are no individuals whose outcome is observed only when untreated. For example, attending the job training program cannot decrease any individual's employment probability and, thus, does not decrease his/her chance of being observed. Assumption \ref{monotone} rules out the strata NON and OON, that is, individuals that would be observed in period one if untreated but not if treated.\footnote{Symmetric results can be derived under negative monotonicity, which are discussed in Appendix \ref{neg_mono}.} We also assume that our setup has no spillovers and hidden treatment variations. In other words, we assume that the stable unit treatment value assumption holds. 

Finally, we impose the following restriction on the relationship between the potential selection mechanism and treatment assignment.

\begin{assumption}[Ignorability of treatment in potential selection]\label{Partialselection}
\ \\
     (a) Equality in the untreated counterfactual share of individuals observed in period 1 conditional being observed in period 0: 
	\begin{align}
    \mathbbm{P}[S_{1}(0)=1|D=0, S_0=1]=\mathbbm{P}[S_{1}(0)=1|D=1, S_{0}=1].
    \end{align}
    (b) Equality in the treated counterfactual share of individuals observed in period 1 conditional on being observed in period 0:
	\begin{align}
    \mathbbm{P}[S_{1}(1)=1|D=1, S_{0}=1]=\mathbbm{P}[S_{1}(1)=1|D=0, S_{0}=1].
    \end{align}
\end{assumption}
Each part of Assumption \ref{Partialselection} imposes that the counterfactual proportion of individuals observed in the post-treatment period among those observed in the pre-treatment period be the same across the two treatment groups. As we discuss in Section \ref{identification}, under positive (negative) monotonicity, $\tau_{OOO}$ is partially identified under \ref{Partialselection}(a) (\ref{Partialselection}(b)), which restricts only the selection behavior in the untreated (treated) counterfactual.

Assumption 4 is plausible when the unobservables affecting sample selection in the post-treatment period are not systematically related to factors affecting the decision to select into treatment, once we condition on baseline observability. For example, in the WFH application, Assumption 4 is plausible when eligibility is determined based on factors that are not related to employees’ latent retention risk, conditional on being observed at baseline. It is less plausible when WFH is granted selectively based on managerial judgments of burnout risk, outside offers, or other unobserved predictors of attrition.

Note that this assumption only focuses on post-treatment selection behavior of individuals observed in the pre-treatment period ($S_0=1$) but not those who are unobserved in the pre-treatment period ($S_0=0$). Assumption \ref{Partialselection} can also be viewed as selection on lagged outcomes, and its connection to parallel-trends-type restrictions has been noted in \cite{ding2019bracketing}. However, parallel trends for binary outcomes are known to be very restrictive (see \cite{marx2024parallel} and \cite{ghanem2022selection}).

Although Assumption \ref{Partialselection} is inherently untestable, since the relevant counterfactual selection rates across treatment arms (for example, $\mathbbm{P}(S_{1}(1)=1\mid D=0,S_{0}=1)$ and $\mathbbm{P}(S_{1}(0)=1\mid D=1,S_{0}=1)$) are not simultaneously observed, its plausibility can be partially assessed by examining whether pre-treatment selection rates are similar across treatment and control groups. Any significant pre-treatment differences may indicate that the assumption may be less tenable in the post-treatment period. Formally, the null hypothesis corresponding to Assumption 4(a) can be written as
\begin{equation}\label{eq:pretrends_s}
	H_0: \mathbbm{P}\left(S_{0}(0)=1 \mid D=0, S_{-1}=1\right)-\mathbbm{P}\left(S_{0}(0)=1 \mid D=1, S_{-1}=1\right)= 0.
\end{equation}
Both probabilities in \eqref{eq:pretrends_s} are identified from observed data under no-anticipation  for selection in the pre-treatment period. A separate empirical test for Assumption 4(b) is not feasible because, under the same no anticipation in selection assumption,
\begin{equation*}
	\mathbbm{P}(S_0(1)=1\mid D=d, S_{-1}=1)=\mathbbm{P}(S_0(0)=1\mid D=d, S_{-1}=1)\quad \text{for } d\in\{0,1\}.
\end{equation*}	
As a result, testing equality of treated counterfactual selection probabilities across treatment groups produces the same restriction as in \eqref{eq:pretrends_s}. Hence, Assumptions 4(a) and 4(b) imply a testable implication that is identical in the pre-treatment period. 

\begin{assumption}[Random Sampling]\label{rs} \ \\
Assume that $\{(Y_{i0}, Y_{i1}, D_i, S_{i0}, S_{i1}); \ i=1,\ldots,N\}$ are i.i.d draws from an infinite population.
\end{assumption}
Our notion of sample selection is different from the idea of compositional changes that is discussed in \cite{hong2013measuring} and \cite{sant2026difference}. In those papers, selection arises from comparing independently drawn random samples from different time periods (pre- and post-treatment).  This can lead to compositional changes, i.e., the joint distribution of covariates and treatment assignment can vary over time due to sample randomness, creating a ``non-stationarity'' problem that, if unaddressed, produces incorrect estimands for the ATT of interest due to the heterogeneity of the treatment effect. On the other hand, we consider a panel data setting where the same individuals are sampled in both periods. Therefore, sampling for $Y_{i0}$ and $Y_{i1}$ in our framework is stationary and have no compositional changes. Any compositional changes in the observed outcomes can only arise because of outcomes being non-randomly observed due to endogenous selection, which we explicitly model.


\section{What does DiD identify if we ignore sample selection?}\label{Canonicalbias}
Before we delve into identification of $\tau_{OOO}$, it's useful to understand what a na\"{\i}ve DiD estimand that ignores the problem of sample selection identifies. We denote this as $\tau_{\textup{DiDs}}$. With panel data, $\tau_{\textup{DiDs}}$ compares average outcomes over time between the treated and control groups for individuals that are observed in both periods $(S_{0}=1, S_{1}=1)$. Lemma \ref{lemma_dids} shows that $\tau_{\textup{DiDs}}$ 
is biased for the overall ATT, defined as $\tau \equiv \mathbbm{E}[Y_1^*(1) - Y_1^*(0) |D = 1]$.

\begin{lemma}[Bias of  $\tau_{\textup{DiDs}}$]\label{lemma_dids} Under Assumptions \ref{no anti} and \ref{PT_OOO}, the DiD estimand for the observed group in a two-period panel, $\tau_{\textup{DiDs}} \equiv \mathbbm{E}[Y_1-Y_0|D=1, S_0=1, S_1=1] - \mathbbm{E}[Y_1-Y_0|D=0,S_0=1, S_1=1]$, can be decomposed as:
\begin{enumerate}
    \item Relative to the overall ATT, $\tau$.
    \begin{equation*}
	\tau_{\mathrm{DiDs}} = \tau + \Delta_0 + \Delta_{\textup{Het}}
\end{equation*}
where the bias components are,
\begin{align*}
	\Delta_0 \ \equiv \ & \mathbbm{E}[Y_1^\ast(0) - Y_0^\ast(0) |D = 1, S_0 = 1, S_1 = 1] \\
	& - \mathbbm{E}[Y_1^\ast(0) - Y_0^\ast(0) |D = 0, S_0 = 1, S_1 = 1] \\
	\Delta_{\textup{Het}} \ \equiv \ & (\tau_{11} - \tau_{10}) \cdot p_{10} + (\tau_{11} - \tau_{01}) \cdot p_{01} + (\tau_{11} - \tau_{00}) \cdot p_{00}
\end{align*}
where \( \tau_{s_0 s_1} = \mathbbm{E}[Y_1^\ast(1) - Y_1^\ast(0) |D = 1, S_0 = s_0, S_1 = s_1] \) is the treatment effect for the treated among the \( (S_0=s_0, S_1= s_1) \) subpopulation for each $s_t \in \{0,1\}$; $t=0,1$, and \( p_{s_0 s_1} = \mathbbm{P}(S_0 = s_0, S_1 = s_1 |D = 1) \) is the proportion of units belonging to $S_0=s_0, S_1 = s_1$  subpopulation among the treated.
    \item Relative to the ATT for the latent groups.
    \begin{itemize}
        \item[2(a).] Without monotonicity:  
    {\small \begin{align*}
       \tau_{\textup{DiDs}} & = \tau_{OOO}\cdot  p_{OOO1} +\tau_{ONO}\cdot(1-p_{OOO1}) \\ 
        & + \big\{\mathbbm{E}[Y_1^\ast(0)-Y_0^\ast(0)|D=1, ONO] -\mathbbm{E}[Y_1^\ast(0)-Y_0^\ast(0)|D=0, ONO] \big\}\cdot (1-p_{OOO1}) \\
        & +\big\{ \mathbbm{E}[Y_1^\ast(0)-Y_0^\ast(0)|D=0, OOO] - \mathbbm{E}[Y_1^\ast(0)-Y_0^\ast(0)|D=0, ONO]\big\}\cdot p_{OOO1} \\
        & + \big\{\mathbbm{E}[Y_1^\ast(0)-Y_0^\ast(0)|D=0, ONO] - \mathbbm{E}[Y_1^\ast(0)-Y_0^\ast(0)|D=0, OON]\big\} \\
        & + \big\{\mathbbm{E}[Y_1^\ast(0)-Y_0^\ast(0)|D=0, OON] - \mathbbm{E}[Y_1^\ast(0)-Y_0^\ast(0)|D=0, OOO]\big\}\cdot p_{OOO0}.
    \end{align*}}
    \item[2(b).] With positive monotonicity (Assumption \ref{monotone})
    {\small \begin{align*}
        \tau_{\textup{DiDs}} &=p_{OOO1}\cdot \tau_{OOO}+(1-p_{OOO1})\cdot \tau_{ONO}+(1-p_{OOO1})\cdot\bigg\{\bigg(\mathbbm{E}[Y_{1}^{\ast}(0)-Y_{0}^{\ast}(0)|D=1, ONO] \\
        &-\mathbbm{E}[Y_{1}^{\ast}(0)-Y_{0}^{\ast}(0)|D=0, ONO]\bigg) + \bigg(\mathbbm{E}[Y_1^\ast(0)-Y_0^\ast(0)|D=0, ONO] \\
        &-\mathbbm{E}[Y_1^\ast(0)-Y_0^\ast(0)|D=0, OOO]\bigg) \bigg\}  
    \end{align*}}
   \end{itemize}
  where  $p_{OOO1}=\sfrac{\pi_{OOO1}}{(\pi_{OOO1}+\pi_{ONO1})}$, $p_{OOO0}=\sfrac{\pi_{OOO0}}{(\pi_{OOO0}+\pi_{OON0})}$, and $\pi_{gd}=\mathbbm{P}[G=g, D=d]  = \mathbbm{P}[S_0(0)=s,S_1(0)=s',S_1(1)=s'',D=d]$.
  \end{enumerate}
\end{lemma}
The proof is presented in Appendix \ref{L1}.
The first part of Lemma \ref{lemma_dids} shows that $\tau_{\textup{DiDs}}$ is biased for the overall ATT. The bias components are given by \( \Delta_0 \) and \(\Delta_{\text{Het}}\). The term \( \Delta_0 \) represents differential trends in untreated potential outcomes among treated and untreated groups that are observed in both periods. The second bias term \( \Delta_{\text{Het}} \) captures treatment effect heterogeneity in the ATTs across the $S_0=s_0, S_1=s_1$ subpopulations. It becomes clear from this bias expression that assuming trends in untreated potential outcomes to be parallel between the treated and untreated i.e. \(\mathbbm{E}[Y_{1}^\ast(0)-Y_{0}^\ast(0)|D=1, S_0=1, S_1=1]=\mathbbm{E} [Y_{1}^\ast(0)-Y_{0}^\ast(0)|D=0, S_0=1, S_1=1] \), would eliminate $\Delta_0$. However, this would implicitly assume that selection is exogenous with respect to the untreated potential outcome trends, which is misguided in the current framework of endogenous sample selection. Even then, one would still be left with the bias term $\Delta_{\text{Het}}$. 

The second part of Lemma \ref{lemma_dids} establishes that the na\"{\i}ve DiD estimand could also be represented as a weighted average of ATTs for the always-observed (OOO) and the observed-only-when-treated (ONO) subgroups, with weights given by their respective proportions in the treated population, $p_{OOO1}$ and $1-p_{OOO1}$, plus additional terms representing selection bias. These terms include (i) the difference in trends in the absence of treatment between the treated and untreated ONO group and (ii) the cross-group differential in untreated potential outcomes trends across the different latent groups (i.e. ONO, OOO, OON) in the untreated subpopulation.
If we assume parallel trends for the ONO group, the first bias term disappears, and we are left with heterogeneity in untreated potential outcome trends between the three latent groups. Naturally, this implies that if there is no heterogeneity in trends between these groups, i.e. $\mathbbm{E}[Y_1^\ast(0)-Y_0^\ast(0)|D=0, OOO] = \mathbbm{E}[Y_1^\ast(0)-Y_0^\ast(0)|D=0, ONO] = \mathbbm{E}[Y_1^\ast(0)-Y_0^\ast(0)|D=0, OON]$, then $\tau_{\textup{DiDs}}$ has a causal interpretation as a weighted average of group-specific ATTs.

The bias of the na\"{\i}ve DiD simplifies when one imposes positive monotonicity (Assumption \ref{monotone}). In this case, the OON latent stratum disappears from the $D=0$ group and $p_{OOO0}=1$. Just as before, the bias now arises from (i) the difference in trends in the absence of treatment between the treated and untreated ONO group ($\mathbbm{E}[Y_1^\ast(0)-Y_0^\ast(0)|D=1, ONO] - \mathbbm{E}[Y_1^\ast(0)-Y_0^\ast(0)|D=0, ONO]$) and (ii) the difference in untreated potential outcome trends between the OOO and ONO groups ($\mathbbm{E}[Y_1^\ast(0)-Y_0^\ast(0)|D=0, ONO] - \mathbbm{E}[Y_1^\ast(0)-Y_0^\ast(0)|D=0, OOO]$). This implies that even if we assume parallel trends for the ONO group, the DiD estimand will still be biased on account of differential trends between the untreated OOO and ONO groups. The extent of this bias depends on the share of ONO among those assigned treatment and the heterogeneity in average trends in the untreated potential outcomes between the ONO and OOO. A larger share of OOO will cause $\tau_{\textup{DiDs}}$ to be primarily influenced by $\tau_{OOO}$, thereby reducing the impact of the cross-group differences in trends, whereas a larger share of ONO will cause $\tau_{\textup{DiDs}}$ to be primarily influenced by $\tau_{ONO}$ while amplifying the cross-group difference in trends. Notice that even if the selection mechanism is completely independent of the treatment assignment process, na\"{\i}ve DiD would still be biased since selection might still be endogenous to the outcome of interest. If $S_1(0)$ is exogenous to the trends in the untreated potential outcome, then the bias in na\"{\i}ve DiD would disappear and it would give us $p_{OOO1}\cdot \tau_{OOO}+(1-p_{OOO1})\cdot \tau_{ONO}$. 

It is important to note that a value of $p_{OOO1}$ close to one is informative but not sufficient for assessing the overall validity of the na\"{\i}ve DiD estimand in the presence of endogenous sample selection. This is because $p_{OOO1}\rightarrow 1$ indicates that the treated units observed in both periods are composed almost entirely of the OOO units, which eliminates selection bias for this group. As shown in Lemma \ref{lemma_dids}(2b), under positive monotonicity, this is enough to ensure that the na\"{\i}ve DiD estimand recovers $\tau_{OOO}$, because the comparison group is also composed solely of OOO units. However, without monotonicity, the untreated group observed in both periods may still contain a mixture of OOO and OON units, and cross-group differences in untreated potential outcome trends can still generate bias even if $p_{OOO1} \approx 1$.

\section{Identification of ATT for OOO} \label{identification}
This section presents alternative conditions under which we can identify the parameter of interest, $\tau_{OOO}$. First, we discuss identifying the difference in the expected potential outcomes for latent groups. The identification problem arises from the fact that we do not observe the latent group membership directly since we either observe $S_{1}(0)$ or $S_{1}(1)$, but never both. 

It is useful to note that $\tau_{OOO}$ could be identified by a hypothetical DiD estimand for members of the OOO latent group.
\begin{align*}
&\mathbbm{E}[Y_{1}-Y_{0}|D=1, OOO] - \mathbbm{E}[Y_{1}-Y_{0}|D=0, OOO]\\
&=\mathbbm{E}[S_{1}(1)Y_{1}^{\ast}(1)-S_{0}(1)Y_{0}^{\ast}(1)|D=1, OOO] - \mathbbm{E}[S_{1}(0)Y_{1}^{\ast}(0)-S_{0}(0)Y_{0}^{\ast}(0)|D=0, OOO]\\
&=\mathbbm{E}[Y_{1}^{\ast}(1)-Y_{0}^{\ast}(1)|D=1, OOO] - \mathbbm{E}[Y_{1}^{\ast}(0)-Y_{0}^{\ast}(0)|D=0, OOO]\\
&=\mathbbm{E}[Y_{1}^{\ast}(1)-Y_{0}^{\ast}(0)|D=1, OOO] - \mathbbm{E}[Y_{1}^{\ast}(0)-Y_{0}^{\ast}(0)|D=0, OOO], \text{(Assumption \ref{no anti})}\\
&=\mathbbm{E}[Y_{1}^{\ast}(1)-Y_{0}^{\ast}(0)|D=1, OOO] - \mathbbm{E}[Y_{1}^{\ast}(0)-Y_{0}^{\ast}(0)|D=1, OOO],\text{(Assumption \ref{PT_OOO})}\\
&=\mathbbm{E}[Y_{1}^{\ast}(1)-Y_{1}^{\ast}(0)|D=1, OOO]=\tau_{OOO}
\end{align*} 
   
Although $\mathbbm{E}[Y_{1}^{\ast}(d)-Y_{0}^{\ast}(d)|D=d, OOO]$ cannot be generally point identified for $d=\{0,1\}$, it can be partially identified under different combinations of the monotonicity and selection mechanism assumptions. The plausibility of the assumptions required for partial identification depends on the empirical context. We approach this constructively by obtaining bounds for $\tau_{OOO}$ under less informative assumptions that might be valid on a larger range of empirical settings and then moving towards more restrictive assumptions that could be more informative for the parameter of interest. This allows a layered policy analysis \citep{manski2011}, offering various estimates based on different assumptions so that the researcher can explore the information gathered about the parameter of interest by each restriction, as advocated in \cite{tamer2010partial}.

Following the literature, we take advantage of the representation of observed subgroups of individuals as mixtures of latent groups, as shown in Table \ref{tabl obs groups}  \citep{lee2009,chen2015bounds,huber2015sharp,bartalotti2023identifying}. The relationship between observed and latent groups partially identifies $\mathbbm{E}[Y_{1}^{\ast}(d)-Y_{0}^{\ast}(d)|D=d, OOO]$, which we can use to recover $\tau_{OOO}$. 

\begin{table}[H]
\caption{Observed and Latent Groups}
 \label{tabl obs groups}
     \begin{adjustbox}{center=\textwidth}
\begin{tabular}{cccc}
\hline
\textbf{$S_{0}$} & \textbf{$S_{1}$} & \textbf{($D=0$)} & \textbf{($D=1$)} \\ \hline
0                 & 0                  & NNN, NNO             & NNN, NON                 \\
0                 & 1                  & NOO,NON                  & NOO,NNO              \\ 
1                 & 0                  & ONN,ONO              & ONN,OON                  \\
1                 & 1                  & OOO,OON                 & OOO,ONO   \\ \hline         
\end{tabular}
\end{adjustbox}
\end{table}

For instance, consider the group of treated individuals for whom the outcome is observed in both periods $(D=1, S_{0}=1, S_{1}=1)$. Table \ref{tabl obs groups} shows that their observed average outcome reflects a mixture of the potential outcomes for the OOO and ONO latent groups with mixing probabilities corresponding to their relative proportions. 
Then,
{\small \begin{align}\label{mixp}
     &\mathbbm{E}[Y_{1}-Y_{0}|D=1,S_{0}=1, S_{1}=1]= \nonumber \\
    &= \mathbbm{E}[Y_{1}^{\ast}(1)-Y_{0}^{\ast}(1)|D=1,S_{0}=1, (S_{1}(0)=1,S_{1}(1)=1) \quad or \quad (S_{1}(0)=0,S_{1}(1)=1)] \nonumber\\
    &=\mathbbm{E}[Y_{1}^{\ast}(1)-Y_{0}^{\ast}(1)|D=1,OOO]\cdot p_{OOO1} +\mathbbm{E}[Y_{1}^{\ast}(1)-Y_{0}^{\ast}(1)|D=1,ONO]\cdot (1-p_{OOO1}).
\end{align}}
Now, consider the group of control individuals for whom the outcome is observed in both periods $(D=0, S_{0}=1, S_{1}=1)$. Their observed average outcome is a mixture of the potential outcomes for the OOO and OON latent groups,
{\small \begin{align}\label{mixp2}
     &\mathbbm{E}[Y_{1}-Y_{0}|D=0,S_{0}=1, S_{1}=1]= \nonumber \\
    &=\mathbbm{E}[Y_{1}^{\ast}(0)-Y_{0}^{\ast}(0)|D=0,S_{0}=1, (S_{1}(0)=1,S_{1}(1)=1) \quad or \quad (S_{1}(0)=1,S_{1}(1)=0)] \nonumber\\
    &=\mathbbm{E}[Y_{1}^{\ast}(0)-Y_{0}^{\ast}(0)|D=0,OOO]\cdot p_{OOO0}+\mathbbm{E}[Y_{1}^{\ast}(0)-Y_{0}^{\ast}(0)|D=0,OON]\cdot (1-p_{OOO0}).
\end{align}}
We use these mixture representations to bound the expected change in potential outcomes within the always-observed subpopulation by looking at the observed outcomes’ distribution for treated individuals. Specifically, the lower bound for $\mathbbm{E}[Y_{1}^{\ast}(1)-Y_{0}^{\ast}(1)|D=1,OOO]$ is obtained by considering the worst-case scenario in which the OOO group comprises of individuals with the lowest values of $Y_{1}^{\ast}(1)-Y_{0}^{\ast}(1)$ among the subpopulation of treated individuals that has been observed in both periods. This corresponds to the left tail of mass $p_{OOO1}$ of the distribution of changes in outcomes between the pre- and post-treatment periods for treated individuals. The upper bound analogously assumes that the OOO group lies in the right tail of the same distribution, with the highest values of $Y_{1}^{\ast}(1)-Y_{0}^{\ast}(1)$. The intuition is similar to the trimming procedure suggested by \citet{lee2009} and others, where we assume that the OOO group corresponds to the set of treated individuals who either had the lowest observed changes in outcome between the pre- and post-treatment periods (yielding the lower bound) or experienced the highest observed changes (giving us the upper bound).
Hence, $\mathbbm{E}[Y_{1}^{\ast}(1)-Y_{0}^{\ast}(1)|D=1,OOO]$ lies within the interval $[LB_{OOO1},UB_{OOO1}]$ where,
\begin{align}
    LB_{OOO1} &=\mathbbm{E}[Y_{1}-Y_{0}|D=1,S_{0}=1,S_{1}=1, (Y_{1}-Y_{0})\leq F_{\Delta Y|111}^{-1}(p_{OOO1})]\label{LB1}\\
    UB_{OOO1} &=\mathbbm{E}[Y_{1}-Y_{0}|D=1,S_{0}=1,S_{1}=1, (Y_{1}-Y_{0}) > F_{\Delta Y|111}^{-1}(1-p_{OOO1})]. \label{UB1}
\end{align}
Similarly, the conditional distribution $Y_{1}-Y_{0}$ for the untreated individuals observed in both time periods can be trimmed to obtain the bounds for $\mathbbm{E}[Y_{1}^{\ast}(0)-Y_{0}^{\ast}(0)|D=0,OOO]$, which lies within the interval $[LB_{OOO0},UB_{OOO0}]$,
\begin{align}
    LB_{OOO0} &=\mathbbm{E}[Y_{1}-Y_{0}|D=0,S_{0}=1,S_{1}=1,(Y_{1}-Y_{0})\leq F_{\Delta Y|011}^{-1}(p_{OOO0})]\label{LB0}\\
    UB_{OOO0} &=\mathbbm{E}[Y_{1}-Y_{0}|D=0,S_{0}=1,S_{1}=1,(Y_{1}-Y_{0}) > F_{\Delta Y|011}^{-1}(1-p_{OOO0})]. \label{UB0}
\end{align}
In equations \eqref{LB1}-\eqref{UB0} above, $F_{\Delta Y|dss'}^{-1}(.)$ is the quantile function of the distribution of the variable $\Delta Y \equiv Y_1-Y_0$ given $D=d,S_{0}=s,S_{1}=s'$. 

Combining the bounds for $\mathbbm{E}[Y_{1}^{\ast}(1)-Y_{0}^{\ast}(1)|D=1,OOO]$ and $\mathbbm{E}[Y_{1}^{\ast}(0)-Y_{0}^{\ast}(0)|D=0,OOO]$ we find that the parameter of interest $\tau_{OOO}$ is in the interval 
\begin{align*}
    [LB_{OOO1}-UB_{OOO0},UB_{OOO1}-LB_{OOO0}].
\end{align*}
The fundamental aspect of identifying the target parameter is what can be learned about the weights, $p_{OOO0}$ and $p_{OOO1}$. Since we are interested in the always-observed group, a higher share of OOO among the treated individuals for which we have complete data implies that the observed sample provides more information about the changes in outcome for that group. In the extreme case, $p_{OOO1}\rightarrow 1$ and $\mathbbm{E}[Y_{1}^{\ast}(1)-Y_{0}^{\ast}(1)|D=1,OOO]$ is point identified as the observed sample reflects only the OOO type. In the opposite case, $p_{OOO1}\rightarrow 0$ and the observed sample would be uninformative about the always-observed group.

To this end, we consider alternative assumptions that impose different restrictions on the admissible values of the latent mixing proportions, $p_{OOO0}$ and $p_{OOO1}$, thereby yielding more information about $\tau_{OOO}$. 

\subsection{Identification without Monotonicity}\label{sec:ident_nomon}
Initially, consider the case where the researcher is unwilling to assume monotonicity in selection (Assumption \ref{monotone}). We are interested in the unobserved share of always-observed individuals, $\mathbbm{P}[S_0(0)=1, S_1(0)=1, S_1(1)=1, D=d]$. The share of units observed in both periods among each treatment group is informative about the mixing proportions. For the treated group,
\begin{align}\label{pipartialid1}
    &\mathbbm{P}[S_{0}=1, S_{1}=1| D=1]=\mathbbm{P}[S_{0}=1, S_{1}(1)=1| D=1]\\
    &=\mathbbm{P}[S_{0}=1, S_{1}(0)=1, S_{1}(1)=1| D=1]+\mathbbm{P}[S_{0}=1, S_{1}(0)=0, S_{1}(1)=1| D=1] \nonumber\\
    &=\frac{\pi_{OOO1}}{\mathbbm{P}[D=1]}+\frac{\pi_{ONO1}}{\mathbbm{P}[D=1]}.\nonumber
\end{align}
And for untreated observations,
\begin{align}\label{pipartialid2}
    &\mathbbm{P}[S_{0}=1, S_{1}=1| D=0]=\mathbbm{P}[S_{0}=1, S_{1}(0)=1| D=0]\\
    &=\mathbbm{P}[S_{0}=1, S_{1}(0)=1, S_{1}(1)=1| D=0]+\mathbbm{P}[S_{0}=1, S_{1}(0)=1, S_{1}(1)=0| D=0]\nonumber\\
    &=\frac{\pi_{OOO0}}{\mathbbm{P}[D=0]}+\frac{\pi_{OON0}}{\mathbbm{P}[D=0]}.\nonumber
\end{align}
The first equality in the equations above formalize the intuition that we can identify the marginal conditional proportions $\mathbbm{P}[S_{0}=1, S_{1}(d)=1| D=d]$ from observed data. It is useful to express 
{\small \begin{align*}
   \mathbbm{P}[S_{0}=1, S_{1}(0)=1, S_{1}(1)=1| D=d] = \mathbbm{P}[S_{1}(0)=1, S_{1}(1)=1| D=d, S_{0}=1]\cdot \mathbbm{P}[S_0=1|D=d] 
\end{align*}}
where $\mathbbm{P}[S_0=1|D=d]$ is directly observed in the data whereas $\mathbbm{P}[S_{1}(0)=1, S_{1}(1)=1| D=d, S_{0}=1]$ can be partially identified using Fr\'echet bounds \citep{imai2008sharp} as follows:
{\footnotesize
\begin{align}
\mathbbm{P}[S_{1}(0)=1, S_{1}(1)=1|D=d, S_{0}=1]\in& \left[\max\{\mathbbm{P}[S_{1}(0)=1|D=d, S_{0}=1]+\mathbbm{P}[S_{1}(1)=1|D=d, S_{0}=1]-1,0\},\right.\nonumber\\
&\left.\min\{\mathbbm{P}[S_{1}(0)=1|D=d, S_{0}=1],\mathbbm{P}[S_{1}(1)=1|D=d,S_{0}=1]\}\right].
\end{align}}

Note that $\mathbbm{P}[S_{1}(0)=1|D=0, S_{0}=1]$ and $\mathbbm{P}[S_{1}(1)=1|D=1, S_{0}=1]$ are directly identified from the observed data. We consider assumptions restricting the relationship between the selection mechanism and treatment assignment to identify their counterfactual counterparts, $\mathbbm{P}[S_{1}(0)=1|D=1, S_{0}=1]$ and $\mathbbm{P}[S_{1}(1)=1|D=0, S_{0}=1]$. Since the identification of $\mathbbm{E}[Y_{1}^{\ast}(1)-Y_{0}^{\ast}(1)|D=1,OOO]$ depends only on $p_{OOO1}$ and equivalently that of $\mathbbm{E}[Y_{1}^{\ast}(0)-Y_{0}^{\ast}(0)|D=0,OOO]$ solely on $p_{OOO0}$, we consider the assumptions for each term separately.

By combining Assumption \ref{Partialselection} and the information in equations (\ref{pipartialid1}) and (\ref{pipartialid2}), we can identify the missing counterfactual probabilities through the observed proportions for the treated and untreated groups, leading to Lemma \ref{lemma:pipartialid}.
\begin{lemma}\label{lemma:pipartialid}
    (a) Under Assumptions \ref{no anti}, \ref{Partialselection}(a), and \ref{rs} the identified set for $\mathbbm{P}[S_{1}(0)=1, S_{1}(1)=1|D=1, S_{0}=1]$ is given by
    {\footnotesize
 \begin{align}
 \mathbbm{P}[S_{1}(0)=1, S_{1}(1)=1|D=1, S_{0}=1]\in &\left[\max\{\mathbbm{P}[S_{1}=1|D=0, S_{0}=1]+\mathbbm{P}[S_{1}=1|D=1, S_{0}=1]-1,0\},\right. \nonumber\\
 &\left.\min\{\mathbbm{P}[S_{1}=1|D=0, S_{0}=1],\mathbbm{P}[S_{1}=1|D=1, S_{0}=1]\}\right].
 \end{align}}
    (b) Under Assumptions \ref{no anti}, \ref{Partialselection}(b), and \ref{rs} the identified set for $\mathbbm{P}[S_{1}(0)=1, S_{1}(1)=1|D=0, S_{0}=1]$ is given by
    {\footnotesize
 \begin{align}
 \mathbbm{P}[S_{1}(0)=1, S_{1}(1)=1|D=0, S_{0}=1]\in &\left[\max\{\mathbbm{P}[S_{1}=1|D=0, S_{0}=1]+\mathbbm{P}[S_{1}=1|D=1, S_{0}=1]-1,0\},\right.\nonumber\\
&\left.\min\{\mathbbm{P}[S_{1}=1|D=0, S_{0}=1],\mathbbm{P}[S_{1}=1|D=1, S_{0}=1]\}\right].
 \end{align}}
\end{lemma}
The proof of Lemma \ref{lemma:pipartialid} can be found in Appendix \ref{proof:pipartialid}. The restrictive nature of assuming both parts of Assumption \ref{Partialselection} becomes clear as the identified set for 
$\mathbbm{P}[S_{1}(0)=1, S_{1}(1)=1|D=d, S_{0}=1]$  is the same for both treated and control groups in that case, reflecting that the probability of being always-observed is independent of treatment under that assumption. This simplifies the identification of the mixing weights and is similar to scenarios in which the treatment is exogenous \citep{lee2009}, or an instrument is available for selection and treatment \citep{bartalotti2023identifying}. However, the weights will still differ between treated and untreated groups, which remains a challenge for identification in this setting that is not yet addressed in the previous literature.

Lemma \ref{lemma:pipartialid} can be used to obtain the range of possible values that $p_{OOO1}$ and $p_{OOO0}$ can take. For any value $v_d$ in the identified set for $\mathbbm{P}[S_{1}(0)=1, S_{1}(1)=1|D=d, S_{0}=1]$, the $p_{gd}$ associated with it is given by $p_{gd}(v_d)=\frac{v_d}{\mathbbm{P}[S_{1}=1| D=d, S_{0}=1]}$. 
As previously discussed, higher values for $p_{OOO1}$ and $p_{OOO0}$ indicate that a larger share of the observed - treated and untreated, respectively - population belongs to the always-observed latent groups, thus providing more information and tighter bounds for the target parameters. Hence, we only need to focus on the scenario that generates the wider bounds, that is, the smallest $p_{OOO1}(v_1)$ and $p_{OOO0}(v_0)$ \citep{bartalotti2023identifying}. Since $v_d$ has a monotone relationship to the mixture weights, the relevant case is obtained at the lower bound of each of the identified sets for $\mathbbm{P}[S_{1}(0)=1, S_{1}(1)=1|D=d, S_{0}=1]$ described in Lemma \ref{lemma:pipartialid}, which we call $v^{l}_{d}$ for $d=0,1$.

Evaluating equations (\ref{LB1})-(\ref{UB1}) at the least favorable values for $p_{OOO1}(v_1)$ yields,
\begin{align}
    LB_{OOO1}(v^{l}_{1}) &=\mathbbm{E}[Y_{1}-Y_{0}|D=1,S_{0}=1,S_{1}=1, (Y_{1}-Y_{0})\leq F_{\Delta Y|111}^{-1}(p_{OOO1}(v^{l}_{1}))]\label{LB1low}\\
    UB_{OOO1}(v^{l}_{1}) &=\mathbbm{E}[Y_{1}-Y_{0}|D=1,S_{0}=1,S_{1}=1, (Y_{1}-Y_{0}) > F_{\Delta Y|111}^{-1}(1-p_{OOO1}(v^{l}_{1}))]. \label{UB1low}
\end{align}
Similarly, for the bounds for $\mathbbm{E}[Y_{1}^{\ast}(0)-Y_{0}^{\ast}(0)|D=0,OOO]$ based on equations (\ref{LB0})-(\ref{UB0}), evaluated at the smallest admissible value for $p_{OOO0}(v_0)$ yields,
\begin{align}
    LB_{OOO0}(v^{l}_{0}) &=\mathbbm{E}[Y_{1}-Y_{0}|D=0,S_{0}=1,S_{1}=1, (Y_{1}-Y_{0})\leq F_{\Delta Y|011}^{-1}(p_{OOO0}(v^{l}_{0}))]\label{LB0low}\\
    UB_{OOO0}(v^{l}_{0}) &=\mathbbm{E}[Y_{1}-Y_{0}|D=0,S_{0}=1,S_{1}=1,(Y_{1}-Y_{0}) > F_{\Delta Y|011}^{-1}(1-p_{OOO0}(v^{l}_{0}))]. \label{UB0low}
\end{align}
Note that when $p_{OOOd}(v_d^l)=0$, the trimming regions become empty. This occurs when $\mathbbm{P}[S_{1}=1\mid D=0,S_{0}=1]+\mathbbm{P}[S_{1}=1\mid D=1,S_{0}=1]\le 1$ which implies that $F^{-1}_{\Delta Y| d11}(p_{OOOd}(v_d^l)) = -\infty$ and $F^{-1}_{\Delta Y| d11}(1 - p_{OOOd}(v_d^l)) = +\infty$ . Consequently, the trimmed expectations based on the empty sets $\{\Delta Y\le -\infty\}$ in \eqref{LB1low} and \eqref{LB0low} and $\{\Delta Y> +\infty\}$ in \eqref{UB1low} and \eqref{UB0low}, will be undefined. To avoid this issue, we follow the literature and assume that the true mixing proportion is strictly positive \citep{semenova2025generalized,shin2024difference,lee2009}.\footnote{If the estimated proportions are zero even though $p_{OOOd}>0$, this indicates that one cannot rule out the presence of OOO units in either the treated or control groups in the sample. In such cases, one would need additional information about the relative shares of OOO individuals within each group to achieve identification. A natural next step is to impose the monotonicity restriction in Assumption \ref{monotone}. In this case, the proportions are point-identified, and the corresponding lower and upper bounds for $\tau_{OOO}$ are well-defined.} 
Combining the results above, we now propose partial identification of $\tau_{OOO}$.
\begin{theorem}[Bounds for $\tau_{OOO}$]\label{nomono_bound} Under Assumptions \ref{no anti},\ref{PT_OOO}, \ref{Partialselection}(a), \ref{Partialselection}(b), \ref{rs}, and $p_{OOOd}>0$ bounds on the treatment effect on the treated for the always-observed group ($\tau_{OOO}$) lies in the interval $[LB_{\tau_{OOO}}, UB_{\tau_{OOO}}]$,
\begin{equation*}
    \begin{split}
        LB_{\tau_{OOO}} &=\mathbbm{E}[Y_{1}-Y_{0}|D=1,S_{0}=1,S_{1}=1, (Y_{1}-Y_{0})\leq F_{\Delta Y|111}^{-1}(p_{OOO1}(v^{l}_{1}))]\\
        &-\mathbbm{E}[Y_{1}-Y_{0}|D=0,S_{0}=1,S_{1}=1,(Y_{1}-Y_{0}) > F_{\Delta Y|011}^{-1}(1-p_{OOO0}(v^{l}_{0}))],\\
        UB_{\tau_{OOO}} &= \mathbbm{E}[Y_{1}-Y_{0}|D=1,S_{0}=1,S_{1}=1, (Y_{1}-Y_{0}) > F_{\Delta Y|111}^{-1}(1-p_{OOO1}(v^{l}_{1}))]\\
        &-\mathbbm{E}[Y_{1}-Y_{0}|D=0,S_{0}=1,S_{1}=1, (Y_{1}-Y_{0})\leq F_{\Delta Y|011}^{-1}(p_{OOO0}(v^{l}_{0}))]
    \end{split}
\end{equation*}
where, \begin{align*}
    p_{OOO1}(v^{l}_{1})=\frac{\max\{\mathbbm{P}[S_{1}=1|D=0, S_{0}=1]+\mathbbm{P}[S_{1}=1|D=1, S_{0}=1]-1,0\}}{\mathbbm{P}[S_{1}=1| D=1, S_{0}=1]},\\
    p_{OOO0}(v^{l}_{0})=\frac{\max\{\mathbbm{P}[S_{1}=1|D=0, S_{0}=1]+\mathbbm{P}[S_{1}=1|D=1, S_{0}=1]-1,0\}}{\mathbbm{P}[S_{1}=1| D=0, S_{0}=1]}.
\end{align*}
The bounds in Theorem \ref{nomono_bound} are sharp.
\end{theorem}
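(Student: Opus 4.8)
The plan is to assemble the claimed interval from three ingredients already in place: the difference-in-differences identity for $\tau_{OOO}$ derived at the start of Section \ref{identification}, the Lee-type trimming bounds recorded in equations (\ref{LB1})--(\ref{UB0}), and the Fr\'echet-type restrictions on the latent mixing proportions from Lemma \ref{lemma:pipartialid}. The whole argument is a containment statement, so it suffices to show that the true $\tau_{OOO}$ falls in $[LB_{\tau_{OOO}},UB_{\tau_{OOO}}]$ for every admissible configuration of the unknown weights.

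First I would invoke the chain of equalities preceding Section \ref{identification}, which under Assumptions \ref{no anti} and \ref{PT_OOO} gives
\[
\tau_{OOO}=\mathbbm{E}[Y_{1}^{*}(1)-Y_{0}^{*}(1)\mid D=1,OOO]-\mathbbm{E}[Y_{1}^{*}(0)-Y_{0}^{*}(0)\mid D=0,OOO].
\]
This reduces the problem to bounding the two counterfactual trend terms separately. The structural fact I would stress here is that, via the mixture representations (\ref{mixp}) and (\ref{mixp2}), the first term depends on the data-generating process only through the weight $p_{OOO1}$ and the second only through $p_{OOO0}$; since Lemma \ref{lemma:pipartialid}(a) and (b) deliver the identified sets for these two weights separately, they may be treated independently with no joint cross-restriction.

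Next I would record the trimming step. The representation (\ref{mixp}) exhibits the observed mean of $Y_1-Y_0$ on $(D=1,S_0=1,S_1=1)$ as a $p_{OOO1}$-weighted mixture of the OOO and ONO components, so Lee's worst-case reasoning places $\mathbbm{E}[Y_{1}^{*}(1)-Y_{0}^{*}(1)\mid D=1,OOO]$ in $[LB_{OOO1},UB_{OOO1}]$ from (\ref{LB1})--(\ref{UB1}), and symmetrically the control term in $[LB_{OOO0},UB_{OOO0}]$ from (\ref{LB0})--(\ref{UB0}). To make these operational I would then substitute the mixing weights: by Lemma \ref{lemma:pipartialid}, Assumptions \ref{no anti}, \ref{Partialselection}(a) and \ref{Partialselection}(b) pin down $\mathbbm{P}[S_1(0)=1,S_1(1)=1\mid D=d,S_0=1]$ only up to a Fr\'echet interval with lower endpoint $v_d^{l}$, and the relation $p_{OOO d}(v_d)=v_d/\mathbbm{P}[S_1=1\mid D=d,S_0=1]$ shows each admissible weight is bounded below by $p_{OOO d}(v_d^{l})$.

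The crux, and the step I expect to be the main obstacle, is the monotonicity argument that collapses the union over all admissible weight pairs into a single interval. I would show that $LB_{OOO1}(\cdot)$ is nondecreasing and $UB_{OOO1}(\cdot)$ is nonincreasing in $p_{OOO1}$ (and likewise for the control bounds in $p_{OOO0}$): enlarging the retained mass $p_{OOO1}$ moves the quantile cutoffs inward and pulls each conditional mean toward the center, so that a smaller weight yields a wider interval. Consequently, for the true but unknown weights $p_{OOO1}^{*}\ge p_{OOO1}(v_1^{l})$ and $p_{OOO0}^{*}\ge p_{OOO0}(v_0^{l})$, one has $\mathbbm{E}[Y_{1}^{*}(1)-Y_{0}^{*}(1)\mid D=1,OOO]\in[LB_{OOO1}(v_1^{l}),UB_{OOO1}(v_1^{l})]$ and the analogous control inclusion, so their difference lies in $[LB_{OOO1}(v_1^{l})-UB_{OOO0}(v_0^{l}),\,UB_{OOO1}(v_1^{l})-LB_{OOO0}(v_0^{l})]$. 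Plugging the explicit lower Fr\'echet bound from Lemma \ref{lemma:pipartialid} into $p_{OOO d}(v_d^{l})$ produces exactly the stated $p_{OOO1}(v_1^l)$ and $p_{OOO0}(v_0^l)$, and hence $LB_{\tau_{OOO}}$ and $UB_{\tau_{OOO}}$. I would close by noting that this reasoning yields valid outer bounds; sharpness is not claimed here and is established separately under the monotone-selection assumption.
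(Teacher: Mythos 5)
Your proposal is correct and follows essentially the same route as the paper's proof: the same decomposition of $\tau_{OOO}$ under Assumptions \ref{no anti} and \ref{PT_OOO}, the same trimming bounds from the mixture representations \eqref{mixp}--\eqref{mixp2}, and the same substitution of the least favorable weights $p_{OOOd}(v_d^l)$ obtained from the Fr\'echet lower endpoints in Lemma \ref{lemma:pipartialid}. The only difference is that you spell out the monotonicity of the trimmed means in the mixing proportion (lower-tail mean nondecreasing, upper-tail mean nonincreasing in $p$), a step the paper asserts informally in the text with a citation to \citet{bartalotti2023identifying} rather than proving in the appendix.
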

Proof of Theorem \ref{nomono_bound} can be found in Appendix \ref{proof:nomono_bound}. The partial identification results in Theorem \ref{nomono_bound} allow somewhat flexible patterns of potential selection into the sample. All latent group types are possible, and treatment is allowed to induce individuals to join or leave the sample in the post-treatment period since monotonicity in selection is not assumed. Nevertheless, to achieve identification, we imposed substantial restrictions on the relationship between the selection mechanism and treatment assignment through assumptions \ref{Partialselection}(a) and  \ref{Partialselection}(b). 

\subsection{Identification with Monotonicity}\label{section:mono}
In specific applications, monotonicity in sample selection may be a plausible assumption. In the previous section, we saw that with just \ref{Partialselection}(a) or \ref{Partialselection}(b), we can partially identify $p_{OOO0}$ and $p_{OOO1}$, respectively. It is worth investigating how much leverage monotonicity alone has in terms of bounding the target parameter, $\tau_{OOO}$. 
\begin{lemma}\label{lemma:mono_weights} Under Assumptions \ref{no anti} and \ref{monotone} (positive monotonicity), we obtain $p_{OOO0}=1$ and $p_{OOO1} = \frac{\mathbbm{P}[S_1(0)=1|D=1, S_0=1]}{\mathbbm{P}[S_1=1|D=1, S_0=1]}$.
\end{lemma}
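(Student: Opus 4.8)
The plan is to work directly from the definitions $p_{OOO1}=\sfrac{\pi_{OOO1}}{(\pi_{OOO1}+\pi_{ONO1})}$ and $p_{OOO0}=\sfrac{\pi_{OOO0}}{(\pi_{OOO0}+\pi_{OON0})}$, and to exploit the fact that positive monotonicity ($\mathbbm{P}[S_{1}(1)\geq S_{1}(0)]=1$) eliminates every latent stratum with $S_1(0)=1$ and $S_1(1)=0$, namely NON and OON. The whole argument is then a matter of bookkeeping over the latent strata, translating the monotonicity restriction into probability statements and cancelling common factors.

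First I would dispense with $p_{OOO0}$. Since positive monotonicity forces the OON stratum to be empty, we have $\pi_{OON0}=0$, so the denominator $\pi_{OOO0}+\pi_{OON0}$ collapses to $\pi_{OOO0}$ and $p_{OOO0}=1$ follows immediately. Intuitively, among untreated units observed in both periods there can be no OON types under positive selection, so the observed complete-case control group consists entirely of OOO individuals.

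For $p_{OOO1}$, the key observation is that monotonicity makes the event $\{S_{1}(0)=1\}$ a subset of $\{S_{1}(1)=1\}$ almost surely. Combining this with no anticipation ($S_0=S_0(0)$, Assumption \ref{no anti}), the event $\{S_0=1, S_1(0)=1, D=1\}$ coincides with $\{S_0(0)=1, S_1(0)=1, S_1(1)=1, D=1\}$, which is exactly the OOO stratum among the treated. Hence $\mathbbm{P}[S_{1}(0)=1|D=1, S_0=1]=\sfrac{\pi_{OOO1}}{\mathbbm{P}[S_0=1, D=1]}$. Using equation (\ref{pipartialid1}), the observed complete-case share satisfies $\mathbbm{P}[S_1=1|D=1, S_0=1]=\sfrac{(\pi_{OOO1}+\pi_{ONO1})}{\mathbbm{P}[S_0=1, D=1]}$. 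Taking the ratio of these two expressions cancels the common factor $\mathbbm{P}[S_0=1, D=1]$ and returns $\sfrac{\pi_{OOO1}}{(\pi_{OOO1}+\pi_{ONO1})}=p_{OOO1}$, as claimed.

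There is no genuine obstacle beyond carefully converting the monotonicity assumption into the set inclusion $\{S_{1}(0)=1\}\subseteq\{S_{1}(1)=1\}$. The only point warranting care is that $\mathbbm{P}[S_{1}(0)=1|D=1, S_0=1]$ is a \emph{counterfactual} object for the treated (for whom we observe $S_1=S_1(1)$, never $S_1(0)$); it is precisely monotonicity, together with no anticipation equating realized $S_0$ with $S_0(0)$, that renders this quantity equal to the OOO count and hence identified. Once that identity is in place, the result is purely arithmetic.
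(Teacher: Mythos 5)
Your proof is correct and takes essentially the same route as the paper's: both arguments reduce to the observation that positive monotonicity gives the inclusion $\{S_{1}(0)=1\}\subseteq\{S_{1}(1)=1\}$ almost surely, so that the joint probability $\mathbbm{P}[S_{1}(0)=1,S_{1}(1)=1\,|\,S_{0}=1,D=d]$ collapses to the marginal $\mathbbm{P}[S_{1}(0)=1\,|\,S_{0}=1,D=d]$, which yields $p_{OOO0}=1$ (emptiness of OON) and, after cancelling $\mathbbm{P}[S_{0}=1,D=1]$, the stated ratio for $p_{OOO1}$. The only cosmetic difference is that you work from the strata proportions $\pi_{gd}$ while the paper manipulates the conditional probabilities directly, and you make explicit the role of no anticipation ($S_{0}=S_{0}(0)$) that the paper leaves implicit.
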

Proof can be found in Appendix \ref{proof:mono_weights}.

Positive monotonicity rules out the NON and OON strata. Hence, all untreated individuals observed in both periods are from the ``always-observed'' latent group and $p_{OOO0}=1$. Therefore, $\mathbbm{E}[Y_{1}^{\ast}(0)-Y_{0}^{\ast}(0)|D=0,OOO]$ is point identified by $\mathbbm{E}[Y_{1}-Y_{0}|D=0,S_{0}=1,S_{1}=1]$.

On the other treatment arm, individuals observed in both periods are still a mixture of OOO and ONO types. However, monotonicity guarantees that, $\mathbbm{P}[S_1(0)=1, S_1(1)=1|D=1, S_0=1] = \mathbbm{P}[S_1(0)=1|D=1, S_0=1]$ which means that we can focus on the values $p_{OOO1}$ can take over all possible $\mathbbm{P}[S_1(0)=1|D=1, S_0=1]$. Under positive monotonicity, the probability of selection in the treated counterfactual is always higher than the probability of selection in the untreated counterfactual, and 
$$0<\mathbbm{P}[S_1(0)=1|D=1, S_0=1]\leq \mathbbm{P}[S_1(1)=1|D=1, S_0=1]=\mathbbm{P}[S_1=1|D=1, S_0=1].$$
Even though monotonicity significantly constraints the possible values that $\mathbbm{P}[S_1(0)=1|D=1, S_0=1]$ can take, this information does not help us in learning about the proportion $p_{OOO1}=\frac{\mathbbm{P}[S_1(0)=1|D=1, S_0=1]}{\mathbbm{P}[S_1=1|D=1, S_0=1]}$, as it can still take any value in the unit interval. 

To be able to partially identify $p_{OOO1}$ and $\tau_{OOO}$ we need to complement monotonicity with restrictions on $\mathbbm{P}[S_1(0)=1|D=1, S_0=1]$ that shrink its possible range to the interior of $[0, \mathbbm{P}[S_1=1|D=1, S_0=1]]$. A natural choice is to consider Assumption \ref{Partialselection}(a), which point identifies $p_{OOO1}$ by assuming $\mathbbm{P}[S_1(0)=1|D=1, S_0=1]=\mathbbm{P}[S_1(0)=1|D=0, S_0=1]$, as we show in Section \ref{section:MonoPTSa}.

Alternatively, one can use a weaker version of this ignorability assumption, say, $\mathbbm{P}[S_{1}(0)=1|D=0, S_{0}=1]\leq \mathbbm{P}[S_{1}(0)=1|D=1, S_{0}=1]$. Intuitively, this condition requires that the probability of selection into the sample in the absence of treatment be at least as strong for the treated group as observed in the untreated group, allowing for ``stronger trends'' among the treated. This puts a floor on the lowest value possible for $p_{OOO1}\in\left[\frac{\mathbbm{P}[S_1=1|D=0, S_0=1]}{\mathbbm{P}[S_1=1|D=1, S_0=1]},1\right]$, which can then be used to construct identified sets for $\tau_{OOO}$ in a similar way to that described in Theorem \ref{nomono_bound}. However, the least favorable bounds in this case do not improve over those derived using Assumption \ref{Partialselection}(a).

\subsection{Identification under Monotonicity and Assumption \ref{Partialselection}(a)}\label{section:MonoPTSa}
As discussed in Section \ref{section:mono}, positive monotonicity rules out latent groups NON and OON, and $p_{OOO0}=1$ and $p_{OOO1} = \frac{\mathbbm{P}[S_1(0)=1|D=1, S_0=1]}{\mathbbm{P}[S_1=1|D=1, S_0=1]}$ (Lemma \ref{lemma:mono_weights}). Since $\mathbbm{E}[Y_{1}^{\ast}(1)-Y_{0}^{\ast}(0)|D=0,OOO]$ is point identified in that case, there is no need for assumption \ref{Partialselection}(b).\footnote{In the case of negative monotonicity, latent groups NNO and ONO are ruled out. This results in $p_{OOO1}=1$, point identification for $\mathbbm{E}[Y_{1}^{\ast}(1)-Y_{0}^{\ast}(0)|D=1,OOO]$. The identification results under this scenario are discussed in Appendix \ref{neg_mono}.} 
 
As suggested in the previous section, we can obtain point identification of $p_{OOO1}$ by combining positive monotonicity in selection and Assumption \ref{Partialselection}(a). Then, $p_{OOO0}=1$ and $p_{OOO1}=\frac{\mathbbm{P}[S_{1}=1|S_{0}=1,D=0]}{\mathbbm{P}[S_{1}=1|S_{0}=1,D=1]}$. 

With point identified $p_{OOO0}$ and $p_{OOO1}$ we propose the partial identification of $\tau_{OOO}$.
\begin{theorem}[Bounds for $\tau_{OOO}$ under positive monotonicity]\label{mono_bound} Under  Assumptions \ref{no anti}, \ref{PT_OOO}, \ref{monotone}, \ref{Partialselection}(a), \ref{rs}, and $p_{OOO1}>0$, bounds on the treatment effect on the treated for the always-observed group ($\tau_{OOO}$) lies in the interval $[LB_{\tau_{OOO}}, UB_{\tau_{OOO}}]$ where,
\begin{equation*}
    \begin{split}
        LB_{\tau_{OOO}} &= LB_{OOO1}-\mathbbm{E}[Y_{1}-Y_{0}|D=0,S_{0}=1,S_{1}=1],\\
        UB_{\tau_{OOO}} &=UB_{OOO1}-\mathbbm{E}[Y_{1}-Y_{0}|D=0,S_{0}=1,S_{1}=1]
    \end{split}
\end{equation*}
and, 
\begin{align*}
    LB_{OOO1} &=\mathbbm{E}[Y_{1}-Y_{0}|D=1,S_{0}=1,S_{1}=1, (Y_{1}-Y_{0})\leq F_{\Delta Y|111}^{-1}(p_{OOO1})]\\
    UB_{OOO1} &=\mathbbm{E}[Y_{1}-Y_{0}|D=1,S_{0}=1,S_{1}=1, (Y_{1}-Y_{0}) > F_{\Delta Y|111}^{-1}(1-p_{OOO1})]
\end{align*}
with $p_{OOO1}=\frac{\mathbbm{P}[S_{1}=1|S_{0}=1,D=0]}{\mathbbm{P}[S_{1}=1|S_{0}=1,D=1]}$. The bounds derived in Theorem \ref{mono_bound} are sharp \citep{lee2009}. 
\end{theorem}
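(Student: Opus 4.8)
The plan is to leverage the hypothetical-DiD representation of $\tau_{OOO}$ derived immediately before the statement, namely that under Assumptions \ref{no anti} and \ref{PT_OOO},
\[
\tau_{OOO} = \mathbbm{E}[Y_{1}^{*}(1)-Y_{0}^{*}(1)|D=1, OOO] - \mathbbm{E}[Y_{1}^{*}(0)-Y_{0}^{*}(0)|D=0, OOO].
\]
The argument then bounds (or point-identifies) each of the two conditional expectations separately, exploiting that under positive monotonicity the untreated component collapses to a point while the treated component requires trimming. The final interval is obtained by differencing a known constant from a bounded quantity.

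First I would dispose of the $D=0$ term. By Lemma \ref{lemma:mono_weights}, positive monotonicity (Assumption \ref{monotone}) rules out the NON and OON strata, so the untreated units observed in both periods consist \emph{entirely} of OOO types; that is, $p_{OOO0}=1$. Consequently $\mathbbm{E}[Y_{1}^{*}(0)-Y_{0}^{*}(0)|D=0, OOO]$ equals the directly observed mean $\mathbbm{E}[Y_{1}-Y_{0}|D=0, S_{0}=1, S_{1}=1]$, which is exactly the common subtrahend appearing in both $LB^{\prime}_{\tau_{OOO}}$ and $UB^{\prime}_{\tau_{OOO}}$.

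Next, for the $D=1$ term, I would invoke the mixture representation established in Table \ref{tabl obs groups} and the surrounding discussion: the treated observed-in-both population is a mixture of OOO and ONO types with OOO receiving weight $p_{OOO1}$. The conditional distribution of $\Delta Y \equiv Y_{1}-Y_{0}$ on this subpopulation is thus a $p_{OOO1}$-weighted mixture of the (unknown) OOO and ONO distributions, with only the mixture and the weight known. Applying the \citet{lee2009} trimming logic, the OOO mean is smallest when the OOO mass is placed in the lower-$p_{OOO1}$ tail of the observed $\Delta Y$ distribution and largest when placed in the upper tail, which yields $LB^{\prime}_{OOO1}$ and $UB^{\prime}_{OOO1}$ respectively. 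The essential input is that $p_{OOO1}$ is known exactly rather than merely bounded: Lemma \ref{lemma:mono_weights} gives $p_{OOO1}=\mathbb{P}[S_{1}(0)=1|D=1,S_{0}=1]/\mathbb{P}[S_{1}=1|D=1,S_{0}=1]$, and Assumption \ref{Partialselection}(a) point-identifies the counterfactual numerator through $\mathbb{P}[S_{1}(0)=1|D=1,S_{0}=1]=\mathbb{P}[S_{1}(0)=1|D=0,S_{0}=1]=\mathbb{P}[S_{1}=1|D=0,S_{0}=1]$, where the last equality uses $S_{1}=S_{1}(0)$ when $D=0$. This produces $p_{OOO1}=\mathbb{P}[S_{1}=1|S_{0}=1,D=0]/\mathbb{P}[S_{1}=1|S_{0}=1,D=1]$ as stated (and is consistent with the $\pi_{gd}$ identification in Lemma \ref{lemma_propo_4a}).

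Finally I would combine the two pieces: since the $D=0$ term is a known constant and the $D=1$ term ranges over $[LB^{\prime}_{OOO1}, UB^{\prime}_{OOO1}]$, the difference ranges over $[LB^{\prime}_{OOO1}-\mathbbm{E}[Y_1-Y_0|D=0,S_0=1,S_1=1],\; UB^{\prime}_{OOO1}-\mathbbm{E}[Y_1-Y_0|D=0,S_0=1,S_1=1]]$, which is precisely $[LB^{\prime}_{\tau_{OOO}}, UB^{\prime}_{\tau_{OOO}}]$. I expect the main obstacle to be the sharpness half of the trimming step, i.e. verifying that every value in $[LB^{\prime}_{OOO1}, UB^{\prime}_{OOO1}]$ is attainable by some admissible pair of OOO/ONO distributions consistent with the observed mixture and with the monotonicity and selection restrictions, rather than mere validity of the inclusion; the collapse of the $D=0$ side and the point identification of $p_{OOO1}$ are comparatively mechanical once Lemmas \ref{lemma:mono_weights} and \ref{lemma_propo_4a} are in hand.
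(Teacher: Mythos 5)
Your proposal is correct and follows essentially the same route as the paper's own proof: the same decomposition of $\tau_{OOO}$ via Assumptions \ref{no anti} and \ref{PT_OOO}, point identification of the $D=0$ term through $p_{OOO0}=1$ (Lemma \ref{lemma:mono_weights}), point identification of $p_{OOO1}$ by combining positive monotonicity with Assumption \ref{Partialselection}(a), and \citet{lee2009}-style trimming of the treated mixture. Note that the sharpness issue you flag is not needed for the theorem as stated, which only asserts validity of the interval; the paper treats sharpness as a separate remark with its own proof in Appendix \ref{sharp}.
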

The proof of Theorem \ref{mono_bound} is given in Appendix \ref{proof Theorem mono_bound}. The identified set for $\tau_{OOO}$ under the assumptions of Theorem \ref{mono_bound} is more informative since, by construction, point identification of $\mathbbm{E}[Y_{1}^{\ast}(1)-Y_{0}^{\ast}(1)|D=0,OOO]$ tightens the overall bounds for $\tau_{OOO}$. Similarly, the proportion of the always-observed among the treated, $p_{OOO1}=\frac{\mathbbm{P}[S_{1}=1|S_{0}=1,D=0]}{\mathbbm{P}[S_{1}=1|S_{0}=1,D=1]}$, is the upper bound for $p_{OOO1}$ obtained under the conditions for Lemma \ref{lemma:pipartialid}. 
Since higher shares of always-observed individuals imply more informative identified sets about that group, monotonicity leads to tighter bounds for $\tau_{OOO}$ as well. 

\begin{remark} Identification of ATT for always-observed latent group in a scenario where only repeated cross-section data are available is presented in Appendix \ref{sec:extension_RC}. A discussion of how the current identification argument for the always-observed group can be adapted to the setting of staggered treatment adoption when multiple periods are available is presented in Appendix \ref{sec:extension_multi}. 
\end{remark}

\begin{remark} Intuitively, the na\"{\i}ve DiD is always within the identified set for $\tau_{OOO}$ based on the trimming approach discussed in theorems \ref{nomono_bound}-\ref{mono_bound}. This is because $\tau_{\textup{DiDs}}$ is obtained by comparing expectations of functions of the outcomes, while the bounds for $\tau_{OOO}$ are based on trimmed expectations of the same functions of those variables. Since the overall mean will always be between the trimmed means, $\tau_{\textup{DiDs}}$ will always be in the identified set for $\tau_{OOO}$. This is not unique to the DiD case and applies more generally to Lee-type trimming bounds. A detailed discussion is presented in Appendix \ref{Appendix: Trimmed}.
\end{remark}

\subsection{Generalization with Covariates}\label{identification_x}
Suppose that the researcher observes a vector of pre-treatment covariates $X$. We extend the partial identification results for the OOO group to explicitly incorporate covariates in the analysis. Specifically, we impose conditional versions of the parallel trends assumption, which is standard in the conditional DiD literature \citep{abadie2005semiparametric,sant2020doubly,caetano2024difference}, along with a conditional ignorability assumption which assumes independence between treatment and selection conditional on pre-treatment characteristics and observability. These assumptions allow identification of the ATT within each covariate subpopulation, denoted as $\tau_{OOO}(X)$. 

We establish results under two cases: (i) no monotonicity and (ii) a relaxed form of monotonicity following the framework in \cite{semenova2025generalized}. In the latter case, we allow for covariate-specific monotonicity in selection patterns. This is accomplished by partitioning the covariate space into regions of positive monotonicity, negative monotonicity, and no monotonicity. 

For each case of identification, the lower and upper bounds for the conditional ATT are derived using the same trimming logic as applied earlier to each subpopulation of $X$. Bounds for the unconditional ATT, $\tau_{OOO}$, are then obtained by aggregating conditional bounds over the distribution of covariates for the always-observed units in the treated group. The lower bound is then given by:
\begin{align}\label{nomonox_lb_true}
    &LB_{\tau_{OOO}} = \int_{X} \left(LB_{OOO1}(X) - UB_{OOO0}(X)\right) \cdot dF(X|D=1, S_0=1, S_1(0)=1, S_1(1)=1) \nonumber \\
    & = \int_{X} \left(LB_{OOO1}(X) - UB_{OOO0}(X)\right) \cdot \frac{f(D=1, S_0=1, S_1(0)=1, S_1(1)=1|X)\cdot f({X})}{f(D=1, S_0=1, S_1(0)=1, S_1(1)=1)}dX \nonumber \\
    & = \mathbbm{E}\bigg[\left(LB_{OOO1}(X) - UB_{OOO0}(X)\right) \frac{\pi_{OOO1}(X)}{\pi_{OOO1}}\bigg] 
\end{align}
Similar results follow for the upper bound, where,
\begin{align}\label{nomonox_ub_true}
UB_{\tau_{OOO}} &= \mathbbm{E}\bigg[\left(UB_{OOO1}(X) - LB_{OOO0}(X)\right) \frac{\pi_{OOO1}(X)}{\pi_{OOO1}}\bigg].
\end{align}
Detailed exposition of the partial identification arguments, along with formal statements of the results for both cases, is provided in Appendix \ref{sec: covariates}. In addition, we also provide moment-based representations of the bounds for the unconditional ATT for OOO in Appendix Section \ref{identification_mom}. Estimation of the conditional bounds proceeds by discretizing the covariate space and then estimating the cell-specific bounds in the case of no monotonicity and classifying individuals into positive, negative, or no monotonicity regions before estimating the bounds in each region, for the case of relaxed monotonicity. This is discussed in Appendix Section \ref{estimation_x}. Results for the two empirical applications incorporating covariates are presented in Appendix Section \ref{application_x}. A joint test of monotonicity is also provided in Section \ref{monotonicity_test}.


\section{Identification of ATT for Other Latent Groups}\label{identification_other}
So far, the discussion has focused on identifying $\tau_{OOO}$, the ATT for the always-observed group, which often accounts for a large proportion of the population in many applications. However, in specific applications, policymakers may also be interested in identifying the treatment effect for other latent groups. For example, in evaluating the effects of a training program on earnings, policymakers are interested in the impacts on those unemployed before treatment (e.g., NOO and NNO latent groups). In other cases, the ONO latent group might be of interest. For instance, when considering the impact of working from home (WFH) on employee performance, the company's management may be interested in the effect on the productivity of employees who leave the company if WFH is not provided, but would stay if WFH is provided (i.e. ONO latent group).

This section studies the identification of $\tau_{g}$, the ATT for latent group $g$, for $g\in \{ONO, NOO ,$ $ NNO\}$.\footnote{As discussed in section \ref{section:mono}, positive MS rules out NON and OON latent groups. Furthermore, there is no information on ONN and NNN groups in either treatment arm in the post-treatment period. We therefore focus our attention on partially identifying the remaining groups.} Since less information is available for these groups relative to the OOO group, we introduce additional cross-group mean dominance assumptions to obtain informative bounds. These assumptions are admissible for many empirical situations. We also restrict the support of the potential outcomes to be bounded such that $Y_t^\ast(0),Y_t^\ast(1)\in y=[Y_t^{LB}, Y_t^{UB}]$, where $-\infty<Y_t^{LB}<Y_t^{UB}<\infty$ for $t=0,1$.

To consider $\tau_{ONO}$, $\tau_{NOO}$, and $\tau_{NNO}$, we extend the within-group potential outcomes parallel trends provided in Assumption \ref{PT_OOO} to include these groups.

\begin{assumptionp}{\ref{PT_OOO}$(g)$}[Parallel trends for latent group $g\in \{ONO, NOO, $ $ NNO\}$]\label{PT_group} \ 
    \begin{equation*}
        \mathbbm{E}[Y_{1}^{\ast}(0)-Y_{0}^{\ast}(0)|D=1, G=g]=\mathbbm{E} [Y_{1}^{\ast}(0)-Y_{0}^{\ast}(0)|D=0, G=g]. 
    \end{equation*}
\end{assumptionp}
Next, we introduce cross-group mean dominance assumptions to aid the identification of the ATT for these latent groups. 
\begin{assumption}[Outcome mean dominance]\label{mean dominance} \ 
\begin{enumerate}
    \item[(a)] For $\tau_{ONO}$: \ \ $\mathbbm{E}[Y_{1}^{\ast}(0)|D=0, ONO] \leq \mathbbm{E}[Y_{1}^{\ast}(0)|D=0, OOO]$
    \item[(b)] For $\tau_{NNO}$:\begin{align*}
      &\text{(i)} \quad \mathbbm{E}[Y_{0}^{\ast}(0)|D=d, NNO] \leq \mathbbm{E}[Y_{0}^{\ast}(0)|D=d, ONO], d\in \{0,1\}\\
      &\text{(ii)} \quad \mathbbm{E}[Y_{1}^{\ast}(0)|D=0, NNO] \leq \mathbbm{E}[Y_{1}^{\ast}(0)|D=0, NOO]
    \end{align*} 
    \item[(c)] For $\tau_{NOO}$: \ \ $\mathbbm{E}[Y_{0}^{\ast}(0)|D=d, NOO] \leq \mathbbm{E}[Y_{0}^{\ast}(0)|D=d, OOO], d \in \{0,1\}$
\end{enumerate}
\end{assumption}
Our mean dominance assumptions compare the untreated potential outcomes of a latent group in a specific treatment arm $D=d$ at a given point in time to its closest latent counterpart. Specifically, the inequality restrictions posit that selection into being observed is (weakly) positively correlated with the untreated potential outcomes for group $D=d$ at time $t$. For instance, invoking Assumption 6(a) helps obtain a tighter upper bound on the counterfactual expectation $\mathbbm{E}[Y_{1}^{\ast}(0)\mid D=0, \text{ONO}]$, thereby narrowing the overall bounds for $\tau_{ONO}$. This is achieved by comparing the ONO group to the OOO group, whose untreated potential outcome mean in the post-treatment period can be point identified using $\mathbbm{E}[Y_1 \mid D=0, S_0=1, S_1=1]$, under positive MS. Similar arguments apply to the other latent groups, where the mean dominance assumptions help to refine the theoretical bounds on counterfactual means. Because each latent group's selection behavior varies across treatment states and time periods, the mean dominance assumptions are stratum-specific.

In the context of the job training example, all these assumptions imply that individuals with higher attachment to the labor force or those less prone to be unemployed in some period/treatment scenario have better wages on average than peers with lower attachment in similar situations (time period, treatment counterfactual, etc.). As the always-observed group will be employed irrespective of training, assuming their potential wages to be higher than those of the other groups is reasonable. The justifiability of these assumptions depends on the empirical context, and researchers need to consider them carefully.

To identify bounds for $\tau_{ONO}$, $\tau_{NOO}$, and $\tau_{NNO}$, we introduce a stronger version of IS (Assumption \ref{Partialselection}). It imposes independence on the joint counterfactual selection distribution rather than only relating to the marginal distributions.
\begin{assumptionp}{\ref{Partialselection}$(Joint)$}[Joint independence between selection and treatment assignment]\label{inde_conditional} \ 
	\begin{equation*}
	    (S_{1}(0),S_{1}(1)) \perp D\left|S_{0}\right.
	\end{equation*}
\end{assumptionp}
Assumption \ref{inde_conditional} states that conditional on the initial period selection status, the joint counterfactual selection mechanism is independent of treatment assignment. This is a stronger assumption than its marginal version in Assumption \ref{Partialselection}, and is a sufficient condition for the latter. Under this assumption, the observed selection probabilities conditional on initial period selection and treatment enable us to identify all latent group proportions.\footnote{See Lemma \ref{lemma: pro 4(joint)} and its proof in Appendix \ref{prop_condi}.} 

To derive the ATT bounds for these latent groups, decompose $\tau_{g}$ as follows,
\begin{align}\label{dec_group}
\tau_{g} &= \mathbbm{E}[Y_{1}^{\ast}(1)-Y_{1}^{\ast}(0)|D=1, G=g] \nonumber \\
&= \mathbbm{E}[Y_{1}^{\ast}(1)-Y_{0}^{\ast}(1)+Y_{0}^{\ast}(1)-Y_{1}^{\ast}(0)|D=1, G=g]\nonumber \\
&= \mathbbm{E}[Y_{1}^{\ast}(1)-Y_{0}^{\ast}(1)|D=1, G=g]-\mathbbm{E}[Y_{1}^{\ast}(0)-Y_{0}^{\ast}(1)|D=1, G=g] \nonumber \\
&= \mathbbm{E}[Y_{1}^{\ast}(1)-Y_{0}^{\ast}(1)|D=1, G=g]-\mathbbm{E}[Y_{1}^{\ast}(0)-Y_{0}^{\ast}(0)|D=1, G=g] \text{  (Assumption \ref{no anti})}\nonumber \\
&= \mathbbm{E}[Y_{1}^{\ast}(1)-Y_{0}^{\ast}(1)|D=1, G=g]-\mathbbm{E}[Y_{1}^{\ast}(0)-Y_{0}^{\ast}(0)|D=0, G=g] \text{  (Assumption \ref{PT_group})}.
\end{align}

\subsection{Identification of ATT for ONO Group}
The treatment effect for the ONO group ($\tau_{ONO}$) can be further decomposed using Equation (\ref{dec_group}) as,
\begin{align*}
\tau_{ONO} &=\mathbbm{E}[Y_{1}^{\ast}(1)-Y_{0}^{\ast}(1)|D=1, ONO]-\mathbbm{E}[Y_{1}^{\ast}(0)-Y_{0}^{\ast}(0)|D=0, ONO]\\
&= \mathbbm{E}[Y_{1}^{\ast}(1)-Y_{0}^{\ast}(1)|D=1, ONO]-\mathbbm{E}[Y_{1}^{\ast}(0)|D=0, ONO]+ \mathbbm{E}[Y_{0}^{\ast}(0)|D=0, ONO]
\end{align*}
As explained in Section \ref{identification}, we can use the group of treated individuals for whom the outcome is observed in both periods to partially identify $\mathbbm{E}[Y_{1}^{\ast}(1)-Y_{0}^{\ast}(1)|D=1, ONO]$. Similarly, we can use the group of untreated individuals for whom the outcome is observed in the first period only ($D = 0, S_0 = 1, S_1 = 0$) to partially identify $\mathbbm{E}[Y_{0}^{\ast}(0)|D=0, ONO]$. Identification of $\mathbbm{E}[Y_{1}^{\ast}(0)|D=0, ONO]$ combines the theoretical upper and lower bound of the outcome distribution \citep{huber2015sharp} and the mean dominance Assumption \ref{mean dominance}(a).
\begin{theorem}[Bounds for $\tau_{ONO}$ under positive monotonicity]\label{ONO_bound} Under Assumptions \ref{no anti}, \ref{PT_group}, \ref{monotone}, \ref{inde_conditional}, \ref{rs}, \ref{mean dominance}(a), and $p_{gd}>0$ bounds on the treatment effect on the treated for the ONO group ($\tau_{ONO}$) lies in the interval $[LB_{\tau_{ONO}}, UB_{\tau_{ONO}}]$,
\begin{equation*}
    \begin{split}
        LB_{\tau_{ONO}} &= LB_{ONO1}-\mathbbm{E}[Y_{1}|D=0,S_{0}=1, S_{1}=1]+LB^{0}_{ONO0},\\
        UB_{\tau_{ONO}} &= UB_{ONO1}-Y_{1}^{LB}+UB^{0}_{ONO0}
    \end{split}
\end{equation*}
where $Y_{1}^{LB}$ is the theoretical lower bound of potential outcomes in the post-treatment period, 
\begin{align*}
    LB_{ONO1} &=\mathbbm{E}[Y_{1}-Y_{0}|D=1,S_{0}=1,S_{1}=1, (Y_{1}-Y_{0})\leq F_{\Delta Y|111}^{-1}(1-p_{OOO1})],\\
    UB_{ONO1} &=\mathbbm{E}[Y_{1}-Y_{0}|D=1,S_{0}=1,S_{1}=1, (Y_{1}-Y_{0}) > F_{\Delta Y|111}^{-1}(p_{OOO1})], \\
    p_{OOO1}&=\frac{\mathbbm{P}[S_{1}=1|S_{0}=1,D=0]}{\mathbbm{P}[S_{1}=1|S_{0}=1,D=1]},
\end{align*}
and,
\begin{align*}
    LB^{0}_{ONO0} &=\mathbbm{E}[Y_{0}|D=0,S_{0}=1,S_{1}=0, Y_{0}\leq F_{Y_0|010}^{-1}(p_{ONO0})]\\
    UB^{0}_{ONO0} &=\mathbbm{E}[Y_{0}|D=0,S_{0}=1,S_{1}=0, Y_{0} > F_{Y_0|010}^{-1}(1-p_{ONO0})] \\
    p_{ONO0}&=1-\frac{\mathbbm{P}[S_1=0|S_0=1,D=1]}{\mathbbm{P}[S_1=0|S_0=1,D=0]}.
\end{align*}
\end{theorem}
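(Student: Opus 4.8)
The plan is to start from the three-term decomposition of $\tau_{ONO}$ that the text has already set up: applying equation (\ref{dec_group}) under Assumptions \ref{no anti} and \ref{PT_group} and then splitting the second expectation, I obtain
\begin{equation*}
\tau_{ONO}=\underbrace{\mathbbm{E}[Y_{1}^{*}(1)-Y_{0}^{*}(1)|D=1,ONO]}_{(A)}-\underbrace{\mathbbm{E}[Y_{1}^{*}(0)|D=0,ONO]}_{(B)}+\underbrace{\mathbbm{E}[Y_{0}^{*}(0)|D=0,ONO]}_{(C)}.
\end{equation*}
I would then bound each term separately and combine extremes, pairing the smallest $(A)$ and $(C)$ with the largest $(B)$ for $LB_{\tau_{ONO}}$, and conversely for $UB_{\tau_{ONO}}$. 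The first task is to locate each term inside an observed mixture from Table \ref{tabl obs groups}: under positive monotonicity (Assumption \ref{monotone}) the observed group $(D=1,S_0=1,S_1=1)$ is a mixture of OOO and ONO, which carries $(A)$ via $\Delta Y=Y_1^*(1)-Y_0^*(1)$ (using $S_0=1,S_1=1,D=1$ and no anticipation); the observed group $(D=0,S_0=1,S_1=0)$ is a mixture of ONN and ONO, whose period-$0$ outcomes $Y_0=Y_0^*(0)$ carry $(C)$.

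For terms $(A)$ and $(C)$ I would invoke the Lee-trimming logic \citep{lee2009,huber2015sharp}, but with the tail directions and trimming fractions adapted to isolate ONO rather than OOO. For $(A)$ the ONO share of the mixture is $1-p_{OOO1}$, so the lower bound places ONO in the left tail (trim to $\{\Delta Y\le F^{-1}_{\Delta Y|111}(1-p_{OOO1})\}$) and the upper bound in the right tail (trim to $\{\Delta Y> F^{-1}_{\Delta Y|111}(p_{OOO1})\}$), giving $LB_{ONO1}$ and $UB_{ONO1}$. Here $p_{OOO1}=\tfrac{\mathbb{P}[S_1=1|S_0=1,D=0]}{\mathbb{P}[S_1=1|S_0=1,D=1]}$ is point-identified by Lemma \ref{lemma:mono_weights} and Lemma \ref{lemma_propo_4a}. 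For $(C)$ the ONO share among $(D=0,S_0=1,S_1=0)$ is $p_{ONO0}$; trimming $Y_0$ to the bottom/top $p_{ONO0}$ fraction yields $LB^{0}_{ONO0}$ and $UB^{0}_{ONO0}$. The weight $p_{ONO0}$ I would derive from monotonicity together with Assumption \ref{inde_conditional} (via Lemma \ref{lemma: pro 4(joint)}): writing it as $\mathbbm{P}[S_1(0)=0,S_1(1)=1|S_0=1,D=0]/\mathbbm{P}[S_1=0|S_0=1,D=0]$ and using $S_1(0)=1\Rightarrow S_1(1)=1$ to express the numerator as $\mathbb{P}[S_1=1|S_0=1,D=1]-\mathbb{P}[S_1=1|S_0=1,D=0]$, which rearranges to $p_{ONO0}=1-\tfrac{\mathbb{P}[S_1=0|S_0=1,D=1]}{\mathbb{P}[S_1=0|S_0=1,D=0]}$.

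The main obstacle is term $(B)=\mathbbm{E}[Y_1^*(0)|D=0,ONO]$, because ONO units with $D=0$ have $S_1(0)=0$ and hence are \emph{never} observed at $t=1$, so no trimming is available and the two sides must come from different arguments. For the upper bound I would use the mean-dominance Assumption \ref{mean dominance}(a), $\mathbbm{E}[Y_1^*(0)|D=0,ONO]\le \mathbbm{E}[Y_1^*(0)|D=0,OOO]$; this is operational precisely because positive monotonicity forces $p_{OOO0}=1$ (Lemma \ref{lemma:mono_weights}), so the untreated-observed-in-both group is purely OOO and $\mathbbm{E}[Y_1^*(0)|D=0,OOO]=\mathbbm{E}[Y_1|D=0,S_0=1,S_1=1]$ is point-identified. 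For the lower bound I would use only the support restriction $Y_1^*(0)\ge Y_{01}^{LB}$. Substituting the upper bound of $(B)$ into $LB_{\tau_{ONO}}=LB_{ONO1}-\mathbbm{E}[Y_1|D=0,S_0=1,S_1=1]+LB^{0}_{ONO0}$ and its lower bound into $UB_{\tau_{ONO}}=UB_{ONO1}-Y_{01}^{LB}+UB^{0}_{ONO0}$ reproduces the stated interval. I would close by noting that each term bound is valid (and sharp in isolation by the trimming and Fréchet-type arguments), so their additive combination delivers valid outer bounds on $\tau_{ONO}$, with the asymmetry between the mean-dominance upper bound and the support lower bound for $(B)$ being the only non-standard ingredient.
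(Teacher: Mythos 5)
Your proposal is correct and follows essentially the same route as the paper's proof: the identical three-term decomposition of $\tau_{ONO}$, Lee-type trimming of the $(D=1,S_0=1,S_1=1)$ mixture with ONO share $1-p_{OOO1}$ and of the $(D=0,S_0=1,S_1=0)$ mixture with share $p_{ONO0}$, and the same asymmetric treatment of $\mathbbm{E}[Y_1^*(0)|D=0, ONO]$ — mean dominance (Assumption \ref{mean dominance}(a)) plus $p_{OOO0}=1$ for the upper side and the support bound $Y_{01}^{LB}$ for the lower side. Your explicit rearrangement of $p_{ONO0}=1-\sfrac{\mathbb{P}[S_1=0|S_0=1,D=1]}{\mathbb{P}[S_1=0|S_0=1,D=0]}$ merely unpacks what the paper delegates to Lemma \ref{lemma: pro 4(joint)}.
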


Proof of Theorem \ref{ONO_bound} is given in the Appendix \ref{proof:ONO_bound}. 
\subsection{Identification of ATT for NNO Group}
The ATT for NNO group ($\tau_{NNO}$) also can be further decomposed using equation (\ref{dec_group}) as,
\begin{align*}
\tau_{NNO} &= \mathbbm{E}[Y_{1}^{\ast}(1)-Y_{0}^{\ast}(1)|D=1, NNO]-\mathbbm{E}[Y_{1}^{\ast}(0)-Y_{0}^{\ast}(0)|D=0, NNO]\\
&= \mathbbm{E}[Y_{1}^{\ast}(1)|D=1, NNO]-\mathbbm{E}[Y_{0}^{\ast}(1)|D=1, NNO]\\
&-\mathbbm{E}[Y_{1}^{\ast}(0)|D=0, NNO]+\mathbbm{E}[Y_{0}^{\ast}(0)|D=0, NNO].
\end{align*}
We can use the group of treated individuals for whom the outcome is not observed in the pre-treatment period but observed in the post-treatment period ($D = 1, S_0 = 0, S_1 = 1$) to partially identify $\mathbbm{E}[Y_{1}^{\ast}(1)|D=1, NNO]$. The other terms, $\mathbbm{E}[Y_{0}^{\ast}(1)|D=1, NNO]$, $\mathbbm{E}[Y_{0}^{\ast}(0)|D=0, NNO]$ and $\mathbbm{E}[Y_{1}^{\ast}(0)|D=0, NNO]$ can be partially identified by imposing the theoretical upper and lower bounds of the respective outcome distributions \citep{huber2015sharp} and tighten these by imposing outcome mean dominance assumptions \ref{mean dominance}b.(i) and \ref{mean dominance}b.(ii), respectively.
\begin{theorem}[Bounds for $\tau_{NNO}$ under positive monotonicity]\label{NNO_bound} Under Assumptions \ref{no anti}, \ref{PT_group}, \ref{monotone}, \ref{inde_conditional}, \ref{rs}, \ref{mean dominance}(b), and $p_{gd}>0$, bounds on the treatment effect on the treated for the NNO group ($\tau_{NNO}$) lies in the interval $[LB_{\tau_{NNO}}, UB_{\tau_{NNO}}]$,
\begin{equation*}
    \begin{split}
        LB_{\tau_{NNO}} &= LB_{NNO1}-LB^{0}_{ONO1}-\mathbbm{E}[Y_{1}|D=0,S_{0}=0, S_{1}=1]+Y_{0}^{LB},\\
         UB_{\tau_{NNO}} &= UB_{NNO1}-Y_{0}^{LB}- Y_{1}^{LB}+LB^{0}_{ONO0}
    \end{split}
\end{equation*}
where $Y_{0}^{LB}$ and $Y_{1}^{LB}$ are the theoretical lower bounds of the potential outcomes in the pre-treatment period and the post-treatment period, respectively. Furthermore,
\begin{align*}
     LB_{NNO1} &=\mathbbm{E}[Y_{1}|D=1,S_{0}=0,S_{1}=1, Y_{1}\leq F_{Y_{1}|101}^{-1}(p_{NNO1})]\\
    UB_{NNO1} &=\mathbbm{E}[Y_{1}|D=1,S_{0}=0,S_{1}=1, Y_{1} > F_{Y_{1}|101}^{-1}(1-p_{NNO1})]\\
    p_{NNO1}&=1-\frac{\mathbbm{P}[S_1=1|S_0=0,D=0]}{\mathbbm{P}[S_1=1|S_0=0,D=1]}
\end{align*}
and,
\begin{equation*}
    LB^{0}_{ONO1} =\mathbbm{E}[Y_{0}|D=1,S_{0}=1,S_{1}=1, Y_{0}\leq F_{Y_{0}|111}^{-1}(1-p_{OOO1})]
\end{equation*}
with $p_{OOO1}=\frac{\mathbbm{P}[S_{1}=1|S_{0}=1,D=0]}{\mathbbm{P}[S_{1}=1|S_{0}=1,D=1]}$. Finally,
\begin{align*}
    LB^{0}_{ONO0} &=\mathbbm{E}[Y_{0}|D=0,S_{0}=1,S_{1}=0, Y_{0}\leq F_{Y_{0}|010}^{-1}(p_{ONO0})]
\end{align*}
with $p_{ONO0}=1-\frac{\mathbbm{P}[S_1=0|S_0=1,D=1]}{\mathbbm{P}[S_1=0|S_0=1,D=0]}$.
\end{theorem}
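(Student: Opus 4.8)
The plan is to work from the four-term decomposition of $\tau_{NNO}$ displayed immediately before the theorem, which writes it as $\mathbbm{E}[Y_1^\ast(1)|D=1,NNO]-\mathbbm{E}[Y_0^\ast(1)|D=1,NNO]-\mathbbm{E}[Y_1^\ast(0)|D=0,NNO]+\mathbbm{E}[Y_0^\ast(0)|D=0,NNO]$, and to bound each of the four conditional means separately. Because $\tau_{NNO}$ is increasing in the two positively-signed means and decreasing in the two negatively-signed ones, I would assemble $LB_{\tau_{NNO}}$ from the lower extremes of the former and the upper extremes of the latter, and $UB_{\tau_{NNO}}$ the other way around. The first step is to read off, from Table \ref{tabl obs groups} under positive monotonicity (Assumption \ref{monotone}, which deletes the NON and OON strata), which observed $(D,S_0,S_1)$ cell each latent mean lives in, so that each term is either a mean over a mixture cell (to be trimmed) or a mean that is never observed (to be bounded via the outcome support and the dominance assumptions).

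For the directly mixed cells I would reuse the Lee-type trimming developed for Theorem \ref{mono_bound}. The treated-observed-in-period-one cell $\{D=1,S_0=0,S_1=1\}$ is the mixture $NOO\cup NNO$, so $\mathbbm{E}[Y_1^\ast(1)|D=1,NNO]$ is bounded by placing the NNO mass of size $p_{NNO1}$ in the lower or upper tail of the observed $Y_1$ distribution, giving $LB_{NNO1}$ and $UB_{NNO1}$. The same device applied to the cells containing the ONO group, $\{D=1,S_0=1,S_1=1\}=OOO\cup ONO$ and $\{D=0,S_0=1,S_1=0\}=ONN\cup ONO$, partially identifies the ONO baseline means $\mathbbm{E}[Y_0^\ast|D=1,ONO]$ and $\mathbbm{E}[Y_0^\ast|D=0,ONO]$, producing $LB^0_{ONO1}$ and $LB^0_{ONO0}$ and their upper counterparts. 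All mixing proportions $p_{NNO1}$, $p_{OOO1}$, $p_{ONO0}$ are pinned down from observed conditional selection probabilities using the latent-proportion identities that hold under Assumption \ref{inde_conditional} together with positive monotonicity (Lemma \ref{lemma: pro 4(joint)}); I would verify the stated closed forms, e.g. $p_{NNO1}=1-\mathbbm{P}[S_1=1|S_0=0,D=0]/\mathbbm{P}[S_1=1|S_0=0,D=1]$, by substitution.

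The remaining three means involve NNO individuals whose relevant outcome is never observed (they have $S_0=0$, so $Y_0$ is missing, or $S_1(0)=0$, so the untreated period-one outcome is missing), and here the support endpoints supply one side and the mean-dominance assumptions the other. The cleanest is $\mathbbm{E}[Y_1^\ast(0)|D=0,NNO]$: by Assumption \ref{mean dominance}(b.ii) it is dominated by $\mathbbm{E}[Y_1^\ast(0)|D=0,NOO]$, and under positive monotonicity the cell $\{D=0,S_0=0,S_1=1\}$ collapses to the pure NOO group (NON is excluded), so this dominating mean is \emph{point} identified by $\mathbbm{E}[Y_1|D=0,S_0=0,S_1=1]$, while its opposite side is the support floor $Y_{01}^{LB}$. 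For the two NNO baseline means I would invoke Assumption \ref{mean dominance}(b.i) to dominate them by the corresponding ONO baseline means bounded in the previous step, with the support floors $Y_{00}^{LB}$ and $Y_{10}^{LB}$ on the other side.

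Combining the four componentwise bounds with the signs from the decomposition yields the stated interval $[LB_{\tau_{NNO}},UB_{\tau_{NNO}}]$. The main obstacle is the bookkeeping of \emph{directions}: each dominance inequality is one-sided, so it constrains only one side of an unobserved mean and must be paired with the correct tail of the trimming bound for the dominating group and with the correct support endpoint on the other side; orienting every term relative to its sign in the decomposition is where errors are easiest to make. A secondary and more delicate point is arguing that the assembled interval is genuinely valid even though three of the four means are never observed: this requires checking that the dominance constraints and support restrictions are mutually compatible and that the extreme configuration used for each term is attainable, which is the part I would treat most carefully.
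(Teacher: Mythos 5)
Your proposal is correct and follows the paper's own proof essentially step for step: the same four-term decomposition of $\tau_{NNO}$ under Assumptions \ref{no anti} and \ref{PT_group}, the same Lee-type trimming of the $(D=1,S_{0}=0,S_{1}=1)$ cell with $p_{NNO1}$ pinned down by Lemma \ref{lemma: pro 4(joint)} under Assumption \ref{inde_conditional} and positive monotonicity, the same pairing of support floors with the dominance conditions \ref{mean dominance}(b.i)--(b.ii) for the three never-observed means, and the same point identification of the dominating NOO mean by $\mathbbm{E}[Y_{1}|D=0,S_{0}=0,S_{1}=1]$. The one detail your sketch leaves open --- which side of the partially identified dominating ONO baseline means enters the final assembly --- is resolved in the paper by inserting the lower trimmed means $LB^{0}_{ONO1}$ and $LB^{0}_{ONO0}$ into $LB_{\tau_{NNO}}$ and $UB_{\tau_{NNO}}$, respectively.
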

Proof of Theorem \ref{NNO_bound} is given in the Appendix \ref{proof:NNO_bound}. 

\subsection{Identification of ATT for NOO Group}

The ATT for NOO group ($\tau_{NOO}$) can be decomposed using equation (\ref{dec_group}) as follows,
\begin{align*}
\tau_{NOO} &= \mathbbm{E}[Y_{1}^{\ast}(1)-Y_{0}^{\ast}(1)|D=1, NOO]-\mathbbm{E}[Y_{1}^{\ast}(0)-Y_{0}^{\ast}(0)|D=0, NOO]\\
&= \mathbbm{E}[Y_{1}^{\ast}(1)|D=1, NOO]-\mathbbm{E}[Y_{0}^{\ast}(1)|D=1, NOO]-\mathbbm{E}[Y_{1}^{\ast}(0)|D=0, NOO]\\
&+\mathbbm{E}[Y_{0}^{\ast}(0)|D=0, NOO]. 
\end{align*}
We can use the group of treated individuals for whom the outcome is not observed in the pre-treatment period but observed in the post-treatment period ($D = 1, S_0 = 0, S_1 = 1$) to partially identify $\mathbbm{E}[Y_{1}^{\ast}(1)|D=1, NOO]$. The term, $\mathbbm{E}[Y_{1}^{\ast}(0)|D=0, NOO]$, can be point identified using $\mathbbm{E}[Y_{1}|D=0,S_{0}=0,S_{1}=1]$ which considers the untreated individuals not observed in the pre-treatment period but observed in the post-treatment period ($D=0, S_0=0, S_1=1$). Under positive monotonicity, this observed group is composed exclusively of the NOO latent subgroup. The remaining terms, $\mathbbm{E}[Y_{0}^{\ast}(1)|D=1, NOO]$ and $\mathbbm{E}[Y_{0}^{\ast}(0)|D=0, NOO]$,  can be partially identified by imposing the theoretical upper and lower bounds of the respective outcome distributions \citep{huber2015sharp} where we tighten them by imposing outcome mean dominance Assumption \ref{mean dominance}(c).
\begin{theorem}[Bounds for $\tau_{NOO}$ under positive monotonicity]\label{NOO_bound} Under the Assumptions \ref{no anti}, \ref{PT_group}, \ref{monotone}, \ref{inde_conditional}, \ref{rs}, \ref{mean dominance}(c) and $p_{gd}>0$, bounds on the treatment effect on the treated for the NOO group ($\tau_{NOO}$) lies in the interval $[LB_{\tau_{NOO}}, UB_{\tau_{NOO}}]$,
\begin{equation*}
    \begin{split}
        LB_{\tau_{NOO}} &= LB_{NOO1}-LB^{0}_{OOO1}-\mathbbm{E}[Y_{1}|D=0,S_{0}=0, S_{1}=1]+Y_{0}^{LB},\\
        UB_{\tau_{NOO}} &= UB_{NOO1}-Y_{0}^{LB}-\mathbbm{E}[Y_{1}|D=0,S_{0}=0, S_{1}=1]+\mathbbm{E}[Y_{0}|D=0,S_{0}=1, S_{1}=1]],
    \end{split}
\end{equation*}        
where,
\begin{align*}
    LB_{NOO1} &=\mathbbm{E}[Y_{1}|D=1,S_{0}=0,S_{1}=1, Y_{1}\leq F_{Y_{1}|101}^{-1}(1-p_{NNO1})]\\
    UB_{NOO1} &=\mathbbm{E}[Y_{1}|D=1,S_{0}=0,S_{1}=1, Y_{1} > F_{Y_{1}|101}^{-1}(p_{NNO1})] 
\end{align*}
with $p_{NNO1}=1-\frac{\mathbbm{P}[S_1=1|S_0=0,D=0]}{\mathbbm{P}[S_1=1|S_0=0,D=1]}$ and,
\begin{align*}
    LB^{0}_{OOO1} &=\mathbbm{E}[Y_{0}|D=1,S_{0}=1,S_{1}=1, Y_{0}\leq F_{Y_{0}|111}^{-1}(p_{OOO1})]
\end{align*}
with $p_{OOO1}=\frac{\mathbbm{P}[S_{1}=1|S_{0}=1,D=0]}{\mathbbm{P}[S_{1}=1|S_{0}=1,D=1]}$. Finally, $Y_{0}^{LB}$ is the theoretical lower bound of potential outcomes in the pre-treatment period.
\end{theorem}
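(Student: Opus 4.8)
The plan is to start from the group-level decomposition in equation (\ref{dec_group}) specialized to $g=NOO$. After invoking no-anticipation (Assumption \ref{no anti}) to identify $Y_0^*(1)$ with $Y_0^*(0)$ and parallel trends for the group (Assumption \ref{PT_group}) to swap the untreated and treated baseline trends, $\tau_{NOO}$ becomes a sum of four conditional-mean terms: the treated post-treatment mean $\mathbbm{E}[Y_1^*(1)|D=1,NOO]$, the treated baseline mean $\mathbbm{E}[Y_0^*(1)|D=1,NOO]$, the untreated post-treatment mean $\mathbbm{E}[Y_1^*(0)|D=0,NOO]$, and the untreated baseline mean $\mathbbm{E}[Y_0^*(0)|D=0,NOO]$, entering with signs $+,-,-,+$. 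I would then bound each additive piece separately and recombine, selecting the lower or upper endpoint of each term according to the sign it carries, so that the extremes of $\tau_{NOO}$ are attained jointly.

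Two of the four terms are essentially free under positive monotonicity (Assumption \ref{monotone}). The cell $(D=0,S_0=0,S_1=1)$ contains only NOO types, since positive monotonicity rules out the competing NON type in that cell of Table \ref{tabl obs groups}; hence $\mathbbm{E}[Y_1^*(0)|D=0,NOO]$ is point-identified by $\mathbbm{E}[Y_1|D=0,S_0=0,S_1=1]$. The treated post-treatment term draws on the cell $(D=1,S_0=0,S_1=1)$, a mixture of NOO and NNO; using the proportions from Lemma \ref{lemma_propo_4a} (valid under positive monotonicity and Assumption \ref{Partialselection}(a)), the NOO share there equals $1-p_{NNO1}$, so a Lee-type trimming of the conditional law of $Y_1$ over its lowest (highest) $1-p_{NNO1}$ mass delivers $LB_{NOO1}$ and $UB_{NOO1}$.

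The substantive difficulty is the two baseline means of the NOO group, $\mathbbm{E}[Y_0^*(1)|D=1,NOO]$ and $\mathbbm{E}[Y_0^*(0)|D=0,NOO]$: because NOO units have $S_0=0$, their baseline outcomes are never observed and there is no trimming cell to exploit. Here I would bound each below by the relevant theoretical support endpoint ($Y_{10}^{LB}$ and $Y_{00}^{LB}$) and bound each above via the mean-dominance Assumption \ref{mean dominance}(c). For $d=0$ the dominating OOO baseline mean is point-identified, since $(D=0,S_0=1,S_1=1)$ is pure OOO under positive monotonicity, yielding the clean cap $\mathbbm{E}[Y_0|D=0,S_0=1,S_1=1]$. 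For $d=1$ the dominating mean $\mathbbm{E}[Y_0^*(1)|D=1,OOO]$ is itself only partially identified, because $(D=1,S_0=1,S_1=1)$ mixes OOO with ONO; I would therefore first trim that cell at mass $p_{OOO1}=\mathbb{P}[S_1=1|S_0=1,D=0]/\mathbb{P}[S_1=1|S_0=1,D=1]$ to obtain the OOO baseline bounds, and then feed the appropriate extreme into $\tau_{NOO}$.

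The main obstacle I anticipate is exactly this last propagation step: pushing a partially-identified dominating mean through the mean-dominance inequality with the correct orientation. Since the treated baseline term enters with a minus sign, the infimum of $\tau_{NOO}$ requires \emph{maximizing} $\mathbbm{E}[Y_0^*(1)|D=1,NOO]$, which the dominance caps by $\mathbbm{E}[Y_0^*(1)|D=1,OOO]$, forcing the latter to its largest admissible trimmed value; the symmetric reasoning governs the upper bound through $UB_{NOO1}$, $Y_{10}^{LB}$, and the point-identified OOO cap $\mathbbm{E}[Y_0|D=0,S_0=1,S_1=1]$. Once the four pieces are assembled with their signs, the interval $[LB_{\tau_{NOO}},UB_{\tau_{NOO}}]$ follows; I would close the argument by checking that every weight used ($p_{NNO1}$ and $p_{OOO1}$) is identified from Lemma \ref{lemma_propo_4a}, so that all endpoints are functionals of observables.
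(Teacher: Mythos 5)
Your plan reproduces the paper's proof almost step for step: the same decomposition from equation (\ref{dec_group}) via Assumptions \ref{no anti} and \ref{PT_group}; point identification of $\mathbbm{E}[Y_{1}^{*}(0)|D=0,NOO]$ from the $(D=0,S_{0}=0,S_{1}=1)$ cell because positive monotonicity removes NON; Lee-type trimming of the $(D=1,S_{0}=0,S_{1}=1)$ cell at NOO share $1-p_{NNO1}$, with $p_{NNO1}$ and $p_{OOO1}$ identified as in Lemma \ref{lemma_propo_4a}; support endpoints $Y_{00}^{LB}$, $Y_{10}^{LB}$ from below and Assumption \ref{mean dominance}(c) from above for the two never-observed baseline means, with the untreated cap point identified as $\mathbbm{E}[Y_{0}|D=0,S_{0}=1,S_{1}=1]$ and the treated cap requiring trimming of the $(D=1,S_{0}=1,S_{1}=1)$ mixture. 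Your assembly of $UB_{\tau_{NOO}}$ coincides exactly with the paper's.

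The gap is in the one step you yourself flagged as the obstacle: propagating the partially identified cap $\mathbbm{E}[Y_{0}^{*}(1)|D=1,OOO]\in[LB^{0}_{OOO1},UB^{0}_{OOO1}]$ into the lower bound. Maximizing $\mathbbm{E}[Y_{0}^{*}(1)|D=1,NOO]$ over all configurations consistent with the data and the dominance inequality, as you propose, forces the cap to $UB^{0}_{OOO1}$, yielding
\begin{equation*}
LB_{NOO1}-UB^{0}_{OOO1}-\mathbbm{E}[Y_{1}|D=0,S_{0}=0,S_{1}=1]+Y_{00}^{LB},
\end{equation*}
which, since $UB^{0}_{OOO1}\geq LB^{0}_{OOO1}$, is a strictly weaker lower bound than the stated $LB_{\tau_{NOO}}$; so your argument as written does not deliver the theorem. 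The paper's proof instead asserts $Y_{10}^{LB}\leq\mathbbm{E}[Y_{0}^{*}(1)|D=1,NOO]\leq LB^{0}_{OOO1}$ (its equation \eqref{bound_NOO1t0}), i.e., it caps the NOO baseline mean by the \emph{lowest} admissible trimmed value of the dominating OOO mean, not the highest, and the same convention appears in the $\tau_{NNO}$ result via $LB^{0}_{ONO1}$. To reproduce the theorem you must adopt that step. Be aware that it does not follow from Assumption \ref{mean dominance}(c) plus the trimming interval alone — your conservative orientation is the one valid uniformly over the identified set of the $(D=1,S_{0}=1,S_{1}=1)$ mixture — so the discrepancy is substantive rather than presentational: the paper's tighter interval is purchased by effectively reading the dominance condition against the most adverse (lowest-trimmed) value of the OOO baseline mean.
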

Proof of Theorem \ref{NOO_bound} is given in the Appendix \ref{proof:NOO_bound}. 

\subsection{Identification of Overall ATT}
After partially identifying the ATT for each of these latent groups, we provide a way to partially identify the overall ATT, $\tau$, which relies on combining the identified sets of the ATT of these different latent groups with their appropriate population weights. 

\begin{theorem}[Partial Identification of $\tau$]\label{theorem:partialid_tau}
    Under the Assumptions required for Theorems \ref{mono_bound}-\ref{NOO_bound}, the overall ATT in the population, $\tau = \mathbbm{E}[Y_1^\ast(1)-Y_1^\ast(0)\mid D=1]$, is bounded as follows: 
    \begin{equation}
        \tau \in \left[\tau_{LB}, \ \tau_{UB}\right] 
    \end{equation}
    where 
    {\footnotesize  \begin{align*}
        \tau_{LB} &= \sum_{g\in\{OOO,ONO\}} LB_{\tau_g} \cdot p_{g1}\cdot p_{11} +\sum_{g\in\{NOO,NNO\}} LB_{\tau_g} \cdot p_{g1}\cdot p_{01}+ LB_{\tau_{{ONN}}} \cdot p_{10}+LB_{\tau_{{NNN}}} \cdot p_{00}\\
		\tau_{UB} &= \sum_{g\in\{OOO,ONO\}} UB_{\tau_g} \cdot p_{g1}\cdot p_{11} +\sum_{g\in\{NOO,NNO\}} UB_{\tau_g} \cdot p_{g1}\cdot p_{01}+ UB_{\tau_{{ONN}}} \cdot p_{10}+UB_{\tau_{{NNN}}} \cdot p_{00}
    \end{align*}}
    and $p_{s_0s_1} \equiv \mathbbm{P}(S_0=s_0,S_1=s_1 \mid D=1)$ denotes the joint probability of observed selection in both periods for the treated group, $p_{g1}$ denotes the mixing proportions, $LB_{\tau_g}$ and $UB_{\tau_g}$ denote the group-specific bounds given in Theorems \ref{mono_bound}–\ref{NOO_bound}, along with theoretical support restrictions for the ONN and NNN groups. 
\end{theorem}
Proof can be found in Appendix \ref{proof:partialid_tau}. Theorem \ref{theorem:partialid_tau} provides a useful identification result about a parameter that is the typical target of a DiD analysis. 
The expressions for the lower and upper bounds give a detailed and transparent description of the sources of heterogeneity and information affecting the overall ATT. This can help researchers gauge how important each group is. For example, the weights assigned to each identified set and the width of the corresponding group-specific interval together indicate the relative influence of each group on the overall bounds.    

\section{Estimation and Inference} \label{estimation}
The estimation of the bounds defined in Theorem \ref{nomono_bound} and Theorem \ref{mono_bound} are based on the sample analogues of the population counterparts.  To estimate the bounds defined in Theorem \ref{nomono_bound} we first have to estimate the mixing proportions $p_{OOO1}(v^{l}_{1})$ and $p_{OOO0}(v^{l}_{0})$.
Formally, we have,
\begin{align*}
        \hat{p}_{OOO1}(v^{l}_{1})=\frac{\max\{\mathbbm{\hat{P}}[S_{1}=1|D=0, S_{0}=1]+\mathbbm{\hat{P}}[S_{1}=1|D=1, S_{0}=1]-1,0\}}{\mathbbm{\hat{P}}[S_{1}=1| D=1, S_{0}=1]},\\
        \hat{p}_{OOO0}(v^{l}_{0})=\frac{\max\{\mathbbm{\hat{P}}[S_{1}=1|D=0, S_{0}=1]+\mathbbm{\hat{P}}[S_{1}=1|D=1, S_{0}=1]-1,0\}}{\mathbbm{\hat{P}}[S_{1}=1| D=0, S_{0}=1]}
    \end{align*}
    where,
    \begin{align*}
        \mathbbm{\hat{P}}[S_{1}=1|D=0, S_{0}=1] &= \frac{\sum_{i=1}^{n}S_{i0}\cdot S_{i1}\cdot (1-D_{i})}{\sum_{i=1}^{n}S_{i0}\cdot (1-D_{i})}\\
        \mathbbm{\hat{P}}[S_{1}=1|D=1, S_{0}=1] &= \frac{\sum_{i=1}^{n}S_{i0}\cdot S_{i1}\cdot D_{i}}{\sum_{i=1}^{n}S_{i0}\cdot D_{i}}.
    \end{align*}
    With these estimated mixing proportions, the bounds for $\tau_{OOO}$ under Theorem \ref{nomono_bound} can be estimated as follows,
\begin{equation*}
    \begin{split}
    \widehat{LB}_{\tau_{OOO}} 
    &=\frac{\sum_{i=1}^n (Y_{i1}-Y_{i0}) \cdot S_{i0}\cdot S_{i1}\cdot D_{i}  \cdot I\left\{(Y_{i1}-Y_{i0}) \leqslant \hat{y}_{\hat{p}_{OOO1}(v^{l}_{1})}\right\}}{\sum_{i=1}^n S_{i0}\cdot S_{i1}\cdot D_{i}  \cdot I\left\{(Y_{i1}-Y_{i0}) \leqslant \hat{y}_{\hat{p}_{OOO1}(v^{l}_{1})}\right\}}\\
    &-\frac{\sum_{i=1}^n (Y_{i1}-Y_{i0}) \cdot S_{i0}\cdot S_{i1}\cdot (1-D_{i})  \cdot I\left\{(Y_{i1}-Y_{i0}) > \hat{y}_{1-\hat{p}_{OOO0}(v^{l}_{0})}\right\}}{\sum_{i=1}^n S_{i0}\cdot S_{i1}\cdot (1-D_{i})  \cdot I\left\{(Y_{i1}-Y_{i0}) > \hat{y}_{1-\hat{p}_{OOO0}(v^{l}_{0})}\right\}}\\
    \widehat{UB}_{\tau_{OOO}}
    &=\frac{\sum_{i=1}^n (Y_{i1}-Y_{i0}) \cdot S_{i0}\cdot S_{i1}\cdot D_{i}  \cdot I\left\{(Y_{i1}-Y_{i0}) > \hat{y}_{1-\hat{p}_{OOO1}(v^{l}_{1})}\right\}}{\sum_{i=1}^n S_{i0}\cdot S_{i1}\cdot D_{i}  \cdot I\left\{(Y_{i1}-Y_{i0}) > \hat{y}_{1-\hat{p}_{OOO1}(v^{l}_{1})}\right\}}\\
    &-\frac{\sum_{i=1}^n (Y_{i1}-Y_{i0}) \cdot S_{i0}\cdot S_{i1}\cdot (1-D_{i})  \cdot I\left\{(Y_{i1}-Y_{i0}) \leqslant \hat{y}_{\hat{p}_{OOO0}(v^{l}_{0})}\right\}}{\sum_{i=1}^n S_{i0}\cdot S_{i1}\cdot (1-D_{i})  \cdot I\left\{(Y_{i1}-Y_{i0}) \leqslant \hat{y}_{\hat{p}_{OOO0}(v^{l}_{0})}\right\}}
    \end{split}
\end{equation*}
where $\hat{y}_{\hat{p}_{OOO0}(v^{l}_{0})}$ and $\hat{y}_{1-\hat{p}_{OOO0}(v^{l}_{0})}$ are $\hat{p}_{OOO0}(v^{l}_{0})$-th and $(1-\hat{p}_{OOO0}(v^{l}_{0}))$-th quantile of the conditional distribution $Y_{1}-Y_{0}$ for the untreated individuals observed in both time periods. Similarly, $\hat{y}_{\hat{p}_{OOO1}(v^{l}_{1})}$ and $\hat{y}_{1-\hat{p}_{OOO1}(v^{l}_{1})}$ are $\hat{p}_{OOO1}(v^{l}_{1})$-th and $(1-\hat{p}_{OOO1}(v^{l}_{1}))$-th quantile of the conditional distribution $Y_{1}-Y_{0}$ for the treated individuals observed in both time periods. In general, the relevant $q$-th quantile  of the conditional distribution $Y_{1}-Y_{0}$ for the treated individuals observed in both time periods is calculated as,
\begin{equation*}
    \hat{y}_q  = \min \left\{y: \frac{\sum_{i=1}^n S_{i0}\cdot S_{i1}\cdot D_{i} \cdot I\left\{(Y_{i1}-Y_{i0})\leqslant y\right\}}{\sum_{i=1}^n S_{i0}\cdot S_{i1}\cdot D_{i}} \geqslant q\right\}, \quad \text{ where I($\cdot$) is an indicator function}.
\end{equation*}

The bounds for $\tau_{OOO}$ under Theorem \ref{mono_bound} can be estimated similarly. First, estimate the required mixing proportion $p_{OOO1}$ as follows,
\begin{align} \label{Estimation_p}
    \hat{p}_{OOO1} &=\frac{\mathbb{\hat{P}}[S_{1}=1|S_{0}=1,D=0]}{\mathbb{\hat{P}}[S_{1}=1|S_{0}=1,D=1]}, \nonumber \\
    \mathbb{\hat{P}}[S_{1}=1|D=0, S_{0}=1] &= \frac{\sum_{i=1}^{n}S_{i0}\cdot S_{i1}\cdot (1-D_{i})}{\sum_{i=1}^{n}S_{i0}\cdot (1-D_{i})}, \nonumber\\
    \mathbb{\hat{P}}[S_{1}=1|D=1, S_{0}=1] &= \frac{\sum_{i=1}^{n}S_{i0}\cdot S_{i1}\cdot D_{i}}{\sum_{i=1}^{n}S_{i0}\cdot D_{i}}.
\end{align}
Next, the estimated versions of $LB_{OOO1}$ and $UB_{OOO1}$ can be obtained as,
\begin{align*}
     \widehat{LB}_{OOO1} &= \frac{\sum_{i=1}^n (Y_{i1}-Y_{i0}) \cdot S_{i0}\cdot S_{i1}\cdot D_{i}  \cdot I\left\{(Y_{i1}-Y_{i0}) \leqslant \hat{y}_{\hat{p}_{OOO1}}\right\}}{\sum_{i=1}^n S_{i0}\cdot S_{i1}\cdot D_{i}  \cdot I\left\{(Y_{i1}-Y_{i0}) \leqslant \hat{y}_{\hat{p}_{OOO1}}\right\}}\\
    \widehat{UB}_{OOO1} &= \frac{\sum_{i=1}^n (Y_{i1}-Y_{i0}) \cdot S_{i0}\cdot S_{i1}\cdot D_{i}  \cdot I\left\{(Y_{i1}-Y_{i0}) > \hat{y}_{1-\hat{p}_{OOO1}}\right\}}{\sum_{i=1}^n S_{i0}\cdot S_{i1}\cdot D_{i}  \cdot I\left\{(Y_{i1}-Y_{i0}) > \hat{y}_{1-\hat{p}_{OOO1}}\right\}}
\end{align*}
where $\hat{y}_{\hat{p}_{OOO1}}$ and $\hat{y}_{1-\hat{p}_{OOO1}}$ are $\hat{p}_{OOO1}$-th and $(1-\hat{p}_{OOO1})$-th quantile of the conditional distribution $Y_{1}-Y_{0}$ for the treated individuals observed in both time periods. Next, $\mathbbm{E}[Y_{1}-Y_{0}|D=0,S_{0}=1,S_{1}=1]$ (denote as $E_{OOO0}$ for notational ease) will be estimated using its sample analogues
as,
\begin{align*}
     \hat{E}_{OOO0} &= \frac{\sum_{i=1}^n (Y_{i1}-Y_{i0}) \cdot S_{i0}\cdot S_{i1}\cdot (1-D_{i})}{\sum_{i=1}^n S_{i0}\cdot S_{i1}\cdot (1-D_{i})}.
\end{align*}
Finally, the bounds for $\tau_{OOO}$ defined in Theorem \ref{mono_bound} can be estimated as,
\begin{align*}
    	\widehat{LB}_{\tau_{OOO}} &= \widehat{LB}_{OOO1}-\hat{E}_{OOO0}\\
		\widehat{UB}_{\tau_{OOO}} &= \widehat{UB}_{OOO1}-\hat{E}_{OOO0}.
\end{align*}
The bounds for other latent groups defined in Theorems \ref{ONO_bound}, \ref{NNO_bound}, and \ref{NOO_bound} can be estimated similarly. The estimation steps are detailed in Appendix \ref{Estimation_other latent groups}. 

These sample analogue estimators of the bounding functions are functions of conditional probabilities, means, and trimmed means, and include non-smooth functions of auxiliary parameters for which we use plug-in estimates. Fortunately, when the true mixing proportions are strictly positive,\footnote{When the true mixing proportion is zero, the asymptotic behavior of the bounds can be characterized by letting the trimming proportion shrink to zero sufficiently slowly such that, asymptotically, there are enough observations in the trimming region to guarantee that the expectation is well defined. See \cite{andrews2013inference} and references therein for similar approaches.} $\sqrt{n}$-consistency and asymptotic normality of these estimators follow from results in \citet{chen2003estimation}.\footnote{We thank the associate editor for indicating that the proposed estimators are encompassed by the results in \cite{chen2003estimation}. In the supplementary materials Section \ref{Asymptotics}, we verify that the requisite conditions for applying \citet{chen2003estimation} hold, and we derive the resulting asymptotic distributions and their properties.} Hence, we can obtain two types of confidence intervals by applying standard inference procedures. Following \cite{lee2009} and \cite{huber2014treatment}, let $\widehat{LB}_{\tau_g}$ and $ \widehat{UB}_{\tau_g}$ be the estimated bounds for a specific latent group $g$ using the estimation method discussed above and $\hat{\sigma}_{LB_{\tau_g}}$ and $\hat{\sigma}_{{UB}_{\tau_g}}$ denote their respective standard deviations, obtained through bootstrap \citep{huber2014treatment,chen2015bounds}.\footnote{See \cite{bugni2010bootstrap}, \citet{bugni2015specification}, and \cite{andrews2024misspecified}, among others, for inferential methods for partially identified models that solve moment inequalities.} Then, we can compute the first confidence interval as,
\begin{equation*}
    \left[\widehat{LB}_{\tau_g}-1.96\cdot \frac{\hat{\sigma}_{LB_{\tau_g}}}{\sqrt{n}},\widehat{UB}_{\tau_g}+1.96\cdot \frac{\hat{\sigma}_{UB_{\tau_g}}}{\sqrt{n}}\right]
\end{equation*}
which will contain the true bounds with at least 95\% probability. The second option for confidence intervals is based on \citet{imbens2004confidence}. These confidence intervals are focused on covering the true treatment effect with 95\% probability, which are calculated as $[\widehat{LB}_{\tau_g}-C_{n}\cdot \frac{\hat{\sigma}_{LB_{\tau_g}}}{\sqrt{n}},\widehat{UB}_{\tau_g}+C_{n}\cdot \frac{\hat{\sigma}_{UB_{\tau_g}}}{\sqrt{n}}]$ where $C_{n}$ satisfies 
\begin{equation}\label{cn}
    \Phi\left(C_{n}+\sqrt{n}\frac{\widehat{UB}_{\tau_g}-\widehat{LB}_{\tau_g}}{\max(\hat{\sigma}_{LB_{\tau_g}},\hat{\sigma}_{UB_{\tau_g}})}\right)-\Phi(-C_{n})=0.95
\end{equation}
and are uniformly valid.\footnote{\citet{stoye2009more} highlights that the refined confidence interval from Imbens and Manski achieves uniform coverage only under the assumption that the estimator of the identified set length ($\widehat{UB}_{\tau_g}-\widehat{LB}_{\tau_g}$) is super-efficient near point-identification. He establishes a weaker sufficient condition of super-efficiency as being joint-normality of the lower and upper bound estimators along with the bounds being ordered (by construction). Both of these conditions hold in our case. \cite{stoye2009more} proposes a refinement with and without super-efficiency, and \cite{stoye2020simple} extends it to partial identification of a pseudo true parameter.}

\section{Simulation} \label{simulation}
This section presents simulation evidence of the bias induced by sample selection on the standard DiD estimates. It further demonstrates the feasibility of the identification and estimation procedures for partially identifying the ATTs of the different latent groups ($\tau_{OOO}$, $\tau_{ONO}$, $\tau_{NOO}$ and $\tau_{NNO}$) proposed above. The main data generating process (DGP) used in the simulations is as follows,
    \begin{align*}
        Y^\ast_{i0}(0) &= t^{g_i}_0+c_i+ u_{i0} \quad Y^{\ast}_{i1}(d)= t^{g_i}_1 + \tau^{g_i}\ast d+ c_i+ u_{i1}, \quad d=\{0,1\} \\
        S_{i0}(0) &= \mathbbm{1}\{ b_i+ v_{i0}>0\} \quad S_{i1}(d)= \mathbbm{1}\{\zeta\cdot d+ b_i+ v_{i1}>0\}, \quad d=\{0,1\} \\
        D_i &=\mathbbm{1}(a_i+w_{i1}>0)  
    \end{align*}
 where $\bigl(a_i,\;c_i,\;u_{i0},\;v_{i0},\;u_{i1},\;v_{i1}\bigr)'$ is drawn jointly from the six-dimensional truncated multivariate normal distribution, $\mathcal{N}_6\bigl(0,\Sigma_6\bigr)$, whose support is restricted to the hypercube ${[-M,M]^6}$. The covariance matrix $\Sigma_6$ has unit marginal variances, $\operatorname{Cov}(a_i,c_i)=\rho_{ac}$, $\operatorname{Cov}(u_{it},v_{it})=\rho_{u_t v_t}$ for $t=0,1$, and all remaining covariances are set to zero. The variable $c_i$ plays the role of time-invariant unobserved heterogeneity. Finally, $b_i$ and $w_{i1}$ are independent standard normal random variables. Latent group $g$ is defined by the tuple $\bigl(S_{i0},S_{i1}(0),S_{i1}(1)\bigr)\in\{NNN,NNO,NOO,ONN,\\ ONO,OOO\}$. The treatment effects $\tau^{g_i}$ and time effects  $t^{g_i}_t$ are both group-specific, with 
\begin{align*}
    \tau^{g_i}=\tau_{g}\cdot \mathbbm{1}\{g_i=g\}, \quad t^{g_i}_0=t_{0}\cdot \mathbbm{1}\{g_i=g\}, \quad \text{and} \quad t^{g_i}_1=t_{1}\cdot \mathbbm{1}\{g_i=g\}.
\end{align*}
We set $M=5$, $\zeta=1.5$, $\rho_{ac}=0.7$, $\rho_{u_0v_0}=0.7$, $\rho_{u_1v_1}=0.6$, $\tau_g=(-1,1,3,1,4,5)'$ , $t_0=(0,2,3,1,4,5)'$ and $t_1=(1,3,4,2,5,6)'$. The observed data are $\left\{Y_{it},S_{it},D_{i}\right\}_{i=1}^n$ where $Y^\ast_{i1}=Y^\ast_{i1}(0)\cdot (1-D_i)+Y^\ast_{i1}(1)\cdot D_i$, $S_{i1}=S_{i1}(0)\cdot (1-D_i)+S_{i1}(1)\cdot D_i$, and $Y_{it}=S_{it}\cdot Y^\ast_{it}$, for each $t=\{0,1\}$. 

This DGP satisfies Assumptions \ref{no anti}, \ref{PT_OOO}, \ref{PT_group}, \ref{monotone}, \ref{inde_conditional}, and \ref{mean dominance}. The true overall ATT is $\tau=2.8504$ and the latent group-specific ATTs are given by: $\tau_{OOO}=5$, $\tau_{ONO}=4$, $\tau_{ONN}=1$, $\tau_{NOO}=3$, $\tau_{NNO}=1$ and $\tau_{NNN}=-1$. 

\begin{figure}[H]
    \centering
    \caption{Distribution of estimated mixture proportion, $\hat{p}_{OOO1}$, with monotonicity}
   \includegraphics[height=0.29\textheight]{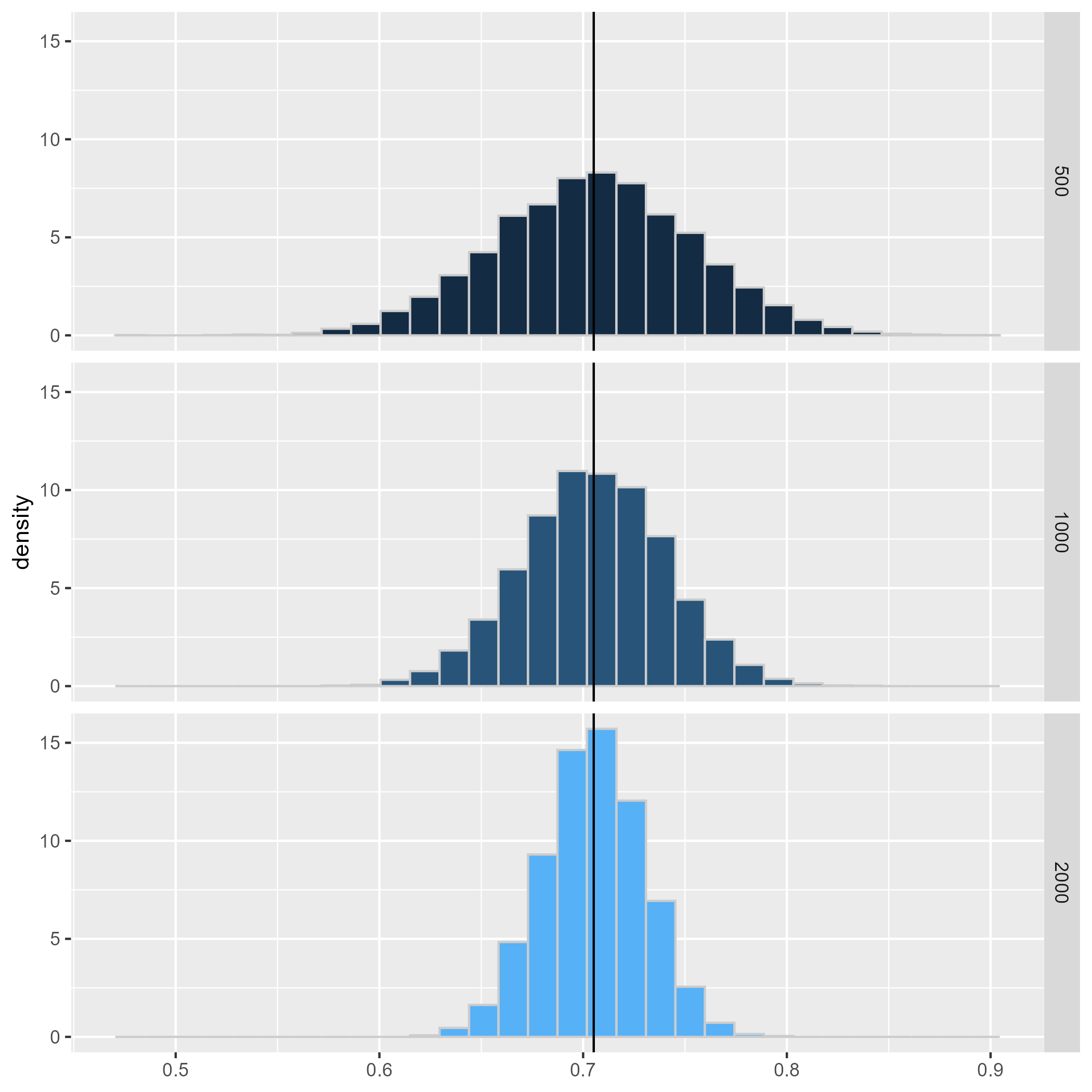}  
    \label{fig:mixp}
    \begin{minipage}{0.7\textwidth}
        \footnotesize \textit{Note:} This figure presents the empirical sampling distribution of $\hat{p}_{OOO1}$ for sample sizes $n \in \{500, 1000, 2000\}$, based on 10,000 replications. The vertical black line is the value of the true mixing proportion $p_{OOO1} = 0.7052$.
    \end{minipage}
\end{figure} 
Under this DGP, the true mixing probability is $p_{OOO1}$ = 0.7052 and the true numerical bounds for $\tau_{OOO}$ are given by $[LB_{\tau_{OOO}}, UB_{\tau_{OOO}}] = [3.6829,5.1969]$. Notice that these contain the true ATT for the OOO group. The mathematical derivation of these numerical bounds can be found in Appendix \ref{numerical bounds}.

We draw random samples of 500, 1000, and 2000 observations from this DGP and compute the empirical distribution of the estimated bounds over 10,000 simulation draws (replications). The bounds for $\tau_{OOO}$ are estimated under two sets of assumptions. The first corresponds to the bounds given in Theorem \ref{nomono_bound}, which we refer to as the \textit{without-monotonicity} scenario. The second corresponds to the bounds presented in Theorem \ref{mono_bound}, which we refer to as the \textit{with-monotonicity} scenario. Figure \ref{fig:mixp} plots the distribution of the estimated mixing probability, $\hat{p}_{OOO1}$, for the \textit{with-monotonicity} scenario, which is centred around the black vertical line, representing the true value of the mixing probability. The average estimated bounds for $\tau_{OOO}$ under the two sets of assumptions are given in Table \ref{simu bounds_OOO_10k}. These are seen to contain the true ATT for OOO of 5. As we discuss in Section \ref{Canonicalbias}, the na\"{\i}ve DiD estimate exhibits an upward bias of around 55.4\% in this relatively simple DGP, which satisfies the full independence assumption \ref{Partialselection}$(Joint)$. The 95\% CI reflects the coverage probability of the true interval, which hovers around 93\%, whereas the Imbens and Manski (IM) 95\% CI is the coverage probability of the true parameter, which hovers around 99\%. Simulation results for the remaining latent groups (ONO, NOO, and NNO) are provided in Appendix \ref{simu_other}.
\begin{table}[htbp]
    \centering
    \caption{Estimated bounds for $\tau_{OOO}$}
    \label{simu bounds_OOO_10k}
    \begin{threeparttable}
    \begin{tabular}{lcccccc}
    \toprule
    &       & \multicolumn{2}{c}{Without monotonicity } &       & \multicolumn{2}{c}{With monotonicity} \\
    \cmidrule{3-4}\cmidrule{6-7}          & $\hat{\tau}_{\textup{DiDs}}$ & $[\widehat{LB}_{\tau_{OOO}}, \widehat{UB}_{\tau_{OOO}}]$ & Coverage &       & $[\widehat{LB}_{\tau_{OOO}}, \widehat{UB}_{\tau_{OOO}}]$ & Coverage \\
    \midrule
    $n=500$   & 4.4291 & [3.3374, 5.5367] &       &       & [3.6868, 5.1909] &  \\
    95\% CI &       &  (2.8325, 6.0332) & 0.9977 &       & (3.2216, 5.6440) & 0.9305 \\
    IM 95\% CI &       & (2.9137, 5.9534) & 0.9998 &       & (3.2963, 5.5712) & 0.9898 \\
    \midrule
    $n=1000$  & 4.4301 & [3.3375, 5.5392] &       &       & [3.6864, 5.1942] &  \\
    95\% CI &       & (2.9811, 5.8884) & 0.9995 &       & (3.3587, 5.5144) & 0.9349 \\
    IM 95\% CI &       & (3.0384, 5.8323) & 1     &       & (3.4114, 5.4629) & 0.9971 \\
    \midrule
    $n=2000$  & 4.4334 & [3.3421, 5.5412] &       &       & [3.6894, 5.1979] &  \\
    95\% CI &       &  (3.0904, 5.7867)  & 1     &       & (3.4581, 5.4241) & 0.9318 \\
    IM 95\% CI &       & (3.1308, 5.7473) & 1     &       & (3.4953, 5.3877) & 0.9994 \\
    \bottomrule
    \end{tabular}%
    \end{threeparttable}
    \begin{tablenotes}[flushleft]
    \item\footnotesize Notes: This table presents estimated bounds for $\tau_{OOO}$ for three sample sizes, $n \in \{ 500, 1000, 2000\}$. These have been averaged across 10,000 replications.  The true simulated bounds are $[LB_{\tau_{OOO}}, UB_{\tau_{OOO}}] = [3.6829,5.1969]$. The 95\% confidence interval (CI) reflects the coverage probability of the true interval. The Imbens and Manski (IM) 95\% CI reports the coverage probability of the true parameter.
    \end{tablenotes}
\end{table}%
\paragraph{Mixing Proportion Approaching One:} We also examine how the bounds for $\tau_{OOO}$ behave under monotonicity (Theorem \ref{mono_bound}) as we increase the mixing proportion to approach 1. To do this, we modify the main DGP given above by varying the parameter $\zeta$ between 0.1 and 3, which generates mixing proportions ranging from 0.9601 to 0.6685, respectively. The results for this case can be found in Appendix Table \ref{tbl:change Mixp}. It becomes clear from these results that as the mixing proportion approaches 1, the estimated bounds get tighter and contract towards the true value of $\tau_{OOO}=5$. 

\paragraph{Violation of Monotonicity:} To further investigate the behavior of the proposed bounds for $\tau_{OOO}$ under the no-monotonicity setting of Theorem \ref{nomono_bound}, we modify the DGP by drawing $\zeta \in \{-1.5, 0, 1.5\}$ at random. This generates positive selection for some units, negative selection for others, and no selection for the rest, such that monotonicity is violated on average for the overall sample. The resulting bounds are reported in Appendix Table \ref{simu bounds nomono OOO} and contain the true ATT for the OOO group. These are seen to be wider compared to the bounds obtained under monotonicity (Table \ref{simu bounds_OOO_10k}). In conclusion, these findings validate our theoretical results on the relationship between proposed no-monotonicity and monotonicity bounds.

\paragraph{Strength of Assumption \ref{Partialselection}:} To assess how different versions of Assumption \ref{Partialselection} affect identification, we modify the DGP so that Assumptions \ref{Partialselection}(a) and \ref{Partialselection}(b) hold, while both monotonicity and Assumption \ref{inde_conditional} fail. We then estimate the \textit{without-monotonicity} bounds for $\tau_{OOO}$ based on Theorem \ref{nomono_bound}. This case is discussed in Appendix section \ref{DGP_4ab} and the results are reported in Appendix Table \ref{simu bounds_OOO_4ab}. We find that these bounds are noticeably wider than the \textit{without-monotonicity} bounds obtained under Assumption \ref{inde_conditional} (Table \ref{simu bounds_OOO_10k}).     

We conduct a similar exercise by modifying the DGP so that Assumption \ref{Partialselection}(a) and monotonicity hold together. We then estimate the \textit{with-monotonicity} bounds given in Theorem \ref{mono_bound}. The estimated bounds provided in Appendix Table \ref{simu bounds_OOO_4a} are again wider than the \textit{with-monotonicity} bounds estimated under Assumption \ref{inde_conditional} that are reported in Table \ref{simu bounds_OOO_10k}. These findings illustrate how strengthening Assumption \ref{Partialselection} narrows the identified set for $\tau_{OOO}$ and produces more informative bounds.

\section{Empirical Illustrations } \label{application} 
In this section, we illustrate our partial identification approach with two empirical applications. First, we revisit the 1970's experiment described in \cite {lalonde1986evaluating} by using the Aid to Families with Dependent Children (AFDC) sample of women from the National Supported Work (NSW) training program. Here, we consider the sample selection problem arising from unemployment (reported zero earnings). For the second application, we consider the study by \cite{bloom2015does} in which an experiment evaluates the effectiveness of a company's working from home policies. In this application, sample selection bias arises from employee attrition. Although both applications were conducted as randomized experiments, outcomes are only observed conditional on being selected into the sample. If selection is endogenous to the outcomes of interest in both the pre- and post-treatment periods, then random assignment does not necessarily ensure that treated and control units remain comparable, especially if individuals enter or exit the sample differently across groups and periods.

\subsection{NSW Training Program for AFDC Women}\label{Application NSW}

NSW  was a temporary employment program implemented in the United States between 1975-1979. It was designed to help individuals from disadvantaged populations find stable employment by offering them structured work experience and counselling in a sheltered environment. The program targeted four disadvantaged socio-economic groups, with qualified applicants being randomly assigned to training. In our empirical analysis, we only consider the AFDC sub-sample of women originally studied in \cite{calonico2017women}, consisting of 1185 individuals out of which 600 received training. We apply the proposed approach to account for sample selection arising from unobserved earnings.

We treat zero earnings as unobserved wages due to individuals' inability to find suitable employment. In this case, $D=1$ if an individual is assigned to receive training and zero otherwise. $Y_{t}^\ast(0)$ and $Y_{t}^\ast(1)$ are potential earnings of an individual and $S_{t}(1)$ and $S_{t}(0)$ are potential indicators for being employed or not. Our treatment effect of interest, $ \tau_{OOO}=\mathbbm{E}[Y_{1}^{\ast}(1)-Y_{1}^{\ast}(0)|D=1, S_{0}=1, S_{1}(0)=1, S_{1}(1)=1]$, captures the effect of training on earnings for the latent group that is ``always employed''. Policy makers often care about this latent group as it reflects the returns to training for workers who would remain employed regardless of training, as they are highly attached to the labor force. Many policies aim to improve outcomes among continuously employed participants. Therefore, this parameter is informative for evaluating the impact of a program within a committed, consistently engaged population, reflecting an important component of the treatment effect on the intensive margin. During the pre-treatment period, 73.3\% of the treated and 74.7\% of the control samples reported zero earnings. The follow-up survey indicates 45.0\% of the treated and 46.2\% of the control group individuals reported zero earnings (see Table \ref{unemployment rate}).\footnote{Tables \ref{summary NSW unemployed base} and \ref{summary NSW unemployed follow up} 
present the covariates used in the analysis along with means for the observed (unobserved) samples for both treatment groups.}

We estimate bounds for $\tau_{OOO}$ under two sets of assumptions. The first considers Assumptions \ref{no anti}, \ref{PT_OOO}, \ref{Partialselection}(a), \ref{Partialselection}(b), and \ref{rs} which we refer to as the \textit{without-monotonicity} scenario and the second under Assumptions \ref{no anti}, \ref{PT_OOO}, \ref{monotone}, \ref{Partialselection}(a), and \ref{rs} which we refer to as the \textit{with-monotonicity} scenario. Intuitively, Assumption \ref{no anti} requires that women's pre-treatment wage offers and the ability/willingness to secure a job (i.e., be employed) are unaffected by their eventual assignment into the training program. For example, if we consider $S(0)$ and $S(1)$ to reflect the decision to work as determined by the interplay of reservation wages and wage offers, the no anticipation assumption rules out scenarios in which, upon learning of their assignment to the training program, women immediately increase their reservation wages or firms increase their wage offers to those slated to participate in the program. As mentioned in \cite{calonico2017women}, this assumption is likely to hold for earnings measured in 1975, before assignment to the program began. Assumption \ref{PT_OOO} will be plausible if, in the absence of treatment, latent wage trends do not differ between the treated and untreated samples of women. Importantly, to partially identify $\tau_{OOO}$, this condition only needs to hold for the always-observed group. In our setting, these are likely women who have higher attachment to the labor force. Given the program's focus on low-income women with dependents and recent unemployment spells, this group is more homogeneous in their employment patterns and wage trends, making the parallel trends assumption credible. Similarly, the ignorability of potential selection (Assumption \ref{Partialselection}) requires that the counterfactual employment behavior of women in the post-treatment period be similar across training assignment groups conditional on pre-treatment unemployment. This is plausible if differences are driven primarily by pre-existing employment patterns rather than differential assignment to training, and rules out that women assigned to the training program would have been, on average, more likely to become unemployed than their counterparts not assigned for training. For example, it rules out the possibility that, in the absence of treatment, reservation wages for the $D=1$ and $D=0$ groups would respond differently to shocks or unobservables, such as having a young child in the household in the post-training period.

In this application, we assume monotonicity operates in the positive direction, implying that receiving training increases the probability of employment and, hence, the likelihood of observing employed individuals in the treated sample compared to the control sample. The results under each set of assumptions are presented in Table \ref{tab:NSW zero}. The bounds derived without Assumption \ref{monotone} are wide and uninformative. Imposing monotonicity substantially tightens the bounds, demonstrating the assumption's informational content. We cannot rule out $\tau_{OOO}$ between \$1,404 and \$1,718. The na\"{\i}ve DiD estimates suggest an increase of around \$1,613 dollars in annual earnings due to training.\footnote{As discussed in Appendix \ref{Appendix: Trimmed}, the na\"{\i}ve DiD is always within the identified set for $\tau_{OOO}$ based on the trimming approach discussed in Theorems \ref{nomono_bound}-\ref{mono_bound}.} 
\begin{table}[htbp]
  \centering
    \caption{Estimated bounds for $\tau_{OOO}$ with and without monotonicity} 
    \label{tab:NSW zero}
    \begin{threeparttable}
    \begin{tabular}{lccc}
    \toprule
    Scenario & [$\widehat{LB}_{\tau_{OOO}}, \widehat{UB}_{\tau_{OOO}}$] & $\hat{p}_{OOO1}$ & $\hat{p}_{OOO0}$ \\
    \midrule
    Without monotonicity & [-8.5230, 11.8245] & [0.4277, 0.9963] & [0.4293, 1] \\
    With monotonicity & [1.4036, 1.7179] & 0.9963 & 1 \\
    Na\"{\i}ve DiD ($\tau_{\textup{DiDs}}$) & 1.6133 &       &  \\
    \bottomrule
    \end{tabular}%
    \begin{tablenotes}[flushleft]
    \item\footnotesize Notes: All earnings are expressed in thousands of 1982 dollars.
    \end{tablenotes}
    \end{threeparttable}    
\end{table}%
While it's interesting to characterize the effects of training for the always-observed group of women, who have relatively high attachment to the labor force and whose estimated ATT could serve as a useful benchmark when treatment effect heterogeneity is limited, it's natural to wonder about the effects for other groups. This is especially true since the NSW demonstration explicitly targeted more vulnerable populations with weaker labor market attachment. Specifically, policymakers may be more interested in estimating the ATT for those i) unemployed before training who will be employed post-treatment irrespective of training (i.e. $\tau_{NOO}$), ii) employed only if they are given training (i.e. $\tau_{NNO}$), or iii) people employed before training who will only be employed post-treatment if they are given training (i.e. $\tau_{ONO}$). The estimated bounds corresponding to these ATTs are presented in Table \ref{tab:NSW other} and use the results in Theorems \ref{ONO_bound}-\ref{NOO_bound}, which impose assumptions \ref{no anti}, \ref{PT_group}, \ref{monotone}, \ref{inde_conditional}, \ref{rs} and \ref{mean dominance}. 

The last column of Table \ref{tab:NSW other} reports the estimated population shares corresponding to each latent group type. The proportion of NOO is estimated to be large (37.4\%), which aligns with the selection criteria for participation in NSW. This also matches with the data where we observe a high fraction of individuals reporting zero earnings in the pre-treatment period and a sizable increase in post-treatment employment among both the treated and control groups. Meanwhile, the share of the NNO latent group reflects the relatively small extensive margin effect of the training program among the individuals who are initially unemployed, implying that training induces only a small increase in employment. The share of OOO is important, at 16.5\%, but also highlights the fact that we can only learn about the impact of the policy with more certainty for a small part of the population. The limited information available for the NOO, NNO, and ONO latent groups is reflected in the wide estimated ATT bounds, which in all cases encompass both the estimated set for $\tau_{OOO}$ and zero. Therefore, we cannot rule out either the case of homogeneous treatment effects relative to OOO or the possibility of a nil effect of the training program for these groups. The overall ATT lies in the interval $\tau \in [-13.5636,15.4139]$. These bounds are estimated based on the results in Theorem \ref{theorem:partialid_tau}. One should also note that, even under these assumptions, bounds for the overall ATT are not very informative given the wide identified sets for the ATT of each latent group and the limited importance of the always-observed group in the population at large. 

\begin{table}[htbp]
  \centering
  \caption{Estimated ATT bounds and CIs for the effect of NSW training on employment}
  \label{tab:NSW other}
  \begin{threeparttable}
    \begin{tabular}{ccccc}
    \toprule
    $\tau_g$ & [$\widehat{LB}_{\tau_g},\widehat{UB}_{\tau_g}$] & 95\% CI & IM 95\% CI & $\widehat{\pi}_g$ \\
    \midrule
    $\tau_{OOO}$ & [1.4036, 1.7179] & (-0.4674, 3.4837) & (-0.1667, 3.1999) & 0.1651 \\
    $\tau_{NOO}$ & [-2.3100, 5.4767] & (-3.6533, 6.7935) & (-3.4374, 6.5819) & 0.3743 \\
    $\tau_{NNO}$ & [-6.9252, 21.3140] & (-7.9650, 26.9658) & (-7.7979, 26.0575) & 0.0092 \\
    $\tau_{ONO}$ & [-12.7556, 37.6753] & (-15.3462, 47.9870) & (-14.9299, 46.3297) & 0.0006 \\
    \bottomrule
    \end{tabular}%
    \begin{tablenotes}[flushleft]
    \item\footnotesize Notes: Earnings are expressed in thousands of 1982 dollars. Estimated bounds are given in square brackets and confidence intervals (CIs) in round brackets. CIs are based on 1000 bootstraps. For Imbens and Manski (IM) CIs, $c_n=1.645$ satisfies the relationship given in Equation (\ref{cn}) for all the cases considered and $\hat{\pi}_{g}=\hat{\pi}_{g1}+\hat{\pi}_{g0}$ represents the estimated proportion of latent group `$g$' in the population.
\end{tablenotes}
\end{threeparttable}
\end{table}%
\subsection{Impact of Work From Home on Employee Performance}\label{Application: WFH}
In this section, we revisit the results of an experiment at Ctrip, a 16,000-employee, NASDAQ-listed Chinese travel agency \citep{bloom2015does}. This was carried out to evaluate the effectiveness of working from home (WFH) policies on employee performance. The experiment was conducted from January 2010 to August 2011 among eligible employees of the airfare and hotel departments at the company's Shanghai call centre, who volunteered to participate. Out of the employees who volunteered, only 49.5\% (249) met the eligibility criteria set by the company. Employees with even-numbered birthdays were assigned to the treated group, where they were allowed to WFH and those with odd birthdays were assigned to the control group with no option to WFH. Accordingly, 52.6\% (131) and 47.4\% (118) were allocated to the treatment and control groups, respectively. We use average individual weekly performance z-scores, a combination of different key performance indicators standardized based on each job type, to evaluate employee performance.\footnote{See \cite{bloom2015does} for a detailed description of the experiment and data collection process.} We will illustrate how our identification strategy can be used to account for selection bias due to employee attrition in the experimental period.\footnote{Table \ref{summary WFH base} and \ref{summary WFH follow up} in Appendix \ref{WFH} reports descriptive statistics for the observed covariates.}
        
In this application, $D=1$ if an employee is eligible to WFH and zero otherwise. $Y_{t}^\ast(0)$ and $Y_{t}^\ast(1)$ are potential average individual weekly performance z-scores of employees under the two treatment scenarios. $S_{t}(1)$ and $S_{t}(0)$ are potential indicators for attrition, where $S_{t}=1$ is the realized sample participation, indicating the particular employee stayed with the company. Our primary focus is the identification of $\tau_{OOO}=\mathbbm{E}[Y_{1}^{\ast}(1)-Y_{1}^{\ast}(0)|D=1, S_{0}=1, S_{1}(0)=1, S_{1}(1)=1]$ which captures the ATT of WFH eligibility on employee performance for the subgroup of employees who will stay with the company irrespective of WFH or not.\footnote{Since the treatment variable $D=1$ indicates eligibility to the program, we could interpret $\tau_{OOO}$ as the intent-to-treat effect of the WFH program on worker's performance for the always-observed group.} This could be interpreted as the ``intensive'' margin effects of the WFH policy, that is, changes in performance for employees that would have been retained even in the absence of WFH.

Table \ref{Employee Attrition} shows the attrition rates during the experimental period. We observe an attrition rate for the control group (34.8\%) that is more than double that for the treated group (16.0\%). 

\begin{table}[htbp]
    \caption{Attrition rates among employees in the experimental period }
    \label{Employee Attrition}
    \begin{adjustbox}{center=\textwidth}
    \begin{tabular}{lccc}
     \toprule
                                & Control & Treated & Total \\
                                \midrule
    Employees who left Ctrip & 41      & 21     & 62   \\
    Total                       & 118      & 131     & 249  \\
    \midrule
    Attrition rates (\%)         & 34.8     & 16.0    & 24.9 \\
    \bottomrule
    \end{tabular}
    \end{adjustbox}
\end{table}
We estimate two sets of bounds for $\tau_{OOO}$. The \textit{without-monotonicity} case imposes assumptions \ref{no anti}, \ref{PT_OOO}, \ref{Partialselection}(a),  \ref{Partialselection}(b), and \ref{rs}, and the \textit{with-monotonicity} scenario under assumptions \ref{no anti}, \ref{PT_OOO}, \ref{monotone}, \ref{Partialselection}(a), and \ref{rs}. Intuitively, Assumption \ref{no anti} requires that employee performance and the decision to remain with the firm before the WFH policy is introduced are unaffected by the assignment to remote work or not. For example, it rules out scenarios where workers who would have otherwise left the firm in the pre-treatment period decide to stay in order to benefit from the WFH scheme. Given the random assignment to WFH based on birth date, it is unlikely that workers could anticipate their assignments and alter pre-treatment quitting behavior. However, we cannot rule out the possibility that workers who were considering quitting might have been more likely to volunteer for the program and delay their decisions until after assignment to the treatment group. This would not violate the no anticipation assumption and is naturally incorporated by the ignorability of potential selection assumption, which only requires the same post-treatment quitting behavior between the two WFH assignment groups, conditional on being observed in the pre-treatment period. For parallel trends on latent outcomes (assumption \ref{PT_OOO}), it is plausible to expect that, in the absence of WFH policy, performance trends for employees would have followed similar paths across treatment assignment groups. This is because the always-observed group comprises of workers with stable attachment to the firm who continue to perform the same roles within the same organizational environment and are evaluated under the same performance criteria. As a result, differences in performance trends are less likely to reflect systematic differences in underlying productivity growth, making parallel trends a reasonable assumption. For the second case, we assume positive monotonicity, implying that WFH employees are at least as likely to stay in the company during the experimental period as those who do not WFH. This can be justified as employees assigned to the WFH treatment can choose whether or not to take advantage of the policies, and are likely better off due to increased convenience and reduced commuting costs if they decide to participate.
  
The results are presented in Table \ref{tab:WFH}. The na\"{\i}ve DiD implies that the overall performance of the treated group is  0.1759 standard deviations higher than what would have been in the absence of the WFH treatment.\footnote{As discussed in Appendix \ref{Appendix: Trimmed}, the na\"{\i}ve DiD is always within the identified set for $\tau_{OOO}$ based on the trimming approach discussed in theorems \ref{nomono_bound}-\ref{mono_bound}.} The more flexible bounds estimated without assuming monotonicity are wide, but rule out a decline in standardized performance larger than 0.33 std. deviations as well as increases above 0.67 std. deviations. These can be tightened by imposing Assumption \ref{monotone}, leading to an identified set ranging between 0.0057 and 0.3806 standard deviations improved performance for the effect of being assigned to WFH among employees who stay with the company regardless of being eligible for the policy.  

\begin{table}[htbp]
  \centering
    \caption{Estimated bounds for $\tau_{OOO}$ with and without monotonicity}
    \label{tab:WFH}
    \begin{tabular}{lccc}
    \toprule
    Scenario & $[\widehat{LB}{\tau_{OOO}}, \widehat{UB}_{\tau_{OOO}}]$ & $\hat{p}_{OOO1}$ & $\hat{p}_{OOO0}$ \\
    \midrule
    Without monotonicity & [-0.3242, 0.6660] & [0.5862, 0.7771] & [0.7543, 1] \\
    With monotonicity & [0.0057, 0.3806] & 0.7771 & 1 \\
    Na\"{\i}ve DiD ($\tau_{\textup{DiDs}}$) & 0.1759 &       &  \\
    \bottomrule
    \end{tabular}%
  \label{tab:addlabel}%
\end{table}%
In contrast to the previous application, all workers in this setting are observed in the initial period, which reduces the number of possible latent types. However, the share of always-observed workers among treated individuals observed in both periods is at most 77.7\%.

Furthermore, as presented in Table \ref{tab:WFH CI}, under monotonicity, the OOO group is estimated to represent close to 65.3\% of the overall population. The ONO latent group is an interesting subtype and represents those workers who would have left the company if WFH were not provided but would stay otherwise. Using results in Theorem \ref{ONO_bound}, we can partially identify the ATT for this group by imposing Assumptions \ref{no anti}, \ref{PT_group}, \ref{monotone}, \ref{inde_conditional}, \ref{rs} and \ref{mean dominance}(a). As can be seen in Table \ref{tab:WFH CI} the ONO encompasses 18.7\% of the population. The estimated identified set for $\tau_{ONO}$ includes zero and a wide range of values. Therefore, we cannot determine the direction of the effect of WFH eligibility (if any) on workers' performance for this latent group without additional information.

\begin{table}[htbp]
  \centering
  \caption{Estimated ATT bounds and CIs for the effect of WFH on employee performance}
    \label{tab:WFH CI}
    \begin{threeparttable}
    \begin{tabular}{ccccc}
    \toprule
    ATT & [$\widehat{LB}_{\tau_g}, \widehat{UB}_{\tau_g}$] & 95\% CI & IM 95\% CI & $\hat{\pi}_g$ \\
    \midrule
    $\tau_{OOO}$   & [0.0057, 0.3806] & (-0.1439, 0.5212) & (-0.1199, 0.4986) & 0.6525 \\
    $\tau_{ONO}$   & [-1.0914, 3.8445] & (-1.3922, 4.0715) & (-1.3439, 4.0350) & 0.1872 \\
    \bottomrule
    \end{tabular}%
    \begin{tablenotes}[flushleft]        
    \item \footnotesize Notes: Bounds are given in square brackets and confidence intervals (CIs) in round brackets. CIs are based on 1000 bootstraps. For Imbens and Manski (IM) CIs, $c_n=1.645$ satisfy the relationship given in Equation (\ref{cn}) for all the cases considered and $\hat{\pi}_{g}=\hat{\pi}_{g1}+\hat{\pi}_{g0}$ represents the estimated proportion of latent group `$g$' in the population.
    \end{tablenotes}
    \end{threeparttable}
\end{table}%
In this particular application, since there is no sample selection in the pre-treatment period and we assume monotonicity, there are only three latent groups, OOO, ONO and ONN. We have estimated bounds for the ATT for both the OOO and ONO groups, which together account for roughly 84.0\% of the total population. The remaining 16.0\% corresponds to workers who would have exited the company regardless of their WFH eligibility status.
The overall ATT lies in the interval $\tau \in [-0.9413,1.7086]$. These bounds are estimated based on the results in Theorem \ref{theorem:partialid_tau}. 

\section{Conclusion} \label{conclusion}
In this article, we address the challenge of endogenous sample selection within the difference-in-differences (DiD) framework. We demonstrate that the standard DiD estimates can be biased when sample selection is ignored, even under the strong assumption of independence between the selection and treatment assignment mechanisms. We propose methods for partial identification of average treatment effects for different latent treated subpopulations by building on the trimming procedure of \cite{lee2009}.

For the OOO-group, identification relies on the insight that individuals observed in both periods are a mixture of two possible latent groups. The mixture proportions are identified under alternative sets of assumptions on the selection and treatment assignment mechanisms. Specifically, we consider scenarios with and without MS. When MS is not imposed, the latent strata proportions are identified under the IS assumption, which assumes counterfactual selection probabilities to be equal between the treated and untreated units, conditional on being observed in the baseline period. In cases where monotonicity holds, we achieve tighter bounds on the ATT for the OOO by imposing ignorability for selection in one direction only. We also discuss identification of ATT bounds for the OOO group when covariates are observed by considering both no monotonicity and a relaxed version of monotonicity.  We also extend the basic identification argument (without covariates) to settings in which the researcher has access only to repeated cross-sectional data and to staggered treatment adoption across multiple time periods. 

Additionally, we present the identified sets for the ATT of other empirically relevant latent groups, such as ONO, NON, and NOO, under MS and outcome mean dominance assumptions. Combining these group-specific identified sets with weights also allows us to partially identify the overall ATT in the population. We illustrate our results through two empirical illustrations: 1) bounding the effects of a job training program and 2) bounding the effects of a work-from-home policy on employee performance. These applications highlight the practical relevance of the proposed bounds in two different empirical settings.

\label{Bibliography}
\pagebreak
\singlespacing
\bibliographystyle{ecta.bst}
\bibliography{mybib}  
\pagebreak
\appendix
\setcounter{footnote}{0} 
\numberwithin{equation}{section}
\numberwithin{table}{section}
\numberwithin{figure}{section}
\numberwithin{lemma}{section}
\numberwithin{assumption}{section}
\numberwithin{theorem}{section}
\section{Identification under Negative Monotonicity}\label{neg_mono}

\begin{assumption}\label{neg_monotone}
    Negative monotone sample selection: 
	\begin{align*}
		\mathbbm{P}[S_{1}(1)\leq S_{1}(0)]&=1 
	\end{align*}
\end{assumption}

\textcolor{blue}{Assumption \ref{neg_monotone}} assumes that treatment decreases the probability of selection for all individuals. Negative selection implies that there are no individuals whose outcomes are observed only when they are treated. For example, consider a context of scholarships and exam attendance. While the policy aims to encourage academic participation, it may in fact reduce the probability that some students sit for final exams. Without the scholarship, students face stronger incentives to perform well in order to secure future funding, so they are more likely to attend exams. With the scholarship in place, however, some students feel financially secure and may exert less effort, skipping exams they otherwise would have taken. In this setting, treatment (receiving the scholarship) lowers selection (exam attendance), illustrating negative monotonicity. Assumption \ref{neg_monotone} rules out the strata NNO and ONO, that is, individuals that would not be observed in period one if untreated but will be observed if treated.

\begin{lemma}\label{lemma:mono_weights_neg} Under Assumptions \ref{no anti} and \ref{neg_monotone} (negative monotonicity), we obtain $p_{OOO1}=1$ and $p_{OOO0} = \frac{\mathbbm{P}[S_1(1)=1|D=0, S_0=1]}{\mathbbm{P}[S_1=1|D=0, S_0=1]}$. 
\end{lemma}
Proof is provided in Appendix \ref{proof:mono_weights_neg}. Negative monotonicity rules out latent groups NNO and ONO, and $p_{OOO1}=1$ and $p_{OOO0} = \frac{\mathbbm{P}[S_1(1)=1|D=0, S_0=1]}{\mathbbm{P}[S_1=1|D=0, S_0=1]}$(Lemma \ref{lemma:mono_weights_neg}).
Since $\mathbbm{E}[Y_{1}^{\ast}(1)-Y_{0}^{\ast}(1)|D=1,OOO]$ is point identified in that case, there is no need for assumption \ref{Partialselection}(a). We can obtain point identification of $p_{OOO0}$ by combining negative monotonicity in selection and Assumption \ref{Partialselection}(b). This point identifies $p_{OOO0}=\frac{\mathbbm{P}[S_1=1|D=1, S_0=1]}{\mathbbm{P}[S_1=1|D=0, S_0=1]}$ by assuming $\mathbbm{P}[S_1(1)=1|D=1, S_0=1]=\mathbbm{P}[S_1(1)=1|D=0, S_0=1]$.

\begin{theorem}[Bounds for $\tau_{OOO}$ under negative monotonicity]\label{mono_bound_negative} Under Assumptions \ref{no anti},\ref{PT_OOO},\ref{neg_monotone}, \ref{Partialselection}(b), \ref{rs}, and $p_{OOO0}>0$, bounds on the treatment effect on the treated for the always-observed group ($\tau_{OOO}$) lies in the interval $[LB_{\tau_{OOO}}, UB_{\tau_{OOO}}]$ where,
\begin{equation*} 
    \begin{split}
        LB_{\tau_{OOO}} &= \mathbbm{E}[Y_{1}-Y_{0}|D=1,S_{0}=1,S_{1}=1]-UB_{OOO0}\\
        UB_{\tau_{OOO}} &=\mathbbm{E}[Y_{1}-Y_{0}|D=1,S_{0}=1,S_{1}=1]-LB_{OOO0}
    \end{split}
\end{equation*}
where,  
\begin{align*}
    LB_{OOO0} &=\mathbbm{E}[Y_{1}-Y_{0}|D=0,S_{0}=1,S_{1}=1, (Y_{1}-Y_{0})\leq F_{\Delta Y|011}^{-1}(p_{OOO0})]\\
    UB_{OOO0} &=\mathbbm{E}[Y_{1}-Y_{0}|D=0,S_{0}=1,S_{1}=1, (Y_{1}-Y_{0}) > F_{\Delta Y|011}^{-1}(1-p_{OOO0})]
\end{align*}
with $p_{OOO0}=\frac{\mathbbm{P}[S_{1}=1|S_{0}=1,D=1]}{\mathbbm{P}[S_{1}=1|S_{0}=1,D=0]}$.
\end{theorem}
Proof is provided in Appendix \ref{proof:mono_bound_negative}.

\section{Extension to Repeated Cross-section}\label{sec:extension_RC}
In many cases, the researcher may have access only to repeated cross-sections of the population of interest. In that case, DiD can still be implemented by comparing the changes in outcomes for the treated and control groups, even though one can no longer match individuals/units over time \citep{sant2026difference,sant2020doubly,abadie2005semiparametric,meyer1990workers,finkelstein2002effect}. That setting is challenging in the presence of sample selection, as we cannot rely on information from subsamples of units observed in both periods or even calculate changes in outcome for specific individuals, as we did in Section \ref{identification}. Nevertheless, we can still partially identify the ATT for a latent group of always-observed individuals, as defined below.\footnote{A formal analysis of any efficiency loss from using repeated cross-sections instead of panel data, as explored in \citet{sant2020doubly}, is left for future work.}
 
Let $T$ be an indicator for the post-treatment period, such that it equals one if a unit is observed in the post-treatment period and zero otherwise. Then, the latent outcome and observed selection is given by
\begin{equation*}
    Y^\ast = T\cdot Y^\ast_1+(1-T)\cdot Y^\ast_0 \text{ and } S = T\cdot S_1+(1-T)\cdot S_0,
\end{equation*} and $Y = S\cdot Y^\ast$. We develop this scenario with the sampling assumption of no compositional changes. This is a frequently imposed assumption with cross-section data \citep{abadie2005semiparametric,callaway2021difference}, and requires sampling observations from the same population across the two periods. Formally, 
\begin{assumption}[Sampling assumption]\label{No_com} Assume that $\{(Y_i, D_i, T_i); \ i=1,2,\ldots, N\}$ consists of independent and identically distributed draws from the following mixture distribution,
\begin{align*}
    \mathbbm{P}[Y\leq y,D=d,S=s,T=t]&= \lambda\cdot t \cdot \mathbbm{P}[Y^\ast_1\leq y, D=d, S_1=s|T=1]\\
    & + (1-\lambda) \cdot (1-t) \cdot \mathbbm{P}[Y^\ast_0\leq y, D=d, S_0=s|T=0]
\end{align*} where $\lambda \equiv \mathbbm{P}[T=1]\in (0,1)$ reflects the proportion of observations sampled in the post-treatment period, with the joint distribution of $D$ being invariant to $T$.
\end{assumption}

The repeated cross-sections DiD estimand that ignores sample selection is 
\begin{equation*}
    \begin{split}
        \tau^{rc}_{\textup{DiDs}} &\equiv \mathbbm{E}[Y|D=1, T=1, S=1]-\mathbbm{E}[Y|D=1, T=0, S=1] \\
        & -\mathbbm{E}[Y|D=0, T=1, S=1] + \mathbbm{E}[Y|D=0, T=0, S=1]
    \end{split}
\end{equation*} and simply compares the observed outcomes for treated and untreated groups in the pre-treatment and post-treatment periods.

Since we don't observe the same individuals in both time periods, we define the latent subgroups for each period $T=t$ as follows: 
\begin{table}[H]
  \centering
  \caption{Latent groups based on sample selection at $T=t$}
    \begin{tabular}{ccc}
    \toprule
    $S_t(0)$ & $S_t(1)$ & $G=g$ \\
    \midrule
    1     & 1     & OO \\
    1     & 0     & ON \\
    0     & 1     & NO \\
    0     & 0     & NN \\
    \bottomrule
    \end{tabular}%
  \label{tab:addlabel}%
\end{table}%
The relationship between the observed and latent groups is given in the table below:
\begin{table}[H]
    \centering
    \caption{Observed and latent groups with repeated cross-sections}
    \label{tab:OO}
    \begin{tabular}{cccccc}
    \toprule
    \multicolumn{3}{c}{T=0} & \multicolumn{3}{c}{T=1} \\
    \midrule
    S     & D=0   & D=1   & S     & D=0   & D=1 \\
    \midrule
    0     & NN    & NN    & 0     & NN, NO & ON, NN \\
    1     & OO    & OO    & 1     & ON, OO & NO, OO \\
    \bottomrule
    \end{tabular}
\end{table}
We define our target parameter to be the ATT for the subpopulation that is always observed, denoted by OO, and indicates that selection equals one in the period considered under both counterfactual treatment states. Formally,
\begin{equation}\label{att_oo}
    \tau_{OO}=\mathbbm{E}[Y_{1}^{\ast}(1)-Y_{1}^{\ast}(0)|D=1,S_{t}(0)=1,S_{t}(1)=1].
\end{equation}
To be able to partially identify the ATT for the OO group and characterize the bias for the na\"{\i}ve DiD estimator in the repeated cross-sections case, we impose a PTO condition similar to Assumption \ref{PT_OOO} in the main text.

\begin{assumption}[Parallel trends for the OO group]\label{pt_rc} 
\footnotesize{
\begin{align*}
   \mathbbm{E}[Y_1^\ast(0)|D=1, T=1, OO] - \mathbbm{E}[Y_0^\ast(0)|D=1, T=0, OO] = \mathbbm{E}[Y_1^\ast(0)|D=0, T=1, OO] - \mathbbm{E}[Y_0^\ast(0)|D=0, T=0, OO]. 
\end{align*}}
\end{assumption}
\begin{lemma}[Bias of  $\tau^{rc}_{\textup{DiDs}}$]\label{lemma_dids_rc} Under Assumptions \ref{no anti}, \ref{monotone}, \ref{No_com} and \ref{pt_rc}, the DiD estimand for the observed group with repeated cross-section, $\tau^{rc}_{\textup{DiDs}}$, can be decomposed as
\begin{align*}
   \tau^{rc}_{\textup{DiDs}} &=q_{OO11}\cdot\tau_{OO}+(1-q_{OO11})\cdot\tau_{NO} +(1-q_{OO11})\cdot \big(\mathbbm{E}[Y_1^\ast(0)|D=1,T=1, NO]\\
   &-\mathbbm{E}[Y_1^\ast(0)|D=1, T=1, OO]\big)
\end{align*}
  where  $q_{OO11}=\sfrac{\pi_{OO11}}{(\pi_{OO11}+\pi_{NO11})}$ , $\pi_{gdt}=\mathbbm{P}[G=g, D=d, T=t]  = \mathbbm{P}[S_1(0)=s',S_1(1)=s'', D=d, T=t]$. 
\end{lemma}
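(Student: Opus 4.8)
The plan is to expand each of the four observed conditional means defining $\tau^{rc}_{DiDs}$ into its latent-type components (via Table \ref{tab:OO}), collapse them using positive monotonicity and no anticipation, and finally bridge the pre- and post-period pieces with the parallel-trends Assumption \ref{pt_rc}. First I would invoke the no-compositional-changes Assumption \ref{No_com} to rewrite each observed mean as a potential-outcome expectation over the period-specific subpopulation. Because $Y=Y^\ast$ whenever $S=1$, the $T=1$ cells become $\mathbbm{E}[Y|D=d,T=1,S=1]=\mathbbm{E}[Y_1^\ast(d)|D=d,S_1(d)=1,T=1]$, and the $T=0$ cells become $\mathbbm{E}[Y|D=d,T=0,S=1]=\mathbbm{E}[Y_0^\ast|D=d,S_0=1,T=0]$.

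Next I would decompose the two post-period selected cells. From Table \ref{tab:OO}, $\{D=1,T=1,S=1\}$ is a mixture of OO and NO with weights $q_{OO11}$ and $1-q_{OO11}$, while $\{D=0,T=1,S=1\}$ is a mixture of OO and ON. Positive monotonicity (Assumption \ref{monotone}) rules out ON (since $S_1(0)=1,S_1(1)=0$ contradicts $S_1(1)\ge S_1(0)$), so the untreated selected cell is pure OO and $\mathbbm{E}[Y|D=0,T=1,S=1]=\mathbbm{E}[Y_1^\ast(0)|D=0,T=1,OO]$ is point identified. For the pre-period cells, no anticipation (Assumption \ref{no anti}) makes $Y_0^\ast$ and $S_0$ treatment-invariant, and the $S_0=1$ cell is the OO group of the $T=0$ classification in Table \ref{tab:OO}.

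The decisive algebraic step is to regroup the claimed right-hand side, $q_{OO11}\tau_{OO}+(1-q_{OO11})\tau_{NO}+(1-q_{OO11})\big(\mathbbm{E}[Y_1^\ast(0)|D=1,T=1,NO]-\mathbbm{E}[Y_1^\ast(0)|D=1,T=1,OO]\big)$. Collecting the $Y_1^\ast(0)$ terms, the $\mathbbm{E}[Y_1^\ast(0)|D=1,T=1,NO]$ contributions cancel and the remaining untreated terms combine to $-\mathbbm{E}[Y_1^\ast(0)|D=1,T=1,OO]$; the $Y_1^\ast(1)$ terms reassemble into exactly the expanded mixture $\mathbbm{E}[Y|D=1,T=1,S=1]$. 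Hence the right-hand side equals $\mathbbm{E}[Y|D=1,T=1,S=1]-\mathbbm{E}[Y_1^\ast(0)|D=1,T=1,OO]$. Matching this against $\tau^{rc}_{DiDs}$ reduces the claim to showing $\mathbbm{E}[Y|D=1,T=0,S=1]-\mathbbm{E}[Y|D=0,T=1,S=1]+\mathbbm{E}[Y|D=0,T=0,S=1]=\mathbbm{E}[Y_1^\ast(0)|D=1,T=1,OO]$, which follows by substituting the point-identified untreated post-period term and rearranging Assumption \ref{pt_rc} (using no anticipation to replace $Y_0^\ast(0)$ by the observed pre-period $S_0=1$ means).

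I expect the main obstacle to be the pre-period bookkeeping: I must justify that the observed $S_0=1$ means (the two $T=0$ terms) are exactly the OO-group pre-period means that enter Assumption \ref{pt_rc}, which rests on reading the $T=0$ OO cell of Table \ref{tab:OO} as $\{S_0=1\}$ together with no anticipation. Getting every cancellation and, in particular, the sign and weight $1-q_{OO11}$ of the residual cross-group trend term $\mathbbm{E}[Y_1^\ast(0)|D=1,T=1,NO]-\mathbbm{E}[Y_1^\ast(0)|D=1,T=1,OO]$ correct is the part most prone to error, and I would verify it by tracking the OO versus NO coefficients separately throughout.
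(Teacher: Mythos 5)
Your proposal is correct and follows essentially the same route as the paper's proof in Appendix \ref{L2}: decompose the four observed cells into latent types, use no anticipation to identify the $T=0$ cells as OO means and positive monotonicity to rule out ON in the untreated post-period cell, apply the OO/NO mixture to the treated post-period cell, and close with Assumption \ref{pt_rc}. The only difference is presentational — you expand the claimed right-hand side and match it to $\tau^{rc}_{DiDs}$, while the paper works forward by adding and subtracting $\mathbbm{E}[Y_1^\ast(0)|D=1,T=1,OO]$ and cancelling the parallel-trends bracket — and your flagged bookkeeping point (reading the $T=0$, $S_0=1$ cell as the OO group via no anticipation) is resolved exactly as in the paper.
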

Proof is provided in Appendix \ref{proof:lemma_dids_rc}. Similar to the case of a two-period panel, the na\"{\i}ve DiD applied to repeated cross-section involves a weighted average of treatment effects for the always-observed (OO) and observed-when-treated (NO) plus a term which reflects the average differences in outcomes in the absence of treatment between the OO and NO groups among those treated in the post-treatment period. Since we can no longer track individuals observed in the post-treatment period in the baseline, the OO and NO groups average over $S_0$. Similar as before, the extent of the bias here depends on two factors; the share of OO in the $D=1, T=1$ group and how different the average untreated potential outcomes are between the two latent types. If $S_1(0)$ is exogenous to $Y_1^\ast(0)$, the bias disappears and $\tau^{rc}_{\textup{DiDs}}$ has a causal interpretation. Note, however, that this assumption is stronger than requiring selection to be exogenous to the trends of the untreated potential outcome (as in the case when panel data is available).
\subsection{Bounds for $\tau_{OO}$}
To bound the ATT for OO, we start by showing that a hypothetical DiD estimand for the OO group would correctly recover $\tau_{OO}$.
{\small \begin{align*}
    &\mathbbm{E}[Y|D=1, T=1, OO] - \mathbbm{E}[Y|D=1, T=0, OO] - \mathbbm{E}[Y|D=0, T=1, OO]+ \mathbbm{E}[Y|D=0, T=0, OO] \\
    & = \mathbbm{E}[Y_1^\ast(1)|D=1, T=1, OO] - \mathbbm{E}[Y_0^\ast(1)|D=1, T=0, OO] \\
    &- \mathbbm{E}[Y^\ast_1(0)|D=0, T=1, OO]+ \mathbbm{E}[Y^\ast_0(0)|D=0, T=0, OO] \\
    & = \mathbbm{E}[Y_1^\ast(1)|D=1, T=1, OO] - \mathbbm{E}[Y_0^\ast(0)|D=1, T=0, OO] \tag{Assumption \ref{no anti}}\\
    &- \mathbbm{E}[Y^\ast_1(0)|D=0, T=1, OO]+ \mathbbm{E}[Y^\ast_0(0)|D=0, T=0, OO] \\
    & = \mathbbm{E}[Y_1^\ast(1)-Y_1^\ast(0)|D=1, T=1, OO] = \tau_{OO} \tag{Assumption \ref{pt_rc}}
\end{align*}}
In order to bound $\tau_{OO}$ we can partially identify each of the four expectations in the second and third lines of the display above.
First, consider the group of individuals observed to be treated in the post-treatment period.
\begin{align}\label{mixp_rcs111}
    \mathbbm{E}[Y|D=1, T=1, S=1] &= \mathbbm{E}[Y^\ast_1(1)|D=1, T=1, OO]\cdot q_{OO11} \nonumber \\
    &+\mathbbm{E}[Y^\ast_1(1)|D=1, T=1, NO]\cdot (1-q_{OO11})
\end{align} 
Second, consider the group that is observed to be untreated in the post-treatment period. 
\begin{align}\label{mixp_rcs011}
    \mathbbm{E}[Y|D=0, T=1, S=1] &= \mathbbm{E}[Y^\ast_1(0)|D=0, T=1, OO]\cdot q_{OO01} \nonumber \\
    &+\mathbbm{E}[Y^\ast_1(0)|D=0, T=1, ON]\cdot (1-q_{OO01})
\end{align} where $q_{OO01} = \frac{\pi_{OO01}}{\pi_{OO01}+\pi_{ON01}}$.
While the two groups above do not point identify $\mathbbm{E}[Y^\ast_1(1)|D=1, T=1, OO]$ and $\mathbbm{E}[Y^\ast_1(0)|D=0, T=1, OO]$, they provide the opportunity to partially identify these quantities using the trimming procedure of \cite{lee2009}, as it will be shown below.

For the groups observed in the pre-treatment period, the desired expectations are point identified.
\begin{align*}
    \mathbbm{E}[Y|D=1, T=0, S=1] &= \mathbbm{E}[Y^\ast_0(1)|D=1, T=0, S_0(1)=1] \\
    & = \mathbbm{E}[Y^\ast_0(0)|D=1, T=0, OO] \tag{Assumption \ref{no anti}}
\end{align*}
Similarly, if we consider the untreated group observed in the pre-treatment period, then 
\begin{align*}
    \mathbbm{E}[Y|D=0, T=0, S=1] &= \mathbbm{E}[Y^\ast_0(0)|D=0, T=0, S_0(0)=1] \\
    & = \mathbbm{E}[Y^\ast_0(0)|D=0, T=0, OO] \tag{Assumption \ref{no anti}}.
\end{align*}
Finally, bound the $\mathbbm{E}[Y^\ast_1(1)|D=1, T=1, OO]$ by the same trimming strategy in the main text,
\begin{align*}
    LB_{OO11} &= \mathbbm{E}[Y|D=1, T=1, S=1, Y\leq F^{-1}_{Y|111}(q_{OO11})] \\
    UB_{OO11} &= \mathbbm{E}[Y|D=1, T=1, S=1, Y>F^{-1}_{Y|111}(1-q_{OO11})]
\end{align*} where $F^{-1}_{Y|111}(\cdot)$ is the quantile function of the distribution of $Y$ given $D=1$, $T=1$, $S=1$. Similarly, we can obtain the lower and upper bounds for $\mathbbm{E}[Y^\ast_1(0)|D=0, T=1, OO]$ as
\begin{align*}
    LB_{OO01} &= \mathbbm{E}[Y|D=0, T=1, S=1, Y\leq F^{-1}_{Y|011}(q_{OO01})] \\
    UB_{OO01} &= \mathbbm{E}[Y|D=0, T=1, S=1, Y>F^{-1}_{Y|011}(1-q_{OO01})].
\end{align*} Combining the bounds for $\mathbbm{E}[Y^\ast_1(0)|D=0, T=1, OO]$ and $\mathbbm{E}[Y^\ast_1(1)|D=1, T=1, OO]$ along with point-identification of the other two expectations, we can say that $\tau_{OO}$ lies in the interval
\begin{align}
    &\big[LB_{OO11}-UB_{OO01}-\mathbbm{E}[Y|D=1, T=0, S=1]+\mathbbm{E}[Y|D=0, T=0, S=1], \nonumber \\
    &UB_{OO11}-LB_{OO01}-\mathbbm{E}[Y|D=1, T=0, S=1]+\mathbbm{E}[Y|D=0, T=0, S=1]\big]. \label{UB_tauOO}
\end{align}
The remaining challenge is the identification of the mixing weights for the trimming procedure, which we tackle under different assumptions below.
\subsection{Identification of Mixing Weights and $\tau_{OO}$ without Monotonicity}
We now shift attention to identifying the mixing weights $q_{OO11}$ and $q_{OO01}$ from the observed data. Consider the probability, 
{\small \begin{align*}
    \mathbbm{P}[S=1|D=1, T=1] &= \mathbbm{P}[S_1(1)=1|D=1, T=1]\\
    & = \mathbbm{P}[S_1(0)=1, S_1(1)=1|D=1, T=1]+\mathbbm{P}[S_1(0)=0, S_1(1)=1|D=1, T=1] \\
    & = \frac{\pi_{OO11}}{\mathbbm{P}[D=1, T=1]}+\frac{\pi_{NO11}}{\mathbbm{P}[D=1, T=1]} 
\end{align*}}
Next, consider the probability
{\small \begin{align*}
    \mathbbm{P}[S=1|D=0, T=1] &= \mathbbm{P}[S_1(0)=1|D=0, T=1]\\
    & = \mathbbm{P}[S_1(0)=1, S_1(1)=1|D=0, T=1]+\mathbbm{P}[S_1(0)=1, S_1(1)=0|D=0, T=1]\\
     & = \frac{\pi_{OO01}}{\mathbbm{P}[D=0, T=1]}+\frac{\pi_{NO01}}{\mathbbm{P}[D=0, T=1]}
\end{align*}}
The probability $\mathbbm{P}[S_1(0)=1, S_1(1)=1|D=d, T=1]$ can be partially identified using the following Fr\'echet bounds:
{\small \begin{align*}
    \mathbbm{P}[S_1(0)=1, S_1(1)=1|D=d, T=1] &\in [\max\big\{\mathbbm{P}[S_1(0)=1|D=d, T=1]+\mathbbm{P}[S_1(1)=1|D=d, T=1]-1, 0\big\}, \\
    &\min\big\{\mathbbm{P}[S_1(0)=1|D=d, T=1], \mathbbm{P}[S_1(1)=1|D=d, T=1]\big\} ].
\end{align*}}
Note that the probabilities of potential selection into the sample $\mathbbm{P}[S_1(0)=1|D=0, T=1]$ and $\mathbbm{P}[S_1(1)=1|D=1, T=1]$ can be directly point identified from the observed data. However, we still need to identify their counterfactual versions $\mathbbm{P}[S_1(0)=1|D=1, T=1]$ and $\mathbbm{P}[S_1(1)=1|D=0, T=1]$, respectively. Similarly to the case for panel data discussed in Section \ref{identification}, we consider assumptions about the relationship of the potential sample selection across treatment groups.
\begin{assumption}[Ignorability of treatment in potential selection]\label{Partialselection_rcs} \
    \begin{itemize}
        \item[(a)] $\mathbbm{P}[S_1(0)=1|D=1, T=1] = \mathbbm{P}[S_1(0)=1|D=0,T=1]$. OR parallel trends in selection as follows:
        $\mathbbm{P}[S_1(0)=1|D=1, T=1]-\mathbbm{P}[S_0(0)=1|D=1, T=0] = \mathbbm{P}[S_1(0)=1|D=0,T=1]-\mathbbm{P}[S_0(0)=1|D=0,T=0]$. 
        \item[(b)] $\mathbbm{P}[S_1(1)=1|D=1, T=1] = \mathbbm{P}[S_1(1)=1|D=0,T=1]$. OR parallel trends in selection as follows:
        $\mathbbm{P}[S_1(1)=1|D=1, T=1]-\mathbbm{P}[S_0(1)=1|D=1, T=0] = \mathbbm{P}[S_1(1)=1|D=0,T=1]-\mathbbm{P}[S_0(1)=1|D=0,T=0]$. 
    \end{itemize} 
\end{assumption}
Note that Assumption \ref{Partialselection_rcs}(a) allows us to identify $\mathbbm{P}[S_1(0)=1|D=1, T=1]$ as $\mathbbm{P}[S_1(0)=1|D=0, T=1]=\mathbbm{P}[S_1=1|D=0, T=1]$, or if we use the parallel trends in selection version of the assumption, 
\begin{align*}
  \mathbbm{P}[S_1(0)=1|D=1, T=1] & = \mathbbm{P}[S_1(0)=1|D=0,T=1]-\mathbbm{P}[S_0(0)=1|D=0,T=0] \\
  & +\mathbbm{P}[S_0(0)=1|D=1, T=0] \\
  &= \mathbbm{P}[S_1=1|D=0,T=1]-\mathbbm{P}[S_0=1|D=0,T=0]\\
  &+\mathbbm{P}[S_0=1|D=1, T=0].  \tag{Assumption \ref{no anti}}
\end{align*}
Similarly, Assumption \ref{Partialselection_rcs}(b) identifies $\mathbbm{P}[S_1(1)=1|D=0, T=1]$ as $\mathbbm{P}[S_1(1)=1|D=1, T=1]=\mathbbm{P}[S_1=1|D=1, T=1]$ or using the parallel trends in selection, we get
\begin{align*}
  \mathbbm{P}[S_1(1)=1|D=0, T=1] &= \mathbbm{P}[S_1(1)=1|D=1, T=1]-\mathbbm{P}[S_0(1)=1|D=1, T=0]\\
  &+\mathbbm{P}[S_0(1)=1|D=0,T=0] \\
  & = \mathbbm{P}[S_1=1|D=1, T=1]-\mathbbm{P}[S_0=1|D=1, T=0]\\
  &+\mathbbm{P}[S_0=1|D=0,T=0]. \tag{Assumption \ref{no anti}}
\end{align*}
This leads to the partial identification of $\mathbbm{P}[S_1(0)=1, S_1(1)=1|D=d, T=1]$ as follows, depending on which version of the assumption we impose:
\begin{lemma}\label{lemma:pipartialid_rcs}
    (a) Under Assumptions \ref{no anti}
    and \ref{Partialselection_rcs}(a), 
    {\footnotesize
     \begin{align}
        \mathbbm{P}[S_1(0)=1, S_1(1)=1|D=1, T=1] &\in [\max\big\{\mathbbm{P}[S_1=1|D=0,T=1]+\mathbbm{P}[S_1=1|D=1, T=1]-1, 0\big\}, \nonumber \\
        &\min\big\{\mathbbm{P}[S_1=1|D=0,T=1], \mathbbm{P}[S_1=1|D=1, T=1]\big\} ]
    \end{align} 
    OR
    \begin{align}
        \mathbbm{P}[S_1(0)=1, S_1(1)=1|D=1, T=1] &\in [\max\big\{\mathbbm{P}[S_1=1|D=0,T=1]-\mathbbm{P}[S_0=1|D=0,T=0] \nonumber \\
        &+\mathbbm{P}[S_0=1|D=1, T=0]+\mathbbm{P}[S_1=1|D=1, T=1]-1, 0\big\}, \nonumber\\
        & \min\big\{\mathbbm{P}[S_1=1|D=0,T=1]-\mathbbm{P}[S_0=1|D=0,T=0] \nonumber \\
        &+\mathbbm{P}[S_0=1|D=1, T=0], \mathbbm{P}[S_1=1|D=1, T=1]\big\} ]
    \end{align}}
    (b) Under Assumptions \ref{no anti}
    and \ref{Partialselection_rcs}(b), 
    {\footnotesize \begin{align}
         \mathbbm{P}[S_1(0)=1, S_1(1)=1|D=0, T=1] &\in [\max\big\{\mathbbm{P}[S_1=1|D=0, T=1]+\mathbbm{P}[S_1=1|D=1, T=1]-1, 0\big\}, \nonumber \\
         &\min\big\{\mathbbm{P}[S_1=1|D=0, T=1], \mathbbm{P}[S_1=1|D=1, T=1]\big\} ]
         \end{align}
         OR
         \begin{align}
         \mathbbm{P}[S_1(0)=1, S_1(1)=1|D=0, T=1] &\in [\max\big\{\mathbbm{P}[S_1=1|D=0, T=1]+\mathbbm{P}[S_1=1|D=1, T=1] \nonumber\\
         &-\mathbbm{P}[S_0=1|D=1, T=0]+\mathbbm{P}[S_0=1|D=0,T=0]-1, 0\big\}, \nonumber \\
         &\min\big\{\mathbbm{P}[S_1=1|D=0, T=1], \mathbbm{P}[S_1=1|D=1, T=1] \nonumber \\
         &-\mathbbm{P}[S_0=1|D=1, T=0]+\mathbbm{P}[S_0=1|D=0,T=0]\big\} ]
         \end{align}}
\end{lemma}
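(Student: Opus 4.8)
The plan is to apply the Fréchet--Hoeffding bounds to the joint counterfactual selection probability and then substitute in the two conditional marginals, one of which is identified directly from the data and the other from the stable-trends assumption. First I would fix $d$ and condition throughout on $D=d, T=1$. Writing $a \equiv \mathbbm{P}[S_1(0)=1 \mid D=d, T=1]$ and $b \equiv \mathbbm{P}[S_1(1)=1 \mid D=d, T=1]$, treating $S_1(0)$ and $S_1(1)$ as two binary variables under this conditioning gives, with no further assumptions,
\[
\mathbbm{P}[S_1(0)=1, S_1(1)=1 \mid D=d, T=1] \in \big[\max\{a+b-1, 0\}, \min\{a, b\}\big].
\]
The remaining task is to pin down $a$ and $b$ in terms of observables.

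For part (a), with $d=1$, the marginal $b = \mathbbm{P}[S_1(1)=1 \mid D=1, T=1]$ is directly point-identified, since the realized $S_1$ equals $S_1(1)$ when $D=1$, so $b = \mathbbm{P}[S_1=1 \mid D=1, T=1]$. The counterfactual marginal $a = \mathbbm{P}[S_1(0)=1 \mid D=1, T=1]$ is recovered from Assumption \ref{Partialselection_rcs}(a): under its equality version it equals $\mathbbm{P}[S_1(0)=1 \mid D=0, T=1] = \mathbbm{P}[S_1=1 \mid D=0, T=1]$ (the second equality because $S_1 = S_1(0)$ when $D=0$), while under its parallel-trends version solving the stated equality for $\mathbbm{P}[S_1(0)=1 \mid D=1, T=1]$ yields $\mathbbm{P}[S_1(0)=1 \mid D=0, T=1] - \mathbbm{P}[S_0(0)=1 \mid D=0, T=0] + \mathbbm{P}[S_0(0)=1 \mid D=1, T=0]$. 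I would then invoke Assumption \ref{no anti} ($S_0 = S_0(0) = S_0(1)$) to replace each $S_0(\cdot)$ by the observed baseline indicator $S_0$, turning all terms into observed pre- and post-period selection rates. Substituting these identified marginals into the displayed Fréchet interval produces the two stated forms.

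Part (b) is symmetric with $d=0$. Here $a = \mathbbm{P}[S_1(0)=1 \mid D=0, T=1]$ is directly identified as $\mathbbm{P}[S_1=1 \mid D=0, T=1]$, while the counterfactual $b = \mathbbm{P}[S_1(1)=1 \mid D=0, T=1]$ is recovered from Assumption \ref{Partialselection_rcs}(b): either directly as $\mathbbm{P}[S_1=1 \mid D=1, T=1]$, or, under the parallel-trends version, as $\mathbbm{P}[S_1=1 \mid D=1, T=1] - \mathbbm{P}[S_0=1 \mid D=1, T=0] + \mathbbm{P}[S_0=1 \mid D=0, T=0]$ after the same no-anticipation substitution. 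Plugging into the Fréchet bounds gives the two intervals claimed in the lemma.

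The step requiring the most care is not the Fréchet inequality itself, which is mechanical, but verifying that the parallel-trends version of Assumption \ref{Partialselection_rcs} combined with no-anticipation genuinely expresses the counterfactual marginal through observables, and that each of the four relevant selection rates ($\mathbbm{P}[S_0=1 \mid D=d, T=0]$ and $\mathbbm{P}[S_1=1 \mid D=d, T=1]$) is consistently estimable. Unlike the panel case, we cannot link the same units across periods, so I would appeal to the sampling assumption (Assumption \ref{No_com}) to ensure that each rate is recovered from its own cross-section under an invariant population. Once these identifications are established, the substitution into the Fréchet interval is immediate.
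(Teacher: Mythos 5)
Your proposal is correct and follows essentially the same route as the paper's own proof: apply the Fr\'echet--Hoeffding bounds to $\mathbbm{P}[S_1(0)=1, S_1(1)=1\mid D=d, T=1]$, point-identify the realized marginal directly, and recover the counterfactual marginal from either version of Assumption \ref{Partialselection_rcs} together with Assumption \ref{no anti}. Your closing remark invoking the sampling assumption (Assumption \ref{No_com}) to justify estimability of each cross-sectional rate is a sensible addition that the paper leaves implicit, but it does not change the argument.
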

Proof of Lemma \ref{lemma:pipartialid_rcs} can be found in Appendix \ref{proof:pipartialid_rcs}. Similar to the panel data case, we can use Lemma \ref{lemma:pipartialid_rcs} to bound $q_{OO11}$ and $q_{OO01}$. Let $\nu_{d}$ denote any value in the identified set for $\mathbbm{P}[S_1(0)=1, S_1(1)=1|D=d, T=1]$ and $q_{gd1}(\nu_d) = \frac{\nu_d}{\mathbbm{P}[S_1(d)=1|D=d, T=1]}$. Then, the least favorable values of $LB_{OO11}$ and $UB_{OO11}$ are obtained at the lowest value of the proportion of always-observed among the treated in the post-treatment period, denoted $\nu_{1}^{l}$. These are given by
\begin{align}
    LB_{OO11}(\nu^l_{1}) &= \mathbbm{E}[Y|D=1, T=1, S=1, Y\leq F^{-1}_{Y|111}(q_{OO11}(\nu^l_{1}))] \label{LB1_rcs}\\
    UB_{OO11}(\nu^l_{1}) &= \mathbbm{E}[Y|D=1, T=1, S=1, Y>F^{-1}_{Y|111}(1-q_{OO11}(\nu^l_{1}))] \label{UB1_rcs}
\end{align}
Similarly, the least favorable values of $LB_{OO01}$ and $UB_{OO01}$ are obtained at the lowest value of the proportion of always-observed among the untreated in the post-treatment period $q_{OO01}$, which we denote $\nu_{0}^{l}$. These are given by
\begin{align}
    LB_{OO01}(\nu^l_{0}) &= \mathbbm{E}[Y|D=0, T=1, S=1, Y\leq F^{-1}_{Y|011}(q_{OO01}(\nu^l_{0}))] \label{LB0_rcs} \\
    UB_{OO01}(\nu^l_{0}) &= \mathbbm{E}[Y|D=0, T=1, S=1, Y>F^{-1}_{Y|011}(1-q_{OO01}(\nu^l_{0}))] \label{UB0_rcs}
\end{align}
Combining \eqref{LB1_rcs}-\eqref{UB0_rcs} together with the bounds for $\tau_{OO}$ given in \eqref{UB_tauOO} along with the identified sets for the probabilities, we can formally state the partial identification results for $\tau_{OO}$. 
\begin{theorem}[Bounds for $\tau_{OO}$]\label{nomono_bound_rcs} Under Assumptions \ref{no anti},\ref{No_com}, \ref{pt_rc}, \ref{Partialselection_rcs}(a), \ref{Partialselection_rcs}(b), and $q_{OOd1}>0$, bounds on the treatment effect on the treated for the always-observed group in the post-treatment period ($\tau_{OO}$) lies in the interval $[LB_{\tau_{OO}}, UB_{\tau_{OO}}]$ where,
{\footnotesize\begin{equation*}
    \begin{split}
        LB_{\tau_{OO}} &=\mathbbm{E}[Y|D=1, T=1, S=1, Y\leq F^{-1}_{Y|111}(q_{OO11}(\nu^l_{1}))]-\mathbbm{E}[Y|D=0, T=1, S=1, Y\leq F^{-1}_{Y|011}(q_{OO01}(\nu^l_{0}))]\\
        &-\mathbbm{E}[Y|D=1, T=0, S=1]+\mathbbm{E}[Y|D=0, T=0, S=1],\\
        UB_{\tau_{OO}} &= \mathbbm{E}[Y|D=1, T=1, S=1, Y>F^{-1}_{Y|111}(1-q_{OO11}(\nu^l_{1}))]-\mathbbm{E}[Y|D=0, T=1, S=1, Y\leq F^{-1}_{Y|011}(q_{OO01}(\nu^l_{0}))]\\
        &-\mathbbm{E}[Y|D=1, T=0, S=1]+\mathbbm{E}[Y|D=0, T=0, S=1]
    \end{split}
\end{equation*}}
where, {\scriptsize \begin{align*}
    q_{OO11}(\nu^{l}_{1})&=\frac{\max\{\mathbbm{P}[S_1=1|D=0,T=1]+\mathbbm{P}[S_1=1|D=1, T=1]-1, 0\}}{\mathbbm{P}[S_{1}=1| D=1, T=1]},\\
    q_{OO01}(\nu^{l}_{0})&=\frac{\max\{\mathbbm{P}[S_1=1|D=0, T=1]+\mathbbm{P}[S_1=1|D=1, T=1]-1, 0\}}{\mathbbm{P}[S_{1}=1| D=0, T=1]}.\\
    OR \\
    q_{OO11}(\nu^{l}_{1})&=\frac{\max\big\{\mathbbm{P}[S_1=1|D=0,T=1]-\mathbbm{P}[S_0=1|D=0,T=0]+\mathbbm{P}[S_0=1|D=1, T=0]
    +\mathbbm{P}[S_1=1|D=1, T=1]-1, 0\big\}}{\mathbbm{P}[S_{1}=1| D=1, T=1]},\\
    q_{OO01}(\nu^{l}_{0})&=\frac{\max\{\mathbbm{P}[S_1=1|D=0, T=1]+\mathbbm{P}[S_1=1|D=1, T=1]-\mathbbm{P}[S_0=1|D=1, T=0]+\mathbbm{P}[S_0=1|D=0,T=0]-1, 0\}}{\mathbbm{P}[S_{1}=1| D=0, T=1]}.
\end{align*}}
\end{theorem}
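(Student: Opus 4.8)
The plan is to assemble the bounds from four ingredients that are already in place: the hypothetical DiD identity for the OO group, the two mixture representations, the point identification of the pre-treatment terms, and the identified sets for the mixing weights from Lemma \ref{lemma:pipartialid_rcs}. Recall that under Assumptions \ref{no anti} and \ref{pt_rc} the four-term expression $\mathbbm{E}[Y|D=1, T=1, OO]-\mathbbm{E}[Y|D=1, T=0, OO]-\mathbbm{E}[Y|D=0, T=1, OO]+\mathbbm{E}[Y|D=0, T=0, OO]$ collapses exactly to $\tau_{OO}$. First I would isolate which of these four conditional-on-OO means are point identified. Under no anticipation the two $T=0$ terms equal the observed $\mathbbm{E}[Y|D=d, T=0, S=1]$, since $S_0(0)=S_0(1)=S_0$ forces every observed pre-treatment unit into the OO-relevant subpopulation; this step leans on the maintained sampling assumption \ref{No_com}, so that the $T=0$ subsamples represent the same population whose OO group we target.

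Second, I would handle the two unidentified $T=1$ terms via trimming. Using the mixture representations \eqref{mixp_rcs111} and \eqref{mixp_rcs011}, the observed group $(D=1, T=1, S=1)$ is a mixture of OO and NO with weight $q_{OO11}$, while $(D=0, T=1, S=1)$ is a mixture of OO and ON with weight $q_{OO01}$. The \citet{lee2009} trimming argument then places $\mathbbm{E}[Y_1^\ast(1)|D=1, T=1, OO]$ in $[LB_{OO11}, UB_{OO11}]$ and $\mathbbm{E}[Y_1^\ast(0)|D=0, T=1, OO]$ in $[LB_{OO01}, UB_{OO01}]$, each a monotone function of its mixing weight. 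Combining these two intervals with the two point-identified terms, as in \eqref{UB_tauOO}, yields a valid interval for $\tau_{OO}$ for any fixed admissible pair of weights.

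Third, I would pin down the admissible mixing weights. These are determined by the unobserved joint selection probabilities $\mathbbm{P}[S_1(0)=1, S_1(1)=1|D=d, T=1]$, which Lemma \ref{lemma:pipartialid_rcs} partially identifies by combining Frechet bounds with Assumptions \ref{Partialselection_rcs}(a) and (b) to recover the counterfactual selection marginals from observed quantities (under either the level or the parallel-trends formulation). Since $q_{gd1}(\nu_d)=\nu_d/\mathbbm{P}[S_1(d)=1|D=d, T=1]$ is increasing in $\nu_d$, and the trimming bounds widen as the weight shrinks, the valid identified region for $\tau_{OO}$ is obtained by taking the union over the weight's identified set, which amounts to evaluating the trimming bounds at the smallest admissible weight, i.e.\ at the lower endpoint $\nu_d^l$. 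Substituting $\nu_d^l$ into \eqref{LB1_rcs}--\eqref{UB0_rcs} and collecting the four terms delivers the stated $LB_{\tau_{OO}}$ and $UB_{\tau_{OO}}$, with the two displayed forms of $q_{OO11}(\nu_1^l)$ and $q_{OO01}(\nu_0^l)$ corresponding to the two versions of Assumption \ref{Partialselection_rcs}.

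The main obstacle is the same logical step that drives the panel-data result in Theorem \ref{nomono_bound}: verifying that the least-favorable (widest) bounds are exactly those at the lower endpoints of the Frechet-type identified sets, rather than at some interior value. This requires the monotonicity of each trimming bound in its mixing weight together with the argument that, since the weight is only set-identified, the region for $\tau_{OO}$ is the union of the fixed-weight intervals over that set. The extra care relative to the panel case is that the trimming here acts on outcome \emph{levels} in the post-treatment period while the pre-treatment terms are point identified directly, so I must confirm that these pieces combine into a valid interval without invoking monotonicity of selection (Assumption \ref{monotone}), which is deliberately dropped in this theorem.
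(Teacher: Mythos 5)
Your proposal is correct and follows essentially the same route as the paper's own proof: the four-term DiD identity for the OO group under Assumptions \ref{no anti} and \ref{pt_rc}, point identification of the two $T=0$ terms, Lee-type trimming of the two $T=1$ mixtures from \eqref{mixp_rcs111}--\eqref{mixp_rcs011}, and evaluation at the least favorable weights $q_{OO11}(\nu_1^l)$, $q_{OO01}(\nu_0^l)$ obtained from the lower endpoints of the Frechet-type sets in Lemma \ref{lemma:pipartialid_rcs}, using the monotonicity of the trimmed means in the mixing weight. Your combination $LB_{\tau_{OO}} = LB_{OO11}(\nu_1^l) - UB_{OO01}(\nu_0^l) - \mathbbm{E}[Y|D=1,T=0,S=1] + \mathbbm{E}[Y|D=0,T=0,S=1]$ agrees with the paper's proof (and with the panel analogue in Theorem \ref{nomono_bound}), which incidentally shows that the displayed $LB_{\tau_{OO}}$ in the theorem statement contains a typo: the subtracted untreated post-treatment term there should be the upper-trimmed mean $\mathbbm{E}[Y|D=0,T=1,S=1, Y > F^{-1}_{Y|011}(1-q_{OO01}(\nu^l_{0}))]$, exactly as your argument requires.
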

The proof of Theorem \ref{nomono_bound_rcs} is given in Appendix \ref{proof:nomono_bound_rcs}.

\subsection{Identification of Mixing Weights and $\tau_{OO}$ with Monotonicity}

In this subsection, we consider the identification of the mixing weights and $\tau_{OO}$ in the repeated cross-section case under the monotonicity in selection assumption. Without loss of generality, assume positive monotonicity, which rules out the `ON' stratum for those who are observed to be untreated in the post-treatment period. This helps to point identify the mixing weights as described in the result below.
\begin{lemma}\label{lemma:mono_weights_rcs}  Under Assumptions \ref{no anti} and \ref{monotone} (positive monotonicity), we obtain $q_{OO11}= \frac{\mathbbm{P}[S_1(0)=1|D=1, T=1]}{\mathbbm{P}[S_1(1)=1|D=1, T=1]}$ and $q_{OO01} = 1$.
\end{lemma}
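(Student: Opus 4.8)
The plan is to obtain both weights by direct computation from their definitions, using positive monotonicity to (i) eliminate the ON stratum and (ii) collapse a joint potential-selection event into a marginal one. No limiting or distributional argument is needed: the entire content is an accounting of principal strata, so I would keep the proof short.

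First I would treat $q_{OO01}$. By definition $q_{OO01} = \pi_{OO01}/(\pi_{OO01}+\pi_{ON01})$, where $\pi_{gd1} = \mathbb{P}[G=g, D=d, T=1]$ and group membership is fixed by the pair $(S_1(0), S_1(1))$. Positive monotonicity (Assumption \ref{monotone}) asserts $\mathbb{P}[S_1(1)\geq S_1(0)]=1$, so the defining event of the ON stratum, $\{S_1(0)=1, S_1(1)=0\}$, has probability zero. Hence $\pi_{ON01}=0$, and $q_{OO01}=1$ follows at once.

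Next I would handle $q_{OO11}=\pi_{OO11}/(\pi_{OO11}+\pi_{NO11})$. For the denominator, OO and NO are precisely the two strata consistent with $S_1(1)=1$, so summing them gives $\pi_{OO11}+\pi_{NO11}=\mathbb{P}[S_1(1)=1, D=1, T=1]$. For the numerator, monotonicity gives the almost-sure inclusion $\{S_1(0)=1\}\subseteq\{S_1(1)=1\}$, whence $\pi_{OO11}=\mathbb{P}[S_1(0)=1, S_1(1)=1, D=1, T=1]=\mathbb{P}[S_1(0)=1, D=1, T=1]$. Dividing both numerator and denominator by $\mathbb{P}[D=1, T=1]$ then produces the claimed ratio $\mathbb{P}[S_1(0)=1\mid D=1, T=1]/\mathbb{P}[S_1(1)=1\mid D=1, T=1]$. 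Assumption \ref{no anti} is maintained throughout (it underpins the repeated-cross-section stratification in Table \ref{tab:OO}), but the computation is driven entirely by monotonicity.

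The hard part, such as it is, is only to make sure the correct strata are collected in each denominator and that the set-containment step is stated as an almost-sure inclusion rather than a pointwise one; everything else is immediate. This argument mirrors the panel-data derivation of Lemma \ref{lemma:mono_weights}, with the conditioning on $S_0=1$ there replaced by conditioning on the post-period indicator $T=1$ here.
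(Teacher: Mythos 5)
Your proof is correct and takes essentially the same route as the paper's: both arguments rest on positive monotonicity collapsing the joint event $\{S_1(0)=1,\,S_1(1)=1\}$ to the relevant marginal (equivalently, zeroing out the ON stratum), which gives $q_{OO01}=1$ and $q_{OO11}=\mathbbm{P}[S_1(0)=1\mid D=1, T=1]/\mathbbm{P}[S_1(1)=1\mid D=1, T=1]$. Your reformulation in terms of the strata masses $\pi_{gd1}$, followed by division by $\mathbbm{P}[D=1, T=1]$, is only a cosmetic repackaging of the paper's conditional-probability computation.
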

Proof of Lemma \ref{lemma:mono_weights_rcs} is given in Appendix \ref{proof:mono_weights_rcs}. Given the results in Lemma \ref{lemma:mono_weights_rcs}, $\mathbbm{E}[Y_1^\ast(0)|D=0, T=1, OO]$ is the point identified since the group that is untreated in the post-treatment period only contains the OO subgroup. This also means that we no longer require Assumption \ref{Partialselection_rcs}(b) since $q_{OO01}=1$ under positive monotonicity.

When it comes to identifying $q_{OO11}= \frac{\mathbbm{P}[S_1(0)=1|D=1, T=1]}{\mathbbm{P}[S_1(1)=1|D=1, T=1]}$, MS does not provide enough information, even though it implies that $0<\mathbbm{P}[S_1(0)=1|D=1, T=1] \leq \mathbbm{P}[S_1(1)|D=1, T=1]= \mathbbm{P}[S_1|D=1, T=1]$. If we additionally assume Assumption \ref{Partialselection_rcs}(a) then $q_{OO11} = \frac{\mathbbm{P}[S_1=1|D=0, T=1]}{\mathbbm{P}[S_1=1|D=1, T=1]}$ OR 
{\small \begin{equation*}
    \begin{split}
        q_{OO11} &= \frac{\mathbbm{P}[S_1(0)=1|D=0,T=1]-\mathbbm{P}[S_0(0)=1|D=0,T=0]+\mathbbm{P}[S_0(0)=1|D=1, T=0]}{\mathbbm{P}[S_1=1|D=1, T=1]} \\
        & = \frac{\mathbbm{P}[S_1=1|D=0,T=1]-\mathbbm{P}[S_0=1|D=0,T=0]+\mathbbm{P}[S_0=1|D=1, T=0]}{\mathbbm{P}[S_1=1|D=1, T=1]}
    \end{split}
\end{equation*}}

We therefore propose bounds for $\tau_{OO}$ under monotonicity as follows: 
\begin{theorem}[Bounds for $\tau_{OO}$ under positive monotonicity]\label{mono_bound_rcs} Under Assumptions \ref{no anti},\ref{monotone},\ref{No_com}, \ref{pt_rc}, \ref{Partialselection_rcs}(a), and $q_{OO11}>0$, bounds on the treatment effect on the treated for the always-observed group ($\tau_{OO}$) lies in the interval $[LB_{\tau_{OO}}, UB_{\tau_{OO}}]$ where,
{\footnotesize\begin{equation*}
    \begin{split}
        LB_{\tau_{OO}} &= LB_{OO11}-\mathbbm{E}[Y|D=0, T=1, S=1]-\mathbbm{E}[Y|D=1, T=0, S=1]+\mathbbm{E}[Y|D=0, T=0, S=1],\\
        UB_{\tau_{OO}} &=UB_{OO11}-\mathbbm{E}[Y|D=0, T=1, S=1]-\mathbbm{E}[Y|D=1, T=0, S=1]+\mathbbm{E}[Y|D=0, T=0, S=1]
    \end{split}
\end{equation*}}
where, 
\begin{align*}
    LB_{OO11} &=\mathbbm{E}[Y|D=1,T=1, S=1, Y\leq F_{Y|111}^{-1}(q_{OO11})]\\
    UB_{OO11} &=\mathbbm{E}[Y|D=1,T=1, S=1, Y>F_{Y|111}^{-1}(1-q_{OO11})]
\end{align*}
where $q_{OO11}=\frac{\mathbbm{P}[S_{1}=1|D=0, T=1]}{\mathbbm{P}[S_{1}=1|D=1, T=1]}$ OR $q_{OO11} = \frac{\mathbbm{P}[S_1=1|D=0,T=1]-\mathbbm{P}[S_0=1|D=0,T=0]+\mathbbm{P}[S_0=1|D=1, T=0]}{\mathbbm{P}[S_1=1|D=1, T=1]}$.
\end{theorem}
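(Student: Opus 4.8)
The plan is to specialize the generic partial-identification argument for $\tau_{OO}$ already assembled in the display preceding \eqref{UB_tauOO} to the positive-monotonicity regime, where the mixture structure collapses and one of the four expectations in the hypothetical DiD identity becomes point identified. I would start from the identity established just above, namely that $\mathbbm{E}[Y|D=1, T=1, OO] - \mathbbm{E}[Y|D=1, T=0, OO] - \mathbbm{E}[Y|D=0, T=1, OO] + \mathbbm{E}[Y|D=0, T=0, OO] = \tau_{OO}$, so that it suffices to identify or bound each of the four conditional expectations $\mathbbm{E}[Y_1^\ast(1)|D=1, T=1, OO]$, $\mathbbm{E}[Y_0^\ast(0)|D=1, T=0, OO]$, $\mathbbm{E}[Y_1^\ast(0)|D=0, T=1, OO]$, and $\mathbbm{E}[Y_0^\ast(0)|D=0, T=0, OO]$ separately and then combine them with the appropriate sign for each endpoint.

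Next I would dispatch the three expectations that become point identified. The two pre-treatment terms are handled exactly as in the derivation above: under Assumption \ref{no anti} (no anticipation on selection and outcome), the observed groups $\{D=1,T=0,S=1\}$ and $\{D=0,T=0,S=1\}$ deliver $\mathbbm{E}[Y|D=1,T=0,S=1]=\mathbbm{E}[Y_0^\ast(0)|D=1,T=0,OO]$ and $\mathbbm{E}[Y|D=0,T=0,S=1]=\mathbbm{E}[Y_0^\ast(0)|D=0,T=0,OO]$. For the untreated post-treatment term I would invoke positive monotonicity (Assumption \ref{monotone}), which rules out the ON stratum among the $T=1$ observations, so the observed group $\{D=0,T=1,S=1\}$ consists purely of the OO subgroup; equivalently $q_{OO01}=1$ by Lemma \ref{lemma:mono_weights_rcs}, hence $\mathbbm{E}[Y|D=0,T=1,S=1]$ point identifies $\mathbbm{E}[Y_1^\ast(0)|D=0,T=1,OO]$. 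This is precisely why Assumption \ref{Partialselection_rcs}(b) is no longer needed.

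The only remaining quantity is $\mathbbm{E}[Y_1^\ast(1)|D=1,T=1,OO]$, which I would bound by the Lee (2009) trimming applied to the mixture representation \eqref{mixp_rcs111}: the OO mass lies in the lower (respectively upper) tail of $F_{Y|111}$ of weight $q_{OO11}$, yielding $LB'_{OO11}$ and $UB'_{OO11}$ as stated. The mixing weight itself is supplied by Lemma \ref{lemma:mono_weights_rcs}, $q_{OO11}=\mathbbm{P}[S_1(0)=1|D=1,T=1]/\mathbbm{P}[S_1(1)=1|D=1,T=1]$, whose numerator is the unknown counterfactual selection probability; Assumption \ref{Partialselection_rcs}(a) point identifies it, giving either $q_{OO11}=\mathbbm{P}[S_1=1|D=0,T=1]/\mathbbm{P}[S_1=1|D=1,T=1]$ or, under the parallel-trends-in-selection version, the recombined expression obtained by substituting the identity for $\mathbbm{P}[S_1(0)=1|D=1,T=1]$ derived above and applying no anticipation to replace $S_0(0)$ by $S_0$. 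Substituting the point-identified terms and the trimmed bounds into the hypothetical DiD identity, with $LB'_{OO11}$ paired against the lower endpoints and $UB'_{OO11}$ against the upper endpoints, delivers $[LB'_{\tau_{OO}}, UB'_{\tau_{OO}}]$.

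The main obstacle I anticipate is not any single hard inequality but the careful bookkeeping needed to justify that $q_{OO11}$ is both point identified and admissible for trimming, i.e. that $0<q_{OO11}\le 1$ so the trimmed means are well defined, which rests on the monotonicity inequality $\mathbbm{P}[S_1(0)=1|D=1,T=1]\le\mathbbm{P}[S_1(1)=1|D=1,T=1]$ noted in the text; and on verifying that the lower tail of $F_{Y|111}$ produces the lower bound for $\mathbbm{E}[Y_1^\ast(1)|D=1,T=1,OO]$ and hence for $\tau_{OO}$. A secondary subtlety is confirming that the ``OR'' formulation genuinely identifies the same counterfactual probability, which requires combining Assumption \ref{Partialselection_rcs}(a) in its parallel-trends form with Assumption \ref{no anti} to convert every $S_t(d)$ into an observable $S_t$; this step mirrors the panel argument for Theorem \ref{mono_bound} and should follow by direct substitution once the no-compositional-changes sampling in Assumption \ref{No_com} guarantees the relevant conditional distributions are estimable from the two cross-sections.
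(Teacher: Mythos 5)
Your proposal is correct and follows essentially the same route as the paper's proof: the decomposition of $\tau_{OO}$ via Assumptions \ref{pt_rc} and \ref{no anti}, point identification of the two pre-period terms and of $\mathbbm{E}[Y_1^\ast(0)|D=0,T=1,OO]$ via $q_{OO01}=1$ from Lemma \ref{lemma:mono_weights_rcs}, and Lee-type trimming of the $\{D=1,T=1,S=1\}$ mixture in \eqref{mixp_rcs111} with $q_{OO11}$ point identified under either version of Assumption \ref{Partialselection_rcs}(a). Your added bookkeeping on $0<q_{OO11}\leq 1$ (via the monotonicity inequality) is a sensible explicit check that the paper leaves implicit, but it does not change the argument.
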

The proof of Theorem \ref{mono_bound_rcs} is given in Appendix \ref{proof Theorem mono_bound_rcs}.

\section{Extension to Staggered Adoption}\label{sec:extension_multi}
Recent developments in the DiD literature have highlighted the challenges and opportunities associated with having treatment rolled out in a staggered fashion, that is, in which different parts of the treated population receive treatment at different periods \citep{callaway2021difference,goodman2021difference, sun2021estimating, borusyak2024revisiting}.
While a full consideration of endogenous sample selection for DiD with staggered treatment adoption is beyond the scope of this paper, the framework developed in Section \ref{model} can be usefully adapted to comparisons of periods ($2\times 2$) in the staggered setting.

Consider an arbitrary number of time periods, $t=0, 1,\ldots, T$. Following this literature, we assume that the treatment is irreversible. Let $D_t$ denote treatment status at time period $t$. 
\begin{assumption}[Irreversibility of treatment]\label{Irr} Assume $D_0=0$ and for $t=1,\ldots, T$, $D_{t-1} = 1$ implies $D_{t}=1$.
\end{assumption}
Assumption \ref{Irr} states that no one is treated in period 0, and once a unit gets treated, it remains treated in all subsequent periods. In other words, treatment adoption is staggered.  Let $\Gamma_\gamma$ be a binary variable that is equal to one if the unit is first treated in period $\gamma$, and define $C$ to be a binary variable that is equal to one if the unit is not treated in any period (i.e. never treated). We define the latent type based on selection in period 0, and the pair of counterfactual selection states in any post-treatment period $t\geq \gamma$, that is, by $(S_0 = s, S_t(0)=s',S_t(1)=s'')$. Let $G_t$ denote this latent type and $g$ be the type denomination. 

The parameter of interest, then, is the average treatment effect for the latent type $g$ in period $t$ among units that are first treated in period $\gamma$.
\begin{equation}\label{att_stag}
    \tau_{g,\gamma,t}=\mathbbm{E}[Y_{t}^{\ast}(\gamma)-Y_{t}^{\ast}(0)| G_t=g ,\Gamma_{\gamma}=1]
\end{equation} Notice that equation \eqref{att_stag} is \citet{callaway2021difference}'s disaggregated group-time average treatment effect parameter, which has been extended to accommodate sample selection. Therefore, in the current setting, ``group'' is a joint indication of the first time a unit gets treated and its latent type based on the tuple $(S_0 = s, S_t(0)=s',S_t(1)=s'')$. There are two different types of treatment effect dynamics that one can explore in this setup. The first is to study the evolution of ATT for the OOO group first treated in period $\gamma$ across subsequent post-treatment periods. For example, with three periods (i.e. $t=0,1,2$) and for the group that gets treated in period 1, one can identify ATT for OOOs, namely, $\tau_{OOO,\gamma=1,t=1}$,$\tau_{OOO,\gamma=1,t=2}$ and $\tau_{OOO,\gamma=1,t=3}$. The second type of dynamic effect examines the cross-sectional variation in the ATTs for OOOs across groups first treated in distinct post-treatment periods $\gamma$, during overlapping post-treatment period windows. For example, with two adoption timing groups $\gamma = 1, 2$ and three periods, $t=0,1,2$, one can study variation in $\tau_{OOO,\gamma=1,t=2}$ and $\tau_{OOO,\gamma=2,t=2}$. Note that the OOOs may be compositionally very different across calendar time and across the two adoption timing groups.

Following the same logic as for the canonical case discussed in Assumption \ref{PT_OOO}, consider the following parallel trends assumption.
\begin{assumption}[Conditional parallel trend for outcome for latent group $(g, \gamma)$ in period $t$]\label{stag_parallel}
\begin{equation*}
  \mathbbm{E}[Y_{t}^{\ast}(0)-Y_{0}^{\ast}(0)|G_t=g ,\Gamma_{\gamma}=1]=\mathbbm{E} [Y_{t}^{\ast}(0)-Y_{0}^{\ast}(0)|G_t=g,C=1] \text{ such that } t\geq \gamma.
\end{equation*}
\end{assumption}
Assumption \eqref{stag_parallel} generalizes the parallel trend assumption to a setting with multiple time periods and states that the evolution of trends in the untreated potential outcomes between any post-treatment period $t$ and period $0$ are the same for latent type $g$ between units that first receive treatment in period $\gamma$ and those that are never treated.

Assume ignorability of selection between the group that first receives treatment in period $\gamma$, $\Gamma_\gamma=1$ and the never-treated group, $C=1$. 
\begin{assumption}\label{Partialselection_stag} 
 Ignorability of selection in period $t\geq \gamma$:
\\
     (a) Equality in the untreated counterfactual share of individuals observed in period $t\geq \gamma$ conditional on selection status for period 0. 
	\begin{align}
    \mathbbm{P}[S_{t}(0)=1|C=1, S_0]=\mathbbm{P}[S_{t}(0)=1|\Gamma_\gamma=1, S_{0}].
    \end{align}
 \\   
    (b) Equality in the treated counterfactual share of individuals observed in period $t\geq \gamma$ conditional on selection status for period 0. 
	\begin{align}
    \mathbbm{P}[S_{t}(1)=1|\Gamma_\gamma=1, S_{0}]=\mathbbm{P}[S_{t}(1)=1|C=1, S_{0}]
    \end{align}
  
\end{assumption}
Then, under assumptions \ref{no anti}, \ref{Irr}, \ref{stag_parallel}, \ref{Partialselection_stag}(a), \ref{Partialselection_stag}(b), $p_{OOO,\Gamma_\gamma, t}(v^{l}_{1t})>0$ and $p_{OOO, C, t}(v^{l}_{0t})>0$, one can bound the treatment effect for the always-observed group among those first treated in period $\gamma$ in any post-treatment period $t\geq \gamma$ ($\tau_{OOO, \gamma, t}$). In other words,  $\tau_{OOO, \gamma, t}$ lies in the interval $[LB_{\tau_{OOO, \gamma, t}}, UB_{\tau_{OOO}, \gamma, t}]$,
\begin{equation*}
    \begin{split}
        LB_{\tau_{OOO,\gamma, t}} &=\mathbbm{E}[Y_{t}-Y_{0}|\Gamma_\gamma=1,S_{0}=1,S_{t}=1, (Y_{t}-Y_{0})\leq F_{\Delta Y|111}^{-1}(p_{OOO,\Gamma_\gamma, t}(v^{l}_{1t}))]\\
        &-\mathbbm{E}[Y_{t}-Y_{0}|C=1,S_{0}=1,S_{t}=1,(Y_{t}-Y_{0}) > F_{\Delta Y|011}^{-1}(1-p_{OOO, C, t}(v^{l}_{0t}))],\\
        UB_{\tau_{OOO,\gamma, t}} &= \mathbbm{E}[Y_{t}-Y_{0}|\Gamma_\gamma=1,S_{0}=1,S_{t}=1, (Y_{t}-Y_{0}) > F_{\Delta Y|111}^{-1}(1-p_{OOO, \Gamma_\gamma, t}(v^{l}_{1t}))]\\
        &-\mathbbm{E}[Y_{t}-Y_{0}|C=1,S_{0}=1,S_{t}=1, (Y_{t}-Y_{0})\leq F_{\Delta Y|011}^{-1}(p_{OOO, C, t}(v^{l}_{0t}))]
    \end{split}
\end{equation*}
where, 
\begin{align*}
    p_{OOO,\Gamma_\gamma, t}(v^{l}_{1t})=\frac{\max\{\mathbbm{P}[S_{t}=1|C=1, S_{0}=1]+\mathbbm{P}[S_{t}=1|\Gamma_\gamma=1, S_{0}=1]-1,0\}}{\mathbbm{P}[S_{t}=1| \Gamma_\gamma=1, S_{0}=1]},\\
    p_{OOO, C, t}(v^{l}_{0t})=\frac{\max\{\mathbbm{P}[S_{t}=1|C=1, S_{0}=1]+\mathbbm{P}[S_{t}=1|\Gamma_\gamma=1, S_{0}=1]-1,0\}}{\mathbbm{P}[S_{t}=1| C=1, S_{0}=1]}.
\end{align*}

A similar approach could be used if one is willing to impose monotonicity (Assumption \ref{monotone} for post-treatment period $t\geq \gamma$) to implement the bounds in Theorem \ref{mono_bound}. Note that the relevant proportions for the latent groups in period $t$ are obtained by comparing the shares of observed individuals in the initial period and those in the post-treatment period $t$. Without loss of generality, if we assume positive Monotonicity, then under assumptions \ref{no anti}, \ref{Irr},\ref{stag_parallel}, \ref{Partialselection_stag}(a), and $p_{OOO,\Gamma_\gamma,t}>0$, we have $\tau_{OOO, \gamma, t}$ lies in the interval $[LB_{\tau_{OOO, \gamma, t}}, UB_{\tau_{OOO}, \gamma, t}]$,
\begin{equation*}
    \begin{split}
        LB_{\tau_{OOO,\gamma, t}} &= LB_{OOO,\Gamma_\gamma,t}-\mathbbm{E}[Y_{t}-Y_{0}|C=1,S_{0}=1,S_{t}=1],\\
        UB_{\tau_{OOO,\gamma, t}} &=UB_{OOO,\Gamma_\gamma,t}-\mathbbm{E}[Y_{t}-Y_{0}|C=1,S_{0}=1,S_{t}=1]
    \end{split}
\end{equation*}
where, 
\begin{align*}
    LB_{OOO,\Gamma_\gamma,t} &=\mathbbm{E}[Y_{t}-Y_{0}|\Gamma_\gamma=1,S_{0}=1,S_{t}=1, (Y_{t}-Y_{0})\leq F_{\Delta Y|111}^{-1}(p_{OOO,\Gamma_\gamma,t})]\\
    UB_{OOO,\Gamma_\gamma,t} &=\mathbbm{E}[Y_{t}-Y_{0}|\Gamma_\gamma=1,S_{0}=1,S_{t}=1, (Y_{t}-Y_{0}) > F_{\Delta Y|111}^{-1}(1-p_{OOO,\Gamma_\gamma,t})]
\end{align*}
where $p_{OOO,\Gamma_\gamma,t}=\frac{\mathbbm{P}[S_{t}=1|S_{0}=1,C=1]}{\mathbbm{P}[S_{t}=1|S_{0}=1,\Gamma_\gamma=1]}$.
\section{Relationship between Trimmed Bound Estimands and Untrimmed ``na\"{\i}ve'' Estimates }\label{Appendix: Trimmed}

This appendix formalizes the intuition that the bounds for the $\tau_{OOO}$ proposed in Section \ref{identification} will always include the ``na\"{\i}ve'' $\tau_{\textup{DiDs}}$. This is because $\tau_{\textup{DiDs}}$ is obtained by comparing expectations (means) of functions of the outcomes, while the bounds for $\tau_{OOO}$ are based on trimmed means of the same functions of those variables. Since the overall mean will always be between the trimmed means, $\tau_{\textup{DiDs}}$ will always be in the identified set for $\tau_{OOO}$. This is not unique to the DiD case and applies more generally to Lee-type trimming bounds.
\subsection{The Relationship between Trimmed and Untrimmed Means}
First, consider the expected value of a variable $Y$ for a population of interest:
\begin{equation}
\mathbbm{E}[Y]=\int^{\infty}_{-\infty}y f_{Y}(y)dy 
\end{equation}

If we trim the top (bottom) of the distribution of $Y$ at an arbitrary quantile $p$, we can obtain the trimmed mean that excludes the highest (lowest) $p\%$ of that population:
\begin{eqnarray}
\mathbbm{E}[Y|Y<y_{1-p}]=\int^{y_{1-p}}_{-\infty}y f_{Y|Y<y_{1-p}}(y)dy=\int^{y_{1-p}}_{-\infty}y \frac{f_{Y}(y)}{(1-p)}dy \\
\mathbbm{E}[Y|Y>y_{p}]=\int^{\infty}_{y_{p}}y f_{Y|Y>y_{p}}(y)dy=\int^{\infty}_{y_{p}}y \frac{f_{Y}(y)}{(1-p)}dy
\end{eqnarray}
where $y_{p}$ is the $p^{th}$ quantile of the distribution of $Y$.
Naturally, we can describe the relationship between the trimmed and untrimmed variables as,
\begin{eqnarray*}
\mathbbm{E}[Y]=\int^{y_{1-p}}_{-\infty}y f_{Y}(y)dy+\int_{y_{1-p}}^{\infty}y f_{Y}(y)dy \\
\geq\int^{y_{1-p}}_{-\infty}y f_{Y}(y)dy+\int_{y_{1-p}}^{\infty}y_{1-p} f_{Y}(y)dy \\
=\int^{y_{1-p}}_{-\infty}y f_{Y}(y)dy+y_{1-p}\int_{y_{1-p}}^{\infty}f_{Y}(y)dy \\
=(1-p)\int^{y_{1-p}}_{-\infty}y \frac{f_{Y}(y)}{(1-p)}dy+py_{1-p} \\
\geq(1-p)\mathbbm{E}[Y|Y<y_{1-p}]+p\mathbbm{E}[Y|Y<y_{1-p}]\\
=\mathbbm{E}[Y|Y<y_{1-p}]
\end{eqnarray*}
Similarly,
\begin{eqnarray*}
\mathbbm{E}[Y]=\int^{y_{p}}_{-\infty}y f_{Y}(y)dy+\int_{y_{p}}^{\infty}y f_{Y}(y)dy \\
\leq\int^{y_{p}}_{-\infty}y_{p} f_{Y}(y)dy+\int_{y_{p}}^{\infty}y f_{Y}(y)dy \\
=y_{p}\int^{y_{p}}_{-\infty}f_{Y}(y)dy+\int_{y_{p}}^{\infty}y f_{Y}(y)dy \\
=py_{p}+(1-p)\int_{y_{p}}^{\infty}y \frac{f_{Y}(y)}{(1-p)}dy \\
\leq p\mathbbm{E}[Y|Y>y_{p}]+(1-p)\mathbbm{E}[Y|Y>y_{p}]\\
=\mathbbm{E}[Y|Y>y_{p}]
\end{eqnarray*}

Hence,
\begin{equation}
   \mathbbm{E}[Y|Y<y_{1-p}] \leq \mathbbm{E}[Y]\leq \mathbbm{E}[Y|Y>y_{p}].
\end{equation}

\subsection{Relationship to Lee (2009) Bounds Estimands}

The ``na\"{\i}ve'' estimands for the ATT/ATE are given by expectations over the observed population for a particular variable. For example, under the canonical DiD assumptions, the DiD estimand is given by:
\begin{eqnarray}
    \tau_{\textup{DiDs}}=\mathbbm{E}[Y_1-Y_0|D=1, S_0=1, S_1=1] - \mathbbm{E}[Y_1-Y_0|D=0,S_0=1, S_1=1]
\end{eqnarray}
As seen in Section \ref{identification}, the upper and lower bounds for $\tau_{OOO}$ are combinations of trimmed means of the same variable, $(Y_1-Y_0)$. Focusing on the more general result in Theorem \ref{nomono_bound}:
 \begin{align*}
   LB_{OOO1} &=\mathbbm{E}[Y_{1}-Y_{0}|D=1,S_{0}=1,S_{1}=1, (Y_{1}-Y_{0})\leq F_{(Y_{1}-Y_{0}) Y|111}^{-1}(p_{OOO1})]\\
   UB_{OOO1} &=\mathbbm{E}[Y_{1}-Y_{0}|D=1,S_{0}=1,S_{1}=1, (Y_{1}-Y_{0}) > F_{(Y_{1}-Y_{0})|111}^{-1}(1-p_{OOO1})]\\
   LB_{OOO0} &=\mathbbm{E}[Y_{1}-Y_{0}|D=0,S_{0}=1,S_{1}=1,(Y_{1}-Y_{0})\leq F_{(Y_{1}-Y_{0})|011}^{-1}(p_{OOO0})]\\
   UB_{OOO0} &=\mathbbm{E}[Y_{1}-Y_{0}|D=0,S_{0}=1,S_{1}=1,(Y_{1}-Y_{0}) > F_{(Y_{1}-Y_{0})|011}^{-1}(1-p_{OOO0})]\\
   \tau_{OOO} &\in [LB_{OOO1}-UB_{OOO0},UB_{OOO1}-LB_{OOO0}].
\end{align*}

Based on the discussion above, we know that 
 \begin{eqnarray}
  LB_{OOO1} \leq\mathbbm{E}[Y_1-Y_0|D=1, S_0=1, S_1=1]\leq  UB_{OOO1} \nonumber\\
  LB_{OOO0} \leq \mathbbm{E}[Y_1-Y_0|D=0,S_0=1, S_1=1]\leq UB_{OOO0} \nonumber \\ \nonumber \\
  \implies LB_{OOO1}-UB_{OOO0}\leq \tau_{\textup{DiDs}} \leq UB_{OOO1}-LB_{OOO0}
 \end{eqnarray}
In practice, if one is to use the sample analogue of these estimands to estimate them, the $\hat{\tau}_{\textup{DiDs}}$ will be within the estimated identified set for $\tau_{OOO}$ in any given dataset.

It is important to keep in mind that this discussion does NOT imply that the unconditional ATT is necessarily within the identified set for $\tau_{OOO}$. Intuitively, if there is no sample selection, the ATT and $\tau_{OOO}$ are point identified and their estimands will coincide. When we introduce sample selection, $\tau_{\textup{DiDs}}$ becomes a weighted average of $\tau_{OOO}$ and other terms, as discussed in Section \ref{Canonicalbias}. Interestingly, the sample selection drives the na\"{\i}ve DiD estimand away from $\tau_{OOO}$ slower than the bounds for it expand, leading it to be always within the identified set for $\tau_{OOO}$. Therefore, one cannot hope to rule out the estimated value, $\hat{\tau}_{\textup{DiDs}}$ by considering the identified set for $\tau_{OOO}$ (or any set for which the bounds are trimmed averages of the same variables that are used untrimmed to construct $\tau_{\textup{DiDs}}$).

\section{Identification of $\tau_{OOO}$ with Covariates}\label{sec: covariates}
We start by relaxing the parallel trend assumption to its conditional analogue. 
\begin{assumption}[Conditional parallel trends and overlap for OOO]\label{PT_OOO_cond} \ 
\begin{enumerate}
    \item[(a)] Conditional parallel trends in outcomes:
    \begin{equation}\label{stag}
        \mathbbm{E}[Y_{1}^{\ast}(0)-Y_{0}^{\ast}(0)|D=1, X, OOO]=\mathbbm{E} [Y_{1}^{\ast}(0)-Y_{0}^{\ast}(0)|D=0, X, OOO]. 
    \end{equation} 
    \item[(b)] Overlap: $\pi_{OOO1}>0$ and $\pi_{OOO0}(X)>0 \text{, } \forall \text{ } X\in \mathcal{X}$
\end{enumerate}
where $\pi_{OOO1}(X)= \mathbbm{P}[S_0(0)=1,S_1(0)=1,S_1(1)=1,D=1|X]$.
\end{assumption}
Other $\pi_{gd}(X)$'s are defined similarly. Part (a) relaxes unconditional parallel trends for the OOO group to its conditional version. This allows for covariate-specific heterogeneity in the untreated potential outcome trends of the always-observed group. Such conditional parallel trend assumptions are standard in the DiD literature, see e.g. \cite{abadie2005semiparametric,sant2020doubly,caetano2024difference}. Part (b) imposes an overlap condition which ensures we observe some OOO individuals in the treated group and that for every value of X in the population, we observe some OOO individuals in the untreated group.
  
The assumption on the selection mechanism is also relaxed. In particular, we consider a conditional IS assumption, which states that post-treatment selection is independent of the treatment status once we condition on covariates and selection in the pre-treatment period.
\begin{assumption}[Conditional ignorability of treatment in potential selection]\label{Partialselection_cond} \ \\
    (a) Equality in the untreated counterfactual share of individuals observed in period 1, conditional on pre-treatment selection and covariates. 
    \begin{align*}
        \mathbbm{P}[S_{1}(0)=1|D=0, S_0=1, X]=\mathbbm{P}[S_{1}(0)=1|D=1, S_{0}=1, X]
    \end{align*}
    (b) Equality in the treated counterfactual share of individuals observed in period 1, conditional on pre-treatment selection and covariates.  
    \begin{align*}
        \mathbbm{P}[S_{1}(1)=1|D=1, S_{0}=1, X]=\mathbbm{P}[S_{1}(1)=1|D=0, S_{0}=1, X].
    \end{align*}
\end{assumption}
Finally, we also assume the availability of a random sample as follows:
\begin{assumption}[Random Sampling]\label{rsx} \ \\
Assume that $\{(Y_{i0}, Y_{i1}, D_i, S_{i0}, S_{i1}, X_i); i=1,\ldots,N\}$ are i.i.d draws from an infinite population.
\end{assumption}
By the same argument presented in Section \ref{identification}, one can identify the conditional ATT based on the infeasible (hypothetical) DiD by considering only the always-observed latent population. Consider the observed conditional differences in outcomes for the OOO latent group, given as,
\begin{align}\label{did_cond}
    &\mathbbm{E}[Y_1-Y_0|D=1, X, OOO] -\mathbbm{E}[Y_1-Y_0|D=0, X, OOO] \nonumber \\
    & = \mathbbm{E}[Y^\ast_1(1)-Y^\ast_1(0)|D=1, X, OOO]+\mathbbm{E}[Y_1^\ast(0)-Y_0^\ast(0)|D=1, X, OOO]\nonumber \\
    &-\mathbbm{E}[Y_1^\ast(0)-Y_0^\ast(0)|D=0, X, OOO]  \nonumber \\
    & = \mathbbm{E}[Y^\ast_1(1)-Y^\ast_1(0)|D=1, X, OOO] \equiv \tau_{OOO}(X) 
\end{align}
where we rely on the no-anticipation in outcomes (Assumption \ref{no anti}) and conditional parallel trends in outcomes (Assumption \ref{PT_OOO_cond}).

Similar to the unconditional case, while we cannot point-identify the conditional means $\mathbbm{E}[Y^\ast_1(d)-Y^\ast_0(d)|D=d, X, OOO]$, we can partially identify them. Consider the group of treated individuals whose outcomes are observed in both periods, i.e. $D=1, S_0=1, S_1=1$. Then,
\begin{small}	
\begin{align}\label{diffy_cond_treat}
    \mathbbm{E}[Y_1-Y_0|D=1, X, S_0=1,&S_1=1]
    = \mathbbm{E}[Y^\ast_1(1)-Y^\ast_0(1)|D=1, X, OOO]\cdot \frac{\pi_{OOO1}(X)}{\pi_{OOO1}(X)+\pi_{ONO1}(X)} \nonumber \\
    & + \mathbbm{E}[Y^\ast_1(1)-Y^\ast_0(1)|D=1, X, ONO]\cdot \frac{\pi_{ONO1}(X)}{\pi_{OOO1}(X)+\pi_{ONO1}(X)}.
\end{align}
\end{small}
Let $p_{OOO1}(X) = \frac{\pi_{OOO1}(X)}{\pi_{OOO1}(X)+\pi_{ONO1}(X)}$. Next, consider the group of untreated individuals who are observed in both periods 
\begin{small}	
\begin{align}\label{diffy_cond_untreat}
    \mathbbm{E}[Y_1-Y_0|D=0, X, S_0=1,&S_1=1]
    = \mathbbm{E}[Y^\ast_1(0)-Y^\ast_0(0)|D=0, X, OOO]\cdot \frac{\pi_{OOO0}(X)}{\pi_{OOO0}(X)+\pi_{OON0}(X)} \nonumber \\
        &+ \mathbbm{E}[Y^\ast_1(0)-Y^\ast_0(0)|D=0, X, OON]\cdot \frac{\pi_{OON0}(X)}{\pi_{OOO0}(X)+\pi_{OON0}(X)}.
\end{align}
\end{small}
Define $p_{OOO0}(X) = \frac{\pi_{OOO0}(X)}{\pi_{OOO0}(X)+\pi_{OON0}(X)}$. Based on the trimming argument from Section \ref{identification}, the conditional means $\mathbbm{E}[Y^\ast_1(1)-Y^\ast_0(1)|D=d, X, OOO]$ for $d=\{0,1\}$ are bounded between the interval $[LB_{OOOd}(X), UB_{OOOd}(X)]$ where 
{\small \begin{equation}\label{bounds1_cond_nomono}
\begin{split}
    LB_{OOOd}(X) & = \mathbbm{E}[Y_1-Y_0|D=d, X, S_0=1, S_1=1, Y_1-Y_0\leq F^{-1}_{\Delta Y|X, d11}(p_{OOOd}(X))], \\
    UB_{OOOd}(X) & = \mathbbm{E}[Y_1-Y_0|D=d, X, S_0=1, S_1=1, Y_1-Y_0 > F^{-1}_{\Delta Y|X, d11}(1-p_{OOOd}(X))]
\end{split}
\end{equation}}
and $F_{\Delta Y|X, dss'}^{-1}(.)$ is the conditional quantile function of $\Delta Y \equiv Y_1-Y_0$ where we are conditioning on $D=d,X,S_{0}=s,S_{1}=s'$.

Combining the bounds for the two conditional means, we obtain the bound for $\tau_{OOO}(X)$, 
\begin{align*}
    \tau_{OOO}(X)\in  [LB_{OOO1}(X)-UB_{OOO0}(X), \ UB_{OOO1}(X)-LB_{OOO0}(X)].
\end{align*}
The bounds on the conditional ATT for the OOO can be of independent interest and provide useful insights about heterogeneity of the treatment effects across different values of the covariates. As shown in equations \eqref{nomonox_lb_true} and \eqref{nomonox_ub_true}, one can get bounds on the unconditional treatment effect, $\tau_{OOO}$, by integrating the conditional bounds over the distribution of $X$ for the always-observed population in the treated group. 

While the bounds themselves are identified from the data, we must also identify the mixing proportion functions $p_{OOOd}(X)$ for both treatment groups and for all values in the support of $X$ in order to partially identify the unconditional ATT.

\subsection{Identification without Monotonicity}

This subsection follows closely the discussion for the unconditional case in Section \ref{sec:ident_nomon}. Here we are interested in the unobserved share of always-observed individuals conditional on the value of the covariates $X$, $\mathbbm{P}[S_0(0)=1, S_1(0)=1, S_1(1)=1, D=d|X]$. For the treated group,
\begin{align}\label{pipartialid1_cov}
    &\mathbbm{P}[S_{0}=1, S_{1}=1| D=1,X] \nonumber \\
    &=\mathbbm{P}[S_{0}=1, S_{1}(0)=1, S_{1}(1)=1| D=1,X]+\mathbbm{P}[S_{0}=1, S_{1}(0)=0, S_{1}(1)=1| D=1,X] \nonumber\\
    &=\frac{\pi_{OOO1}(X)}{\mathbbm{P}[D=1|X]}+\frac{\pi_{ONO1}(X)}{\mathbbm{P}[D=1|X]}.
\end{align}
And for untreated observations,
\begin{align}\label{pipartialid2_cov}
    &\mathbbm{P}[S_{0}=1, S_{1}=1| D=0,X] \nonumber \\ 
    &=\mathbbm{P}[S_{0}=1, S_{1}(0)=1, S_{1}(1)=1| D=0, X]+\mathbbm{P}[S_{0}=1, S_{1}(0)=1, S_{1}(1)=0| D=0,X]\nonumber\\
    &=\frac{\pi_{OOO0}(X)}{\mathbbm{P}[D=0|X]}+\frac{\pi_{OON0}(X)}{\mathbbm{P}[D=0|X]}.
\end{align}

Following the arguments laid out on Section \ref{sec:ident_nomon}, express 
{\small \begin{align*}
   \mathbbm{P}[S_{0}=1, S_{1}(0)=1, S_{1}(1)=1| D=d, X] = \mathbbm{P}[S_{1}(0)=1, S_{1}(1)=1| D=d, S_{0}=1,X]\cdot \mathbbm{P}[S_0=1|D=d,X] 
\end{align*}}
where $\mathbbm{P}[S_0=1|D=d,X]$ is directly observed in the data whereas $\mathbbm{P}[S_{1}(0)=1, S_{1}(1)=1| D=d, S_{0}=1,X]$ can be partially identified using Fr\'echet bounds \citep{imai2008sharp} as follows:
{\footnotesize
\begin{align*}
\mathbbm{P}[S_{1}(0)=1, S_{1}(1)=1|D=d, S_{0}=1,X]\in& \bigg[\max\{\mathbbm{P}[S_{1}(0)=1|D=d, S_{0}=1,X]+\mathbbm{P}[S_{1}(1)=1|D=d, S_{0}=1,X]-1,0\},\\
&\min\{\mathbbm{P}[S_{1}(0)=1|D=d, S_{0}=1,X],\mathbbm{P}[S_{1}(1)=1|D=d,S_{0}=1,X]\}\bigg].
\end{align*}}

Combining Assumption \ref{Partialselection_cond} and the information in equations (\ref{pipartialid1_cov}) and (\ref{pipartialid2_cov}), we identify the missing counterfactual probabilities in Lemma \ref{lemma:pipartialid_cond}.

\begin{lemma}\label{lemma:pipartialid_cond}
    (a) Under assumptions \ref{no anti} and \ref{Partialselection_cond}(a), the identified set for $\mathbbm{P}[S_{1}(0)=1, S_{1}(1)=1|D=1,S_{0}=1, X]$ is given by 
    {\footnotesize
 \begin{align}
 \mathbbm{P}[S_{1}(0)=1, S_{1}(1)=1|D=1,S_{0}=1, X]&\in \left[\max\{\mathbbm{P}[S_{1}=1|D=0, S_{0}=1,X]+\mathbbm{P}[S_{1}=1|D=1, S_{0}=1,X]-1,0\},\right. \nonumber\\
 &\left.\min\{\mathbbm{P}[S_{1}=1|D=0, S_{0}=1, X],\mathbbm{P}[S_{1}=1|D=1, S_{0}=1,X]\}\right]
 \end{align}}
    (b) Under assumptions \ref{no anti} and \textcolor{blue}{\ref{Partialselection_cond}(b)}, the identified set for $\mathbbm{P}[S_{1}(0)=1, S_{1}(1)=1|D=1,S_{0}=1, X]$ is given by
    {\footnotesize
 \begin{align}
 \mathbbm{P}[S_{1}(0)=1, S_{1}(1)=1|D=0, S_{0}=1, X]&\in \left[\max\{\mathbbm{P}[S_{1}=1|D=0, S_{0}=1,X]+\mathbbm{P}[S_{1}=1|D=1, S_{0}=1,X]-1,0\},\right.\nonumber\\
&\left.\min\{\mathbbm{P}[S_{1}=1|D=0, S_{0}=1,X],\mathbbm{P}[S_{1}=1|D=1, S_{0}=1,X]\}\right]
 \end{align}}
\end{lemma}
The proof of Lemma \ref{lemma:pipartialid_cond} is identical to the proof of Lemma \ref{lemma:pipartialid} once we account for the additional conditioning variable $X$, and is therefore omitted. Let $p^l_{OOOd}(X)$ and $\pi^l_{OOOd}(X)$ for $d=\{0,1\}$ denote the lower end of the identified set for $p_{OOOd}(X)$ and $\pi_{OOOd}(X)$ for all $X \in \mathcal{X}$, respectively.
Theorem \ref{nomono_bound_cond} identifies the bounds for $\tau_{OOO}$ that can be obtained without requiring any monotonicity on the sample selection mechanism.
  
\begin{theorem}[Bounds for $\tau_{OOO}$ conditional on covariates]\label{nomono_bound_cond} Under assumptions \ref{no anti},\ref{PT_OOO_cond}, \ref{Partialselection_cond}(a), \ref{Partialselection_cond}(b), and $p^l_{OOOd}(X)>0$, bounds on the treatment effect on the treated for the always-observed group ($\tau_{OOO}$) lies in the interval $[LB_{\tau_{OOO}}, UB_{\tau_{OOO}}]$,
\begin{equation*}
    \begin{split}
        LB_{\tau_{OOO}} &=\mathbbm{E}\bigg[\left(LB_{OOO1}(X) - UB_{OOO0}(X)\right)\dfrac{\pi^l_{OOO1}(X)}{\pi^l_{OOO1}} \bigg]\\
        UB_{\tau_{OOO}} &=\mathbbm{E}\bigg[\left(UB_{OOO1}(X)-LB_{OOO0}(X)\right)\dfrac{\pi^l_{OOO1}(X)}{\pi^l_{OOO1}}\bigg]
    \end{split}
\end{equation*}
where $\pi^l_{OOO1}(X)  = p^l_{OOO1}(X)\cdot \mathbbm{P}(S_0=1, S_1=1, D=1|X), \quad \pi^l_{OOO1} = \mathbbm{E}[\pi^l_{OOO1}(X)]$ and  
\begin{align*}
    LB_{OOO1}(X) & = \mathbbm{E}[Y_1-Y_0|D=1, X, S_0=1, S_1=1, Y_1-Y_0 \leq F^{-1}_{\Delta Y|X, 111}(p^{l}_{OOO1}(X))] \\
    UB_{OOO1}(X) & = \mathbbm{E}[Y_1-Y_0|D=1, X, S_0=1, S_1=1, Y_1-Y_0 > F^{-1}_{\Delta Y|X, 111}(1-p_{OOO1}^l(X))] \\
    LB_{OOO0}(X) & = \mathbbm{E}[Y_1-Y_0|D=0, X, S_0=1, S_1=1, Y_1-Y_0 \leq F^{-1}_{\Delta Y|X, 011}(p^l_{OOO0}(X))] \\
    UB_{OOO0}(X) & = \mathbbm{E}[Y_1-Y_0|D=0, X, S_0=1, S_1=1, Y_1-Y_0 > F^{-1}_{\Delta Y|X, 011}(1-p^l_{OOO0}(X))]
    \end{align*}
    \begin{align*}	
    p^l_{OOO1}(X)&=\frac{\max\{\mathbbm{P}[S_{1}=1|D=0, S_{0}=1, X]+\mathbbm{P}[S_{1}=1|D=1, S_{0}=1, X]-1,0\}}{\mathbbm{P}[S_{1}=1| D=1, S_{0}=1, X]},\\
    p^l_{OOO0}(X)&=\frac{\max\{\mathbbm{P}[S_{1}=1|D=0, S_{0}=1, X]+\mathbbm{P}[S_{1}=1|D=1, S_{0}=1, X]-1,0\}}{\mathbbm{P}[S_{1}=1| D=0, S_{0}=1, X]}.
\end{align*}
\end{theorem}
Lemma \ref{lemma:pipartialid_cond} and Theorem \ref{nomono_bound_cond} provide identification for the mixing proportions and ATT for the always-observed latent group for different values of covariates. The proof of identification of the conditional bounds $LB_{OOOd}$ and $UB_{OOOd}$ is identical to the proof of Theorem \ref{nomono_bound}, with the only difference being the additional conditioning on X. Identification of the unconditional bounds then follows directly from equations \eqref{nomonox_lb_true} and \eqref{nomonox_ub_true} for $LB_{\tau_{OOO}}$ and $UB_{\tau_{OOO}}$, respectively.

\subsection{Relaxed Conditional Monotonicity} \label{relax mono}
We relax monotonicity by allowing treatment to influence selection in a different direction within each partition of the population defined by $X$. Even if selection in each partition increases (or decreases) in the treatment state relative to the control state, the composition of the population across $X$ could involve different directions of monotonicity. To allow for this, we follow the work of \cite{semenova2025generalized} and partition the covariate set into regions where the impact of treatment on selection is positive, regions where it is negative, and regions where treatment may not impact selection. Therefore, the following assumption allows for relaxations of the unconditional monotonicity.
\begin{assumption}\label{monotone_cond}
    Conditional monotone sample selection: The covariate set $\mathcal{X}$ can be partitioned into sets, $\mathcal{X}_{+}$, $\mathcal{X}_{-}$, and $\mathcal{X}_{0}$ such that 
    \begin{enumerate}
        \item[(a)] (Positive conditional monotonicity): $X \in \mathcal{X}_{+} \text{ such that } \mathbbm{P}[S_{1}(1)> S_{1}(0)|X]=1$.
        \item[(b)] (Negative conditional monotonicity): $X \in \mathcal{X}_{-} \text{ such that } \mathbbm{P}[S_{1}(1) < S_{1}(0)|X]=1$. 
        \item[(c)] (No conditional monotonicity): $X \in \mathcal{X}_{0} \text{ such that } \mathbbm{P}[S_{1}(1) = S_{1}(0)|X]=1$. 
    \end{enumerate}
\end{assumption}

First, consider the subset, $X \in \mathcal{X}_{+}$, where we have positive monotonicity. Similar to the unconditional case, the proportion of always-observed subpopulation is point-identified under positive conditional monotonicity and ignorability of selection assumptions for both the treated and untreated groups, taking the value one for the latter. Recall that
\begin{align*}
    p_{OOO1}(X)& = \frac{\pi_{OOO1}(X)}{\pi_{OOO1}(X)+\pi_{ONO1}(X)} \\
    & = \frac{\mathbbm{P}(S_1=1|D=0, X, S_0=1)}{\mathbbm{P}(S_1=1|D=1, X, S_0=1)} \tag{By Assumptions \ref{Partialselection_cond}(a) \& \ref{monotone_cond}(a)} 
\end{align*}
While for the untreated group, 
\begin{align*}
    p_{OOO0}(X)	& = \frac{\pi_{OOO0}(X)}{\pi_{OOO0}(X)+\pi_{OON0}(X)} \\
    & = \frac{\mathbbm{P}(S_1(0)=1|D=0, X, S_0=1)}{\mathbbm{P}(S_1(0)=1|D=0, X, S_0=1)} \tag{By Assumption \ref{monotone_cond}(a)}\\
    & = 1
\end{align*}
Using the fact that $p_{OOO0}(X)=1$ in Equation \eqref{diffy_cond_untreat}, it implies that the conditional mean of $Y_1-Y_0$ for the observed subsample of untreated units point-identifies $\mathbbm{E}[Y_1^\ast(0) - Y_0^\ast(0)|D=0, X, OOO]$. 

Combining this with the bounds for $\mathbbm{E}[Y_1^\ast(1) - Y_0^\ast(1)|D=1, X, OOO]$ presented in Equation \eqref{bounds1_cond_nomono} we can bound $\tau_{OOO}(X)$. Therefore, for $X \in \mathcal{X}_{+}$, $	\tau_{OOO}(X) \in [LB_{OOO1}(X)-\mathbbm{E}[Y_1-Y_0|D=0,X,S_{0}=1,S_{1}=1],  \quad UB_{OOO1}(X)-\mathbbm{E}[Y_1-Y_0|D=0,X,S_{0}=1,S_{1}=1]$.

Results for the covariate set where selection is negative can be treated in an analogous manner. Following the same arguments as above, for any $X \in \mathcal{X}_{-}$ we can show that under Assumptions \ref{Partialselection_cond}(b) and \ref{monotone_cond}(b), $p_{OOO1}(X) = 1$ and 
\begin{equation*}
    p_{OOO0}(X) = \frac{\mathbbm{P}(S_1=1|D=1, X, S_0=1)}{\mathbbm{P}(S_1=1|D=0,X, S_0=1)} 
\end{equation*}
Symmetrically, using the fact that $p_{OOO1}(X)=1$ in Equation \eqref{diffy_cond_treat} implies that the conditional mean of $Y_1-Y_0$ for the observed subsample of treated units point-identifies $\mathbbm{E}[Y_1^\ast(1) - Y_0^\ast(1)|D=1, X, OOO]$. Combining this with bounds for $\mathbbm{E}[Y_1^\ast(0) - Y_0^\ast(0)|D=0, X, OOO]$ given in Equation \eqref{bounds1_cond_nomono}, we can say that for covariate sets where selection is negative i.e. $X \in \mathcal{X}_{-}$, the conditional parameter is bounded between $	\tau_{OOO}(X) \in \big[\mathbbm{E}[Y_1-Y_0|D=1, X, S_0=1, S_1=1] - UB_{OOO0}(X), \quad \mathbbm{E}[Y_1-Y_0|D=1, X, S_0=1, S_1=1]- LB_{OOO0}(X) \big]$.
	
Finally, selection behavior is not impacted by treatment on the covariate set $X \in \mathcal{X}_{0}$ and therefore the conditional means of $Y_1-Y_0$ for the $D=d$ group which is observed in both periods point-identifies the counterfactual means, $\mathbbm{E}[Y_1^\ast(d)-Y_0^\ast(d)|D=d, X, OOO]$ for $d=0,1$. Therefore, $\tau_{OOO}(X)=\mathbbm{E}[Y_1-Y_0|D=1, X, S_0=1, S_1=1] - \mathbbm{E}[Y_1-Y_0|D=0, X, S_0=1, S_1=1]$. As pointed out by \cite{semenova2025generalized}, this boundary case is likely to be of small importance in applications but can still be incorporated into this approach.

Aggregating the bounds across the covariate-specific regions gives us Theorem \ref{Gen_bound_cond}.
\begin{theorem}[Bounds for $\tau_{OOO}$ under conditional monotonicity]\label{Gen_bound_cond} Under Assumptions \ref{no anti},\ref{PT_OOO_cond}, \ref{Partialselection_cond}, \ref{monotone_cond}, and $p_{OOOd}^r(X)>0$, bounds on the treatment effect on the treated for the always-observed group ($\tau_{OOO}$) are  given by the interval $[LB_{\tau_{OOO}}, UB_{\tau_{OOO}}]$ where the lower and upper bounds are given by:
 \begin{align*}
    \mathrm{LB}_{\tau_{OOO}} =\;
    & \sum_{r=\{+, -, 0\}}\mathbbm{E}_{\mathcal{X}_{r}}\left[ (LB^r_{OOO1}(X) - UB^r_{OOO0}(X))\dfrac{\pi^r_{OOO1}(X)}{\pi^r_{OOO1}}\right]\\
    \mathrm{UB}_{\tau_{OOO}} =\;
    & \sum_{r=\{+, -, 0\}}\mathbbm{E}_{\mathcal{X}_{r}}\left[ (UB^r_{OOO1}(X) - LB^r_{OOO0}(X))\dfrac{\pi^r_{OOO1}(X)}{\pi^r_{OOO1}}\right]
\end{align*}
where $\pi^r_{OOO1}(X) = p_{OOO1}^r(X)\cdot \mathbbm{P}(S_0=1, S_1=1, D=1|X), \text{ and } \pi^r_{OOO1}= \mathbbm{E}[\pi^r_{OOO1}(X)], \ r\in \{+,-,0\}.$ The region-specific bounds are:
    \begin{enumerate}
        \item Positive selection ($r=+$): 
        \begin{align*}
            LB^{+}_{OOO1}(X) & = \mathbbm{E}[Y_1-Y_0|D=1, X, S_0=1, S_1=1, Y_1-Y_0 \leq 	F^{-1}_{\Delta Y|X, 111}(p^{+}_{OOO1}(X))], \\
            UB^{+}_{OOO1}(X) & = \mathbbm{E}[Y_1-Y_0|D=1, X, S_0=1, S_1=1, Y_1-Y_0 > 	F^{-1}_{\Delta Y|X, 111}(1-p^{+}_{OOO1}(X))], \\
            LB^{+}_{OOO0}(X) & = UB^{+}_{OOO0}(X) = \mathbbm{E}[Y_1-Y_0\mid D=0, X, S_0=1, S_1=1] 
        \end{align*}
        \item Negative selection ($r=-$): 
        \begin{align*}
            LB^{-}_{OOO1}(X) & = UB^{-}_{OOO1}(X) = \mathbbm{E}[Y_1-Y_0\mid D=1, X, S_0=1, S_1=1] \\
            LB^{-}_{OOO0}(X) & = \mathbbm{E}[Y_1-Y_0|D=0, X, S_0=1, S_1=1, Y_1-Y_0 \leq 	F^{-1}_{\Delta Y|X, 011}(p^{-}_{OOO0}(X))], \\
            UB^{-}_{OOO0}(X) & = \mathbbm{E}[Y_1-Y_0|D=0, X, S_0=1, S_1=1, Y_1-Y_0 > 	F^{-1}_{\Delta Y|X, 011}(1-p^{-}_{OOO0}(X))],
        \end{align*}
        \item No selection ($r=0$):
        \begin{align*}
            LB^{0}_{OOO1}(X) & = UB^{0}_{OOO1}(X) = \mathbbm{E}[Y_1-Y_0\mid D=1, X, S_0=1, S_1=1] \\
            LB^{0}_{OOO0}(X) & = UB^{0}_{OOO0}(X) = \mathbbm{E}[Y_1-Y_0\mid D=0, X, S_0=1, S_1=1] 
        \end{align*}
    \end{enumerate}
    with mixing proportions given by:
    \begin{align*}
        p_{OOO1}^+(X) &= \frac{\mathbbm{P}[S_{1}=1|S_{0}=1,D=0,X]}{\mathbbm{P}[S_{1}=1|S_{0}=1,D=1,X]}, \quad p_{OOO0}^-(X)=\frac{\mathbbm{P}[S_{1}=1|S_{0}=1,D=1,X]}{\mathbbm{P}[S_{1}=1|S_{0}=1,D=0,X]}, \\
    p_{OOO0}^+(X) &= 1, \quad p_{OOO1}^-(X)=1, \quad p_{OOO1}^0(X)=1, \quad p_{OOO0}^0(X)=1, 
    \end{align*}
\end{theorem}
Proof of Theorem \ref{Gen_bound_cond} can be found in Appendix \ref{proof:Gen_bound_cond}.

\subsection{Moment-based Representations}\label{identification_mom}
Following \cite{semenova2025generalized}, the bounds for $\tau_{OOO}$ can be represented as a ratio of two moments. 
 \begin{align*}
    LB_{\tau_{OOO}}&= \frac{\mathbbm{E}[m_L(W, \xi)]}{\pi_{OOO1}} \quad \text{ and } \quad
    UB_{\tau_{OOO}}= \frac{\mathbbm{E}[m_U(W, \xi)]}{\pi_{OOO1}} \quad \text{ with } \\
    \pi_{OOO1}  &= \mathbbm{E}[\min\{s(0,X),s(1,X)\}\cdot \mu_1(X)\cdot \mathbbm{P}(S_0=1\mid X)],\\
    \mu_d(X) &\equiv \mathbbm{P}(D=d \mid X, S_0=1)   
\end{align*}        
where $s(d,X) \equiv \mathbbm{P}(S_1=1\mid D=d, X, S_0=1)$ and the region-specific moment conditions for the lower bound are given as:
{\small \begin{align*}
    m_L(W, \xi)	& = \begin{cases}
        \left(D\cdot \mathbbm{1}\{\Delta Y \leq F^{-1}_{\Delta Y\mid X, 111}(p_{OOO1}(X))\} - (1-D)\cdot \frac{\mu_1(X)}{\mu_0(X)}\right)\cdot S_0\cdot S_1\cdot \Delta Y, \qquad  &X\in \mathcal{X}_{+}\\ 
        \\
        \left(D - (1-D)\cdot \mathbbm{1}\{\Delta Y>F^{-1}_{\Delta Y\mid X, 011}(1-p_{OOO0}(X))\}\cdot \frac{\mu_1(X)}{\mu_0(X)}\right)\cdot S_0\cdot S_1\cdot \Delta Y, \qquad &X\in \mathcal{X}_{-}  \\
        \\
         \left(D - (1-D)\cdot 	\frac{\mu_1(X)}{\mu_0(X)}\right)\cdot S_0\cdot S_1\cdot \Delta Y, \qquad &X\in \mathcal{X}_{0}
    \end{cases} 
\end{align*}}
and the moment conditions for the upper bound are:
{\small \begin{align*}
     m_U(W, \xi) &= \begin{cases}
         \left(D\cdot \mathbbm{1}\{\Delta Y > F^{-1}_{\Delta Y\mid 	X, 111}(1-p_{OOO1}(X))\} - (1-D)\cdot \frac{\mu_1(X)}{\mu_0(X)}\right)\cdot S_0\cdot S_1\cdot \Delta Y, \qquad &X\in \mathcal{X}_{+} \\
         \\
         \left(D - (1-D)\cdot \mathbbm{1}\{\Delta Y\leq F^{-1}_{\Delta Y\mid X, 011}(p_{OOO0}(X))\}\cdot\frac{\mu_1(X)}{\mu_0(X)}\right)\cdot S_0\cdot S_1\cdot \Delta Y, \qquad &X\in \mathcal{X}_{-} \\
         \\
          \left(D - (1-D)\cdot \frac{\mu_1(X)}{\mu_0(X)}\right)\cdot S_0\cdot S_1\cdot \Delta Y, \qquad &X\in \mathcal{X}_{0}.
     \end{cases}
\end{align*}}
The trimming proportions are identified in Theorem \ref{Gen_bound_cond}. 

\subsection{Estimation}\label{estimation_x}
\paragraph{Without Monotonicity:} Estimation of the covariate-adjusted lower bound for $\tau_{OOO}$, $LB_{\tau_{OOO}}$ as given in Theorem \ref{nomono_bound_cond} could proceed in different ways.\footnote{Estimation of the upper bound follows similarly.} We detail here an intuitive approach that relies on partitioning the covariate space into J discrete cells. Since the overall bounds can be written as
\begin{align}\label{nomonox_lb}
    \widehat{LB}_{\tau_{OOO}} =\sum_{j=1}^{J}\left(\widehat{LB}_{OOO1}(j) - \widehat{UB}_{OOO0}(j)\right)\frac{\hat{\pi}^l_{OOO1}(j)}{\hat{\pi}^l_{OOO1}} \cdot \mathbbm{\hat{P}}[J=j],
\end{align}
We focus on estimating cell-specific bounds for $j=1,2,...,J$. First, we estimate the lower bound for the conditional mixing probabilities in each cell $j$ as,
\begin{align*}
    \hat{p}^l_{OOO1}(j)=\frac{\max\{\mathbbm{\hat{P}}[S_{1}=1|D=0, S_{0}=1,J=j]+\mathbbm{\hat{P}}[S_{1}=1|D=1, S_{0}=1,J=j]-1,0\}}{\mathbbm{\hat{P}}[S_{1}=1| D=1, S_{0}=1,J=j]},\\
    \hat{p}^l_{OOO0}(j)=\frac{\max\{\mathbbm{\hat{P}}[S_{1}=1|D=0, S_{0}=1,J=j]+\mathbbm{\hat{P}}[S_{1}=1|D=1, S_{0}=1,J=j]-1,0\}}{\mathbbm{\hat{P}}[S_{1}=1| D=0, S_{0}=1,J=j]}.
\end{align*}
where,
\begin{align*}
    \mathbb{\hat{P}}[S_{1}=1|D=0, S_{0}=1,J=j] &= \frac{\sum_{i=1}^{n}S_{i0}\cdot S_{i1}\cdot (1-D_{i})\cdot I(J_i=j)}{\sum_{i=1}^{n}S_{i0}\cdot (1-D_{i})\cdot I(J_i=j)}\\
    \mathbb{\hat{P}}[S_{1}=1|D=1, S_{0}=1,J=j] &= \frac{\sum_{i=1}^{n}S_{i0}\cdot S_{i1}\cdot D_{i}\cdot I(J_i=j)}{\sum_{i=1}^{n}S_{i0}\cdot D_{i}\cdot I(J_i=j)}
\end{align*}
Then we estimate the bounds for each cell $j$ as follows,
{\small \begin{align}
    \widehat{LB}_{OOO1}(j)
    &=\frac{\sum_{i=1}^n (Y_{i1}-Y_{i0}) \cdot S_{i0}\cdot S_{i1}\cdot D_{i} \cdot I(J_i=j) \cdot I\{(Y_{i1}-Y_{i0}) \leqslant \hat{y}_{\hat{p}^l_{OOO1} \label{LB_Cond_estimate}(j)}\}}{\sum_{i=1}^n S_{i0}\cdot S_{i1}\cdot D_{i} \cdot I(J_i=j)  \cdot I\{(Y_{i1}-Y_{i0}) \leqslant \hat{y}_{\hat{p}^l_{OOO1}(j)}\}}\\
    \widehat{UB}_{OOO0}(j) 
    &=\frac{\sum_{i=1}^n (Y_{i1}-Y_{i0}) \cdot S_{i0}\cdot S_{i1}\cdot (1-D_{i} )\cdot I(J_i=j) \cdot I\{(Y_{i1}-Y_{i0}) > \hat{y}_{1-\hat{p}^l_{OOO0}(j)}\}}{\sum_{i=1}^n S_{i0}\cdot S_{i1}\cdot (1-D_{i} )\cdot I(J_i=j)  \cdot I\{(Y_{i1}-Y_{i0}) > \hat{y}_{1-\hat{p}^l_{OOO0}(j)}\}} \label{UB_Cond_estimate}
\end{align}}
where $\hat{y}_{\hat{p}^l_{OOO1}(j)}$ and $\hat{y}_{1-\hat{p}^l_{OOO0}(j)}$ are the $\hat{p}^l_{OOO1}(j)$-th and $(1-\hat{p}^l_{OOO0}(j))$-th quantile of the conditional distribution $Y_1 - Y_0$ for individuals observed in both time periods who fall into the covariate group $J=j$ for the treated and untreated group respectively.  In general, the relevant $q$-th quantile  of the conditional distribution $Y_1 - Y_0$ for the treated individuals observed in both time periods who falls into the covariate group $J=j$ is calculated as,
\begin{equation*}
    \hat{y}_q  = \min \left\{y: \frac{\sum_{i=1}^n S_{i0}\cdot S_{i1}\cdot D_{i}\cdot I(J_i=j) \cdot I\{(Y_{i1}-Y_{i0})\leqslant y\}}{\sum_{i=1}^n S_{i0}\cdot S_{i1}\cdot D_{i}\cdot I(J_i=j)} \geqslant q\right\}, 
\end{equation*}
where I($\cdot$) is an indicator function. Then we estimate the weights of each cell $j$ as follows,
\begin{align}\label{nomonox_weights_1}
    \frac{\hat{\pi}^l_{OOO1}(j)}{\hat{\pi}^l_{OOO1}}& = \frac{\hat{p}^l_{OOO1}(j)\cdot \hat{\mathbbm{P}}(S_0=1, S_1=1, D=1|j)}{\hat{\mathbbm{E}}[\hat{p}^l_{OOO1}(j)\cdot\mathbbm{P}(S_0=1, S_1=1, D=1|j) ]}.
\end{align}
where,
\begin{align}
    \mathbb{\hat{P}}[S_{0}=1,S_{1}=1, D=1|J=j] &= \frac{\sum_{i=1}^{n}S_{i0}\cdot S_{i1}\cdot D_{i}\cdot I(J_i=j)}{\sum_{i=1}^{n} I(J_i=j)} \label{nomonox_weights_2}\\
    \mathbbm{\hat{P}}[J=j] &=\frac{\sum_{i=1}^{n}I(J_i=j)}{n} \label{nomonox_weights_3}\\
    \hat{\mathbbm{E}}[p^l_{OOO1}(j)\cdot\mathbbm{P}(S_0=1, S_1=1, D=1|j) ]&=\sum_{j=1}^{J}\hat{p}^l_{OOO1}(j) \cdot \mathbb{\hat{P}}[S_{0}=1,S_{1}=1, D=1|J=j] \cdot \nonumber \\
    &\quad \cdot \mathbbm{\hat{P}}[J=j] \label{nomonox_weights_4}.
\end{align}
We use bootstrap to obtain the asymptotic variance of this estimator.\footnote{We have established the bootstrap consistency of $\widehat{LB}_{\tau_{OOO}(j)}\equiv (\widehat{LB}_{OOO1}(j) - \widehat{UB}_{OOO0}(j))$ and the weights $w_j$ estimated by sample proportions, $\widehat{w}_j=\frac{\hat{\pi}^l_{OOO1}(j)}{\hat{\pi}^l_{OOO1}} \cdot \mathbbm{\hat{P}}[J=j]$. The estimator $\widehat{LB}_{\tau_{OOO}(X)}=\sum_{j=1}^{J}\widehat{w}_j\widehat{LB}_{\tau_{OOO}(j)}$ is a smooth linear function of  $\widehat{LB}_{\tau_{OOO}(j)}$ and $\hat{w}_j$. Thus, bootstrap is consistent for inference on  $\widehat{LB}_{\tau_{OOO}(X)}$. See details in Appendix \ref{appendix:Bootstrap}.} 

\paragraph{Estimation with Relaxed Monotonicity:}
The estimation procedure for the covariate-adjusted bounds for $\tau_{OOO}$ given in Theorem \ref{Gen_bound_cond} has many of the same components described in the discussion about estimation without imposing monotonicity in the previous section. The main differences are in the point identification of the conditional mixing proportions, and we need to define the subsets of the covariate space for which monotonicity is positive ($\mathcal{X}_{+}$), negative ($\mathcal{X}_{-}$) or no selection ($\mathcal{X}_{0}$).

First, partition the covariate space into J discrete cells. Then for each $j=1,2,...,J$, estimate,
\begin{align} 
    \mathbbm{\hat{P}}[S_{1}=1|D=0, S_{0}=1,J=j] &= \frac{\sum_{i=1}^{n}S_{i0}\cdot S_{i1}\cdot (1-D_{i})\cdot I(J_i=j)}{\sum_{i=1}^{n}S_{i0}\cdot (1-D_{i})\cdot I(J_i=j)}, \nonumber\\
    \mathbbm{\hat{P}}[S_{1}=1|D=1, S_{0}=1,J=j] &= \frac{\sum_{i=1}^{n}S_{i0}\cdot S_{i1}\cdot D_{i}\cdot I(J_i=j)}{\sum_{i=1}^{n}S_{i0}\cdot D_{i}\cdot I(J_i=j)}.
\end{align}
Then we assign the j-th cell into a covariate space as follows,
\begin{align*}
   j= 
   \begin{cases}
    \quad j\in\mathcal{X}^{+}, & \text{if} \quad 
      \mathbbm{\hat{P}}[S_{1}=1 \mid S_{0}=1,D=1,J=j] > \mathbbm{\hat{P}}[S_{1}=1 \mid S_{0}=1,D=0,J=j],\\
    \quad j\in\mathcal{X}^{-}, & \text{if} \quad \mathbbm{\hat{P}}[S_{1}=1 \mid S_{0}=1,D=1,J=j] < \mathbbm{\hat{P}}[S_{1}=1 \mid S_{0}=1,D=0,J=j], \\
     \quad j\in\mathcal{X}^{0}, & \text{if} \quad \mathbbm{\hat{P}}[S_{1}=1 \mid S_{0}=1,D=1,J=j] = \mathbbm{\hat{P}}[S_{1}=1 \mid S_{0}=1,D=0,J=j], \\
\end{cases}  
\end{align*}
Define,
\begin{align*}
   r(j)= 
\begin{cases}
    +, & \text{if} \quad j\in\mathcal{X}^{+}\\
    -, & \text{if} \quad j\in\mathcal{X}^{-}\\
    0, & \text{if} \quad j\in\mathcal{X}^{0}\\
\end{cases},
   \quad \text{and} \quad d(j)= 
\begin{cases}
    1, & \text{if} \quad j\in\mathcal{X}^{+}\\
    0, & \text{if} \quad j\in\mathcal{X}^{-}\\
    0, & \text{if} \quad j\in\mathcal{X}^{0}\\
\end{cases}
\end{align*}
{\small
 \begin{align}
\hat{p}_{OOOd(j)}^{r(j)}(j) = 
\dfrac{\mathbbm{\hat{P}}[S_{1}=1 \mid S_{0}=1,D=(1-d),J=j]}
      {\mathbbm{\hat{P}}[S_{1}=1 \mid S_{0}=1,D=d,J=j]}, 
\end{align}}
Finally,  $\hat{p}_{OOO(1-d(j))}^{r(j)}(j)=1$ 
 The $j$-th cell-specific bounds are estimated as follows,
 \begin{align}
    \widehat{LB}_{\tau_{OOO}}(j) = 
    \begin{cases}
    \widehat{LB}_{OOO1}(j) - \hat{E}_{OOO0}(j), 
    & \text{if } \quad j\in\mathcal{X}^{+} \\
    \hat{E}_{OOO1}(j)-\widehat{UB}_{OOO0}(j), & \text{if} \quad j\in\mathcal{X}^{-}  \\
    \hat{E}_{OOO1}(j) - \hat{E}_{OOO0}(j), 
    &  \text{if} \quad j\in\mathcal{X}^{0}  \\  
    \end{cases}
\end{align}
where 
\begin{align*}
     \hat{E}_{OOO0}(j) &= \frac{\sum_{i=1}^n (Y_{i1}-Y_{i0}) \cdot S_{i0}\cdot S_{i1}\cdot (1-D_{i})\cdot I(J_i=j)}{\sum_{i=1}^n S_{i0}\cdot S_{i1}\cdot (1-D_{i})\cdot I(J_i=j)}.\\
     \hat{E}_{OOO1}(j) &= \frac{\sum_{i=1}^n (Y_{i1}-Y_{i0}) \cdot S_{i0}\cdot S_{i1}\cdot D_{i}\cdot I(J_i=j)}{\sum_{i=1}^n S_{i0}\cdot S_{i1}\cdot D_{i}\cdot I(J_i=j)},
\end{align*}
and the estimates for $\widehat{LB}_{OOO1}(j)$ and $\widehat{UB}_{OOO0}(j)$ have the same form of the estimators in equations (\ref{LB_Cond_estimate}) and (\ref{UB_Cond_estimate}) with $\hat{p}_{OOOd(j)}^{r(d)}(j)$ replacing $\hat{p}^{l}_{OOOd}(j)$. Likewise, the estimates of the cell-specific weights and the covariate-adjusted lower bound for $\tau_{OOO}$ follow the same structure as in equation \eqref{nomonox_weights_1}, with components defined in \eqref{nomonox_weights_2}–\eqref{nomonox_weights_4} and \eqref{nomonox_lb}, respectively, again substituting   $\hat{p}_{OOOd(j)}^{r(d)}(j)$ for $\hat{p}^{l}_{OOOd}(j)$. The upper bound can be estimated using a similar procedure.

\subsection{Empirical Results with Covariates}\label{application_x}

In this section, we draw on the discussion and methods developed in Section \ref{sec: covariates} to examine how baseline covariates help with the identification of the bounds in both empirical applications. One of the main difficulties in implementing the bounds with covariates is how to consider the covariate space. Following \cite{lee2009}, we divide the covariate space into mutually exclusive groups based on a single baseline covariate proxy index. This covariate index is obtained as a linear combination of all baseline covariates, with the weights determined by the coefficients from a linear regression of the post-treatment outcome on the covariates included in the conditioning set, in other words, the linear regression of $y_{1}$ on pre-treatment $X$.

For the empirical application on the impact of WFH on employee performance (Section \ref{Application: WFH}), the index is constructed using a linear regression of the post-treatment performance z-score on all observed baseline covariates given in Table \ref{summary WFH base}. Next, this baseline ``covariate index'' is discretized into four groups based on the predicted performance potential.\footnote{The four mutually exclusive groups are formed based on whether the predicted post-treatment performance z-score is less than -0.2541, between -0.2541 and -0.0592, between -0.0592 and 0.1029, or greater than 0.1029.}  We present two sets of generalized bounds with covariates in Table \ref{WFH covariate boundsE_newmodel}: one without assuming monotonicity, as derived in Theorem \ref{nomono_bound_cond}, and another assuming positive monotonicity across all four covariate groups, which is a specialized version of the bounds given in Theorem \ref{Gen_bound_cond}.

\begin{table}[htbp]
  \centering
   \caption{Estimated bounds and CIs for the effect of WFH on employee performance using baseline covariates}
    \label{WFH covariate boundsE_newmodel}
    \scalebox{0.88}{\begin{threeparttable}
    \begin{tabular}{ccccc}
    \toprule
          & \multicolumn{2}{c}{With Monotonicity} &       & Without Monotonicity \\
	\cmidrule{2-3}\cmidrule{5-5}    Covariate Group (j) & $[\widehat{LB}_{\tau_{OOO}}(j), \widehat{UB}_{\tau_{OOO}}(j)]$ & [$\widehat{LB}_{\tau_{ONO}}(j), \widehat{UB}_{\tau_{ONO}}(j)$] &       & $[\widehat{LB}_{\tau_{OOO}}(j), \widehat{UB}_{\tau_{OOO}}(j)]$  \\
	\cmidrule{1-3}\cmidrule{5-5}    1     & [-0.1746, 0.2981] & [-0.8511, 3.4476] &       & [-0.5642, 0.5687] \\
	    2     & [0.1857, 0.5135] & [-1.2934, 4.1300] &       & [-0.1653, 0.8137] \\
	    3     & [0.1192, 0.5562] & [-0.5525, 3.9625] &       & [-0.2224, 0.8249] \\
	    4     & [-0.0377, 0.1810] & [-0.7502, 4.0123] &       & [-0.2746, 0.4492] \\
	\cmidrule{1-3}\cmidrule{5-5}    Aggregate bound & [0.0364, 0.3985] & [-0.8299, 3.8403] &       & [-0.2876, 0.6825] \\
	    95\% CI & (-0.1348, 0.5969) & (-1.2747, 5.0699) &       & (-0.4904, 0.9405) \\
	    IM 95\% CI & (-0.1073, 0.5650) & (-1.2032, 4.8723) &       & (-0.4578, 0.8990) \\
	\bottomrule   
    \end{tabular}%
	\begin{tablenotes}[flushleft]
    \footnotesize
    \item Notes: Bounds are given in square brackets and confidence intervals in round brackets. Confidence intervals (CIs) are based on 1000 bootstrap repetitions. For Imbens and Manski (IM) CIs, \(c_n = 1.645\) satisfies the relationship given in Equation \ref{cn} for all cases considered.
    \end{tablenotes}
    \end{threeparttable}}
\end{table}%

Table \ref{WFH covariate boundsE_newmodel} shows that the length of the covariate-adjusted bound for $\tau_{OOO}$ is around 3.40\% narrower without monotonicity and 2.03\% narrower with monotonicity compared to the bounds that do not incorporate pre-treatment covariates into the analysis, as those reported in Table \ref{tab:WFH}. The covariate-adjusted bound for $\tau_{ONO}$ is around 5.38\% narrower than the bounds that do not control for covariates, as those reported in Table \ref{tab:WFH CI}.

Turning our attention to the empirical application of the NSW training program for AFDC women (Section \ref{Application NSW}), using a similar approach as above, we divide the covariate space into four groups based on a proxy for predicted wages for each individual in the post-treatment period, based on the linear regression of that outcome on the baseline covariates given in Table \ref{summary NSW unemployed base}.\footnote{The four mutually exclusive groups are formed based on whether the predicted wage (in thousands of 1982 dollars) in the treatment period is less than 7.1590, between 7.1590 and 7.7571, between 7.7571 and 8.2348, and greater than 8.2348.} Here, we follow the procedure described in Section \ref{estimation_x} to determine which direction of monotonicity holds for different partitions of the covariate space based on the relative share of individuals observed in the post-treatment period conditional on their treatment group and being observed in the pre-treatment period. Hence, we assume positive monotonicity for the two groups with the lowest and highest predicted wage potential, and negative monotonicity for the other two groups. We present two sets of generalized bounds with covariates in Table \ref{NSW covariate bounds}, where one set does not assume monotonicity as given in Theorem \ref{nomono_bound_cond} and the other set relaxes monotonicity, allowing monotonicity to hold in both directions as given in Theorem \ref{Gen_bound_cond}.

\begin{table}[H]
  \centering
    \caption{Estimated bounds and CIs for the effect of NSW training program using baseline covariates }
    \label{NSW covariate bounds}
    \scalebox{0.92}{\begin{threeparttable}
    \begin{tabular}{cccc}
    \toprule
                         & Relaxed Monotonicity &       & Without Monotonicity \\
    \cmidrule{2-2}\cmidrule{4-4}    Covariate Group $(j)$ & [$\widehat{LB}_{\tau_{OOO}}(j), \widehat{UB}_{\tau_{OOO}}(j)$] &       & [$\widehat{LB}_{\tau_{OOO}}(j), \widehat{UB}_{\tau_{OOO}}(j)$]  \\
    \cmidrule{1-1} \cmidrule{2-2} \cmidrule{4-4}    1     & [1.9157, 2.8409] &       & [-9.9061, 14.3277] \\
        2     & [0.7512, 1.0240] &       & [-8.5842, 13.4050] \\
        3     & [0.2761,0.6270] &       & [-8.3800, 8.9011] \\
        4     & [1.7915, 1.2152] &       & [-4.5863, 10.9939] \\
        \midrule
        Aggregate bound & [1.1557, 1.4990] &       & [-7.5635, 11.2723] \\
        95\% CI &        (-0.6823, 3.2228) &       & (-10.4421,14.4533) \\
        IM 95\% CI &     (-0.3869, 2.9377) &       & (-9.9794,13.9420) \\
    \bottomrule
    \end{tabular}%
    \begin{tablenotes}[flushleft]
    \footnotesize
    \item Notes: All earnings are expressed in thousands of 1982 dollars. Bounds are given in square brackets and confidence intervals in round brackets. Confidence intervals (CIs) are based on 1000 bootstrap repetitions. For Imbens and Manski (IM) CIs, \(c_n = 1.645\) satisfies the relationship given in Equation \eqref{cn} for all cases considered.
    \end{tablenotes}
    \end{threeparttable}}
\end{table}%
The length of the covariate-adjusted bound for $\tau_{OOO}$ in the absence of monotonicity is given in the last column of Table \ref{NSW covariate bounds}. This is around 8.03\% narrower than the bounds that do not control for covariates in the analysis, as presented in Table \ref{tab:NSW zero}. The covariate-adjusted bound for $\tau_{OOO}$ under the relaxed monotonicity framework is around 7.16\% narrower than the bounds without covariates assuming positive monotonicity (Table \ref{tab:NSW zero}).

\subsection{Testing Monotonicity}\label{monotonicity_test}

We extend the monotonicity test in \citet{semenova2025generalized} to our setting. First, partition the covariate space into J disjoint cells $C_j$ and index the baseline sample selection state $s\in\{0,1\}$. Under the identifying  assumption $(S_1(0),S_1(1))\perp D|S_0=s,X_i \in C_j$ the cell-specific DiD,

\begin{align*}
    \mu_{sj}=\mathbbm{E}[S_1-S_0|S_0=s,D=1,X_i\in C_j]-\mathbbm{E}[S_1-S_0|S_0=s,D=0,X_i\in C_j]
\end{align*}
identifies the causal mean $\mathbbm{E}[S_1(1)-S_1(0)|S_0=s,X_i\in C_j]$. We test the joint null $H_0:\quad S_i(1)\ge S_i(0)$ a.s. (equivalently $-\mu_{sj}\le0\quad\forall\,s,j$) with the self normalized max-T statistic,

\begin{align*}
T \;=\; \max_{\,s\in\{0,1\},\,1\le j\le J} \frac{-(\hat{\mu}_{sj})}{\hat{\sigma}_{sj}}
\end{align*}
using the self-normalized critical values from \cite{chernozhukov2019inference}. Here, $\hat\mu_{sj}
= \widehat{\Delta S}_{1,s,j} - \widehat{\Delta S}_{0,s,j}$ with $\Delta S_i=S_{i1}-S_{i0}$ and, 
\begin{align*}
    \widehat{\Delta S}_{d,s,j}
&= \frac{1}{n_{d,s,j}}
  \sum_{\substack{i : D_i = d,\\S_{i0}=s,\,X_i\in C_j}}
  \Delta S_i \quad \text{and} \quad n_{d,s,j}=\sum_i I(D_i=d)\cdot I(S_{i0}=s)\cdot I(\,X_i\in C_j)
\end{align*}

As $\Delta S\in \{1,-1,0\}$ we write the within-arm variance in closed form. Define population proportions,
\begin{align*}
    \pi^{d,s,j}_{v}= P\bigl(\Delta S_i = v \mid D_i=d,\;S_{i0}=s,\;X_i\in C_j\bigr), \quad v \in\{1,-1,0\}
\end{align*}
and their empirical analogs $\hat{\pi}^{d,s,j}_{v}$. Then,
\begin{align*}
    \sigma^2_{d,s,j}&=\mathbbm{E}\bigl[(\Delta S_i)^2 \mid d,s,j\bigr]-[\mathbbm{E}(\Delta S_i \mid d,s,j)]^2=\pi^{d,s,j}_{-1} + \pi^{d,s,j}_{+1}-(\pi^{d,s,j}_{+1} - \pi^{d,s,j}_{-1})^2
\end{align*}
and we estimate the variance using the degrees of freedom adjusted plug-in, $\hat{\sigma}^2_{d,s,j}=\frac{n_{d,s,j}}{n_{d,s,j}-1}\cdot (\hat{\pi}^{d,s,j}_{-1} + \hat{\pi}^{d,s,j}_{+1}-(\hat{\pi}^{d,s,j}_{+1} - \hat{\pi}^{d,s,j}_{-1})^2)$.\footnote{Conditioning on $S_0=s$ forces one of the $\pi^{d,s,j}_{v}$ to be zero, e.g., if $s=0$ then ${\pi}^d_{-1}=0$.} The sampling variance of $\hat{\mu}_{sj}$ is estimated as,

\begin{align*}
\widehat{var}(\hat{\mu}_{sj})=\frac{\hat{\sigma}^2_{1,s,j}}{n_{1,s,j}} +\frac{\hat{\sigma}^2_{0,s,j}}{n_{0,s,j}},
\end{align*}
and the test statistic $T$ is formed using $\hat{\sigma}_{sj}=\sqrt{\hat{var}(\hat{\mu}_{sj})}$. Cells with $n_{d,s,j}\leq 1$ do not yield a defined unbiased variance and are treated as missing. $T$ is computed over cells with valid standard errors. Rejection of $H_0$ implies that at least one cell has $\mu_{s,j}\leq 0$ and, under the assumption $(S_1(0),S_1(1))\perp D| S_0=S, X_i \in C_j$, can be interpreted as a violation of monotonicity in sample selection. 

For the WFH application, since the number of covariates is large, we implement this test for the four covariate groups that have been constructed in Section \ref{application_x}. The test does not reject monotonicity at the 5\% level (T = -1.10). Similarly, this test was performed for the four covariate groups in the NSW application in Section \ref{application_x}, which also fails to reject monotonicity at the 5\% level (T = 0.929). As this application has only seven covariates (see Appendix Table \ref{summary NSW unemployed base}), partitioning the covariate space is manageable. 
In doing so, we also converted the continuous variables into categorical variables. For example, the number of children was converted to 4 categories(no children, one child, two children, more than two children), age converted to 4 categories (greater than 50 years, 40 to 50 years, 30 to 40 years, less than 30 years) and years of education converted to 4 categories ($\geq 12$ years, 11 years,10 years, $< 10$ years). This results in 57 of 127 (45\%) divisions producing statistics high enough to reject the null of positive monotonicity at the 5\% significance level if considered in isolation.

If all seven variables are used for partitioning, it yields 207 cells, but only 63 valid cells for which test statistics can be computed, since the number of data points $n_{d,s,j}$ in most cells is too small to yield a valid variance estimate. That is a good example of the difficulties and drawbacks of considering finer partitions of the covariate space when assessing the monotonicity assumption.

\section{Proofs of Results}\label{Appendix}
\subsection{Proof of Lemma \ref{lemma_dids}}\label{L1}
\begin{proof}
Consider the DiD estimand for the observed group $(S_{0}=1,S_{1}=1)$, denoted as $\tau_{\textup{DiDs}}$.
\begin{equation}\label{didA}
    \tau_{\textup{DiDs}} =\mathbbm{E}[Y_{1}-Y_{0}|D=1,S_{0}=1,S_{1}=1]-\mathbbm{E}[Y_{1}-Y_{0}|D=0,S_{0}=1,S_{1}=1]
\end{equation}
\begin{enumerate}
    \item[1.] For the first part of Lemma \ref{lemma_dids}, we decompose $\tau_{\textup{DiDs}}$ into the overall ATT, $\tau$, and remaining terms. Consider the overall average treatment effect for the treated population:
	\begin{equation*}
		\tau = \mathbbm{E}[Y_1^\ast(1) - Y_1^\ast(0)|D = 1],
	\end{equation*}
	where $Y_1^\ast(d)$ denotes the potential outcome under treatment states $d \in \{0,1\}$ at time $t = 1$, and $S_0$, $S_1$ are binary indicators for whether a unit's outcome is observed at times $t=0$ and $t=1$, respectively.
	
	Then $\tau$ admits the following decomposition:
	\begin{align*}
		\tau =\;
		& \mathbbm{E}\left[ Y_1^\ast(1) - Y_1^\ast(0) | D=1,S_0 = 1,S_1 = 1 \right] \cdot \mathbbm{P}(S_0 = 1, S_1 = 1 | D = 1) \\
		+ & \mathbbm{E}\left[ Y_1^\ast(1) - Y_1^\ast(0) | D=1, S_0 = 1,S_1 = 0 \right] \cdot \mathbbm{P}(S_0 = 1,S_1 = 0 | D = 1) \\
		+ & \mathbbm{E}\left[ Y_1^\ast(1) - Y_1^\ast(0) | D=1, S_0 = 0,S_1 = 1 \right] \cdot \mathbbm{P}(S_0 = 0,S_1 = 1 | D = 1) \\
		+ & \mathbbm{E}\left[ Y_1^\ast(1) - Y_1^\ast(0) | D=1, S_0 = 0,S_1 = 0 \right] \cdot \mathbbm{P}(S_0 = 0,S_1 = 0 | D = 1)
	\end{align*}
	In the expression above, replace 
	\begin{align*}
		\mathbbm{P}(S_0 = 1,S_1 = 1 | D = 1) &= 1- \mathbbm{P}(S_0 = 1,S_1 = 0 | D = 1) - \mathbbm{P}(S_0 = 0,S_1 = 1 | D = 1) \\
		& - \mathbbm{P}(S_0 = 0,S_1 = 0 | D = 1)
	\end{align*} 
	which gives us
	\begin{align}\label{DiDs:terms}
		\tau =\;
		& \mathbbm{E}\left[Y_1^\ast(1) - Y_1^\ast(0) | D=1,S_0 = 1,S_1 = 1 \right] \nonumber \\
		+ & \big(\mathbbm{E}\left[Y_1^\ast(1) - Y_1^\ast(0) | D=1, S_0 = 1,S_1 = 0 \right]-\mathbbm{E}\left[Y_1^\ast(1) - Y_1^\ast(0) | D=1,S_0 = 1,S_1 = 1 \right]\big) \nonumber \\ & \times \mathbbm{P}(S_0 = 1,S_1 = 0 | D = 1) \nonumber \\
		+ & \big(\mathbbm{E}\left[ Y_1^\ast(1) - Y_1^\ast(0) | D=1, S_0 = 0,S_1 = 1 \right] -\mathbbm{E}\left[Y_1^\ast(1) - Y_1^\ast(0) | D=1,S_0 = 1,S_1 = 1 \right] \big) \nonumber \\
		& \times \mathbbm{P}(S_0 = 0,S_1 = 1 | D = 1) \nonumber \\
		+ & \big(\mathbbm{E}\left[ Y_1^\ast(1) - Y_1^\ast(0) | D=1, S_0 = 0,S_1 = 0 \right]-\mathbbm{E}\left[Y_1^\ast(1) - Y_1^\ast(0) | D=1,S_0 = 1,S_1 = 1 \right]  \big) \nonumber \\
		&\times \mathbbm{P}(S_0 = 0,S_1 = 0 | D = 1)
	\end{align}
	Let's consider the first term in \eqref{DiDs:terms} above, 
	\begin{align}
		&\mathbbm{E}\left[Y_1^\ast(1) - Y_1^\ast(0) | D=1,S_0 = 1,S_1 = 1 \right] \nonumber \\
		=& \mathbbm{E}\left[Y_1^\ast(1) - Y_0^\ast(1)|D=1,S_0 = 1,S_1 = 1 \right]-\mathbbm{E}\left[Y_1^\ast(0)-Y_0^\ast(0)| D=1,S_0 = 1,S_1 = 1 \right] \nonumber\\
		=& \tau_{\textup{DiDs}}+\mathbbm{E}[Y^\ast_1(0)-Y^\ast_0(0)|D=0, S_0=1, S_1=1]-\mathbbm{E}\left[Y_1^\ast(0)-Y_0^\ast(0)| D=1,S_0 = 1,S_1 = 1 \right] \label{eq: Lemma1ATT2}
	\end{align}
	Plugging \eqref{eq: Lemma1ATT2} in \eqref{DiDs:terms} we get, 
	\begin{align*}
		\tau_{\textup{DiDs}} &= \tau+\mathbbm{E}\left[Y_1^\ast(0)-Y_0^\ast(0)| D=1,S_0 = 1,S_1 = 1 \right] -\mathbbm{E}[Y^\ast_1(0)-Y^\ast_0(0)|D=0, S_0=1, S_1=1] \\
		+ & \big(\mathbbm{E}\left[Y_1^\ast(1) - Y_1^\ast(0) | D=1,S_0 = 1,S_1 = 1 \right]-\mathbbm{E}\left[Y_1^\ast(1) - Y_1^\ast(0) | D=1, S_0 = 1,S_1 = 0 \right]\big) \nonumber \\ & \times \mathbbm{P}(S_0 = 1,S_1 = 0 | D = 1) \nonumber \\
		+ & \big(\mathbbm{E}\left[Y_1^\ast(1) - Y_1^\ast(0) | D=1,S_0 = 1,S_1 = 1 \right]-\mathbbm{E}\left[ Y_1^\ast(1) - Y_1^\ast(0) | D=1, S_0 = 0,S_1 = 1 \right] \big) \nonumber \\
		& \times \mathbbm{P}(S_0 = 0,S_1 = 1 | D = 1) \nonumber \\
		+ & \big(\mathbbm{E}\left[Y_1^\ast(1) - Y_1^\ast(0) | D=1,S_0 = 1,S_1 = 1 \right]-\mathbbm{E}\left[ Y_1^\ast(1) - Y_1^\ast(0) | D=1, S_0 = 0,S_1 = 0 \right] \big) \nonumber \\
		&\times \mathbbm{P}(S_0 = 0,S_1 = 0 | D = 1) 
	\end{align*}
    \item[2.] For the second part of Lemma \ref{lemma_dids} we decompose $\tau_{\textup{DiDs}}$ into the ATT for different latent groups and remaining terms (bias). 
Let $p_{OOO1}=\frac{\pi_{OOO1}}{\pi_{OOO1}+\pi_{ONO1}}$. Naturally, $1-p_{OOO1}= \frac{\pi_{ONO1}}{\pi_{OOO1}+\pi_{ONO1}}$. Now let us consider $\mathbbm{E}[Y_{1}-Y_{0}|D=1,S_{0}=1,S_{1}=1]$ which can be decomposed as follows,
{\small \begin{align}\label{mixpA}
     &\mathbbm{E}[Y_{1}-Y_{0}|D=1,S_{0}=1, S_{1}=1]= \nonumber \\
    &= \mathbbm{E}[Y_{1}^{\ast}(1)-Y_{0}^{\ast}(1)|D=1,S_{0}=1, (S_{1}(0)=1,S_{1}(1)=1) \quad or \quad (S_{1}(0)=0,S_{1}(1)=1)] \nonumber \\
    &=\mathbbm{E}[Y_{1}^{\ast}(1)-Y_{0}^{\ast}(1)|D=1,OOO]\cdot \frac{\pi_{OOO1}}{\pi_{OOO1}+\pi_{ONO1}}\nonumber \\
    &+\mathbbm{E}[Y_{1}^{\ast}(1)-Y_{0}^{\ast}(1)|D=1,ONO]\cdot \frac{\pi_{ONO1}}{\pi_{OOO1}+\pi_{ONO1}}\nonumber \\
    &= \mathbbm{E}[Y_{1}^{\ast}(1)-Y_{0}^{\ast}(1)|D=1,OOO] \cdot p_{OOO1} +\mathbbm{E}[Y_{1}^{\ast}(1)-Y_{0}^{\ast}(1)|D=1,ONO] \cdot (1-p_{OOO1}) \nonumber \\
    & \text{Assumption \ref{no anti}}\nonumber \\
    & = \mathbbm{E}[Y_{1}^{\ast}(1)-Y_{0}^{\ast}(0)|D=1,OOO] \cdot p_{OOO1} +\mathbbm{E}[Y_{1}^{\ast}(1)-Y_{0}^{\ast}(0)|D=1,ONO] \cdot (1-p_{OOO1})
\end{align}}
 Similarly, let $p_{OOO0}=\frac{\pi_{OOO0}}{\pi_{OOO0}+\pi_{OON0}}$. Naturally, $1-p_{OOO0}= \frac{\pi_{OON0}}{\pi_{OOO0}+\pi_{OON0}}$. Then $\mathbbm{E}[Y_{1}-Y_{0}|D=0,S_{0}=1,S_{1}=1]$ can be decomposed,
{\small \begin{align}\label{mixp2A}
     &\mathbbm{E}[Y_{1}-Y_{0}|D=0,S_{0}=1, S_{1}=1]= \nonumber  \\
    &= \mathbbm{E}[Y_{1}^{\ast}(0)-Y_{0}^{\ast}(0)|D=0,S_{0}=1, (S_{1}(0)=1,S_{1}(1)=1) \quad or \quad (S_{1}(0)=1,S_{1}(1)=0)] \nonumber \\
    &=\mathbbm{E}[Y_{1}^{\ast}(0)-Y_{0}^{\ast}(0)|D=0,OOO]\cdot \frac{\pi_{OOO0}}{\pi_{OOO0}+\pi_{OON0}}\nonumber \\
    &+\mathbbm{E}[Y_{1}^{\ast}(0)-Y_{0}^{\ast}(0)|D=0,OON]\cdot \frac{\pi_{OON0}}{\pi_{OOO0}+\pi_{OON0}}\nonumber \\
    &= \mathbbm{E}[Y_{1}^{\ast}(0)-Y_{0}^{\ast}(0)|D=0,OOO] \cdot p_{OOO0} +\mathbbm{E}[Y_{1}^{\ast}(0)-Y_{0}^{\ast}(0)|D=0,OON] \cdot (1-p_{OOO0})
\end{align}}
Combining the results of equation \ref{mixpA} and equation \ref{mixp2A} with equation \ref{didA}, we get
\begin{align*}
    \tau_{\textup{DiDs}} &=\mathbbm{E}[Y_{1}-Y_{0}|D=1,S_{0}=1,S_{1}=1]-\mathbbm{E}[Y_{1}-Y_{0}|D=0,S_{0}=1,S_{1}=1]\\
    &=\mathbbm{E}[Y_{1}^{\ast}(1)-Y_{0}^{\ast}(0)|D=1,OOO] \cdot p_{OOO1} +\mathbbm{E}[Y_{1}^{\ast}(1)-Y_{0}^{\ast}(0)|D=1,ONO] \cdot (1-p_{OOO1}) \\
    & - \mathbbm{E}[Y_{1}^{\ast}(0)-Y_{0}^{\ast}(0)|D=0,OOO] \cdot p_{OOO0} -\mathbbm{E}[Y_{1}^{\ast}(0)-Y_{0}^{\ast}(0)|D=0,OON] \cdot (1-p_{OOO0}) \\
    & = \tau_{OOO}\cdot p_{OOO1} +\tau_{ONO}\cdot (1-p_{OOO1}) \\
    & + \mathbbm{E}[Y_1^\ast(0)-Y_0^\ast(0)|D=1, OOO]\cdot p_{OOO1} +\mathbbm{E}[Y_1^\ast(0)-Y_0^\ast(0)D=1, ONO]\cdot (1-p_{OOO1}) \\
    & - \mathbbm{E}[Y_{1}^{\ast}(0)-Y_{0}^{\ast}(0)|D=0,OOO] \cdot p_{OOO0} -\mathbbm{E}[Y_{1}^{\ast}(0)-Y_{0}^{\ast}(0)|D=0,OON] \cdot (1-p_{OOO0})\\
    & = \tau_{OOO}\cdot p_{OOO1} +\tau_{ONO}\cdot (1-p_{OOO1}) \\
    & + \big\{\mathbbm{E}[Y_1^\ast(0)-Y_0^\ast(0)|D=1, OOO]-\mathbbm{E}[Y_1^\ast(0)-Y_0^\ast(0)|D=0, OOO]\big\}\cdot p_{OOO1} \\
    & + \mathbbm{E}[Y_1^\ast(0)-Y_0^\ast(0)|D=0, OOO]\cdot p_{OOO1} + \mathbbm{E}[Y_1^\ast(0)-Y_0^\ast(0)|D=1, ONO]\cdot (1-p_{OOO1}) \\
    & - \mathbbm{E}[Y_{1}^{\ast}(0)-Y_{0}^{\ast}(0)|D=0,OOO] \cdot p_{OOO0} -\mathbbm{E}[Y_{1}^{\ast}(0)-Y_{0}^{\ast}(0)|D=0,OON] \cdot (1-p_{OOO0})\\
    & \text{Assumption \ref{PT_OOO}}\\  
    & = \tau_{OOO}\cdot p_{OOO1} +\tau_{ONO}\cdot (1-p_{OOO1}) \\ 
    & + \mathbbm{E}[Y_1^\ast(0)-Y_0^\ast(0)|D=0, OOO]\cdot p_{OOO1} + \mathbbm{E}[Y_1^\ast(0)-Y_0^\ast(0)|D=1, ONO]\cdot (1-p_{OOO1}) \\
    & - \mathbbm{E}[Y_{1}^{\ast}(0)-Y_{0}^{\ast}(0)|D=0,OOO] \cdot p_{OOO0} -\mathbbm{E}[Y_{1}^{\ast}(0)-Y_{0}^{\ast}(0)|D=0,OON] \cdot (1-p_{OOO0}) \\
    & = \tau_{OOO}\cdot p_{OOO1} +\tau_{ONO}\cdot (1-p_{OOO1}) \\ 
    & + \big\{\mathbbm{E}[Y_1^\ast(0)-Y_0^\ast(0)|D=1, ONO] -\mathbbm{E}[Y_1^\ast(0)-Y_0^\ast(0)|D=0, ONO] \big\}\cdot (1-p_{OOO1}) \\
    & + \mathbbm{E}[Y_1^\ast(0)-Y_0^\ast(0)|D=0, OOO] \cdot p_{OOO1}+\mathbbm{E}[Y_1^\ast(0)-Y_0^\ast(0)|D=0, ONO]\cdot (1-p_{OOO1}) \\
    & - \mathbbm{E}[Y_{1}^{\ast}(0)-Y_{0}^{\ast}(0)|D=0,OOO] \cdot p_{OOO0} -\mathbbm{E}[Y_{1}^{\ast}(0)-Y_{0}^{\ast}(0)|D=0,OON] \cdot (1-p_{OOO0}) \\
    \end{align*}
    \begin{align*}
    & = \tau_{OOO}\cdot p_{OOO1} +\tau_{ONO}\cdot (1-p_{OOO1}) \\ 
    & + \big\{\mathbbm{E}[Y_1^\ast(0)-Y_0^\ast(0)|D=1, ONO] -\mathbbm{E}[Y_1^\ast(0)-Y_0^\ast(0)|D=0, ONO] \big\}\cdot (1-p_{OOO1}) \\
    & +\big\{ \mathbbm{E}[Y_1^\ast(0)-Y_0^\ast(0)|D=0, OOO] - \mathbbm{E}[Y_1^\ast(0)-Y_0^\ast(0)|D=0, ONO]\big\}\cdot p_{OOO1} \\
    & + \big\{\mathbbm{E}[Y_1^\ast(0)-Y_0^\ast(0)|D=0, ONO] - \mathbbm{E}[Y_1^\ast(0)-Y_0^\ast(0)|D=0, OON]\big\} \\
    & + \big\{\mathbbm{E}[Y_1^\ast(0)-Y_0^\ast(0)|D=0, OON] - \mathbbm{E}[Y_1^\ast(0)-Y_0^\ast(0)|D=0, OOO]\big\}\cdot p_{OOO0}
\end{align*}
If we assume positive monotonicity, then $p_{OOO0}=1$. Therefore, $\tau_{\textup{DiDs}}$ simplifies to just:
\begin{align*}
    \tau_{\textup{DiDs}} &= \tau_{OOO}\cdot p_{OOO1} +\tau_{ONO}\cdot (1-p_{OOO1}) \\ 
    & + \big\{\mathbbm{E}[Y_1^\ast(0)-Y_0^\ast(0)|D=1, ONO] -\mathbbm{E}[Y_1^\ast(0)-Y_0^\ast(0)|D=0, ONO] \big\}\cdot (1-p_{OOO1}) \\
    & +\big\{ \mathbbm{E}[Y_1^\ast(0)-Y_0^\ast(0)|D=0, ONO] -\mathbbm{E}[Y_1^\ast(0)-Y_0^\ast(0)|D=0, OOO] \big\}\cdot (1-p_{OOO1}).
\end{align*}
\end{enumerate}
\end{proof}

\subsection{Testing plausibility of the parallel trends for OOO using pre-treatment data}\label{pretrend}

It is intuitive to consider testing the plausibility of the parallel trends for OOO using pre-treatment data. We show that a placebo DiD test based on pre-treatment periods $t=-1,0,1$, which conditions on individuals who are observed in all three periods, gives a decomposition that consists of other latent types in addition to the OOO type. A nonzero pre-trend can arise from either violations of parallel trends within the always-observed group or from differences in untreated trends across other latent groups. As a result, the placebo DiD test detects only joint violations of assumptions and does not provide a clean, standalone test of the parallel trends assumption for the OOO group.
            Consider the following placebo DiD estimand, constructed from pre-treatment periods $t \in \{-1,0\}$ and conditioning on units observed in all three pre-treatment periods $t=\{-1,0,1\}$:

            \begin{align*}
            \tau^{\textup{pre}}_{\textup{DiDs}} &= \mathbbm{E}[Y_{0} - Y_{-1}\mid D=1, S_{-1}=1, S_0=1, S_1=1] \\
            &- \mathbbm{E}[Y_{0} - Y_{-1}\mid D=0, S_{-1}=1, S_0=1, S_1=1]
            \end{align*}
        
            Let
        	\begin{align*}
        		\tilde{\pi}_{gd} &\equiv  \mathbbm{P}(S_{-1}(0)=1, G=g, D=d) \\
                \tilde{p}_{g0} &\equiv \frac{\tilde{\pi}_{g0}}{\sum_{h\in\{OOO,OON\}}\tilde{\pi}_{h0}},  \qquad
                \tilde{p}_{g1} \equiv \frac{\tilde{\pi}_{g1}}{\sum_{h\in\{OOO,ONO\}}\tilde{\pi}_{h1}} \\
                \tilde{\mu}_{gd} &\equiv \mathbbm{E}[Y_{0}(0)-Y_{-1}(0)\mid S_{-1}(0)=1, G=g, D=d].
        	\end{align*}
            Then, 
        	\begin{align}
        		\mathbbm{E}[Y_{0} - Y_{-1}\mid D=1, S_{-1}=1, S_0=1, S_1=1] 
        		& =  \sum_{g\in\{OOO,ONO\}}\tilde{\mu}_{g1} \cdot \tilde{p}_{g1} \label{g1}  
        	\end{align}
        	  and similarly, 
        	\begin{align}
        		\mathbbm{E}[Y_{0} - Y_{-1}\mid D=0, S_{-1}=1, S_0=1, S_1=1] 
        		& =  \sum_{g\in\{OOO,OON\}}\tilde{\mu}_{g0} \cdot \tilde{p}_{g0} \label{g0}
        	\end{align}
            where the first equality in equations \eqref{g1} and \eqref{g0} follow from no-anticipation in selection and outcomes in the pre-treatment periods. The second equality simply applies the definitions of pre-treatment means and mixing proportions that were introduced above.
            Combining \eqref{g1} and \eqref{g0} gives us
            \begin{align}
            	\tau^{\textup{pre}}_{\textup{DiDs}} 
            	&= \sum_{g\in\{OOO,ONO\}}\tilde{\mu}_{g1} \cdot \tilde{p}_{g1}- \sum_{g\in\{OOO,OON\}}\tilde{\mu}_{g0} \cdot \tilde{p}_{g0} \nonumber \\
            	& = \big(\tilde{\mu}_{OOO1} - \tilde{\mu}_{OOO0}\big)\cdot \tilde{p}_{OOO1}+\big(\tilde{\mu}_{ONO1} - \tilde{\mu}_{ONO0}\big)\cdot \tilde{p}_{ONO1} \nonumber \\
            	&+\big(\tilde{\mu}_{OOO0} - \tilde{\mu}_{ONO0}\big)\cdot \tilde{p}_{OOO1} + \big(\tilde{\mu}_{ONO0} - \tilde{\mu}_{OON0}\big) \nonumber \\
            	& +\big(\tilde{\mu}_{OON0} - \tilde{\mu}_{OOO0}\big)\cdot \tilde{p}_{OOO0} \label{precombine}
            \end{align}
            The first and second terms in equation \eqref{precombine} capture the average untreated potential outcome trends between the treated and untreated $OOO$ and $ONO$ individuals who are also observed in the pre-treatment period i.e. $S_{-1}(0)=1$; weighted by their relative proportions in the treated group. The third, fourth, and fifth terms represent the average cross-group trend differential among the untreated $OOO/ONO$, $ONO/OON$, and $OON/OOO$ latent groups, respectively. 
        
            Looking at this pre-trend decomposition makes it clear that a non-zero placebo DiD estimand, $\tau^{\text{pre}}_{\text{DiDs}}$, indicates that at least one of these components is nonzero, but it does not identify which one. In particular, rejection of the null of no difference based on this placebo test could be due to:
            (a) non-parallel untreated outcome trends for the OOO or ONO groups, or
            (b) differences in trends across untreated latent types (e.g., ONO vs. OON). Therefore, such a test could only pick up joint violations of assumptions and may not be able to conclusively test for the plausibility of Assumption \ref{PT_OOO} (PT for OOO) alone. 


\subsection{Proof of Lemma \ref{lemma:pipartialid}}
\begin{proof}\label{proof:pipartialid} \ 
\begin{itemize}
    \item[(a)] For the treated group, $D=1$, the Fr\'echet bounds are given as
	{\footnotesize	\begin{align}\label{eq:fb1}
 \mathbbm{P}[S_{1}(0)=1, S_{1}(1)=1|D=1, S_{0}=1]
  &\in \left[\max\{\mathbbm{P}[S_{1}(0)=1|D=1, S_{0}=1]+\mathbbm{P}[S_{1}(1)=1|D=1, S_{0}=1]-1,0\},\right.\nonumber\\
 & \quad \left.\min\{\mathbbm{P}[S_{1}(0)=1|D=1, S_{0}=1],\mathbbm{P}[S_{1}(1)=1|D=1, S_{0}=1]\}\right].
\end{align}}
Notice that 
\begin{equation}\label{eq:lemma2_11}
	\mathbbm{P}[S_{1}(1)=1|D=1, S_{0}=1]= \mathbbm{P}[S_{1}=1|D=1, S_{0}=1]
\end{equation}
which means that it can be directly identified from observed data.  Assumption \ref{Partialselection}(a) implies that
\begin{equation}\label{eq:lemma2_12}
	\begin{split}
	\mathbbm{P}[S_{1}(0)=1|D=1, S_{0}=1] &=\mathbbm{P}[S_{1}(0)=1|D=0, S_{0}=1] \\
	& = \mathbbm{P}[S_{1}=1|D=0, S_{0}=1] 
	\end{split}
\end{equation} 
Plugging \eqref{eq:lemma2_11} and \eqref{eq:lemma2_12} into \eqref{eq:fb1}, we get the observable bounds
	{\footnotesize	\begin{align}\label{eq:fb1_obs}
		\mathbbm{P}[S_{1}(0)=1, S_{1}(1)=1|D=1, S_{0}=1]\in \left[\max\{\mathbbm{P}[S_{1}=1|D=0, S_{0}=1]+\mathbbm{P}[S_{1}=1|D=1, S_{0}=1]-1,0\},\right.\nonumber\\
		\left.\min\{\mathbbm{P}[S_{1}=1|D=0, S_{0}=1],\mathbbm{P}[S_{1}=1|D=1, S_{0}=1]\}\right].
\end{align}}
\item[(b)] Analogously, using Assumption \ref{Partialselection}(b) for the $D=0$ group gets us the observable bounds
	{\footnotesize	\begin{align}\label{eq:fb0_obs}
	 \mathbbm{P}[S_{1}(0)=1, S_{1}(1)=1|D=0, S_{0}=1]\in \left[\max\{\mathbbm{P}[S_{1}=1|D=0, S_{0}=1]+\mathbbm{P}[S_{1}=1|D=1, S_{0}=1]-1,0\},\right.\nonumber\\
	\left.\min\{\mathbbm{P}[S_{1}=1|D=0, S_{0}=1],\mathbbm{P}[S_{1}=1|D=1, S_{0}=1]\}\right].
\end{align}} Equations \eqref{eq:fb1_obs} and \eqref{eq:fb0_obs} taken together show the result.
\end{itemize}	
\end{proof}
\subsection{Proof of Theorem \ref{nomono_bound}}\label{proof:nomono_bound}
\begin{proof}
The treatment effect of treated for OOO group ($\tau_{OOO}$) can be decomposed as follows,
\begin{align*}
    \tau_{OOO} &= \mathbbm{E}[Y_{1}^{\ast}(1)-Y_{1}^{\ast}(0)|D=1, OOO]\\
    &= \mathbbm{E}[Y_{1}^{\ast}(1)-Y_{0}^{\ast}(1)+Y_{0}^{\ast}(1)-Y_{1}^{\ast}(0)|D=1, OOO]\\
    &= \mathbbm{E}[Y_{1}^{\ast}(1)-Y_{0}^{\ast}(1)|D=1, OOO]-\mathbbm{E}[Y_{1}^{\ast}(0)-Y_{0}^{\ast}(1)|D=1, OOO]\\
    &\text{Assumption \ref{no anti}}\\
    &= \mathbbm{E}[Y_{1}^{\ast}(1)-Y_{0}^{\ast}(1)|D=1, OOO]-\mathbbm{E}[Y_{1}^{\ast}(0)-Y_{0}^{\ast}(0)|D=1, OOO]\\
    &\text{Assumption \ref{PT_OOO}}\\
    &= \mathbbm{E}[Y_{1}^{\ast}(1)-Y_{0}^{\ast}(1)|D=1, OOO]-\mathbbm{E}[Y_{1}^{\ast}(0)-Y_{0}^{\ast}(0)|D=0, OOO]
    \end{align*}  
    For each $d=0,1$, the decompositions of $\mathbbm{E}[Y_{1}-Y_{0}|D=d,S_{0}=1,S_{1}=1]$ given in equations \eqref{mixp} and \eqref{mixp2} lead to partial identification of $\mathbbm{E}[Y_{1}^{\ast}(1)-Y_{0}^{\ast}(d)|D=d,OOO]$ which lies within the interval $[LB_{gd},UB_{gd}]$ 
    where,
    \begin{align}\label{LBUB}
        LB_{gd} &=\mathbbm{E}[Y_{1}-Y_{0}|D=d,S_{0}=1,S_{1}=1, (Y_{1}-Y_{0})\leq F_{\Delta Y|d11}^{-1}(p_{gd})] \nonumber \\
        UB_{gd} &=\mathbbm{E}[Y_{1}-Y_{0}|D=d,S_{0}=1,S_{1}=1, (Y_{1}-Y_{0}) > F_{\Delta Y|d11}^{-1}(1-p_{gd})]
    \end{align}
    where $F_{\Delta Y|d11}^{-1}(.)$ is the quantile function of the distribution of $Y_{1}-Y_{0}$ given $D=d,S_{0}=1,S_{1}=1$. 
    Using Lemma \ref{lemma:pipartialid}, we know that for any $v_d \in [\max\{\mathbbm{P}[S_1=1|D=d, S_0=1]+\mathbbm{P}[S_1=1|D=1-d, S_0=1]-1, 0\}, \min\{\mathbbm{P}[S_1=1|D=d, S_0=1], \mathbbm{P}[S_1=1|D=1-d, S_0=1]\}]$, one can obtain the associated mixing probabilities/weights as $p_{gd}(v_d) = \frac{v_d}{\mathbbm{P}[S_1=1|D=d, S_0=1]}$. Evaluating equations (\ref{LBUB}) for each value of $d$ at the least favorable values for $p_{gd}(v_d)$ yields,
    \begin{align}
        LB_{gd}(v^{l}_{d}) &=\mathbbm{E}[Y_{1}-Y_{0}|D=d,S_{0}=1,S_{1}=1, (Y_{1}-Y_{0})\leq F_{\Delta Y|d11}^{-1}(p_{gd}(v^{l}_{d}))]\label{LBlow}\\
        UB_{gd}(v^{l}_{d}) &=\mathbbm{E}[Y_{1}-Y_{0}|D=1,S_{0}=1,S_{1}=1, (Y_{1}-Y_{0}) > F_{\Delta Y|d11}^{-1}(1-p_{gd}(v^{l}_{d}))]. \label{UBlow}
    \end{align} where $v_d^l = \max\{\mathbbm{P}[S_1=1|D=d, S_0=1]+\mathbbm{P}[S_1=1|D=1-d, S_0=1]-1, 0\}$ is the lower bound of the identified set $\mathbbm{P}[S_{1}(0)=1, S_{1}(1)=1|D=d, S_{0}=1]$ for $d=0,1$. Therefore, combining the bounds for the treated and untreated subpopulations along with point identification of the weights implies that $\tau_{OOO} \in [LB_{\tau_{OOO}}, UB_{\tau_{OOO}}]$ where 
    \begin{align*}
        LB_{\tau_{OOO}}= LB_{OOO1}(v^l_1)-UB_{OOO0}(v^l_0) \text{ and } \ UB_{\tau_{OOO}} = UB_{OOO1}(v^l_1)-LB_{OOO0}(v^l_0).
    \end{align*} 
\end{proof}
\subsubsection{Proof of Sharpness of Bounds given in Theorem \ref{nomono_bound} } \label{proof_nomo bound sharp}
  \begin{proof}
      Let $OOO\equiv\{S_0=1,S_1(0)=1,S_1(1)=1\}$ and $\Delta Y\equiv Y_1-Y_0$. Under Assumptions \ref{no anti},\ref{Partialselection}(a) and \ref{Partialselection}(b), fix the least  favorable always observe shares $p_1=p_{OOO1}(v^{l}_{1})\in (0,1]$ and $p_0=p_{OOO0}(v^{l}_{0})\in (0,1]$. For ease of notation, let $F_d$ denote the cdf of $\Delta Y$ in the observed selected sample $\{S_0=1,S_1=1,D=d\},d\in \{0,1\}$. Define the trimmed means,
      \begin{align*}
          L_d &=\mathbbm E[\Delta Y|D=d,S_0=1,S_1=1,\Delta Y \leq q_d(p_d)]\\
          U_d &=\mathbbm E[\Delta Y|D=d,S_0=1,S_1=1,\Delta Y > q_d(1-p_d)]
      \end{align*}
      where $q_d(.)=F_d^{-1}(.)$. Then the identified interval for $\tau_{OOO}$ is exactly the interval $[L_1-U_0,U_1-L_0]$ as given in Theorem  \ref{nomono_bound}. Equivalently, every value in  $[L_1-U_0, U_1-L_0]$ is attainable by a DGP satisfying the assumptions and reproducing the observed $(F_1,F_0)$ and the shares $(p_1,p_0)$ and no value outside the interval is attainable.

      The proof follows a similar approach to that of \citet{lee2009}. We will use the mixture decomposition argument in the treated and the control groups separately. For a given group $d\in\{0,1\}$,the observed selected sample distribution $F_d$ is generated by a mixture of two latent groups,
      \begin{align*}
    F_d(\Delta y)= p_{d}\cdot N_d(\Delta y)+(1-p_{d})\cdot M_d(\Delta y)
\end{align*}

Where $N_d$ is the latent cdf of $\Delta Y^*_d$ in the always-observed latent group OOO, and the $M_d$ is the latent cdf of the complementary stratum (i.e., ONO when $d=1$, and OON when $d=0$).The object of interest is the mean $\mu_d=\mathbbm E_{N_d}[\Delta Y^*_d]$ where $\tau_{OOO}=\mu_1-\mu_0$. 

Then by \citet{lee2009} lemma 1 we have,
{\footnotesize
\begin{align*}
    U_d=\frac{1}{p_{d}}\cdot \int_{\Delta y_{1-p_{d}}}^{\infty}\Delta y\cdot dF_d(\Delta y)&\geq \int_{-\infty}^{\infty} \Delta y\cdot dN_d(\Delta y)=\mu_d
\end{align*}}

$p_{d}$ is uniquely determined by data and it follows from the \citet{lee2009} lemma 1 that conditional on $p_{d}$, $U_d$ is a sharp upper bound. Therefore, $U_d$ is the maximum possible value for $\mu_d$. A similar argument will follow for the lower bound. Thus $L_d$ is the minimum possible value for $\mu_d$. Hence,
\begin{align*}
    L_d\leq \mu_d \leq U_d
\end{align*} 

Now we prove the attainability of this interval. Let $y_d^+=q_d(1-p_d)$ and $y_d^-=q_d(p_d)$. Let $N_d^U$ be the conditional distribution of $\Delta Y$ on the upper tail, $Up\equiv\{\Delta Y >y_d^+\}$ and $N_d^L$ be the conditional distribution of $\Delta Y$ on the lower tail, $Low\equiv\{\Delta Y <y_d^-\}$. Then,
\begin{align*}
    N_d^L(\Delta y)=\frac{\mathbbm{P}(\Delta Y\leq \Delta y,Low)}{\mathbbm{P}(Low)}
\end{align*}
In the case when $\Delta y < y_d^-$, then $\{\Delta Y < y_d^-\}\subseteq \{Low\}$ so $\mathbbm{P}(\Delta Y\leq \Delta y,Low)=\mathbbm{P}(\Delta Y\leq \Delta y)=F_d(\Delta y)$. Hence, $N_d^L(\Delta y)=\frac{F_d(\Delta y)}{p_d}$. In the case when $\Delta y \geq y_d^-$ the $\mathbbm{P}(\Delta Y\leq \Delta y,Low)=\mathbbm{P}(Low)=p_d$. Hence  $N_d^L(\Delta y)=1$. Combining the two notation in one line with $a\wedge b=min\{\ a,b\}$, We have,
\begin{align*}
    N_d^L(\Delta y)=\frac{F_d(\Delta y)\wedge p_d}{p_d}
\end{align*}
as if $\Delta y < y_d^-$ then $F_d(\Delta y)\leq p_d$ then the numerator is $F_d(\Delta Y)$ and for $\Delta y \geq y_d^-$, $F_d(\Delta y)\geq p_d$ and the numerator is $p_d$. similarly,

\begin{align*}
    N_d^U(\Delta y)=\frac{\mathbbm{P}(\Delta Y\leq \Delta y,Up)}{\mathbbm{P}(Up)}
\end{align*}
In the case when $\Delta y < y_d^+$, then $\{\Delta y > y_d^+\}$ is disjoint from $\{\Delta Y\leq \Delta y\}$. so $\mathbbm{P}(\Delta Y\leq \Delta y,Up)=0$. Hence, $N_d^L(\Delta y)=0$. In the case when $\Delta y \geq y_d^+$ the $\mathbbm{P}(\Delta Y\leq \Delta y,Up)=\mathbbm{P}(y_d^+<\Delta Y< \Delta y)=F_d(\Delta y)-F_d(y_d^+)=F_d(\Delta y)-(1-p_d)$. Hence, $N_d^U(\Delta y)=\frac{F_d(\Delta y)-(1-p_d)}{p_d}$. Combining the two notation in one line with $(x)_+=max\{\ x,0\}$, We have,
\begin{align*}
    N_d^U(\Delta y)=\frac{\big (F_d(\Delta Y)-(1- p_d)\big )_+}{p_d}
\end{align*}
as if $\Delta y < y_d^+$ then $F_d(\Delta y)\leq F_d(y_d^+)=1-p_d$ then the $F_d(\Delta Y)-(1- p_d)$ is negative $\big (F_d(\Delta Y)-(1- p_d)\big )_+=0$. when $\Delta y > y_d^+$ then $F_d(\Delta y)\geq F_d(y_d^+)=1-p_d$ and $F_d(\Delta Y)-(1- p_d)$ is positive, then  $\big (F_d(\Delta Y)-(1- p_d)\big )_+=F_d(\Delta Y)-(1- p_d)$.

We define the complementary other groups by residuals,
\begin{align*}
    M_d^U(\Delta y)=\frac{F_d(\Delta y)-p_d\cdot N_d^U(\Delta y)}{1-p_d} \quad M_d^L(\Delta y)=\frac{F_d(\Delta y)-p_d\cdot N_d^L(\Delta y)}{1-p_d}
\end{align*}
Direct check shows for all $\Delta y$,
\begin{align*}
    p_d\cdot N_d^U(\Delta y)+(1-p_d) \cdot M_d^U(\Delta y)=F_d(\Delta y)\\
    p_d\cdot N_d^L(\Delta y)+(1-p_d) \cdot M_d^L(\Delta y)=F_d(\Delta y)
\end{align*}
Moreover, $N_d^U(\Delta y)$ is precisely $F_d$ restricted to the upper tail and re normalized and  $N_d^L(\Delta y)$ is $F_d$ restricted to the lower tail and re normalized. Therefore, we have,
\begin{align*}
    \mathbbm E_{N_d^U}[\Delta Y]=U_d \quad  \mathbbm E_{N_d^L}[\Delta Y]=L_d
\end{align*} 

Thus, endpoint bounds are attainable by a feasible decomposition that reproduces $F_d$

Now we show the attainability of any interior value. Let $N_d^\lambda=\lambda\cdot N_d^U+(1-\lambda)\cdot N_d^L$ and define $M_d^\lambda$ by $p_d \cdot N_d^\lambda+ (1-p_d)\cdot M_d^\lambda=F_d$ (i.e. this puts the remainder of the mass to the other complementary latent type). Then, $\mathbbm E_{N_d^\lambda} [\Delta Y]=\lambda \cdot U_d+(1-\lambda) \cdot L_d$ , moves continuously from $L_d$ to $U_d$ as $\lambda$ goes from 0 to 1. Therefore, every mean in $[L_d, U_d]$ is attainable in any group $d$ while reproducing $F_d$ and keeping the share $p_d$ fixed.

Let $I_d=[L_d, U_d]$  be the attainable interval for OOO mean in group $d$, established above. consider $\tau_{OOO}=\mu_1-\mu_0$ with $\mu_d\in I_d$.  As the subtraction is linear, the set of all achievable differences is $\{\mu_1-\mu_0 ; \mu_1\in I_1, \mu_0\in I_0\}=[L_1-U_0, U_1-L_0]$. We now show that every value in this interval is not only algebraically representable, but attainable by a DGP consistent with the observed data. 

Pick any $\delta \in [L_1-U_0, U_1-L_0]$, choose $\mu^{\lambda_1}_1\in [L_1,U_1]$ and $\mu^{\lambda_0}_0\in [L_0,U_0]$ such that $\mu^{\lambda_1}_1-\mu^{\lambda_0}_0=\delta$ (e.g. take $\mu^{\lambda_1}_1=L_1+t\cdot (U_1-L_1)$,  $\mu^{\lambda_0}_0=L_0+t\cdot (U_0-L_0)$  for a suitable $t$). we already showed above that there exist $\lambda_1,\lambda_0 \in [0,1]$ such that group specific latent OOO cdfs $N_1^{\lambda_1}$ and $N_0^{\lambda_0}$ yield exactly those means $\mu^{\lambda_1}_1$ and $\mu^{\lambda_0}_0$, and together with the corresponding $M_d^{\lambda_d}$ reproduce observed $F_d$ in each group. Finally, we can form a joint distribution of $(\Delta Y_1,\Delta Y_0)$ for the OOO type by coupling the marginals $N_1^{\lambda_1}$ and $N_0^{\lambda_0}$ with any choice of copula. Similarly, the joint distribution of the complementary type can be formed by coupling the marginals $M_1^{\lambda_1}$ and $M_0^{\lambda_0}$ and mixing types within each group with shares $(p_1,(1-p_1))$ and $(p_0,(1-p_0))$. This joint DGP (i) reproduces the observed selected sample distributions $F_1$ and $F_0$ (ii) respects the fixed shares $(p_1,p_0)$ and delivers $\mathbbm E_{N_1^{\lambda_1}}[\Delta Y_1|OOO]=\mu_1^{\lambda_1}$ and $\mathbbm E_{N_0^{\lambda_0}}[\Delta Y_0|OOO]=\mu_0^{\lambda_0}$ and hence $\delta$.

Thus, we have shown that in each treatment and control group, trimmed means $L_d$ and $U_d$ are valid and attainable extrema for the OOO latent group mean given $p_d$. Furthermore, we demonstrate that every difference in $[L_1-U_0, U_1-L_0]$ is realised by some DGP that matches the observed data and shares. Values outside the interval are ruled out, as we showed that $U_d$ is the maximum possible value for $\mu_d$, and $L_d$ is the minimum possible value for $\mu_d$. Therefore, no other valid bounds under the same assumptions can be narrower. Thus, the bounds in Theorem \ref{nomono_bound} are sharp. 
  \end{proof}

\subsection{Proof of Lemma \ref{lemma:mono_weights}}
\begin{proof}\label{proof:mono_weights} \ 
    Recall that in order to bound $\tau_{OOO}$, we need to identify $p_{OOO0}$ and $p_{OOO1}$. We can express, 
    \begin{align*}
        p_{OOO0} &= \frac{\mathbbm{P}[S_1(0)=1, S_1(1)=1|S_0=1, D=0]}{\mathbbm{P}[S_1(0)=1|S_0=1, D=0]} \\
        & = \frac{\mathbbm{P}[S_1(0)=1|S_0=1, D=0]}{\mathbbm{P}[S_1(0)=1|S_0=1, D=0]}= 1
    \end{align*} where second equality follows from positive monotonicity. Next, we can also express $p_{OOO1}$ as
    \begin{align*}
        p_{OOO1} &= \frac{\mathbbm{P}[S_1(0)=1, S_1(1)=1|S_0=1, D=1]}{\mathbbm{P}[S_1(1)=1|S_0=1, D=1]} \\
        & = \frac{\mathbbm{P}[S_1(0)=1|S_0=1, D=1]}{\mathbbm{P}[S_1(1)=1|S_0=1, D=1]}. 
    \end{align*} where, again, second equality applies positive monotonicity. Combining the two together, we get the result.
\end{proof}

\subsection{Proof of Theorem \ref{mono_bound}}\label{proof Theorem mono_bound}
\begin{proof}\
The treatment effect of treated for OOO group ($\tau_{OOO}$) can be decomposed as follows,
\begin{align*}
\tau_{OOO} &= \mathbbm{E}[Y_{1}^{\ast}(1)-Y_{1}^{\ast}(0)|D=1, OOO]\\
&= \mathbbm{E}[Y_{1}^{\ast}(1)-Y_{0}^{\ast}(1)+Y_{0}^{\ast}(1)-Y_{1}^{\ast}(0)|D=1, OOO]\\
&= \mathbbm{E}[Y_{1}^{\ast}(1)-Y_{0}^{\ast}(1)|D=1, OOO]-\mathbbm{E}[Y_{1}^{\ast}(0)-Y_{0}^{\ast}(1)|D=1, OOO]\\
&\text{Assumption \ref{no anti}}\\
&= \mathbbm{E}[Y_{1}^{\ast}(1)-Y_{0}^{\ast}(1)|D=1, OOO]-\mathbbm{E}[Y_{1}^{\ast}(0)-Y_{0}^{\ast}(0)|D=1, OOO]\\
&\text{Assumption \ref{PT_OOO}}\\
&= \mathbbm{E}[Y_{1}^{\ast}(1)-Y_{0}^{\ast}(1)|D=1, OOO]-\mathbbm{E}[Y_{1}^{\ast}(0)-Y_{0}^{\ast}(0)|D=0, OOO]
\end{align*}
The decomposition of $\mathbbm{E}[Y_{1}-Y_{0}|D=1,S_{0}=1,S_{1}=1]$ given in equation \ref{mixp} lead to partial identification of $\mathbbm{E}[Y_{1}^{\ast}(1)-Y_{0}^{\ast}(1)|D=1,OOO]$ which lies within the interval $[LB_{OOO1},UB_{OOOO1}]$ 
where,
\begin{align*}
    LB_{OOO1} &=\mathbbm{E}[Y_{1}-Y_{0}|D=1,S_{0}=1,S_{1}=1, (Y_{1}-Y_{0})\leq F_{\Delta Y|111}^{-1}(p_{OOO1})]\\
    UB_{OOO1} &=\mathbbm{E}[Y_{1}-Y_{0}|D=1,S_{0}=1,S_{1}=1, (Y_{1}-Y_{0}) > F_{\Delta Y|111}^{-1}(1-p_{OOO1})]
\end{align*}
and $F_{\Delta Y|111}^{-1}(.)$ is the quantile function of the distribution of $Y_{1}-Y_{0}$ given $D=1,S_{0}=1,S_{1}=1$.
Under Assumption \ref{monotone} (positive monotonicity) we can point identify $p_{OOO1}=\frac{\mathbbm{P}[S_{1}=1|S_{0}=1,D=0]}{\mathbbm{P}[S_{1}=1|S_{0}=1,D=1]}$ and $p_{OOO0}=1$ (Refer 
Lemma \ref{lemma:mono_weights}).
As $p_{OOO0}=1$ the decomposition of $\mathbbm{E}[Y_{1}-Y_{0}|D=0,S_{0}=1,S_{1}=1$ given in equation \ref{mixp2} lead to point  identification of $\mathbbm{E}[Y_{1}^{\ast}(1)-Y_{0}^{\ast}(0)|D=0,OOO]$ as $\mathbbm{E}[Y_{1}-Y_{0}|D=0,S_{0}=1,S_{1}=1]$.

Combining the bounds for $\mathbbm{E}[Y_{1}^{\ast}(1)-Y_{0}^{\ast}(1)|D=1,OOO]$ and point identification of $\mathbbm{E}[Y_{1}^{\ast}(0)-Y_{0}^{\ast}(0)|D=0,OOO]$ we find that the parameter of interest $\tau_{OOO} \in [LB_{\tau_{OOO}}, UB_{\tau_{OOO}}]$ where
    \begin{align*}
        LB_{\tau_{OOO}} &= LB_{OOO1}-\mathbbm{E}[Y_{1}-Y_{0}|D=0,S_{0}=1,S_{1}=1]\\
        UB_{\tau_{OOO}} &= UB_{OOO1}-\mathbbm{E}[Y_{1}-Y_{0}|D=0,S_{0}=1,S_{1}=1].
    \end{align*}
\end{proof}
\subsubsection{Proof of Sharpness of Bounds given in Theorem \ref{mono_bound}}\label{sharp}
\begin{proof}
We follow the method used by \citet{lee2009} to prove the sharpness of bounds given in Theorem \ref{mono_bound}.
Sharpness implies that $LB_{\tau_{OOO}}$ 
and $UB_{\tau_{OOO}}$ are the largest lower bound and smallest upper bound which are consistent with the observed data. Moreover, any other valid bounds that impose the same assumptions will contain the interval $[LB_{\tau_{OOO}}, UB_{\tau_{OOO}}]$.

We show that $ UB^{\prime}_{OOO1}=\mathbbm{E}[Y_{1}-Y_{0}|D=1,S_{0}=1,S_{1}=1, (Y_{1}-Y_{0}) > F_{\Delta Y|111}^{-1}(1-p_{OOO1})]$ is a sharp upper bound for $\mathbbm{E}[Y_{1}^{\ast}(1)-Y_{0}^{\ast}(1)|D=1,S_{0}=1, S_{1}(0)=1,S_{1}(1)=1]$ and similar argument will follow for the lower bound.

Under Assumptions \ref{monotone} and \ref{Partialselection}(a),
\begin{align*}
    p_{OOO1}&=\frac{\mathbbm{P}[S_{1}=1|S_{0}=1,D=0]}{\mathbbm{P}[S_{1}=1|S_{0}=1,D=1]}\\
    &=\frac{\mathbbm{P}[S_0=1, S_1(0)=1, S_1(1)=1, D=1]}{\mathbbm{P} [S_0=1, S_1(0)=1, S_1(1)=1, D=1]+\mathbbm{P}[S_0=1, S_1(0)=0, S_1(1)=1, D=1]}\\
    &=\frac{\mathbbm{P}[S_1(0)=1, S_1(1)=1|S_0=1,D=1]}{\mathbbm{P}[S_1=1|S_0=1,D=1]}
\end{align*}

Let $F(\Delta y)$ be the cdf  of $Y_{1}-Y_{0}$ conditional on $D=1,S_{0}=1, S_{1}=1$, $N(\Delta y)$ be the cdf of $Y_{1}^{\ast}(1)-Y_{0}^{\ast}(1)$ conditional on $D=1,S_{0}=1, S_{1}(0)=1,S_{1}(1)=1$ and $M(\Delta y)$ be the cdf of $Y_{1}^{\ast}(1)-Y_{0}^{\ast}(1)$ conditional on $D=1,S_{0}=1, S_{1}(0)=0,S_{1}(1)=1$. Then,
\begin{align*}
    F(\Delta y)= p_{OOO1}\cdot N(\Delta y)+(1-p_{OOO1})\cdot M(\Delta y)
\end{align*}
 Then by \citet{lee2009} lemma 1 we have,
{\footnotesize
\begin{align*}
    UB_{OOO1}=\frac{1}{p_{OOO1}}\cdot \int_{\Delta y_{1-p_{OOO1}}}^{\infty}\Delta y\cdot dF(\Delta y)&\geq \int_{-\infty}^{\infty} \Delta y\cdot dN(\Delta y)\\
    &=\mathbbm{E}[Y_{1}^{\ast}(1)-Y_{0}^{\ast}(1)|D=1,S_{0}=1, S_{1}(0)=1,S_{1}(1)=1]
\end{align*}}
 $p_{OOO1}$ is uniquely determined by data and it follows from the \citet{lee2009} lemma 1 that conditional on $p_{OOO1}$, $UB_{OOO1}$ is a sharp upper bound. Therefore, $UB_{OOO1}$ is the maximum possible value for $\mathbbm{E}[Y_{1}^{\ast}(1)-Y_{0}^{\ast}(1)|D=1,S_{0}=1, S_{1}(0)=1,S_{1}(1)=1]$

 To show that any other valid bounds that impose the same assumptions will contain the interval $[LB_{\tau_{OOO}}, UB_{\tau_{OOO}}]$, we show that observed data cannot rule out any $\delta$ strictly within this interval.
 
 We define $\Delta Y= Y_{1}-Y_{0} $ then,
 {\footnotesize
 \begin{align*}
     UB_{\tau_{OOO}} &\geq \mathbbm{E}[\Delta Y|D=1,S_{0}=1,S_{1}=1]-\mathbbm{E}[\Delta Y|D=0,S_{0}=1,S_{1}=1] \\
     &\geq \mathbbm{E}[\Delta Y|D=1,S_{0}=1,S_{1}=1, \Delta Y< \Delta y_{1-p_{OOO1}}]-\mathbbm{E}[\Delta Y|D=0,S_{0}=1,S_{1}=1]
 \end{align*}}
 Therefore, for any $\delta$ between $UB_{\tau_{OOO}}$ and $ \mathbbm{E}[\Delta Y|D=1,S_{0}=1,S_{1}=1]-\mathbbm{E}[\Delta Y|D=0,S_{0}=1,S_{1}=1]$, there exist $\lambda \in [0,1]$ such that,
 {\footnotesize
 \begin{align*}
     \delta= \lambda \cdot UB_{\tau_{OOO}}+ (1-\lambda)\cdot \{\mathbbm{E}[\Delta Y|D=1,S_{0}=1,S_{1}=1, \Delta Y< \Delta y_{1-p_{OOO1}}]-\mathbbm{E}[\Delta Y|D=0,S_{0}=1,S_{1}=1]\}
 \end{align*}}
 Define $g(\Delta y)$ as the density of $\Delta Y$ conditional on $\Delta Y \geq \Delta y_{1-p_{OOO1}},D=1,S_0=1,S_1=1$ and $h(\Delta y)$ as the density of $\Delta Y$ conditional on $\Delta Y < \Delta y_{1-p_{OOO1}},D=1,S_0=1,S_1=1$.
 Then with this $\lambda$, we can construct the density of $Y_{1}^{\ast}(1)-Y_{0}^{\ast}(1)$ conditional on $D=1, S_0=1,S_1(0)=1,S_1(1)=1$ as $\lambda g(\Delta y)+(1-\lambda)h(\Delta y)$ and the density of $Y_{1}^{\ast}(1)-Y_{0}^{\ast}(1)$ conditional on $D=1, S_0=1,S_1(0)=0,S_1(1)=1$ as $(\frac{p_{OOO1}}{1-p_{OOO1}}-\frac{p_{OOO1}}{1-p_{OOO1}}\lambda) g(\Delta y)+(1-(\frac{p_{OOO1}}{1-p_{OOO1}})(1-\lambda))h(\Delta y)$.
 
 The mixture of these two latent densities by construction replicates the observed density $\Delta Y$ conditional on $D=1, S_0=1, S_1=1$. Moreover, by construction mean of this constructed density of $Y_{1}^{\ast}(1)-Y_{0}^{\ast}(1)$ conditional on $D=1, S_0=1,S_1(0)=1,S_1(1)=1$ minus the control mean (i.e. $\mathbbm{E}[\Delta Y|D=0,S_{0}=1,S_{1}=1$) gives us $\delta$. A symmetric argument can be developed for any $\delta$ in between  $ \mathbbm{E}[\Delta Y|D=1,S_{0}=1,S_{1}=1]-\mathbbm{E}[\Delta Y|D=0,S_{0}=1,S_{1}=1]$ and $LB_{\tau_{OOO}}$. Therefore, each $\delta$ within the interval $[LB_{\tau_{OOO}}, UB_{\tau_{OOO}}]$ cannot be ruled out by the observed data.
 
 \end{proof}
  \subsection{Lemma \ref{lemma: pro 4(joint)}}  
 \label{prop_condi}
 \begin{lemma}
    \label{lemma: pro 4(joint)} Under Assumptions \ref{monotone} and \ref{inde_conditional}
    \begin{align*}
         \pi_{OOO0}&=\mathbbm{P}[S_{0}=1, S_{1}=1,D=0]\\
    \pi_{OOO1} &=\mathbbm{P}[S_{1}=1|S_{0}=1,D=0] \cdot \mathbbm{P}[D=1, S_{0}=1]\\
    \pi_{ONO0} &= \mathbbm{P}[S_{0}=1, S_{1}=0,D=0]-\mathbbm{P}[S_{1}=0|S_{0}=1,D=1] \cdot \mathbbm{P}[D=0, S_{0}=1]\\
    \pi_{ONO1} &=\mathbbm{P}[S_{0}=1, S_{1}=1, D=1]- \mathbbm{P}[S_{1}=1|S_{0}=1,D=0] \cdot \mathbbm{P}[D=1, S_{0}=1]\\
    \pi_{ONN0} &=\mathbbm{P}[S_{1}=0|S_{0}=1,D=1] \cdot \mathbbm{P}[D=0, S_{0}=1]\\
    \pi_{ONN1} &=\mathbbm{P}[S_{0}=1, S_{1}=0,D=1]\\
    \pi_{NOO0} &=\mathbbm{P}[S_{0}=0, S_{1}=1,D=0]\\
    \pi_{NOO1} &=\mathbbm{P}[S_{1}=1|S_{0}=0,D=0] \cdot \mathbbm{P}[D=1, S_{0}=0]\\
     \pi_{NNO0} &= \mathbbm{P}[S_{0}=0, S_{1}=0,D=0]-\mathbbm{P}[S_{1}=0|S_{0}=0,D=1] \cdot \mathbbm{P}[D=0, S_{0}=0]\\
    \pi_{NNO1} &=\mathbbm{P}[S_{0}=0, S_{1}=1, D=1]- \mathbbm{P}[S_{1}=1|S_{0}=0,D=0] \cdot \mathbbm{P}[D=1, S_{0}=0]\\
    \pi_{NNN0} &= \mathbbm{P}[S_{1}=0|S_{0}=0,D=1] \cdot \mathbbm{P}[D=0, S_{0}=0]\\
    \pi_{NNN1} &=\mathbbm{P}[S_{0}=0, S_{1}=0,D=1]
    \end{align*}
\end{lemma}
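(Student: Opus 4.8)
The plan is to build on Lemma \ref{lemma_propo_4a} and isolate only the four genuinely new proportions. First I would observe that Assumption \ref{inde_conditional} is strictly stronger than Assumption \ref{Partialselection}(a) (it implies both marginal parts), so every proportion already pinned down in Lemma \ref{lemma_propo_4a}—namely $\pi_{OOO0}, \pi_{OOO1}, \pi_{ONO1}, \pi_{ONN1}, \pi_{NOO0}, \pi_{NOO1}, \pi_{NNO1}, \pi_{NNN1}$—carries over verbatim under the present hypotheses. What remains are the four untreated quantities $\pi_{ONN0}, \pi_{ONO0}, \pi_{NNN0}, \pi_{NNO0}$, which Lemma \ref{lemma_propo_4a} left unidentified because marginal conditional PTS on the untreated counterfactual constrains only the $S_1(0)$ margin and says nothing about how $S_1(1)$ is distributed within the untreated arm.

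Treating the $S_0=1$ stratum first, I would note that positive monotonicity (Assumption \ref{monotone}) rules out OON, so the untreated cell $\{S_0=1, S_1=0, D=0\}$ is exactly the union of the ONN and ONO latent types; hence $\mathbbm{P}[S_0=1, S_1=0, D=0] = \pi_{ONN0} + \pi_{ONO0}$. To break this sum apart I would identify $\pi_{ONN0}$ directly. By Assumption \ref{inde_conditional} the joint counterfactual selection law is invariant to the treatment arm conditional on $S_0$, so
\begin{equation*}
\mathbbm{P}[S_1(0)=0, S_1(1)=0 \mid S_0=1, D=0] = \mathbbm{P}[S_1(0)=0, S_1(1)=0 \mid S_0=1, D=1].
\end{equation*}
Under positive monotonicity $\{S_1(1)=0\}\subseteq\{S_1(0)=0\}$, so the right-hand side collapses to $\mathbbm{P}[S_1(1)=0\mid S_0=1, D=1]=\mathbbm{P}[S_1=0\mid S_0=1, D=1]$, a directly observed quantity since the realized $S_1$ equals $S_1(1)$ when $D=1$ (equation \eqref{s}). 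Multiplying through by $\mathbbm{P}[S_0=1, D=0]$ then yields $\pi_{ONN0}=\mathbbm{P}[S_1=0\mid S_0=1, D=1]\cdot\mathbbm{P}[D=0, S_0=1]$, and $\pi_{ONO0}$ follows as the residual in the cell decomposition above.

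The $S_0=0$ stratum is handled symmetrically: positive monotonicity rules out NON, so $\{S_0=0, S_1=0, D=0\}$ is the union of NNN and NNO, giving $\mathbbm{P}[S_0=0, S_1=0, D=0]=\pi_{NNN0}+\pi_{NNO0}$; the same joint-independence-plus-monotonicity argument identifies $\pi_{NNN0}=\mathbbm{P}[S_1=0\mid S_0=0, D=1]\cdot\mathbbm{P}[D=0, S_0=0]$, with $\pi_{NNO0}$ recovered by subtraction. The only real subtlety—and the step I expect to carry the weight of the argument—is this separation of the two mixed latent types within each untreated $S_0$-stratum: the observed cell delivers only their sum, and it is precisely the joint (rather than merely marginal) independence of Assumption \ref{inde_conditional}, combined with the monotonicity collapse $\{S_1(1)=0\}\subseteq\{S_1(0)=0\}$, that transports the observable treated-arm probability $\mathbbm{P}[S_1=0\mid S_0, D=1]$ into the counterfactual untreated-arm joint probability. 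Everything else is bookkeeping, and assembling the twelve expressions reproduces the table in the statement.
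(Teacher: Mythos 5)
Your proposal is correct and in substance takes the same route as the paper's own proof: the paper likewise uses Assumption \ref{inde_conditional} to equate the joint counterfactual selection probability across treatment arms conditional on $S_0$ and positive monotonicity (Assumption \ref{monotone}) to collapse the joint event to an observed margin (it derives $\pi_{OOO0}$, $\pi_{OOO1}$, $\pi_{ONO1}$ explicitly and asserts the remaining proportions ``by similar argument''), and your treated-to-untreated transport identifying $\pi_{ONN0}=\mathbbm{P}[S_1=0\mid S_0=1,D=1]\cdot\mathbbm{P}[D=0,S_0=1]$ and $\pi_{NNN0}$ is exactly the mirror image of that argument, while importing the other eight proportions from Lemma \ref{lemma_propo_4a} via the implication ``joint independence $\Rightarrow$ Assumption \ref{Partialselection}(a)'' (cf. Remark 1(d)) is a valid packaging difference rather than a different method. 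One cosmetic slip: OON (resp.\ NON) is excluded from the cell $\{S_0=1,S_1=0,D=0\}$ (resp.\ $\{S_0=0,S_1=0,D=0\}$) by the definition of the strata alone, since those types have $S_1(0)=1$ and hence realized $S_1=1$ when $D=0$ (see Table \ref{tabl obs groups}), so monotonicity is not what delivers those cell decompositions --- its role in your new steps is solely the collapse $\{S_1(1)=0\}\subseteq\{S_1(0)=0\}$ in the treated arm (and, separately, justifying the Lemma \ref{lemma_propo_4a} carryovers).
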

\begin{proof}    
Consider the observed proportions to identify the required principal strata proportions.
		\begin{align*}
			\mathbbm{P}[S_1=1&|S_0=1,D=0]=\mathbbm{P}[S_1(0)=1,S_1(1)=1|S_0(0)=1,D=0]\\
   &+\mathbbm{P}[S_1(0)=1,S_1(1)=0|S_0(0)=1,D=0] \\
			&=\mathbbm{P}[S_1(0)=1,S_1(1)=1|S_0(0)=1,D=0] \quad \text{{(Assump. \ref{monotone})}}\\ 
   &=\frac{\mathbbm{P}[S_0(0)=1,S_1(0)=1,S_1(1)=1,D=0]}{\mathbbm{P}[S_0(0)=1,D=0]}=\frac{\pi_{OOO0}}{\mathbbm{P}[S_0=1,D=0]}\\
   \pi_{OOO0} &= \mathbbm{P}[S_1=1|S_0=1,D=0]\cdot \mathbbm{P}[S_0=1,D=0]\\
             &= \mathbbm{P}[S_0=1,S_1=1,D=0]\\
      \pi_{OOO1} &= \mathbbm{P}[S_0(0)=1,S_1(0)=1,S_1(1)=1,D=1]\\
      &=\mathbbm{P}[S_1(0)=1,S_1(1)=1|S_0(0)=1,D=1]\cdot \mathbbm{P}[S_0(0)=1,D=1] \text{{(Assump. \ref{inde_conditional})}}\\
      &=\mathbbm{P}[S_1(0)=1,S_1(1)=1|S_0(0)=1,D=0]\cdot \mathbbm{P}[S_0=1,D=1]\\
      &=\mathbbm{P}[S_1=1|S_0=1,D=0] \cdot \mathbbm{P}[S_0=1,D=1]
      \end{align*}
      \begin{align*}
      \mathbbm{P}[S_1=1|S_0=1,D=1] &=\mathbbm{P}[S_1(0)=1,S_1(1)=1|S_0(0)=1,D=1]\\
            &+\mathbbm{P}[S_1(0)=0,S_1(1)=1|S_0(0)=1,D=1]\\
            &=\frac{\mathbbm{P}[S_0(0)=1,S_1(0)=1,S_1(1)=1,D=1]}{ \mathbbm{P}[S_0(0)=1,D=1]} \\
            &+\frac{\mathbbm{P}[S_0(0)=1,S_1(0)=0,S_1(1)=1,D=1]}{ \mathbbm{P}[S_0(0)=1,D=1]}\\
            \mathbbm{P}[S_0=1,S_1=1,D=1]  &= \pi_{OOO1}+ \pi_{ONO1}
            \end{align*}
            \begin{equation*}
                 \pi_{ONO1} =  \mathbbm{P}[S_0=1,S_1=1,D=1] -\mathbbm{P}[S_1=1|S_0=1,D=0] \cdot \mathbbm{P}[S_0=1,D=1]
            \end{equation*}
A similar argument can be established to identify the relationship between the observed conditional selection probabilities and other principal strata proportions given in Lemma \ref{lemma: pro 4(joint)}.
\end{proof}

\subsection{Proof of Theorem \ref{ONO_bound}} \label{proof:ONO_bound}
\begin{proof}
The treatment effect of treated for ONO group ($\tau_{ONO}$) can be decomposed as follows,
{\small \begin{align} \label{decom_ONO}
\tau_{ONO} &= \mathbbm{E}[Y_{1}^{\ast}(1)-Y_{1}^{\ast}(0)|D=1, ONO] \nonumber\\
&= \mathbbm{E}[Y_{1}^{\ast}(1)-Y_{0}^{\ast}(1)+Y_{0}^{\ast}(1)-Y_{1}^{\ast}(0)|D=1, ONO]\nonumber\\
&= \mathbbm{E}[Y_{1}^{\ast}(1)-Y_{0}^{\ast}(1)|D=1, ONO]-\mathbbm{E}[Y_{1}^{\ast}(0)-Y_{0}^{\ast}(1)|D=1, ONO] \nonumber\\
&\text{Assumption \ref{no anti}}\nonumber\\
&= \mathbbm{E}[Y_{1}^{\ast}(1)-Y_{0}^{\ast}(1)|D=1, ONO]-\mathbbm{E}[Y_{1}^{\ast}(0)-Y_{0}^{\ast}(0)|D=1, ONO] \nonumber\\
&\text{Assumption \ref{PT_group}}\nonumber\\
&= \mathbbm{E}[Y_{1}^{\ast}(1)-Y_{0}^{\ast}(1)|D=1, ONO]-\mathbbm{E}[Y_{1}^{\ast}(0)-Y_{0}^{\ast}(0)|D=0, ONO]\nonumber\\
&= \mathbbm{E}[Y_{1}^{\ast}(1)-Y_{0}^{\ast}(1)|D=1, ONO]-\mathbbm{E}[Y_{1}^{\ast}(0)|D=0, ONO]+ \mathbbm{E}[Y_{0}^{\ast}(0)|D=0, ONO]
\end{align}}

As explained in section \ref{identification}, we can use the group of treated individuals for whom the outcome is observed in both periods to partially identify $\mathbbm{E}[Y_{1}^{\ast}(1)-Y_{0}^{\ast}(1)|D=1, ONO]$. Hence, $\mathbbm{E}[Y_{1}^{\ast}(1)-Y_{0}^{\ast}(1)|D=1,ONO]$ lies within the interval $[LB_{ONO1},UB_{ONO1}]$ where,
\begin{align}\label{bound_ONO1}
    LB_{ONO1} &=\mathbbm{E}[Y_{1}-Y_{0}|D=1,S_{0}=1,S_{1}=1, (Y_{1}-Y_{0})\leq F_{\Delta Y|111}^{-1}(1-p_{OOO1})] \nonumber\\
    UB_{ONO1} &=\mathbbm{E}[Y_{1}-Y_{0}|D=1,S_{0}=1,S_{1}=1, (Y_{1}-Y_{0}) > F_{\Delta Y|111}^{-1}(p_{OOO1})] 
\end{align}
and $F_{\Delta Y|111}^{-1}(.)$ is the quantile function of the distribution of $Y_{1}-Y_{0}$ given $D=1,S_{0}=1,S_{1}=1$.
Similarly, we can use the group of untreated individuals for whom the outcome is observed in the first period but not observed in the second period ($D = 0, S_0 = 1, S_1 = 0$) to partially identify $\mathbbm{E}[Y_{0}^{*}(0)|D=0, ONO]$. Table \ref{tabl obs groups} shows that their observed average outcome reflects a mixture of the potential outcomes for the ONN and ONO latent groups with mixing probabilities corresponding to their relative proportions. 
{\small \begin{align}\label{mix_10}
     &\mathbbm{E}[Y_{0}|D=0,S_{0}=1, S_{1}=0]= \nonumber \\
    &= \mathbbm{E}[Y_{0}^{\ast}(0)|D=0,S_{0}=1, (S_{1}(0)=0,S_{1}(1)=0) \quad or \quad (S_{1}(0)=0,S_{1}(1)=1)] \nonumber\\
    &=\mathbbm{E}[Y_{0}^{\ast}(0)|D=0,ONN]\cdot \frac{\pi_{ONN0}}{\pi_{ONN0}+\pi_{ONO0}} +\mathbbm{E}[Y_{0}^{\ast}(0)|D=0,ONO]\cdot \frac{\pi_{ONO0}}{\pi_{ONN0}+\pi_{ONO0}}
\end{align}}
For notation simplicity let $p_{ONO0}=\frac{\pi_{ONO0}}{\pi_{ONN0}+\pi_{ONO0}}$. We impose Assumption \ref{inde_conditional} in addition to the positive monotonicity assumption to identify this proportion.\footnote{Lemma \ref{lemma: pro 4(joint)} in Appendix \ref{prop_condi}.} 
 
Hence, $\mathbbm{E}[Y_{0}^{\ast}(0)|D=0, ONO]$ lies within the interval $[LB^{0}_{ONO0},UB^{0}_{ONO0}]$ where,
\begin{align} \label{Bound_ONO0_t_0}
    LB^{0}_{ONO0} &=\mathbbm{E}[Y_{0}|D=0,S_{0}=1,S_{1}=0, Y_{0}\leq F_{Y_{0}|010}^{-1}(p_{ONO0})]\nonumber\\
    UB^{0}_{ONO0} &=\mathbbm{E}[Y_{0}|D=0,S_{0}=1,S_{1}=0, Y_{0} > F_{Y_{0}|010}^{-1}(1-p_{ONO0})]
\end{align}
and $F_{Y_{0}|010}^{-1}(.)$ is the quantile function of the distribution of $Y_{0}$ given $D=0,S_{0}=1,S_{1}=0$.

Identification of $\mathbbm{E}[Y_{1}^{\ast}(0)|D=0, ONO]$ is not straightforward, but we can use the theoretical upper bound and lower bound of the potential outcome distribution \citep{huber2015sharp},
\begin{align*}
    Y_{1}^{LB}\leq \mathbbm{E}[Y_{1}^{\ast}(0)|D=0, ONO] \leq Y_{1}^{UB},
\end{align*}
where $Y_{1}^{LB}$ and $Y_{1}^{UB}$ are the theoretical upper and lower bounds of the potential outcomes in the post-treatment period. As these are wide bounds, we impose Assumption \ref{mean dominance}(a) to shrink the bounds. This assumption implies the potential outcome of always-observed in the untreated state first-order stochastically dominates compliers. Imposing this assumption, we can tighten the upper bound of $\mathbbm{E}[Y_{1}^{\ast}(0)|D=0, ONO]$. Then
   \begin{align}
    Y_{1}^{LB}\leq \mathbbm{E}[Y_{1}^{\ast}(0)|D=0, ONO] \leq \mathbbm{E}[Y_{1}^{\ast}(0)|D=0, OOO]
\end{align}     
where $\mathbbm{E}[Y_{1}^{\ast}(0)|D=0, OOO]$ can be point identified as $\mathbbm{E}[Y_{1}|D=0,S_{0}=1, S_{1}=1]$ under positive monotonicity considering the untreated individuals observed in both periods. Thus,
\begin{align}\label{bound_ONO0}
    Y_{1}^{LB}\leq \mathbbm{E}[Y_{1}^{\ast}(0)|D=0, ONO] \leq \mathbbm{E}[Y_{1}|D=0,S_{0}=1, S_{1}=1]
\end{align} 

Combining the bounds for $\mathbbm{E}[Y_{1}^{\ast}(1)-Y_{0}^{\ast}(1)|D=1, ONO]$, $\mathbbm{E}[Y_{0}^{\ast}(0)|D=0, ONO]$ and $\mathbbm{E}[Y_{1}^{\ast}(0)|D=0, ONO]$ as derived in Equations \ref{bound_ONO1}, \ref{Bound_ONO0_t_0} and \ref{bound_ONO0} with the decomposition of $\tau_{ONO}$ given in Equation \ref{decom_ONO} we find that the parameter of interest $\tau_{ONO}$ is in the interval
{\small \begin{align*}
    [LB_{ONO1}-\mathbbm{E}[Y_{1}|D=0,S_{0}=1, S_{1}=1]+LB^{0}_{ONO0}, UB_{ONO1}-Y_{1}^{LB}+UB^{0}_{ONO0}].
\end{align*}}
\end{proof}
\subsection{Proof of Theorem \ref{NNO_bound}} \label{proof:NNO_bound}
\begin{proof}

The treatment effect of treated for NNO group ($\tau_{NNO}$) can be decomposed as follows,
\begin{align} \label{decom_NNO}
\tau_{NNO} &= \mathbbm{E}[Y_{1}^{\ast}(1)-Y_{1}^{\ast}(0)|D=1, NNO] \nonumber\\
&= \mathbbm{E}[Y_{1}^{\ast}(1)-Y_{0}^{\ast}(1)+Y_{0}^{\ast}(1)-Y_{1}^{\ast}(0)|D=1, NNO] \nonumber\\
&= \mathbbm{E}[Y_{1}^{\ast}(1)-Y_{0}^{\ast}(1)|D=1, NNO]-\mathbbm{E}[Y_{1}^{\ast}(0)-Y_{0}^{\ast}(1)|D=1, NNO] \nonumber\\
&\text{Assumption \ref{no anti}}\nonumber\\
&= \mathbbm{E}[Y_{1}^{\ast}(1)-Y_{0}^{\ast}(1)|D=1, NNO]-\mathbbm{E}[Y_{1}^{\ast}(0)-Y_{0}^{\ast}(0)|D=1, NNO]\nonumber\\
&\text{Assumption \ref{PT_group}}\nonumber\\
&= \mathbbm{E}[Y_{1}^{\ast}(1)-Y_{0}^{\ast}(1)|D=1, NNO]-\mathbbm{E}[Y_{1}^{\ast}(0)-Y_{0}^{\ast}(0)|D=0, NNO]\nonumber\\
&= \mathbbm{E}[Y_{1}^{\ast}(1)|D=1, NNO]-\mathbbm{E}[Y_{0}^{\ast}(1)|D=1, NNO]\nonumber\\
&-\mathbbm{E}[Y_{1}^{\ast}(0)|D=0, NNO]+\mathbbm{E}[Y_{0}^{\ast}(0)|D=0, NNO]
\end{align}
We can use the group of treated individuals for whom the outcome is not observed in the pre-treatment period but observed in the post-treatment period ($D = 1, S_0 = 0, S_1 = 1$) to partially identify $\mathbbm{E}[Y_{1}^{\ast}(1)|D=1, NNO]$. Table \ref{tabl obs groups} shows that their observed average outcome reflects a mixture of the potential outcomes for the NOO and NNO latent groups with mixing probabilities corresponding to their relative proportions.
{\small \begin{align}\label{mix_01}
     &\mathbbm{E}[Y_{1}|D=1,S_{0}=0, S_{1}=1]= \nonumber \\
    &= \mathbbm{E}[Y_{1}^{\ast}(1)|D=1,S_{0}=0, (S_{1}(0)=1,S_{1}(1)=1) \quad or \quad (S_{1}(0)=0,S_{1}(1)=1)] \nonumber\\
    &=\mathbbm{E}[Y_{1}^{\ast}(1)|D=1,NOO]\cdot \frac{\pi_{NOO1}}{\pi_{NOO1}+\pi_{NNO1}}+\mathbbm{E}[Y_{1}^{\ast}(1)|D=1,NNO]\cdot \frac{\pi_{NNO1}}{\pi_{NOO1}+\pi_{NNO1}}
\end{align}}
For notation simplicity let $p_{NNO1}=\frac{\pi_{NNO1}}{\pi_{NOO1}+\pi_{NNO1}}$. This proportion can be point identified imposing Assumption \ref{inde_conditional} in addition to the positive monotonicity assumption.\footnote{Lemma \ref{lemma: pro 4(joint)} in Appendix \ref{prop_condi}.} Hence, $\mathbbm{E}[Y_{1}^{\ast}(1)|D=1, NNO]$ lies within the interval $[LB_{NNO1},UB_{NNO1}]$ where,
\begin{align}\label{bound_NNO1}
    LB_{NNO1} &=\mathbbm{E}[Y_{1}|D=1,S_{0}=0,S_{1}=1, Y_{1}\leq F_{Y_{1}|101}^{-1}(p_{NNO1})] \nonumber\\
    UB_{NNO1} &=\mathbbm{E}[Y_{1}|D=1,S_{0}=0,S_{1}=1, Y_{1} > F_{Y_{1}|101}^{-1}(1-p_{NNO1})]
\end{align}
and $F_{Y_{1}|101}^{-1}(.)$ is the quantile function of the distribution of $Y_{1}$ given $D=1,S_{0}=0,S_{1}=1$.

Identification of $\mathbbm{E}[Y_{0}^{\ast}(1)|D=1, NNO]$ is not straightforward, but we can use the theoretical upper bound and lower bound of the potential outcome distribution \citep{huber2015sharp}
\begin{align*}
    Y_{0}^{LB}\leq \mathbbm{E}[Y_{0}^{\ast}(1)|D=1, NNO] \leq Y_{0}^{UB}
\end{align*}
$Y_{0}^{LB}$ and $Y_{0}^{UB}$ are the theoretical upper and lower bounds of the potential outcomes in the pre-treatment period. We impose an outcome mean dominance Assumption \ref{mean dominance} (b.i), noting that under Assumption \ref{no anti}, $\mathbbm{E}[Y_{0}^{\ast}(1)|D=1, NNO]=\mathbbm{E}[Y_{0}^{\ast}(0)|D=1, NNO]$ to tighten the upper bound of $\mathbbm{E}[Y_{0}^{\ast}(0)|D=1, NNO]$. Then,
\begin{align*}
    Y_{0}^{LB}\leq \mathbbm{E}[Y_{0}^{\ast}(0)|D=1, NNO] \leq \mathbbm{E}[Y_{0}^{\ast}(0)|D=1, ONO].
\end{align*}    
where $\mathbbm{E}[Y_{0}^{\ast}(0)|D=1, ONO]$ can be partially identified considering that the observed average outcome of the treated individuals is a mixture of potential outcomes for the OOO and ONO groups with mixing probability $p_{OOO1}=\frac{\pi_{OOO1}}{\pi_{OOO1}+\pi_{ONO1}}$, which is point identified under Assumption \ref{Partialselection} (a) in addition to positive monotonicity assumption. Hence, $\mathbbm{E}[Y_{0}^{\ast}(0)|D=1, ONO]$ lies within the interval $[LB^{0}_{ONO1},UB^{0}_{ONO1}]$ where,
\begin{align*}
    LB^{0}_{ONO1} &=\mathbbm{E}[Y_{0}|D=1,S_{0}=1,S_{1}=1, Y_{0}\leq F_{Y_{0}|111}^{-1}(1-p_{OOO1})]\\
    UB^{0}_{ONO1} &=\mathbbm{E}[Y_{0}|D=1,S_{0}=1,S_{1}=1, Y_{0} > F_{Y_{0}|111}^{-1}(p_{OOO1})] 
\end{align*}
and $F_{Y_{0}|111}^{-1}(.)$ is the quantile function of the distribution of $Y_{0}$ given $D=1,S_{0}=1,S_{1}=1$. Then,
\begin{align}\label{bound_NNO1t0}
    Y_{0}^{LB}\leq \mathbbm{E}[Y_{0}^{\ast}(0)|D=1, NNO] \leq LB^{0}_{ONO1}
\end{align} 
In order to identify $\mathbbm{E}[Y_{1}^{\ast}(0)|D=0, NNO]$ we first impose theoretical upper and lower bounds:
\begin{align*}
    Y_{1}^{LB}\leq \mathbbm{E}[Y_{1}^{\ast}(0)|D=0, NNO] \leq Y_{1}^{UB}.
\end{align*}
Then we impose an outcome mean dominance Assumption \ref{mean dominance} (b.ii) to tighten the upper bound of $\mathbbm{E}[Y_{1}^{\ast}(0)|D=0, NNO]$.
 \begin{align*}
    Y_{1}^{LB}\leq \mathbbm{E}[Y_{1}^{\ast}(0)|D=0, NNO] \leq \mathbbm{E}[Y_{1}^{\ast}(0)|D=0, NOO]
\end{align*}
where $\mathbbm{E}[Y_{1}^{\ast}(0)|D=0, NOO]$ can be point identified as $\mathbbm{E}[Y_{1}|D=0, S_{0}=0, S_{1}=1]$ under positive monotonicity considering the untreated individuals observed in post-treatment period but not observed in the pre-treatment period. Thus,
\begin{align}\label{bound_NNO0t1}
    Y_{1}^{LB}\leq \mathbbm{E}[Y_{1}^{\ast}(0)|D=0, NNO] \leq \mathbbm{E}[Y_{1}^{\ast}(0)|D=0, NOO]
\end{align}

Identification of $\mathbbm{E}[Y_{0}^{\ast}(0)|D=0, NNO]$ will follow similarly. We first consider the  theoretical upper bound and lower bound of the potential outcome distribution in pre-treatment period, denoted as $Y_{0}^{LB}$ and $Y_{0}^{UB}$, following \citep{huber2015sharp}
\begin{align*}
    Y_{0}^{LB}\leq \mathbbm{E}[Y_{0}^{\ast}(0)|D=0, NNO] \leq Y_{0}^{UB}
\end{align*}
 We impose the outcome mean dominance Assumption \ref{mean dominance}(b)i to tighten the upper bound of $\mathbbm{E}[Y_{0}^{\ast}(0)|D=0, NNO]$. Then
\begin{align*}
    Y_{0}^{LB}\leq \mathbbm{E}[Y_{0}^{\ast}(0)|D=0, NNO] \leq \mathbbm{E}[Y_{0}^{\ast}(0)|D=0, ONO]
\end{align*}     
where $\mathbbm{E}[Y_{0}^{\ast}(0)|D=0, ONO]$ can be partially identified as given in Equation \ref{Bound_ONO0_t_0}. Then
\begin{align}\label{bound_NNO0t0}
    Y_{0}^{LB}\leq E[Y_{0}^{\ast}(0)|D=0, NNO] \leq LB^{0}_{ONO0}
\end{align}

Combining the bounds for $\mathbbm{E}[Y_{1}^{\ast}(1)|D=1, NNO]$, $\mathbbm{E}[Y_{0}^{\ast}(1)|D=1, NNO]$, $\mathbbm{E}[Y_{1}^{\ast}(0)|D=0, NNO]$ and $\mathbbm{E}[Y_{0}^{\ast}(0)|D=0, NNO]$  as given in Equations \ref{bound_NNO1}, \ref{bound_NNO1t0}, \ref{bound_NNO0t1} and \ref{bound_NNO0t0} with the decomposition of $\tau_{NNO}$ as given in Equation \ref{decom_NNO}, we find that the parameter of interest $\tau_{NNO}$ is in the interval, $[LB_{NNO1}-LB^{0}_{ONO1}-\mathbbm{E}[Y_{1}|D=0,S_{0}=0, S_{1}=1]+Y_{0}^{LB}, \quad UB_{NNO1}-Y_{0}^{LB}- Y_{1}^{LB}+LB^{0}_{ONO0}]$.
\end{proof}
\subsection{Proof of Theorem \ref{NOO_bound}} \label{proof:NOO_bound}
\begin{proof}
The treatment effect of treated for NOO group ($\tau_{NOO}$) can be decomposed as follows,
\begin{align} \label{decom_NOO}
\tau_{NOO} &= \mathbbm{E}[Y_{1}^{\ast}(1)-Y_{1}^{\ast}(0)|D=1, NOO] \nonumber\\
&= \mathbbm{E}[Y_{1}^{\ast}(1)-Y_{0}^{\ast}(1)+Y_{0}^{\ast}(1)-Y_{1}^{\ast}(0)|D=1, NOO]\nonumber\\
&= \mathbbm{E}[Y_{1}^{\ast}(1)-Y_{0}^{\ast}(1)|D=1, NOO]-\mathbbm{E}[Y_{1}^{\ast}(0)-Y_{0}^{\ast}(1)|D=1, NOO]\nonumber\\
&\text{Assumption \ref{no anti}}\nonumber\\
&= \mathbbm{E}[Y_{1}^{\ast}(1)-Y_{0}^{\ast}(1)|D=1, NOO]-\mathbbm{E}[Y_{1}^{\ast}(0)-Y_{0}^{\ast}(0)|D=1, NOO]\nonumber\\
&\text{Assumption \ref{PT_group}}\nonumber\\
&= \mathbbm{E}[Y_{1}^{\ast}(1)-Y_{0}^{\ast}(1)|D=1, NOO]-\mathbbm{E}[Y_{1}^{\ast}(0)-Y_{0}^{\ast}(0)|D=0, NOO]\nonumber\\
&= \mathbbm{E}[Y_{1}^{\ast}(1)|D=1, NOO]-\mathbbm{E}[Y_{0}^{\ast}(1)|D=1, NOO]-\mathbbm{E}[Y_{1}^{\ast}(0)|D=0, NOO] \nonumber\\
&+\mathbbm{E}[Y_{0}^{\ast}(0)|D=0, NOO]
\end{align}
We can use the group of treated individuals for whom the outcome is not observed in the pre-treatment period but observed in the post-treatment period ($D = 1, S_0 = 0, S_1 = 1$) to partially identify $\mathbbm{E}[Y_{1}^{\ast}(1)|D=1, NOO]$. Analogous to equation \ref{mix_01}, therefore,  $\mathbbm{E}[Y_{1}^{\ast}(1)|D=1, NOO]$ lies within the interval $[LB_{NOO1},UB_{NOO1}]$ where,
\begin{align}\label{bound_NOO1t1}
    LB_{NOO1} &=\mathbbm{E}[Y_{1}|D=1,S_{0}=0,S_{1}=1, Y_{1}\leq F_{Y_{1}|101}^{-1}(1-p_{NNO1})] \nonumber\\
    UB_{NOO1} &=\mathbbm{E}[Y_{1}|D=1,S_{0}=0,S_{1}=1, Y_{1} > F_{Y_{1}|101}^{-1}(p_{NNO1})]
\end{align}
and $F_{Y_{1}|101}^{-1}(.)$ is the quantile function of the distribution of $Y_{1}$ given $D=1,S_{0}=0,S_{1}=1$.

Identification of $\mathbbm{E}[Y_{0}^{\ast}(1)|D=1, NOO]$ is not straightforward, but we can use the theoretical upper bound and lower bound of the potential outcome distribution  in the pre-treatment period (\citet{huber2015sharp})
\begin{align*}
    Y_{0}^{LB}\leq \mathbbm{E}[Y_{0}^{\ast}(1)|D=1, NOO] \leq Y_{0}^{UB}
\end{align*}
$Y_{0}^{LB}$ and $Y_{0}^{UB}$ are the theoretical upper and lower bounds of the potential outcome distribution in the pre-treatment period. We impose outcome mean dominance Assumption \ref{mean dominance} (c), noting that under Assumption \ref{no anti}, $\mathbbm{E}[Y_{0}^{\ast}(1)|D=1, NOO]=\mathbbm{E}[Y_{0}^{\ast}(0)|D=0, NOO]$ to shrink the bounds, as these are wide bounds. Imposing this assumption, we can tighten the upper bound of $E[Y_{0}^{\ast}(1)|D=1, NOO]$. Then,
\begin{align*}
    Y_{0}^{LB}\leq \mathbbm{E}[Y_{0}^{\ast}(1)|D=1, NOO] \leq \mathbbm{E}[Y_{0}^{\ast}(1)|D=1, OOO]
\end{align*}  
where $\mathbbm{E}[Y_{0}^{\ast}(1)|D=1, OOO]$ can be partially identified considering that the observed average outcome of the treated individuals observed in both periods is a mixture of potential outcomes for the OOO and ONO groups with mixing probability $p_{OOO1}=\frac{\pi_{OOO1}}{\pi_{OOO1}+\pi_{ONO1}}$, which is point identified under Assumption \ref{Partialselection} (a) in addition to positive monotonicity assumption. Hence, $E[Y_{0}^{\ast}(1)|D=1, OOO]$ lies within the interval $[LB^{0}_{OOO1},UB^{0}_{OOO1}]$ where,
\begin{align*}
    LB^{0}_{OOO1} &=\mathbbm{E}[Y_{0}|D=1,S_{0}=1,S_{1}=1, Y_{0}\leq F_{Y_{0}|111}^{-1}(p_{OOO1})]\\
    UB^{0}_{OOO1} &=\mathbbm{E}[Y_{0}|D=1,S_{0}=1,S_{1}=1, Y_{0} > F_{Y_{0}|111}^{-1}(1-p_{OOO1})]
\end{align*}
and $F_{Y_{0}|111}^{-1}(.)$ is the quantile function of the distribution of $Y_{0}$ given $D=1,S_{0}=1,S_{1}=1$. Then
\begin{equation}\label{bound_NOO1t0}
    Y_{0}^{LB}\leq \mathbbm{E}[Y_{0}^{\ast}(1)|D=1, NOO] \leq LB^{0}_{OOO1}
\end{equation}
$\mathbbm{E}[Y_{1}^{\ast}(0)|D=0, NOO]$ can be point identified as $\mathbbm{E}[Y_{1}|D=0,S_{0}=0,S_{1}=1]$ considering the untreated individuals not observed in the pre-treatment period but observed in the post-treatment period ($D=0, S_0=0, S_1=1$) under positive monotonicity.

To identify $\mathbbm{E}[Y_{0}^{\ast}(0)|D=0, NOO]$, we first consider the  theoretical upper bound and lower bound of the potential outcome distribution, following \cite{huber2015sharp},
\begin{align*}
    Y_{0}^{LB}\leq \mathbbm{E}[Y_{0}^{\ast}(0)|D=0, NOO] \leq Y_{0}^{UB}
\end{align*}
$Y_{0}^{LB}$ and $Y_{0}^{UB}$ are the theoretical upper and lower bounds of the potential outcome distribution in the pre-treatment period. We impose the outcome mean dominance Assumption \ref{mean dominance} (c)  to shrink the bounds, as these are wide bounds. Imposing this assumption, we can tighten the upper bound of $\mathbbm{E}[Y_{0}^{\ast}(0)|D=0, NOO]$. Then
\begin{equation*}
    Y_{0}^{LB}\leq \mathbbm{E}[Y_{0}^{\ast}(0)|D=0, NOO] \leq \mathbbm{E}[Y_{0}^{\ast}(0)|D=0, OOO]
\end{equation*}  
where $\mathbbm{E}[Y_{0}^{\ast}(0)|D=0, OOO]$ can be point identified under positive monotonicity assumption as $\mathbbm{E}[Y_{0}|D=0,S_{0}=1, S_{1}=1]$ considering the untreated observed in both periods. Thus,
\begin{align}\label{bound_NOO0t0}
    Y_{0}^{LB}\leq \mathbbm{E}[Y_{0}^{\ast}(0)|D=0, NOO] \leq \mathbbm{E}[Y_{0}|D=0,S_{0}=1, S_{1}=1]
\end{align} 
Combining the bounds for $\mathbbm{E}[Y_{1}^{\ast}(1)|D=1, NOO]$, $\mathbbm{E}[Y_{1}^{\ast}(0)|D=1, NOO]$, and $\mathbbm{E}[Y_{0}^{\ast}(0)|D=0, NOO]$ from Equations \ref{bound_NOO1t1}, \ref{bound_NOO1t0} and \ref{bound_NOO0t0}, along with point identified $\mathbbm{E}[Y_{1}^{\ast}(0)|D=0, NOO]$ and the decomposition of $\tau_{NOO}$ given in Equation \ref{decom_NOO}, implies that the parameter of interest $\tau_{NOO}$ lies in the interval
\begin{align*}
    &\bigg[LB_{NOO1}-LB^{0}_{OOO1}-\mathbbm{E}[Y_{1}|D=0,S_{0}=0, S_{1}=1]+Y_{0}^{LB},\\
    &\quad UB_{NOO1}-Y_{0}^{LB}-\mathbbm{E}[Y_{1}|D=0,S_{0}=0, S_{1}=1]+\mathbbm{E}[Y_{0}|D=0,S_{0}=1, S_{1}=1]\bigg].
\end{align*}
\end{proof}

\subsection{Proof of Theorem \ref{theorem:partialid_tau}}\label{proof:partialid_tau}
\begin{proof}
\begin{align}\label{direct}
    \tau
    &= \mathbbm{E}\left[ Y_1^\ast(1) - Y_1^\ast(0) | D=1,S_0 = 1,S_1 = 1 \right] \cdot \mathbbm{P}(S_0 = 1, S_1 = 1 | D = 1) \nonumber \\
    &\quad + \mathbbm{E}\left[ Y_1^\ast(1) - Y_1^\ast(0) | D=1, S_0 = 1,S_1 = 0 \right] \cdot \mathbbm{P}(S_0 = 1,S_1 = 0 | D = 1) \nonumber \\
    &\quad + \mathbbm{E}\left[ Y_1^\ast(1) - Y_1^\ast(0) | D=1, S_0 = 0,S_1 = 1 \right] \cdot \mathbbm{P}(S_0 = 0,S_1 = 1 | D = 1) \nonumber \\
    &\quad + \mathbbm{E}\left[ Y_1^\ast(1) - Y_1^\ast(0) | D=1, S_0 = 0,S_1 = 0 \right] \cdot \mathbbm{P}(S_0 = 0,S_1 = 0 | D = 1)
\end{align}
In the above expression, consider,
\begin{align*}
    &\mathbbm{E}\left[ Y_1^\ast(1) - Y_1^\ast(0) | D=1,S_0 = 1,S_1 = 1 \right] \\
    &= \mathbbm{E}[Y_{1}^{*}(1)-Y_{1}^{*}(0)|D=1,S_{0}=1, (S_{1}(0)=1,S_{1}(1)=1) \quad or \quad (S_{1}(0)=0,S_{1}(1)=1)] \\
    &=\mathbbm{E}[Y_{1}^{*}(1)-Y_{1}^{*}(0)|D=1,OOO]\cdot \frac{\pi_{OOO1}}{\pi_{OOO1}+\pi_{ONO1}}\\
    &\quad +\mathbbm{E}[Y_{1}^{*}(1)-Y_{1}^{*}(0)|D=1,ONO]\cdot \frac{\pi_{ONO1}}{\pi_{OOO1}+\pi_{ONO1}} \\
    &=\tau_{OOO}\cdot p_{OOO1}+\tau_{ONO} \cdot (1-p_{OOO1})
\end{align*}
A similar argument will follow for other expressions under positive monotonicity,
\begin{align*}
    \mathbbm{E}\left[ Y_1^\ast(1) - Y_1^\ast(0) | D=1, S_0 = 1,S_1 = 0 \right] &= \tau_{ONN}\\
    \mathbbm{E}\left[ Y_1^\ast(1) - Y_1^\ast(0) | D=1, S_0 = 0,S_1 = 1 \right] &= \tau_{NOO}\cdot p_{NOO1}+\tau_{NNO} \cdot (1-p_{NOO1})\\
    \mathbbm{E}\left[ Y_1^\ast(1) - Y_1^\ast(0) | D=1, S_0 = 0,S_1 = 0 \right] &= \tau_{NNN}
\end{align*}
Plugging into Equation \ref{direct} we have,
\begin{align*}
    \tau &= (\tau_{OOO}\cdot p_{OOO1}+\tau_{ONO} \cdot (1-p_{OOO1}))\cdot \mathbbm{P}(S_0 = 1, S_1 = 1 | D = 1)\\
    &\quad +\tau_{ONN}\cdot \mathbbm{P}(S_0 = 1,S_1 = 0 | D = 1)+ \tau_{NNN} \cdot \mathbbm{P}(S_0 = 0,S_1 = 0 | D = 1)\\
    &\quad +(\tau_{NOO}\cdot p_{NOO1}+\tau_{NNO} \cdot (1-p_{NOO1}))\cdot \mathbbm{P}(S_0 = 0,S_1 = 1 | D = 1)
\end{align*}
We  partially identify $\tau_{OOO}$, $\tau_{ONO}$, $\tau_{NNO}$, and $\tau_{NOO}$ through Theorems \ref{mono_bound}, \ref{ONO_bound}, \ref{NNO_bound}, and \ref{NOO_bound} respectively with mixing proportion  $p_{OOO1}$ being point identified in Theorem \ref{mono_bound} and $p_{NOO1}$ being point identified by Lemma \ref{lemma: pro 4(joint)}. We can use theoretical upper bounds and lower bounds to derive bounds for the ATT of ONN and NNN groups. Thus,  $\tau_{ONN}\in[Y_1^{LB}-Y_1^{UB},Y_1^{UB}-Y_1^{LB}]$ and $\tau_{NNN}\in[Y_1^{LB}-Y_1^{UB},Y_1^{UB}-Y_1^{LB}]$. Therefore, we can bound $\tau$ as follows,

\begin{align*}
\tau_{UB}&= \bigg(UB_{\tau_{OOO}}\cdot p_{OOO1}+ UB_{\tau_{ONO}} \cdot (1-p_{OOO1})\bigg)\cdot             \mathbbm{P}(S_0 = 1, S_1 = 1 | D = 1)\\
    &\quad +UB_{\tau_{ONN}}\cdot \mathbbm{P}(S_0 = 1,S_1 = 0 | D = 1)+ UB_{\tau_{NNN}} \cdot \mathbbm{P}(S_0 = 0,S_1 = 0 | D = 1)\\
    &\quad+\bigg(UB_{\tau_{NOO}}\cdot p_{NOO1}+UB_{\tau_{NNO}} \cdot (1-p_{NOO1})\bigg)\cdot \mathbbm{P}(S_0 = 0,S_1 = 1 | D = 1)\\
 \tau_{LB}&= \bigg(LB_{\tau_{OOO}}\cdot p_{OOO1}+LB_{\tau_{ONO}} \cdot (1-p_{OOO1})\bigg)\cdot \mathbbm{P}(S_0 = 1, S_1 = 1 | D = 1)\\
    &\quad+LB_{\tau_{ONN}}\cdot \mathbbm{P}(S_0 = 1,S_1 = 0 | D = 1)+ LB_{\tau_{NNN}} \cdot \mathbbm{P}(S_0 = 0,S_1 = 0 | D = 1)\\
    &\quad+\bigg(LB_{\tau_{NOO}}\cdot p_{NOO1}+LB_{\tau_{NNO}} \cdot (1-p_{NOO1})\bigg)\cdot \mathbbm{P}(S_0 = 0,S_1 = 1 | D = 1)
\end{align*}
where,
\begin{equation*}
\begin{split}
        LB_{\tau_{OOO}} &= LB_{OOO1}-\mathbbm{E}[Y_{1}-Y_{0}|D=0,S_{0}=1,S_{1}=1],\\
        UB_{\tau_{OOO}} &=UB_{OOO1}-\mathbbm{E}[Y_{1}-Y_{0}|D=0,S_{0}=1,S_{1}=1]\\
        LB_{\tau_{ONO}} &= LB_{ONO1}-\mathbbm{E}[Y_{1}|D=0,S_{0}=1, S_{1}=1]+LB^{0}_{ONO0},\\
        UB_{\tau_{ONO}} &= UB_{ONO1}-Y_{1}^{LB}+UB^{0}_{ONO0}\\
         LB_{\tau_{NOO}} &= LB_{NOO1}-LB^{0}_{OOO1}-\mathbbm{E}[Y_{1}|D=0,S_{0}=0, S_{1}=1]+Y_{0}^{LB},\\
        UB_{\tau_{NOO}} &= UB_{NOO1}-Y_{0}^{LB}-\mathbbm{E}[Y_{1}|D=0,S_{0}=0, S_{1}=1]+\mathbbm{E}[Y_{0}|D=0,S_{0}=1, S_{1}=1]]\\
        LB_{\tau_{NNO}} &= LB_{NNO1}-LB^{0}_{ONO1}-\mathbbm{E}[Y_{1}|D=0,S_{0}=0, S_{1}=1]+Y_{0}^{LB},\\
         UB_{\tau_{NNO}} &= UB_{NNO1}-Y_{0}^{LB}- Y_{1}^{LB}+LB^{0}_{ONO0}\\
         LB_{\tau_{ONN}} &=Y_1^{LB}-Y_1^{UB},\\
         UB_{\tau_{ONN}} &=Y_1^{UB}-Y_1^{LB}\\
         LB_{\tau_{NNN}} &=Y_1^{LB}-Y_1^{UB},\\
         UB_{\tau_{NNN}} &=Y_1^{UB}-Y_1^{LB}
    \end{split}
\end{equation*}
and, 
\begin{align*}
    LB_{OOO1} &=\mathbbm{E}[Y_{1}-Y_{0}|D=1,S_{0}=1,S_{1}=1, (Y_{1}-Y_{0})\leq F_{\Delta Y|111}^{-1}(p_{OOO1})]\\
    UB_{OOO1} &=\mathbbm{E}[Y_{1}-Y_{0}|D=1,S_{0}=1,S_{1}=1, (Y_{1}-Y_{0}) > F_{\Delta Y|111}^{-1}(1-p_{OOO1})]\\
    LB_{ONO1} &=\mathbbm{E}[Y_{1}-Y_{0}|D=1,S_{0}=1,S_{1}=1, (Y_{1}-Y_{0})\leq F_{\Delta Y|111}^{-1}(1-p_{OOO1})],\\
    UB_{ONO1} &=\mathbbm{E}[Y_{1}-Y_{0}|D=1,S_{0}=1,S_{1}=1, (Y_{1}-Y_{0}) > F_{\Delta Y|111}^{-1}(p_{OOO1})], \\
    LB^{0}_{ONO0} &=\mathbbm{E}[Y_{0}|D=0,S_{0}=1,S_{1}=0, Y_{0}\leq F_{Y_0|010}^{-1}(p_{ONO0})]\\
    UB^{0}_{ONO0} &=\mathbbm{E}[Y_{0}|D=0,S_{0}=1,S_{1}=0, Y_{0} > F_{Y_0|010}^{-1}(1-p_{ONO0})]\\
     LB_{NOO1} &=\mathbbm{E}[Y_{1}|D=1,S_{0}=0,S_{1}=1, Y_{1}\leq F_{Y_{1}|101}^{-1}(1-p_{NNO1})]\\
    UB_{NOO1} &=\mathbbm{E}[Y_{1}|D=1,S_{0}=0,S_{1}=1, Y_{1} > F_{Y_{1}|101}^{-1}(p_{NNO1})] \\
    LB^{0}_{OOO1} &=\mathbbm{E}[Y_{0}|D=1,S_{0}=1,S_{1}=1, Y_{0}\leq F_{Y_{0}|111}^{-1}(p_{OOO1})]\\
     LB_{NNO1} &=\mathbbm{E}[Y_{1}|D=1,S_{0}=0,S_{1}=1, Y_{1}\leq F_{Y_{1}|101}^{-1}(p_{NNO1})]\\
    UB_{NNO1} &=\mathbbm{E}[Y_{1}|D=1,S_{0}=0,S_{1}=1, Y_{1} > F_{Y_{1}|101}^{-1}(1-p_{NNO1})]\\
    LB^{0}_{ONO1} &=\mathbbm{E}[Y_{0}|D=1,S_{0}=1,S_{1}=1, Y_{0}\leq F_{Y_{0}|111}^{-1}(1-p_{OOO1})]\\
\end{align*}
with mixing proportions point identified as
\begin{align*}
    p_{OOO1}&=\frac{\mathbbm{P}[S_{1}=1|S_{0}=1,D=0]}{\mathbbm{P}[S_{1}=1|S_{0}=1,D=1]}\\
    p_{ONO0}&=1-\frac{\mathbbm{P}[S_1=0|S_0=1,D=1]}{\mathbbm{P}[S_1=0|S_0=1,D=0]}\\
    p_{NOO1}&=\frac{\mathbbm{P}[S_1=1|S_0=0,D=0]}{\mathbbm{P}[S_1=1|S_0=0,D=1]}\\
    p_{NNO1}&=1-\frac{\mathbbm{P}[S_1=1|S_0=0,D=0]}{\mathbbm{P}[S_1=1|S_0=0,D=1]}   
\end{align*}
where $Y_{0}^{LB}$ and $Y_{1}^{LB}$ are the theoretical lower bounds of the potential outcomes in the pre-treatment period and the post-treatment period, respectively and $Y_{0}^{UB}$ and $Y_{1}^{UB}$ are the theoretical upper bounds of the potential outcomes in the pre-treatment period and the post-treatment period, respectively.
\end{proof}


\subsection{Proof of Lemma \ref{lemma:mono_weights_neg}}\label{proof:mono_weights_neg}
\begin{proof}\label{proof:mono_weights_negative} \ 
    Recall that in order to bound $\tau_{OOO}$, we need to identify $p_{OOO0}$ and $p_{OOO1}$. We can express, 
    \begin{align*}
        p_{OOO0} &= \frac{\mathbbm{P}[S_1(0)=1, S_1(1)=1|S_0=1, D=0]}{\mathbbm{P}[S_1(0)=1|S_0=1, D=0]} \\
        & = \frac{\mathbbm{P}[S_1(1)=1|S_0=1, D=0]}{\mathbbm{P}[S_1(0)=1|S_0=1, D=0]}
    \end{align*} where the second equality follows from Assumption \ref{neg_monotone} (negative monotonicity). 
    
    Next, we can also express $p_{OOO1}$ as
    \begin{align*}
        p_{OOO1} &= \frac{\mathbbm{P}[S_1(0)=1, S_1(1)=1|S_0=1, D=1]}{\mathbbm{P}[S_1(1)=1|S_0=1, D=1]} \\
        & = \frac{\mathbbm{P}[S_1(1)=1|S_0=1, D=1]}{\mathbbm{P}[S_1(1)=1|S_0=1, D=1]} =1 
    \end{align*} where, again, the second equality applies Assumption \ref{neg_monotone} (negative monotonicity). Combining the two together, we get the result.
    
\end{proof} 

\subsection{Proof of Theorem \ref{mono_bound_negative}}\label{proof:mono_bound_negative}
\begin{proof}\
The treatment effect of treated for OOO group ($\tau_{OOO}$) can be decomposed as follows,
\begin{align*}
\tau_{OOO} &= \mathbbm{E}[Y_{1}^{\ast}(1)-Y_{1}^{\ast}(0)|D=1, OOO]\\
&= \mathbbm{E}[Y_{1}^{\ast}(1)-Y_{0}^{\ast}(1)+Y_{0}^{\ast}(1)-Y_{1}^{\ast}(0)|D=1, OOO]\\
&= \mathbbm{E}[Y_{1}^{\ast}(1)-Y_{0}^{\ast}(1)|D=1, OOO]-\mathbbm{E}[Y_{1}^{\ast}(0)-Y_{0}^{\ast}(1)|D=1, OOO]\\
&\text{Assumption \ref{no anti}}\\
&= \mathbbm{E}[Y_{1}^{\ast}(1)-Y_{0}^{\ast}(1)|D=1, OOO]-\mathbbm{E}[Y_{1}^{\ast}(0)-Y_{0}^{\ast}(0)|D=1, OOO]\\
&\text{Assumption \ref{PT_OOO}}\\
&= \mathbbm{E}[Y_{1}^{\ast}(1)-Y_{0}^{\ast}(1)|D=1, OOO]-\mathbbm{E}[Y_{1}^{\ast}(0)-Y_{0}^{\ast}(0)|D=0, OOO]
\end{align*}
The decomposition of $\mathbbm{E}[Y_{1}-Y_{0}|D=0,S_{0}=1,S_{1}=1]$ given in equation \ref{mixp2} lead to partial identification of $\mathbbm{E}[Y_{1}^{\ast}(0)-Y_{0}^{\ast}(0)|D=0,OOO]$ which lies within the interval $[LB_{OOO0},UB_{OOO0}]$  
where,
\begin{align*}
    LB_{OOO0} &=\mathbbm{E}[Y_{1}-Y_{0}|D=0,S_{0}=1,S_{1}=1, (Y_{1}-Y_{0})\leq F_{\Delta Y|011}^{-1}(p_{OOO0})]\\
    UB_{OOO0} &=\mathbbm{E}[Y_{1}-Y_{0}|D=0,S_{0}=1,S_{1}=1, (Y_{1}-Y_{0}) > F_{\Delta Y|011}^{-1}(1-p_{OOO0})]
\end{align*}
and $F_{\Delta Y|011}^{-1}(.)$ is the quantile function of the distribution of $Y_{1}-Y_{0}$ given $D=0,S_{0}=1,S_{1}=1$.
Under Assumption \ref{neg_monotone} (negative monotonicity) and \ref{Partialselection}(b) we can point identify $p_{OOO0}=\frac{\mathbbm{P}[S_{1}=1|S_{0}=1,D=1]}{\mathbbm{P}[S_{1}=1|S_{0}=1,D=0]}$ and $p_{OOO1}=1$ (Refer 
Lemma \ref{lemma:mono_weights_neg}). As $p_{OOO1}=1$ the decomposition of $\mathbbm{E}[Y_{1}-Y_{0}|D=1,S_{0}=1,S_{1}=1]$ given in Equation \ref{mixp} lead to point identification of $\mathbbm{E}[Y_{1}^{\ast}(1)-Y_{0}^{\ast}(1)|D=1,OOO]$ as $\mathbbm{E}[Y_{1}-Y_{0}|D=1,S_{0}=1,S_{1}=1]$.

Combining the bounds for $\mathbbm{E}[Y_{1}^{\ast}(1)-Y_{0}^{\ast}(0)|D=0,OOO]$ and point identification of $\mathbbm{E}[Y_{1}^{\ast}(0)-Y_{0}^{\ast}(0)|D=1,OOO]$ we find that the parameter of interest $\tau_{OOO} \in [LB_{\tau_{OOO}}, UB_{\tau_{OOO}}]$ where
\begin{align*}
   LB_{\tau_{OOO}} &= \mathbbm{E}[Y_{1}-Y_{0}|D=1,S_{0}=1,S_{1}=1]-UB_{OOO0}\\
   UB_{\tau_{OOO}} &=\mathbbm{E}[Y_{1}-Y_{0}|D=1,S_{0}=1,S_{1}=1]-LB_{OOO0}
\end{align*}
\end{proof}


\subsection{Proof of Lemma \ref{lemma_dids_rc}}\label{proof:lemma_dids_rc}
\begin{proof}
With repeated cross-sections, the na\"{\i}ve DiD estimand is given as 
{\small \begin{align*}
\tau^{rc}_{\textup{DiDs}} &\equiv \mathbbm{E}[Y|D=1, T=1, S_1=1]-\mathbbm{E}[Y|D=1, T=0, S_0=1] \\
&- \mathbbm{E}[Y|D=0, T=1, S_1=1]+ \mathbbm{E}[Y|D=0, T=0, S_0=1] \\
& = \mathbbm{E}[Y^\ast_1(1)|D=1, T=1, S_1=1]-\mathbbm{E}[Y_0^\ast(1)|D=1, T=0, S_0=1] \\
&- \mathbbm{E}[Y^\ast_1(0)|D=0, T=1, S_1=1]+ \mathbbm{E}[Y_0^\ast(0)|D=0, T=0, S_0=1] \\
& = \mathbbm{E}[Y^\ast_1(1)|D=1, T=1, S_1(1)=1]-\mathbbm{E}[Y_0^\ast(0)|D=1, T=0, S_0=1] \tag{Assumption \ref{no anti} } \\
&- \mathbbm{E}[Y^\ast_1(0)|D=0, T=1, S_1(0)=1]+\mathbbm{E}[Y_0^\ast(0)|D=0, T=0, S_0=1]
\end{align*}}
Consider the first expectation, $\mathbbm{E}[Y^\ast_1(1)|D=1, T=1, S_1(1)=1]$ which is
\begin{align*}
     = q_{OO11}\cdot \mathbbm{E}[Y_1^\ast(1)|D=1, T=1, OO]+(1-q_{OO11})\cdot \mathbbm{E}[Y_1^\ast(1)|D=1, T=1, NO]
\end{align*}
Similarly, the third expectation, $\mathbbm{E}[Y^\ast_1(0)|D=0, T=1, S_1(0)=1]$, can be written as 
\begin{align*}
     &= q_{OO01}\cdot \mathbbm{E}[Y^\ast_1(0)|D=0, T=1, OO]+(1-q_{OO01})\cdot \mathbbm{E}[Y^\ast_1(0)|D=0, T=1, ON] \\
     & = \mathbbm{E}[Y^\ast_1(0)|D=0, T=1, OO] \tag{Assumption \ref{monotone}}.
\end{align*} Finally, because of Assumption \ref{no anti}, 
\begin{align*}
    \mathbbm{E}[Y_0^\ast(0)|D=1, T=0, S_0=1] &= \mathbbm{E}[Y_0^\ast(0)|D=1, T=0, OO] \text{ and } \\
    \mathbbm{E}[Y_0^\ast(0)|D=0, T=0, S_0=1] &= \mathbbm{E}[Y_0^\ast(0)|D=0, T=0, OO]
\end{align*} 
Combining these four expectations together, we get
\begin{align*}
    \tau^{rc}_{\textup{DiDs}}& = q_{OO11}\cdot \mathbbm{E}[Y_1^\ast(1)|D=1, T=1, OO]+(1-q_{OO11})\cdot \mathbbm{E}[Y_1^\ast(1)|D=1, T=1, NO]\\
    &-\mathbbm{E}[Y_0^\ast(0)|D=1, T=0, OO] - \mathbbm{E}[Y^\ast_1(0)|D=0, T=1, OO]+\mathbbm{E}[Y_0^\ast(0)|D=0, T=0, OO] \\
    & = q_{OO11}\cdot \tau_{OO}+(1-q_{OO11})\cdot \{\mathbbm{E}[Y_1^\ast(1)|D=1,T=1, NO]-\mathbbm{E}[Y_1^\ast(0)|D=1, T=1, OO]\} \\
    &+ \bigg\{\mathbbm{E}[Y_1^\ast(0)|D=1, T=1, OO] - \mathbbm{E}[Y_0^\ast(0)|D=1, T=0, OO] \\
    &- \mathbbm{E}[Y_1^\ast(0)|D=0, T=1, OO]+\mathbbm{E}[Y_0^\ast(0)|D=0, T=0, OO]\bigg\} \\
    & = q_{OO11}\cdot \tau_{OO}+(1-q_{OO11})\cdot \{\mathbbm{E}[Y_1^\ast(1)|D=1,T=1, NO]-\mathbbm{E}[Y_1^\ast(0)|D=1, T=1, OO]\}) \tag{Assumption \ref{pt_rc}} \\
    & = q_{OO11}\cdot \tau_{OO}+(1-q_{OO11})\cdot \tau_{NO} \\
    &+(1-q_{OO11})\cdot (\{\mathbbm{E}[Y_1^\ast(0)|D=1,T=1, NO]-\mathbbm{E}[Y_1^\ast(0)|D=1, T=1, OO]\})
\end{align*}
\end{proof}

\subsection{Proof of Lemma \ref{lemma:pipartialid_rcs}}
\begin{proof}\label{proof:pipartialid_rcs} \ 
\begin{itemize}
    \item[(a)] For the treated group, $D=1$ in the post-treatment period, the Fr\'echet bounds are given as
	{\footnotesize	\begin{align}\label{eq:fb1_rcs}
 \mathbbm{P}[S_{1}(0)=1, S_{1}(1)=1|D=1, T=1]
  \in \left[\max\{\mathbbm{P}[S_{1}(0)=1|D=1, T=1]+\mathbbm{P}[S_{1}(1)=1|D=1, T=1]-1,0\},\right.\nonumber\\
\left.\min\{\mathbbm{P}[S_{1}(0)=1|D=1, T=1],\mathbbm{P}[S_{1}(1)=1|D=1, T=1]\}\right].
\end{align}}
Notice that 
\begin{equation}\label{eq:lemma2_11_rcs}
	\mathbbm{P}[S_{1}(1)=1|D=1, T=1]= \mathbbm{P}[S_{1}=1|D=1, T=1]
\end{equation}
which means that it can be directly identified from observed data.  Assumption \ref{Partialselection_rcs}(a) implies that
\begin{align}\label{eq:lemma2_12_rcs}
	\mathbbm{P}[S_{1}(0)=1|D=1, T=1] &=\mathbbm{P}[S_{1}(0)=1|D=0, T=1] \nonumber \\
	& = \mathbbm{P}[S_{1}=1|D=0, T=1] \\
    &\text{ OR } \nonumber 
    \end{align}
    \begin{align}\label{eq:lemma2_12b_rcs}
    \mathbbm{P}[S_{1}(0)=1|D=1, T=1] & = \mathbbm{P}[S_1(0)=1|D=0, T=1]-\mathbbm{P}[S_0(0)=1|D=0, T=0] \nonumber \\
    &+\mathbbm{P}[S_0(0)=1|D=1, T=0]
\end{align} 
Plugging \eqref{eq:lemma2_11_rcs} and \eqref{eq:lemma2_12_rcs} OR \eqref{eq:lemma2_12b_rcs} into \eqref{eq:fb1_rcs}, we get the observable bounds
{\footnotesize	\begin{align}\label{eq:fb1_obs_rcs}
    \mathbbm{P}[S_{1}(0)=1, S_{1}(1)=1|D=1, T=1]&\in \left[\max\{\mathbbm{P}[S_{1}=1|D=0, T=1]+\mathbbm{P}[S_{1}=1|D=1, T=1]-1,0\},\right.\nonumber\\
    &\left.\min\{\mathbbm{P}[S_{1}=1|D=0, T=1],\mathbbm{P}[S_{1}=1|D=1, T=1]\}\right] \\
    &\text{ OR } \nonumber 
    \end{align}
    \begin{align}\label{eq:fb1_obs_rcsb}
    \mathbbm{P}[S_{1}(0)=1, S_{1}(1)=1|D=1, T=1]&\in \big[\max\{\mathbbm{P}[S_1=1|D=0, T=1]-\mathbbm{P}[S_0=1|D=0, T=0] \nonumber\\
    &+\mathbbm{P}[S_0=1|D=1, T=0]+\mathbbm{P}[S_1=1|D=1, T=1]-1,0\},\big.\nonumber\\
    &\big.\min\{\mathbbm{P}[S_1=1|D=0, T=1]-\mathbbm{P}[S_0=1|D=0, T=0] \nonumber\\
    &+\mathbbm{P}[S_0=1|D=1, T=0],\mathbbm{P}[S_{1}=1|D=1, T=1]\}\big]. 
\end{align}}
\item[(b)] Analogously, using Assumption \ref{Partialselection_rcs}(b) for the $D=0$ group in the post-treatment period, gets us the observable bounds
	{\footnotesize	\begin{align}\label{eq:fb0_obs_rcs}
	 \mathbbm{P}[S_{1}(0)=1, S_{1}(1)=1|D=0, T=1]&\in \left[\max\{\mathbbm{P}[S_{1}=1|D=0, T=1]+\mathbbm{P}[S_{1}=1|D=1, T=1]-1,0\},\right.\nonumber\\
	&\left.\min\{\mathbbm{P}[S_{1}=1|D=0, T=1],\mathbbm{P}[S_{1}=1|D=1, T=1]\}\right] \\
    &\text{ OR } \nonumber 
     \end{align}
    \begin{align} \label{eq:fb0_obs_rcsb}
    \mathbbm{P}[S_{1}(0)=1, S_{1}(1)=1|D=0, T=1]&\in \big[\max\{\mathbbm{P}[S_1=1|D=1, T=1]-\mathbbm{P}[S_0=1|D=1, T=0] \nonumber\\
    &+\mathbbm{P}[S_0=1|D=0, T=0]+\mathbbm{P}[S_1=1|D=0, T=1]-1,0\},\big.\nonumber\\
    &\big.\min\{\mathbbm{P}[S_1=1|D=1, T=1]-\mathbbm{P}[S_0=1|D=1, T=0] \nonumber\\
    &+\mathbbm{P}[S_0=1|D=0, T=0],\mathbbm{P}[S_{1}=1|D=0, T=1]\}\big].
\end{align}} Equations \eqref{eq:fb1_obs_rcs}, \eqref{eq:fb1_obs_rcsb}, \eqref{eq:fb0_obs_rcs}, and \eqref{eq:fb0_obs_rcsb} taken together give us the desired result.
\end{itemize}	
\end{proof}

\subsection{Proof of Theorem \ref{nomono_bound_rcs}}\label{proof:nomono_bound_rcs}
\begin{proof}
The treatment effect of treated for OO group ($\tau_{OO}$) can be decomposed as follows,
\begin{align}\label{eq:mono_bound_rcs}
\tau_{OO} &= \mathbbm{E}[Y_{1}^{\ast}(1)-Y_{1}^{\ast}(0)|D=1, T=1, OO] \nonumber \\
&\text{Assumption \ref{pt_rc}} \nonumber \\
& = \mathbbm{E}[Y_1^{\ast}(1)|D=1, T=1, OO]-\mathbbm{E}[Y_1^\ast(0)|D=0, T=1, OO] \nonumber \\
&+ \mathbbm{E}[Y_0^\ast(0)|D=0, T=0, OO] - \mathbbm{E}[Y_0^\ast(0)|D=1, T=0, OO] \nonumber \\
&\text{Assumption \ref{no anti}} \nonumber \\
&= \mathbbm{E}[Y_1^{\ast}(1)|D=1, T=1, OO]-\mathbbm{E}[Y_1^{\ast}(0)|D=0, T=1, OO] \nonumber \\
&+ \mathbbm{E}[Y_0^{\ast}(0)|D=0, T=0, OO] - \mathbbm{E}[Y_0^{\ast}(1)|D=1, T=0, OO]
\end{align}

The decomposition of $\mathbbm{E}[Y|D=1,T=1, S=1]$ given in equation \eqref{mixp_rcs111} lead to partial identification of $\mathbbm{E}[Y_{1}^{\ast}(1)|D=1,T=1, OO]$ which lies within the interval $[LB_{OO11},UB_{OO11}]$ 
where,
\begin{equation}\label{LBUB_rcs}
    \begin{split}
        LB_{OO11} &=\mathbbm{E}[Y|D=1,T=1, S=1, Y\leq F_{ Y|111}^{-1}(q_{OO11})] \\
        UB_{OO11} &=\mathbbm{E}[Y|D=1,T=1, S=1, Y > F_{ Y|111}^{-1}(1-q_{OO11})]
    \end{split}
\end{equation}
and $F_{Y|111}^{-1}(.)$ is the quantile function of the distribution of $Y$ given $D=1,T=1, S=1$. Using Lemma \ref{lemma:pipartialid_rcs}, depending on which version of \ref{Partialselection_rcs}(a) \& (b) we use, we know that for any $\nu_d \in [\max\{\mathbbm{P}[S_1=1|D=d, T=1]+\mathbbm{P}[S_1=1|D=1-d, T=1]-1, 0\}, \min\{\mathbbm{P}[S_1=1|D=d, T=1], \mathbbm{P}[S_1=1|D=1-d, T=1]\}]$ OR $\nu_d \in [\max\{\mathbbm{P}[S_1=1|D=d, T=1]+\mathbbm{P}[S_1=1|D=1-d, T=1]-\mathbbm{P}[S_0=1|D=1-d, T=0]+\mathbbm{P}[S_0=1|D=d, T=0]-1, 0\}, \min\{\mathbbm{P}[S_1=1|D=d, T=1], \mathbbm{P}[S_1=1|D=1-d,T=1]-\mathbbm{P}[S_0=1|D=1-d, T=0]+\mathbbm{P}[S_0=1|D=d, T=0]\}]$, one can obtain the associated mixing probabilities/weights as $q_{gd1}(\nu_d) = \frac{\nu_d}{\mathbbm{P}[S_1=1|D=d, T=1]}$. Evaluating equations (\ref{LBUB_rcs}) for each value of $d$ at the least favorable values for $q_{gd1}(\nu_d)$ yields,
\begin{align}
    LB_{gd1}(\nu^{l}_{d}) &=\mathbbm{E}[Y|D=d,T=1, S=1, Y\leq F_{Y|d11}^{-1}(q_{gd1}(\nu^l_d))] \label{LBlow_rcs}\\
    UB_{gd1}(\nu^{l}_{d}) &=\mathbbm{E}[Y|D=d,T=1, S=1, Y > F_{Y|d11}^{-1}(1-q_{gd1}(\nu^l_d)))]. \label{UBlow_rcs}
\end{align} where $\nu_d^l$ is the lower bound of the identified set $\mathbbm{P}[S_{1}(0)=1, S_{1}(1)=1|D=d, T=1]$ for $d=0,1$. Therefore, combining the bounds for the treated and untreated subpopulations along with partial identification of the weights together with point identification of $\mathbbm{E}[Y_0^{\ast}(1)|D=1, T=0, OO]$ and $\mathbbm{E}[Y_0^{\ast}(0)|D=0, T=0, OO]$ implies that $\tau_{OO} \in [LB_{\tau_{OO}}, UB_{\tau_{OO}}]$ where 
{\small \begin{align*}
    LB_{\tau_{OO}} &= LB_{OO11}(v^l_1)-UB_{OO01}(v^l_0) - \mathbbm{E}[Y|D=1, T=0, S=1]+\mathbbm{E}[Y|D=0, T=0, S=1] \\
    UB_{\tau_{OO}} &= UB_{OOO1}(v^l_1)-LB_{OOO0}(v^l_0)- \mathbbm{E}[Y|D=1, T=0, S=1]+\mathbbm{E}[Y|D=0, T=0, S=1].
\end{align*}}
\end{proof}
\subsection{Proof of Lemma \ref{lemma:mono_weights_rcs}}\label{proof:mono_weights_rcs}
\begin{proof}
    Recall that to bound $\tau_{OO}$, we need to identify $q_{OO11}$ and $q_{OO01}$. We can express, 
    \begin{align*}
        q_{OO01} &= \frac{\mathbbm{P}[S_1(0)=1, S_1(1)=1|D=0, T=1]}{\mathbbm{P}[S_1(0)=1|D=0, T=1]} \\
        & = \frac{\mathbbm{P}[S_1(0)=1|D=0, T=1]}{\mathbbm{P}[S_1(0)=1|D=0, T=1]} \\
        & = 1
    \end{align*} where the second equality follows from positive monotonicity. Next, we can also express $q_{OO11}$ as
    \begin{align*}
        q_{OO11} &= \frac{\mathbbm{P}[S_1(0)=1, S_1(1)=1|D=1, T=1]}{\mathbbm{P}[S_1(1)=1|D=1, T=1]} \\
        & = \frac{\mathbbm{P}[S_1(0)=1|D=1, T=1]}{\mathbbm{P}[S_1(1)=1|D=1, T=1]}. 
    \end{align*} 
    where, again, the second equality applies positive monotonicity. Combining the two together, we get the desired result. 
 
\end{proof}
\subsection{Proof of Theorem \ref{mono_bound_rcs}}\label{proof Theorem mono_bound_rcs}
\begin{proof}
The treatment effect of treated for OO group ($\tau_{OO}$) can be decomposed as follows,
\begin{align}\label{eq:mono_bound_rcs}
\tau_{OO} &= \mathbbm{E}[Y_{1}^{\ast}(1)-Y_{1}^{\ast}(0)|D=1, T=1, OO] \nonumber \\
&\text{Assumption \ref{pt_rc}} \nonumber \\
& = \mathbbm{E}[Y_1^{\ast}(1)|D=1, T=1, OO]-\mathbbm{E}[Y_1^\ast(0)|D=0, T=1, OO] \nonumber \\
&+ \mathbbm{E}[Y_0^\ast(0)|D=0, T=0, OO] - \mathbbm{E}[Y_0^\ast(0)|D=1, T=0, OO] \nonumber \\
&\text{Assumption \ref{no anti}} \nonumber \\
 &= \mathbbm{E}[Y_1^{\ast}(1)|D=1, T=1, OO]-\mathbbm{E}[Y_1^{\ast}(0)|D=0, T=1, OO] \nonumber \\
&+ \mathbbm{E}[Y_0^{\ast}(0)|D=0, T=0, OO] - \mathbbm{E}[Y_0^{\ast}(1)|D=1, T=0, OO]
\end{align}

The decomposition of $\mathbbm{E}[Y|D=1,T=1, S=1]$ given in equation \eqref{mixp_rcs111} lead to partial identification of $\mathbbm{E}[Y_{1}^{\ast}(1)|D=1,T=1, OO]$ which lies within the interval $[LB_{OO11},UB_{OO11}]$ 
where,
\begin{align*}
    LB_{OO11} &=\mathbbm{E}[Y|D=1,T=1, S=1, Y\leq F_{ Y|111}^{-1}(q_{OO11})]\\
    UB_{OO11} &=\mathbbm{E}[Y|D=1,T=1, S=1, Y > F_{ Y|111}^{-1}(1-q_{OO11})]
\end{align*}
and $F_{Y|111}^{-1}(.)$ is the quantile function of the distribution of $Y$ given $D=1,T=1, S=1$.
Under Assumption \ref{monotone} (positive monotonicity) we can point identify $q_{OO01}=1$ (Refer 
Lemma \ref{lemma:mono_weights_rcs}) and depending on which version of Assumption \ref{Partialselection_rcs}(a) we use, $q_{OO11}=\frac{\mathbbm{P}[S_{1}=1|S_{0}=1,D=0]}{\mathbbm{P}[S_{1}=1|S_{0}=1,D=1]}$ OR $q = \frac{\mathbbm{P}[S_1(0)=1|=0, T=1]-\mathbbm{P}[S_0(0)=1|D=0, T=0]+\mathbbm{P}[S_0(0)=1|D=1, T=0]}{\mathbbm{P}[S_{1}=1|S_{0}=1,D=1]}$.
As $q_{OO01}=1$ the decomposition of $\mathbbm{E}[Y|D=0, T=1, S=1]$ given in equation \eqref{mixp_rcs011} leads to point identification of $\mathbbm{E}[Y_{1}^{\ast}(0)|D=0, T=1, OO]$ as $\mathbbm{E}[Y|D=0,T=1, S=1]$.

Combining the bounds for $\mathbbm{E}[Y_{1}^{\ast}(1)|D=1,T=1, OO]$ and point identification of $\mathbbm{E}[Y_{1}^{\ast}(0)|D=0, T=1, OO]$, $\mathbbm{E}[Y_0^\ast(1)|D=1, T=0, OO]$, and $\mathbbm{E}[Y_0^\ast(0)|D=0, T=0, OO]$ we find that the parameter of interest $\tau_{OO} \in [LB_{\tau_{OO}}, UB_{\tau_{OO}}]$ where
{\footnotesize \begin{align*}
     LB_{\tau_{OO}} &= LB_{OO11}-\mathbbm{E}[Y|D=0, T=1, S=1]-\mathbbm{E}[Y|D=1, T=0, S=1]+\mathbbm{E}[Y|D=0, T=0, S=1],\\
     UB_{\tau_{OO}} &=UB_{OO11}-\mathbbm{E}[Y|D=0, T=1, S=1]-\mathbbm{E}[Y|D=1, T=0, S=1]+\mathbbm{E}[Y|D=0, T=0, S=1]
\end{align*}}
\end{proof}

\subsection{Proof of Theorem \ref{Gen_bound_cond}}
\begin{proof}\label{proof:Gen_bound_cond}
    For the positive selection region ($r=+$), the conditional bounds are identified in the same manner as in Theorem \ref{mono_bound}. As we saw in Section \ref{relax mono}, $p_{OOO0}^+(X) = 1$ and $p^{+}_{OOO1}(X) = \dfrac{\mathbbm{P}(S_1=1\mid D=0, X, S_0=1)}{\mathbbm{P}(S_1=1\mid D=1, X, S_0=1)}$. Here, we only need to trim the distribution of $Y_1-Y_0$ for the observed sample of treated units, which contains a mixture of latent types OOO and ONO, whereas the untreated sample point-identifies $\mathbbm{E}[Y_1^\ast(0)-Y_0^\ast(0)\mid D=0, X, OOO]$.

    Then, combining this with the bounds for $\mathbbm{E}[Y_1^\ast(0)-Y_0^\ast(0)\mid D=1, X, OOO]$ presented in \eqref{bounds1_cond_nomono}, we can partially identify $\tau_{OOO}$ in this set as follows:
    \begin{align}\label{lbub_cond+}
        \tau_{OOO}(X) &\in [LB_{OOO1}(X)-\mathbbm{E}[Y_1-Y_0|D=0,X,S_{0}=1,S_{1}=1],  \nonumber \\
        & \qquad UB_{OOO1}(X)-\mathbbm{E}[Y_1-Y_0|D=0,X,S_{0}=1,S_{1}=1] \quad  \text{for } X \in \mathcal{X}_{+}.
    \end{align}
    A similar but reverse argument holds true for the covariate set where selection is negative. In this case, $p^{-}_{OOO1}(X)=1$ and $p^{-}_{OOO0}(X) = \dfrac{\mathbbm{P}(S_1=1\mid D=1, X, S_0=1)}{\mathbbm{P}(S_1=1\mid D=0, X, S_0=1)}$. Here, it suffices to trim the distribution of $Y_1 - Y_0$ for the observed sample of untreated units, which consists of a mixture of latent types OOO and OON whereas the treated sample point-identifies $\mathbbm{E}[Y_1^\ast(0)-Y_0^\ast(0)\mid D=1, X, OOO]$. Combining this with the bounds for $\mathbbm{E}[Y_1^\ast(0)-Y_0^\ast(0)\mid D=0, X, OOO]$ given in \eqref{bounds1_cond_nomono}, we get:
    \begin{align}\label{lbub_cond-}
        \tau_{OOO}(X) &\in \big[\mathbbm{E}[Y_1-Y_0|D=1, X, S_0=1, S_1=1] - UB_{OOO0}(X), \nonumber \\
         &\qquad \mathbbm{E}[Y_1-Y_0|D=1, X, S_0=1, S_1=1]- LB_{OOO0}(X) \big] \quad  \text{for } X \in \mathcal{X}_{-}.
    \end{align}
   In the covariate set with no selection, the treated and untreated samples point identify the expectations, $\mathbbm{E}[Y_1^\ast(0)-Y_0^\ast(0)\mid D=d, X, OOO]$ for both groups and so no trimming is required since $p_{OOOd}(X)=1.$ Therefore, 
    \begin{align}\label{lbub_cond0}
        LB_{OOO1}(X) &=  UB_{OOO1}(X) = \mathbbm{E}[Y_1-Y_0|D=1, X, S_0=1, S_1=1] \text{ and } \nonumber \\
        LB_{OOO0}(X) &=  UB_{OOO0}(X) = \mathbbm{E}[Y_1-Y_0|D=0, X, S_0=1, S_1=1] \quad  \text{for } X \in \mathcal{X}_{0}.
    \end{align}
    implying that $\tau_{OOO}$ is point-identified.

    The unconditional ATT for the OOO type is then derived by aggregating the conditional bounds in equations \eqref{lbub_cond+}–\eqref{lbub_cond0} across covariate regions. Specifically, we integrate over the distribution of $X^r$ for $r \in \{+, -, 0\}$ using the formula provided in equations \eqref{nomonox_lb_true} and \eqref{nomonox_ub_true} for the lower and upper bounds, respectively.
\end{proof}

\section{Additional Material}\label{additional_material}
\subsection{Consistency, Asymptotic Distribution, and Bootstrap Validity for the Bounds Estimators} \label{Asymptotics}
The moment conditions identifying the bounds of interest in the main text involve indicator functions and maximum operators, which introduce non-smoothness into the estimating equations. Nevertheless, uniform convergence and consistency of the estimator can still be established using the arguments in \citet{chen2003estimation}, which provide general results for the consistency and asymptotic distribution M-estimators involving non-smooth functions (like indicator functions) and plug-in estimates (such as the quantiles or trimming proportions). In particular, \cite{chen2003estimation} Theorems 1 and 2 can be applied directly to our setting. Furthermore, in section \ref{appendix:Bootstrap} we show the validity of the bootstrap in our setting by establishing that the conditions on \cite{chen2003estimation} Theorem B holds.

\begin{theorem}[\cite{chen2003estimation} Theorem 1]\label{chen2003theorem1}
Suppose that $\theta_0 \in \Theta$ satisfies $M\left(\theta_0, h_0\right)=0$, and that:
\begin{enumerate}
    \item[(1.1)] $\left\|M_n(\widehat{\theta}, \widehat{h})\right\| \leq \inf _{\theta \in \Theta}\left\|M_n(\theta, \widehat{h})\right\|+o_p(1)$;
    \item[(1.2)] for all $\delta>0$, there exists $\epsilon(\delta)>0$ such that $\inf _{\|\theta-\theta_0 \|>\delta} \| M\left(\theta, h_0\right) \| \geq \epsilon(\delta)>0$;
    \item[(1.3)] uniformly for all $\theta \in \Theta, M(\theta, h)$ is continuous (with respect to the metric $\|\cdot\|_{\mathcal{H}}$ ) in $h$ at $h=h_0$;
    \item[(1.4)] $\left\|\widehat{h}-h_0\right\|_{\mathcal{H}}=o_p(1)$;
    \item[(1.5)] for all sequences of positive numbers $\left\{\delta_n\right\}$ with $\delta_n=o(1)$,
$$
\sup _{\theta \in \Theta,\|h-h_0\|_\mathcal{H} \leq \delta_n} \frac{\left\|M_n(\theta, h)-M(\theta, h)\right\|}{1+\left\|M_n(\theta, h)\right\|+\|M(\theta, h)\|}=o_p(1)
$$    
\end{enumerate}
Then, $\widehat{\theta}-\theta_0=o_p(1)$.
Remark 1: (i) Condition (1.5) is implied by condition (1.5'): For all positive sequences $\delta_n=o(1)$,
$$
\sup _{\theta \in \Theta,\left|h-h_0\right|_\mathcal{H}  \leq \delta_n}\left\|M_n(\theta, h)-M(\theta, h)\right\|=o_p(1) .
$$
\end{theorem}

And for the asymptotic distribution,

\begin{theorem}[\cite{chen2003estimation} Theorem 2]
Suppose that $\theta_0 \in \Theta_\delta\equiv \{\theta \in \Theta:\|\theta-\theta_0\|<\delta\}$ satisfies $M(\theta_0, h_0) = 0$, that $\hat{\theta} - \theta_0 = o_p(1)$, and that:
\begin{align*}
(2.1)\quad & \|M_n(\hat{\theta}, \hat{h})\| = \inf_{\theta \in \Theta_\delta} \|M_n(\theta, \hat{h})\| + o_p(1/\sqrt{n}). \\
(2.2)\quad & \text{(i) The ordinary derivative }  \Gamma_1(\theta, h_0)  \text{   in }\theta \text{ of } M(\theta, h_0) \text{ exists for } \theta \in \Theta_\delta, \text{ and is} \\
& \quad \text{continuous at } \theta = \theta_0; \text{ (ii) the matrix } \Gamma_1 \equiv \Gamma_1(\theta_0, h_0) \text{ is of full (column) rank}. \\
(2.3)\quad & \text{For all } \theta \in \Theta_\delta \text{ the pathwise derivative } \Gamma_2(\theta, h_0)[h - h_0] \text{ of } M(\theta, h_0) \text{ exists in all} \\
& \quad \text{directions } [h - h_0] \in \mathcal{H}; \text{ and for all } (\theta, h) \in \Theta_{\delta_n} \times \mathcal{H}_{\delta_n}, \text{ with a positive sequence } \delta_n = o(1): \\
& \quad \text{(i) } \|M(\theta, h) - M(\theta, h_0) - \Gamma_2(\theta, h_0)[h - h_0]\| \leq c \|h - h_0\|^2_\mathcal{H} \text{ for a constant } c \geq 0; \\
& \quad \text{(ii) } \|\Gamma_2(\theta, h_0)[h - h_0] - \Gamma_2(\theta_0, h_0)[h - h_0]\| \leq o(1)\delta_n. \\
(2.4)\quad & \hat{h} \in \mathcal{H} \text{ with probability tending to one; and } \|\hat{h} - h_0\|_\mathcal{H} = o_p(n^{-1/4}). \\
(2.5)\quad & \text{For all sequences of positive numbers } \{\delta_n\} \text{ with } \delta_n = o(1), \\
& \quad \sup_{\|\theta - \theta_0\| \leq \delta_n, \|h - h_0\|_\mathcal{H} \leq \delta_n} 
\frac{\sqrt{n}\|M_n(\theta, h) - M(\theta, h) - M_n(\theta_0, h_0)\|}{1 + \sqrt{n} \{\|M_n(\theta, h)\| + \|M(\theta, h)\|\}} = o_p(1). \\
(2.6)\quad & \text{For some finite matrix } V_1, \quad \sqrt{n} \left[ M_n(\theta_0, h_0) + \Gamma_2(\theta_0, h_0)[\hat{h} - h_0] \right] \xrightarrow{d} \mathcal{N}[0, V_1]. \\
& \text{Then, } \sqrt{n}(\hat{\theta} - \theta_0) \xrightarrow{d} \mathcal{N}(0, \Omega), \text{ where } \Omega \equiv (\Gamma_1' W \Gamma_1)^{-1} \Gamma_1' W V_1 W \Gamma_1 (\Gamma_1' W \Gamma_1)^{-1}.
\end{align*}
\end{theorem}

Now we proceed to check that the conditions in \cite{chen2003estimation} Theorems 1 and 2 hold in our setting and establish the properties of the estimators.

\subsubsection{Consistency of the Trimmed Mean Estimator}
\begin{lemma}[Consistency]
     Let $\Delta Y=Y_1-Y_0$ have a bounded support, and suppose $\mathbbm{P}[S_1=1|S_0=1,D=1]>0$ and $p_0>0$, then
    \begin{align*}
       \widehat{LB}_{\tau_{OOO}}  \xrightarrow{p}  LB_{\tau_{OOO}} \quad &and \quad \widehat{UB}_{\tau_{OOO}} \xrightarrow{p}   UB_{\tau_{OOO}} 
    \end{align*}
\end{lemma}
\begin{proof}
First, we prove the consistency of the trimmed mean estimator by establishing the arguments in Theorem 1 of \citet{chen2003estimation}.  Proving the consistency of the trimmed mean for the treatment group and only for the lower bound is sufficient, as a symmetric argument will follow for the other trimmed mean estimators, which are of the same form. Let \( Z_i = \big( Y_{i0}, Y_{i1}, D_{i}, S_{i0}, S_{i1} \big) \in \mathcal{Z} \). Let us denote $\theta_0=\mathbbm{E}[Y_{1}-Y_{0}|D=1,S_{0}=1,S_{1}=1, (Y_{1}-Y_{0})\leq F_{\Delta Y|111}^{-1}(p_{OOO1}(v^{l}_{1}))]$ as the true parameter of interest. For notational simplicity, we will denote $q_{p_0}=F_{\Delta Y|111}^{-1}(p_{OOO1}(v^{l}_{1}))$, so that the true parameter of interest can be rewritten simply as $\theta_0=\mathbbm{E}[\Delta Y|D=1,S_{0}=1,S_{1}=1, \Delta Y\leq q_{p_0}]$. Suppose $\theta_0 \in \Theta=[a,b] \subset \mathbb{R}$, where we take the parameter space to be the bounds of the support for the trimmed mean. The consistency follows from applying Theorem 1  of \citet{chen2003estimation}. Define the moment function,
\begin{align*}
     g(Z, \theta, h) = (\Delta Y - \theta) \cdot \mathbbm{1}\{\Delta Y \leq q_p,\, D = 1,\, S_0 = 1,S_1=1\}, \text{ with } h=(p,q_p)
\end{align*}
where $p$ is the trimming proportion and is computed as, $p = \max\left\{ 0, \frac{\alpha+\beta-1}{\beta} \right\}$, with $\alpha=\mathbbm{P}[S_{1}=1|D=0, S_{0}=1]$ and $\beta=\mathbbm{P}[S_{1}=1|D=1, S_{0}=1]$ and $q_p$ the $p^{th}$ quantile of the distribution $\Delta Y|D=1,S_0=1,S_1=1$. The sample moment can be defined as, $M_n(\theta, \hat{h}) = E_n [g(Z_i, \theta, \hat{h})] \equiv \frac{1}{n}\sum_{i=1}^{n}g(Z_i, \theta, \hat{h})$.
The estimator of $\theta_0$ as given in section \ref{estimation} is a solution to $\min_{\theta \in \Theta} \left\| M_n(\theta, \hat{h}) \right\|$. The true parameter value $\theta_0 \in \Theta$ satisfies the population moment condition $M(\theta_0, h_0) = \mathbbm{E}[g(Z, \theta_0, h_0)] = 0$. Now, let us verify each condition in Theorem 1 of \citet{chen2003estimation}.
\begin{condition}[1.1]
     \[
    \left\|M_n(\widehat{\theta}, \widehat{h})\right\| \leq \inf _{\theta \in \Theta}\left\|M_n(\theta, \widehat{h})\right\|+o_p(1)
    \]
\end{condition}
This ensures that the estimator behaves like a reasonable approximation to the true solution of the sample moment problem. In our case $\left\|M_n(\theta, \widehat{h})\right\|$ is a scalar linear in $\theta$. The unique solution to $\left\|M_n(\widehat{\theta}, \widehat{h})\right\|=0$ exactly minimizes the criterion. Thus, this condition holds by construction. In other words, this condition is satisfied since the estimator $\hat{\theta}$ solves the moment condition without approximation.
   
\begin{condition}[1.2]
   For all $\delta>0$, there exists $\epsilon(\delta)>0$ such that $\inf _{\mid \theta-\theta_0 \mid>\delta} \| M\left(\theta, h_0\right) \| \geq \epsilon(\delta)>0$;  
\end{condition}

This identification condition ensures that the population moment function $M\left(\theta, h_o\right) $ has a unique zero at $\theta=\theta_0$. Thus, the parameter is identifiable from the population moment.

The population moment function evaluated at $h_{0}$, is given by
 \[
    M(\theta, h_0) = \mathbbm{E}[(\Delta Y - \theta) \cdot \mathbf{1}\{\ \Delta Y \leq q_{p_0}, D = 1, S_0 = 1,S_1=1\}]
    \]
    
    As  \( \theta_0 = \mathbbm{E}[\Delta Y \mid \Delta Y \leq q_{p_0},  D = 1, S_0 = 1,S_1=1] \), and define $\pi_0 = \mathbbm{P}(\Delta Y \leq  q_{p_0},  D = 1, S_0 = 1,S_1=1) > 0$. Then $M(\theta, h_0) = (\theta_0 - \theta) \cdot \pi_0$. Then, take  $\epsilon(\delta)=(\delta +1)\pi_0$ to obtain
    \[
    \|M(\theta, h_0)\| = |\theta - \theta_0| \cdot \pi_0 \geq \epsilon(\delta) \quad \text{for all } |\theta - \theta_0| > \delta
    \] 
    Thus, condition 1.2 is satisfied.
   \begin{condition}[1.3]
       Uniformly for all $\theta \in \Theta, M(\theta, h)$ is continuous (with respect to the metric $\|\cdot\|_{\mathcal{H}}$ ) in $h$ at $h=h_o$;
   \end{condition} 
   It ensures that the population moment function $M(\theta, h_0)$ has no jumps or discontinuities for $h$ in the neighborhood of $h_0$. Formally,
\[
\forall \varepsilon > 0,\ \exists \ \delta > 0 \text{ such that }
\|h - h_0\|_{\mathcal{H}} < \delta
\;\Longrightarrow\;
\sup_{\theta \in \Theta}\bigl|M(\theta,h) - M(\theta,h_0)\bigr| < \varepsilon.
\]
   \[
\begin{aligned}
\text{Fix any }\theta\in\Theta.\quad 
&|M(\theta,h)-M(\theta,h_0)| \\
&= \Bigl|\mathbbm{E}\bigl[(\Delta Y-\theta)\{\mathbb{I}\{ \Delta Y\le q_p\}-\mathbb{I}\{ \Delta Y\le q_{p_0}\}\}\mathbb{I}\{D=1,S_0=1,S_1=1\}\bigr]\Bigr|\\
&\le \mathbbm{E}\bigl[\,| \Delta Y-\theta|\;\bigl|\mathbb{I}\{ \Delta Y\le q_p\}-\mathbb{I}\{\Delta Y\le q_{p_0}\}\bigr|\;\mathbb{I}\{D=1,S_0=1,S_1=1\}\bigr].
\end{aligned}
\] 
Since $\Theta$ is compact, let $B = \max_{\theta\in\Theta}|\theta| \;<\;\infty$. Then, for every $\theta$, $| \Delta Y-\theta|\,\mathbb{I}\{D=1,S_0=1,S_1=1\}\le |\Delta Y| + B$. Hence,
\[
\sup_{\theta \in \Theta}\bigl|M(\theta,h) - M(\theta,h_0)\bigr|\leq \mathbbm{E}\bigl[\,(|\Delta Y| + B)\;\bigl|\mathbb{I}\{ \Delta Y\le q_p\}-\mathbb{I}\{\Delta Y\le q_{p_0}\}\bigr|\;\bigr]
\]
Then, assuming continuous $F_{\Delta Y|111}$, when $q_{p}\xrightarrow{} q_{p_0}$ we have almost surely, $\mathbb{I}\{ \Delta Y\le q_p\} \xrightarrow{}\mathbb{I}\{\Delta Y\le q_{p_0}\}$. Thus, $|\mathbb{I}\{ \Delta Y\le q_p\}-\mathbb{I}\{\Delta Y\le q_{p_0}\}|\xrightarrow{} 0 \quad a.s$.
Now define $W_q=(|\Delta Y| + B)\;\bigl|\mathbb{I}\{ \Delta Y\le q_p\}-\mathbb{I}\{\Delta Y\le q_{p_0}\}\bigr|$. Then $|W_q|\leq |\Delta Y| + B=G$ and $\mathbbm{E}[G]<\infty$ as by assumption $\mathbbm{E}[|\Delta Y|]< \infty$. Also $W_q \xrightarrow{} 0 \quad a.s $ as $q_{p}\xrightarrow{} q_{p_0}$.  Then by the dominated convergence theorem we have,
\[
\lim_{q_p\to q_{p_0}}\mathbbm{E}\bigl[W_q\bigr]
\;=\;
\mathbbm{E}\Bigl[\lim_{q_p\to q_{p_0}}W_q\Bigr]
\;=\;0.
\]
Our matrix $\|h-h_0\|_{\mathcal{H}}=\max\{|p-p_0|,|q_p-q_{p_0}|\}$ then $\|h-h_0\|_{\mathcal{H}}<\delta \implies |p-p_0|<\delta $ and $|q_p-q_{p_0}|<\delta$

Given any $\epsilon >0$, choose $\delta>0$ so small that whenever $|q_p-q_{p_0}|<\delta$, $\mathbbm{E}\bigl[W_q\bigr] < \epsilon$. Then for all h with $\|h-h_{0}\|_{\mathcal{H}}<\delta$ we have,
\[
\sup_{\theta\in\Theta}\bigl|M(\theta,h)-M(\theta,h_{0})\bigr|
\;\le\;\mathbbm{E}[W_{q}]<\varepsilon
\] 
Thus, condition 1.3 holds.
 \begin{condition}[1.4]
     $\left\|\widehat{h}-h_0\right\|_{\mathcal{H}}=o_p(1)$;
 \end{condition}
This ensures the consistency of the nuisance parameters $h=(p,q)$.\\
   $\hat{p} = \max\left\{ 0, \frac{\hat{\alpha}+\hat{\beta}-1}{\hat{\beta}} \right\}$, with $\hat{\alpha}=\mathbbm{\hat{P}}[S_{1}=1|D=0, S_{0}=1]$ and $\hat{\beta}=\mathbbm{\hat{P}}[S_{1}=1|D=1, S_{0}=1]$  
by law of large numbers $\hat{\alpha} \xrightarrow{p} \alpha_0 $ and $ \hat{\beta} \xrightarrow{p} \beta_0$. Assume $\beta_0=\mathbbm{P}[S_{1}=1|D=1, S_{0}=1]>0$. Then by Slutsky’s Theorem and continuity of the ratio we have $\frac{\hat{\alpha}+\hat{\beta}-1}{\hat{\beta}}\xrightarrow{p}\frac{\alpha_0+ \beta_0-1}{\beta_0}$. As the maximum operator is continuous, by the continuous mapping theorem, we have $\hat{p}\xrightarrow{p} p_0$.  By the Glivenko–Cantelli theorem for conditional samples, the empirical CDF converges uniformly to the true CDF. Then, assuming  $F_{\Delta Y|111}$  continuous in a neighbourhood of $q_{p_0}$ and $f_{\Delta Y|111}(q_{p_0})>0$ , together with the consistency of $\hat{p} $ , by continuous‐mapping theorem (or by standard quantile consistency) \citep{vandervaart1998}, we have $\hat{q}_p\xrightarrow{p}q_{p_0}$. Therefore, $\hat{h}\xrightarrow{p} h_0$. 
Since both components satisfy
\[
|\hat{p} - p_0| = o_p(1),
\qquad
|\hat{q}_p - q_{p_0}| = o_p(1),
\] 
we get
\[
\|\hat h - h_0\|_{\mathcal H}
= \max\bigl\{\,|\hat{p} - p_0| ,\;|\hat{q}_p - q_{p_0}| \,\bigr\}
= o_p(1).
\]
Thus, Condition 1.4 holds.
\begin{condition}[1.5]
    for all sequences of positive numbers $\left\{\delta_n\right\}$ with $\delta_n=o(1)$,
$$
\sup _{\theta \in \Theta,\left|h-h_0\right|_{\mathcal{H}} \leq \delta_n} \frac{\left\|M_n(\theta, h)-M(\theta, h)\right\|}{1+\left\|M_n(\theta, h)\right\|+\|M(\theta, h)\|}=o_p(1)$$
\end{condition}

This is a uniform convergence condition that controls the empirical approximation error $M_n(\theta, h)-M(\theta, h)$ uniformly in $\theta \in \Theta$ and locally in $h$ near $h_o$. This condition ensures that the empirical moment function $M_n(\theta, h)$ converges uniformly to its population counterpart $M(\theta, h)$ over all $\theta \in \Theta$ even when the nuisance function $h$ varies in a neighbourhood of $h_o$.
This condition is implied by,
$$
\sup _{\theta \in \Theta,\left|h-h_0\right|_\mathcal{H}  \leq \delta_n}\left\|M_n(\theta, h)-M(\theta, h)\right\|=o_p(1) .
$$
Define $M(\theta, h) = \mathbbm{E}[g(Z, \theta, h)]$. and $M_n(\theta, h) = \mathbbm{E}_n[g(Z, \theta, h)]\equiv \frac{1}{n}\sum_{i=1}^{n}g(Z_i, \theta, h)$.  
The class of the moment function $\mathcal{F}=\{g(Z, \theta, h):\theta \in \Theta,\|h-h_0\|<\delta_n\}$ is VC-subgraph as the indicator $\mathbbm{1}\{\Delta Y \leq q_p,\, D = 1,\, S_0 = 1,S_1=1\}$ is a subgraph of the threshold class which is VC of finite dimension and multiplying by the linear term $(\Delta Y - \theta)$  over the compact $\Theta$ preserves VC-subgraph property \citep{vandervaart1996}.  Since $\Theta$ is compact and $\Delta Y$ is bounded the envelop $G=\sup_{\theta \in 
 \Theta}|\Delta y-\theta| \leq  C$.  Thus, $\mathcal{F}$ is a VC-subgraph with bounded envelope. Therefore, this class is a uniform Glivenko–Cantelli class, which imply that sample averages converge uniformly to expectations \citep{vandervaart1996}. The class is equicontinuous (i.e this means that small changes in the nuisance component $h$ lead to uniformly small changes in the function values) in $h$ with high probability, as $F_{\Delta Y |111}$ is continuous and strictly increasing. Thus, this ensures that $M_n(\theta,h)$ converges uniformly to $M(\theta,h)$ over $\theta \in \Theta$ and small neighborhoods of $h_0$, thus verifying condition 1.5.
\newline
As all conditions in \cite{chen2003estimation} Theorem 1 are satisfied, it follows $\hat{\theta}\xrightarrow{p}\theta_0$.

Note that the estimator for the bound  $LB_{\tau_{OOO}}$ as defined in Theorem \ref{nomono_bound} is a difference of two trimmed means of the same form. Since we have established the consistency of a general trimmed mean estimator, it follows from Slutsky’s theorem that the difference is also consistent. Therefore,
\[
\widehat{LB}_{\tau_{OOO}}  \xrightarrow{p}  LB_{\tau_{OOO}} 
\]
A symmetric argument can be developed for the upper bound. Hence, $\widehat{UB}_{\tau_{OOO}}  \xrightarrow{p}  UB_{\tau_{OOO}}$

The estimators for the bounds in Theorem \ref{mono_bound} are a difference of trimmed mean and a conditional mean. Therefore, similar arguments follow to conclude consistency of those bound estimators to their true counterpart.

\end{proof}

\subsubsection{Asymptotic Normality of the Trimmed Mean Estimator}
\begin{lemma}[Asymptotic normality]
     Let $\Delta Y=Y_1-Y_0$ have a bounded support, and suppose $\mathbbm{P}[S_1=1|S_0=1,D=1]>0$ and $p_0>0$, then
    \begin{align*}
      \sqrt{n} (\widehat{LB}_{\tau_{OOO}}-LB_{\tau_{OOO}}) \xrightarrow{d} N(0,\Omega_{LB})  \quad &and \quad \sqrt{n} (\widehat{UB}_{\tau_{OOO}}-UB_{\tau_{OOO}}) \xrightarrow{d} N(0,\Omega_{UB})\\
    \end{align*}
\end{lemma}

\begin{proof}
First, we look at the conditions of Theorem 2 of \citet{chen2003estimation} to establish the asymptotic normality of the trimmed mean estimator.

\begin{condition}[2.1]
\[\|M_n(\hat{\theta}, \hat{h})\| = \inf_{\theta \in \Theta_\delta} \|M_n(\theta, \hat{h})\| + o_p(1/\sqrt{n})\]
\end{condition}
This condition requires that the estimator \( \hat{\theta} \) approximately minimizes the sample moment function \( M(\theta, \hat{h}) \), up to an error of order \( o_p(n^{-1/2}) \). It ensures that \( \hat{\theta} \) behaves like a Z-estimator and justifies applying asymptotic linearization.
This condition is satisfied by construction as the estimator $\hat{\theta}$ solves the moment condition without approximation.\\

Recall that our trimmed mean estimator $\hat{\theta}$  is constructed to satisfy the sample moment condition $M_n(\hat{\theta}, \hat{h})=0$.  Then $\|M_n(\hat{\theta}, \hat{h})\|=0$ and $\|M_n(\theta, \hat{h})\|\geq 0$ for all $\theta$, so $\hat{\theta}$ is an exact global minimizer of $\|M_n(\theta, \hat{h})\|$. In particular, restricting the infimum to the local neighborhood $\{\|\theta-\theta_0\|\leq \delta\}$ cannot decrease it, and hence,\[\|M_n(\hat{\theta}, \hat{h})\|=\inf_{\|\theta-\theta_0\|\le\delta} \|M_n(\theta, \hat{h})\|\]
Allowing for any negligible numerical or approximation error of order  $r_n=o_p(n^{-1/2})$ when solving the estimating equations yields,\[\|M_n(\hat{\theta}, \hat{h})\|=\inf_{\|\theta-\theta_0\|\le\delta} \|M_n(\theta, \hat{h})\|+ o_p(n^{-1/2})\]
which is exactly Condition 2.1. 
\begin{condition}[2.2]
   (i)The ordinary derivative $\Gamma_1(\theta, h_0)$ in $\theta $ of $M(\theta, h_0)$ exists for $\theta \in \Theta_\delta$, and is
continuous at $\theta = \theta_0$; (ii) the matrix $\Gamma_1 \equiv \Gamma_1(\theta_0, h_0)$ is of full (column) rank.
\end{condition}

 This condition says that $\Gamma_1(\theta, h_0)$ which is the ordinary derivative of the population moment function $ M(\theta, h_0)=\mathbbm{E}[g(Z,\theta,h_0)]$ with respect to $\theta$ exist for all $\theta$ in a neighborhood of $\theta_0$ (i.e. $\Theta_\delta$) and it is continuous at $\theta=\theta_0$. Further, $\Gamma_1(\theta_0, h_0)$ is of full column rank. This ensures local identification (small changes in $\theta$ induce small changes in $M(\theta, h_0)$).
 \newline
 In our case the population moment $M(\theta, h_0)=\mathbbm{E}[(\Delta Y - \theta) \cdot \mathbbm{1}\{\Delta Y \leq q_{p_0},\, D = 1,\, S_0 = 1,S_1=1\}]$, where $\theta$ is a scalar and appears linearly in the integrand. Its derivative with respect to $\theta$,
 \begin{align*}
     \Gamma_1(\theta, h_0) &= \frac{\partial }{\partial \theta} M(\theta, h_0)\\
    & =\mathbbm{E}[(-1)(\mathbbm{1}\{\Delta Y \leq  q_{p_0},\, D = 1,\, S_0 = 1,S_1=1\}]\\
    &= -\mathbbm{P}(\Delta Y \leq q_{p_0},\, D = 1,\, S_0 = 1,S_1=1)
 \end{align*}
 This derivative exists for all $\theta$ and is continuous in $\theta$ as it is a constant which doesn't depend on $\theta$. Further, this is strictly negative as long as $\mathbbm{P}(\Delta Y \leq q_{p_0},\, D = 1,\, S_0 = 1,S_1=1)> 0$. Thus, it is full rank. 
\begin{condition}[2.3]
    For all $\theta \in \Theta_\delta$ the pathwise derivative $\Gamma_2(\theta, h_0)[h - h_0]$ of $M(\theta, h_0)$ exists in all directions $[h - h_0] \in \mathcal{H}$; and for all $(\theta, h) \in \Theta_{\delta_n }\times \mathcal{H}_{\delta_n}$, with a positive sequence $\delta_n = o(1)$: 
(i) $\|M(\theta, h) - M(\theta, h_0) - \Gamma_2(\theta, h_0)[h - h_0]\| \leq c \|h - h_0\|^2_\mathcal{H}$ for a constant $c \geq 0$; (ii) $\|\Gamma_2(\theta, h_0)[h - h_0] - \Gamma_2(\theta_0, h_0)[h - h_0]\| \leq o(1)\delta_n$. 
\end{condition}
 This deals with the differentiability in the nuisance parameter $h$.
For all $\theta \in \Theta_\delta$ the pathwise derivative  $\Gamma_2(\theta, h_0)[h - h_0]$ of $ M(\theta, h_0)$ exists in all directions $ [h - h_0]$
\newline
 The moment function $M(\theta, h_0)$ is directionally differentiable with respect to the nuisance parameter $h$.

We have to show that for any direction $v \in \mathbbm{R}^2$ the directional derivative $\Gamma_2(\theta, h_0)[v] := \lim_{t \to 0^+} \frac{M(\theta, h_0 + tv) - M(\theta, h_0)}{t}$ exists. This limit exists if we can establish that, 
\[
M(\theta, h_0 + tv) = M(\theta, h_0) + t \cdot \Gamma_2(\theta, h_0)[v] + o(t)
\] 
By standard quantile asymptotics \citep{vandervaart1998} if the conditional density $f_{\Delta Y|111}$ is continuous and strictly positive at a neighborhood of $q_{p_0}$ then the quantile map $\phi:  p \xrightarrow{} q_{p}$ is Hadamard differentiable at $p_0$, meaning it's changes in $p$ can be linearised. The Lemma 21.3 \citep{vandervaart1998} gives the Hadamard directional derivative of the quantile map at $p_0$ in the direction $v$, $\phi'(p_0)[v]=\frac{-v}{f_{\Delta Y|111}(q_{p_0})}$. 

Consider the perturbation $h_t=h_0+tv=(p_0+tv,q_{p_0+tv})$. So for small \( t \) using Taylor approximation, we have:
\begin{align} \label{linerar_change}
    q_{p_0 + t v} &= q_{p_0} - t \cdot \frac{v}{f_{\Delta Y|111}(q_{ p_0})} + o(t)
\end{align}

So we are perturbing the threshold from $q_{p_0}$ to a nearby value,  the moment depends on this threshold inside an indicator function.  Let us expand the moment function,
\begin{align*}
   M(\theta, h) &=\mathbbm{E}[(\Delta Y - \theta) \cdot \mathbbm{1}\{\Delta Y \leq q_p,\, D = 1,\, S_0 = 1,S_1=1\}] \\
   &= \mathbbm{E}[(\Delta Y - \theta) \cdot \mathbbm{1}\{\Delta Y \leq q_p\} \cdot \mathbbm{1}\{ D = 1,\, S_0 = 1,S_1=1\}] 
\end{align*}
This expectation can be written more precisely as an integral over the joint density,
\[
 M(\theta, h)= \int \ (\Delta y-\theta) \mathbbm{1}\{\Delta y \leq q_p\} f_{\Delta y,D=1,S_0=1,S_1=1}(\Delta y).d \Delta y
\]
Because $D=1,S_0=1,S_1=1$ are fixed this reduces to,
\[
M(\theta, h)= \int_{-\infty}^{q_p} (\Delta y-\theta) f_{\Delta Y|111}(\Delta Y). \mathbbm{P}[D=1,S_0=1,S_1=1] \cdot d \Delta y
\] 

by Leibniz’s Rule for differentiation under the integral sign with variable upper bound. we have,
\begin{align} \label{derivate}
   \frac{d}{dq_p} M(\theta, h) &=\mathbbm{P}[D=1,S_0=1,S_1=1]\frac{d}{dq_p}\int_{-\infty}^{q_p} (\Delta y-\theta)f_{\Delta Y|111}(\Delta y)\,d \Delta y \nonumber \\
   &= \mathbbm{P}[D=1,S_0=1,S_1=1](q_p-\theta)f_{\Delta Y|111}(q_p) \nonumber 
\end{align}

So a slight shift in the quantile threshold $q_{p_0+vt}-q_{p_0}\approx - t \cdot \frac{v}{f_{\Delta Y|111}(y_{ p_0})}$ (refer to Equation \ref{linerar_change}) cause a change in the moment of;
\begin{align*}
    &\mathbbm{P}[D=1,S_0=1,S_1=1](q_{p_0}-\theta)f_{\Delta Y|111}(q_{p_0})\cdot \left(-t\right) \cdot \frac{v}{f_{\Delta Y|111}(q_{p_0})}\\
    &=-\mathbbm{P}[D=1,S_0=1,S_1=1]tv(q_{p_0}-\theta)
\end{align*}
Therefore, 
\[
M(\theta, h_t)-M(\theta, h_0) = -tv(q_{p_0}-\theta) \cdot \mathbbm{P}[D=1,S_0=1,S_1=1] +o(t).
\]
Then, we have
\begin{align} \label{directional}
    M(\theta, h_t)-M(\theta, h_0) &=t\cdot \Gamma_2(\theta, h_0)[v]+o(t)
\end{align}
Thus, the directional derivative exists and is,
\begin{align}\label{directderivative}
    \Gamma_2(\theta, h_0)[v]&=-v(q_{p_0}-\theta) \cdot \mathbbm{P}[D=1,S_0=1,S_1=1]\\
&=-v\mathbbm{E}[(q_{p_0}-\theta)\mathbbm{1}\{D=1,S_0=1,S_1=1\}]\nonumber
\end{align}
Now, let's move on to the condition of linear approximation. We have to show that for all directions $v\in \mathcal{H}$ and for all $(\theta,h)$ in a shrinking neighbourhood of $(\theta_0,h_0)$, a linear approximation holds,
\[
 \| M(\theta, h) - M(\theta, h_0) - \Gamma_2(\theta, h_0)[h - h_0] \| \leq c \| h - h_0 \|^2_{\mathcal{H}} \quad \text{for a constant} \quad c\geq 0
\]

This means that the error of the first-order linear approximation should vanish at most quadratically in $\| h - h_0 \|$. Recall that $h_t=h_0+vt$ then $v=\frac{h_t-h_o}{t}$,
\[
\Gamma_2(\theta, h_0)[v]=\Gamma_2(\theta, h_0)\left[\frac{h_t-h_o}{t}\right]
\]
\[
t \Gamma_2(\theta, h_0)[v]=\Gamma_2(\theta, h_0)[h_t-h_o]
\]
Substituting this to the Equation \ref{directional} we have,
\begin{align}\label{directional2}
    M(\theta, h_t)-M(\theta, h_0) &=\Gamma_2(\theta, h_0)[h_t-h_0]+o(t)
\end{align}

further, we know that,

\[
h_t - h_0 = t v \quad \Rightarrow \quad \|h_t - h_0\|_{\mathcal{H}} = \|t v\| = t \cdot \|v\|_{\mathcal{H}}
\]

So:
\[
t = \frac{\|h_t - h_0\|_{\mathcal{H}}}{\|v\|_{\mathcal{H}}} \quad \Rightarrow \quad t \propto \|h_t - h_0\|_{\mathcal{H}}
\]
where the proportionality constant is \( \|v\|_{\mathcal{H}} \), which is fixed. Then as we know that $r(t)=o(t)$ and $t \propto \|h_t - h_0\|_{\mathcal{H}}$,then $r(t)=o(\|h_t - h_0\|_{\mathcal{H}})$. Then, Equation \ref{directional} becomes,
\[
 M(\theta, h_t)-M(\theta, h_0) =\Gamma_2(\theta, h_0)[h_t-h_0]+o(\|h_t - h_0\|_{\mathcal{H}})
\]
Therefore, this establishes the differentiability in $h$ and gives the linear remainder bound
\[
R(h)=o(\|h - h_0\|_{\mathcal{H}}).
 \] 
Next, we establish the Lipschitz continuity of the derivative map directly as Equation \ref{directderivative}
 gives us  $\Gamma_2(\theta, h)$  is just multiplication by the scalar $(q_{p}-\theta) \cdot \mathbbm{P}[D=1,S_0=1,S_1=1]=c(\theta, q_p)$ , where $h=(p,q_p)$. concretely, for any two nuisance values $h'=(p_1,q_{p_1})$ and $h''=(p_2,q_{p_2})$ we have,
\[\Gamma_2(\theta, h')[v]-\Gamma_2(\theta, h'')[v]=(c(\theta, q_{p_1})-c(\theta, q_{p_2}) )\cdot (-v)\]
\[
\|\Gamma_2(\theta, h')-\Gamma_2(\theta, h'')\|= \sup_{\|v\|=1} \quad |\Gamma_2(\theta, h')[v]-\Gamma_2(\theta, h'')[v]|=|c(\theta, q_{p_1})-c(\theta, q_{p_2})|
\]
By regularity assumptions $\theta$ ranges over a compact set, so $|q_{p}-\theta|$ is bounded. Therefore, the function  $q_{p}\mapsto c(\theta, q_p)$ has a bounded derivative in $q_{p}$.  Thus, $|\frac{\partial  }{\partial q_p}c(\theta, q_p)|\leq L$ for all $\theta$ and $q_{p}$ in a small neighbourhood. Therefore, by the mean value theorem, we have $|c(\theta, q_{p_1})-c(\theta, q_{p_2})|\leq L|q_{p_1}-q_{p_2}|$. Finally, since our sup-norm on $h$ satisfies $\|h'-h''\|\geq |q_{p_1}-q_{p_2}|$. Thus, we have,

\begin{align}\label{lipschitz}
    \|\Gamma_2(\theta, h')-\Gamma_2(\theta, h'')\|\leq L |q_{p_1}-q_{p_2}|\leq L \|h'-h''\|
\end{align}
Taylor’s theorem with integral remainder  for any $h$ near $h_0$, we can write,
\[
R(h)= \int_0^1\bigl[\Gamma_2(\theta,\,h_0 + t(h-h_0))-\Gamma_2(\theta,\,h_0)\bigr][h-h_0]\,dt.
\]
\[\|R(h)\|
\le \int_0^1 \bigl\|\Gamma_2(\theta,\,h_0 + t\Delta)-\Gamma_2(\theta,\,h_0)\bigr\|\;\|\Delta\|\;dt, \quad \Delta=h-h_0
 \] 
by Equation \ref{lipschitz}  we have, $\bigl\|\Gamma_2(\theta,\,h_0 + t\Delta)-\Gamma_2(\theta,\,h_0)\bigr\|\leq Lt\|\Delta\|$. Hence,
\[
\|R(h)\|\le\int_{0}^{1} L\,t\,\|\Delta\|\;\|\Delta\|\,dt=L\,\|\Delta\|^{2}\int_{0}^{1} t\,dt=\frac{L}{2}\,\|\Delta\|^{2}.
\]
Thus, we have, 
\[\| M(\theta, h)-M(\theta, h_0) -\Gamma_2(\theta, h_0)[h-h_0]\|=\|R(h)\|\leq \frac{L}{2}\,\|h-h_0\|^{2} \]
satisfying condition 2.3(i).
Now let's consider condition  2.3(ii),

\[
\|\Gamma_2(\theta, h_0)[h - h_0] - \Gamma_2(\theta_0, h_0)[h - h_0]\| \leq o(1)\delta_n.
\]
This is a continuity condition on \( \theta \mapsto \Gamma_2(\theta, h_0)[\cdot] \) that ensures that the linear expansion remains stable when both \( \theta \) and \( h \) are close to their limits.\\
Because \( \delta_n \to 0 \) controls how small the neighbourhoods of \( \theta_0 \) and \( h_0 \) are, the condition ensures that:
\[
\frac{ \| \Gamma_2(\theta, h_0)[h - h_0] - \Gamma_2(\theta_0, h_0)[h - h_0] \| }{\delta_n} \to 0
\]

That is, the derivative map changes slower than the rate at which the inputs are shrinking.
We need to verify that:
\[
\left\| \Gamma_2(\theta, h_0)[h - h_0] - \Gamma_2(\theta_0, h_0)[h - h_0] \right\| \leq o(1) \cdot \delta_n
\]
for all $\theta \in \Theta$, $h \in \mathcal{H}$ satisfying $\|\theta - \theta_0\| \leq \delta_n$, $\|h - h_0\| \leq \delta_n$.

The directional derivative, 
\[
\Gamma_2(\theta, h_0)[v]=-v(q_{p_0}-\theta) \cdot \mathbbm{P}[D=1,S_0=1,S_1=1]+o(\|v\|)
\] 
\[
\Gamma_2(\theta_0, h_0)[v]=-v(q_{p_0}-\theta_0) \cdot \mathbbm{P}[D=1,S_0=1,S_1=1]+o(\|v\|)
\] 
The difference, 
\[
\Gamma_2(\theta, h_0)[v]-\Gamma_2(\theta_0, h_0)[v]=-v \cdot(\theta_0-\theta) \cdot \mathbbm{P}[D=1,S_0=1,S_1=1]+o(\|v\|)
\]
Each of the terms is bounded as follows: 
\begin{align*}
|v|=|h-h_0| &\leq \delta_n, \\
|\theta - \theta_0| &\leq \delta_n
\end{align*}

By assumption, $\|v\| = \|h - h_0\| \leq \delta_n$, and $o(\|v\|)$ means any function vanishing faster than $\|v\|$. Hence:
\[
o(\|v\|) \leq C \cdot \delta_n \Rightarrow o(\|v\|) = o(\delta_n)
\]
Therefore,
\[
\|\Gamma_2(\theta, h_0)[v]-\Gamma_2(\theta_0, h_0)[v]\|=O(\delta_n^2)+o(\delta_n)
\]
Therefore, we have $\|\Gamma_2(\theta, h_0)[v]-\Gamma_2(\theta_0, h_0)[v]\|\leq o(1)\cdot \delta_n$ and condition 2.3(ii) is satisfied.
\begin{condition}[2.4]
   $\hat{h} \in \mathcal{H}$ with probability tending to one; and $\|\hat{h} - h_0\|_\mathcal{H} = o_p(n^{-1/4}).$ 
\end{condition}
We already showed that $\hat{h}\xrightarrow{p}h_0\in \mathcal{H}$. Assume that the parameter space $\mathcal{H}\subset \mathbbm{R}^2$ is closed then $\mathbbm{P}(\hat{h}\in \mathcal{H})\xrightarrow{} 1$

recall that $\hat{p} = \max\left\{ 0, \frac{\hat{\alpha}+\hat{\beta}-1}{\hat{\beta}} \right\}$, with $\hat{\alpha}=\mathbbm{\hat{P}}[S_{1}=1|D=0, S_{0}=1]$ and $\hat{\beta}=\mathbbm{\hat{P}}[S_{1}=1|D=1, S_{0}=1]$\\
each of the $\hat{\alpha}$ and $\hat{\beta}$ is a sample proportion, and from CLT, we know,
\[
\hat{\alpha} - \alpha = O_{p}\bigl(n^{-\tfrac12}\bigr), 
\quad
\hat{\beta} - \beta = O_{p}\bigl(n^{-\tfrac12}\bigr).
\]
The ratio and max operator preserve the convergence rate. Therefore, 
\[
\hat{p} - p_0 = O_{p}\bigl(n^{-\tfrac12}\bigr)
\]
This implies that $n^a(\hat p - p_0) \xrightarrow{p} 0$ for any $a<1/2$. Thus, $(\hat p - p_0)=o_p\bigl(n^{-1/4}\bigr)$. Now decompose $\hat{q}_{\hat p}$,
\[
\hat{q}_{\hat{p}} - q_{p_{0}}
= \left(\hat{q}_{\hat{p}} - q_{\hat{p}}\right)
  + \left(q_{\hat{p}} - q_{p_{0}}\right).
\]
The first term $\bigl(\hat{q}_{\hat p} - q_{\hat p}\bigr)$ is $O_{p}\bigl(n^{-\tfrac12})$ as it is the empirical‐quantile error, and the second term $\bigl(q_{\hat p} - q_{p_0}\bigr)$ is a smooth shift of the true quantile under $\hat{p} - p_0=O_{p}\bigl(n^{-\tfrac12}\bigr)$, hence it is $O_{p}\bigl(n^{-\tfrac12})$. Thus,
\[
\hat{q}_{\hat p} - q_{p_0}=O_{p}\bigl(n^{-\tfrac12}\bigr)
\]
This implies that $n^a(\hat{q}_{\hat p} - q_{p_0} )\xrightarrow{p} 0$ for any $a<1/2$. Thus  $\hat{q}_{\hat p} - q_{p_0}=o_p\bigl(n^{-1/4}\bigr)$. Then by combining the components we have,
\[
\|\hat{h} - h_{0}\|_\mathcal{H}
= \max\bigl\{\lvert\hat p - p_0\rvert,\;\lvert \hat{q}_{\hat p} - q_{p_0}\rvert\bigr\}
= o_{p}\bigl(n^{-\tfrac14}\bigr)
\]
Therefore, condition 2.4 is satisfied. \\

\begin{condition}[2.5]
    For all sequences of positive numbers $\{\delta_n\}$ with $\delta_n = o(1)$, 
\begin{equation*}
\sup_{\|\theta - \theta_0\| \leq \delta_n, \|h - h_0\|_\mathcal{H} \leq \delta_n} 
\frac{\sqrt{n}\|M_n(\theta, h) - M(\theta, h) - M_n(\theta_0, h_0)\|}{1 + \sqrt{n} \{\|M_n(\theta, h)\| + \|M(\theta, h)\|\}} = o_p(1).
\end{equation*}

This condition is implied by,
\begin{align}\label{2.5}
    \sup_{\|\theta - \theta_0\| \leq \delta_n, \|h - h_0\|_\mathcal{H} \leq \delta_n} \|M_n(\theta, h) - M(\theta, h) - M_n(\theta_0, h_0)\|= o_{p}\bigl(n^{-\tfrac12}\bigr)
\end{align}
\end{condition} 
This means over any small shrinking neighborhood of the true $(\theta_0,h_0)$ the increment of the empirical process $\sqrt{n}(M_n(\theta, h) - M(\theta, h))$ when you move from $(\theta_0,h_0)$ to $(\theta,h)$ vanishes in probability. $\|\theta - \theta_0\| \leq \delta_n, \|h - h_0\|_\mathcal{H} \leq \delta_n$ means that we allow $\theta$ and $h$ only within a ball of radius $\delta_n$ around the true parameters.

By the definition of $(\theta_0,h_0)$ we have the population moment $M(\theta_0,h_0)=0$. Therefore, 
\[
M_n(\theta, h) - M(\theta, h) - M_n(\theta_0, h_0)= [M_n(\theta, h) - M(\theta, h)] -[ M_n(\theta_0, h_0)-M(\theta_0,h_0)]
\]
can be defined as the increment of the empirical process as you move from $(\theta_0,h_0)$ to $(\theta,h)$

Let us rewrite this in terms of empirical processes.
\begin{align} \label{2.5 decompose}
    &\sup_{\|\theta - \theta_0\| \leq \delta_n, \|h - h_0\|_\mathcal{H} \leq \delta_n} \|M_n(\theta, h) - M(\theta, h) - M_n(\theta_0, h_0)\| \nonumber \\
    &= \sup_{\|\theta - \theta_0\| \leq \delta_n, \|h - h_0\|_\mathcal{H} \leq \delta_n} \|M_n(\theta, h) - M(\theta, h) -( M_n(\theta_0, h_0)- M(\theta_0, h_0))\| \nonumber\\
    &=\sup_{\|\theta - \theta_0\| \leq \delta_n, \|h - h_0\|_\mathcal{H}\leq \delta_n}  \|(\mathbbm{P}_n g(Z, \theta, h)-\mathbbm{P}g(Z, \theta, h))- (\mathbbm{P}_n g(Z, \theta_0, h_0)-\mathbbm{P}g(Z, \theta_0, h_0))\| \nonumber\\
    &=\sup_{\|\theta - \theta_0\| \leq \delta_n, \|h - h_0\|_\mathcal{H}\leq \delta_n}\|\mathbbm{G}_n[g(Z, \theta, h)-g(Z, \theta_0, h_0)]\|
\end{align}
Where $\mathbbm{P}_nf=\frac{1}{n}\sum f(Z_i)$ is the empirical measure, $\mathbbm{P}f$ is the population counterpart and  $\mathbbm{G}_n=\sqrt{n}(\mathbbm{P}_n-\mathbbm{P})$ is the empirical process. 
Therefore, the condition \ref{2.5} is equivalent to 
\[
\sup_{\|\theta - \theta_0\| \leq \delta_n, \|h - h_0\|_\mathcal{H}\leq \delta_n}\|\mathbbm{G}[g(Z, \theta, h)-g(Z, \theta_0, h_0)]\|=o_p(1)
\]
By the triangular inequality, we have,

\[
\begin{aligned}
&\sup_{\|\theta-\theta_0\|\le\delta_n,\;\|h-h_0\|\le\delta_n}
\bigl\|\mathbb G_n\bigl[g(Z,\theta,h) - g(Z,\theta_0,h_0)\bigr]\bigr\|\\
&\quad\le
\sup_{\|\theta-\theta_0\|\le\delta_n,\;\|h-h_0\|\le\delta_n}
\bigl\|\mathbb G_n\bigl[g(Z,\theta,h)\bigr]\bigr\|
\;+\;
\underbrace{\bigl\|\mathbb G_n\bigl[g(Z,\theta_0,h_0)\bigr]\bigr\|}_{\text{constant in }(\theta,h)}.
\end{aligned}
\]

The class  $\mathcal{G}=\{g(Z, \theta, h):\|\theta-\theta_0\|<\delta_n,\|h-h_0\|<\delta_n\}$ is a VC-subgraph with a bounded envelop since $|g(Z, \theta, h)|\leq G$  for all $(\theta,h)$ in our compact set, then by \cite{vandervaart1996} theorem 2.11.1 ,
\[
\sup_{\|\theta - \theta_0\| \leq \delta_n, \|h - h_0\|_\mathcal{H}\leq \delta_n}\|\mathbbm{G}[g(Z, \theta, h)]\|=O_p(1)
\]
and its stochastic-equicontinuity (modulus-of-continuity) refinement over the shrinking neighborhood  $\delta_n \xrightarrow{} 0$  we have
\begin{align} \label{partA}
    \sup_{\|\theta - \theta_0\| \leq \delta_n, \|h - h_0\|_\mathcal{H}\leq \delta_n}\|\mathbbm{G}[g(Z, \theta, h)]\|=o_p(1)
\end{align}

Notice that $\mathbbm{G}[g(Z, \theta_0, h_0)]$  is just one empirical process evaluation in the same class $\mathcal{G}$ (corresponding to$ (\theta,h)=(\theta_0,h_0)$. Therefore, the same bounds apply,
\begin{align} \label{PartB}
    \mathbbm{G}[g(Z, \theta_0, h_0)]=\|\mathbbm{P}_n g(Z, \theta_0, h_0)-\mathbbm{P}g(Z, \theta_0, h_0)\|=O_{p}\bigl(1\bigr)
\end{align}
Therefore, by combining results in Equation \ref{partA} and \ref{PartB} we have,
\[
\sup_{\|\theta - \theta_0\| \leq \delta_n, \|h - h_0\|_\mathcal{H}\leq \delta_n}\|\mathbbm{G}[g(Z, \theta, h)-g(Z, \theta_0, h_0)]\|=o_p(1)
\]
hence, by dividing $\sqrt{n}$ , we have,
\[
\sup_{\|\theta - \theta_0\| \leq \delta_n, \|h - h_0\|_\mathcal{H} \leq \delta_n} \|M_n(\theta, h) - M(\theta, h) - M_n(\theta_0, h_0)\|= o_{p}\bigl(n^{-\tfrac12}\bigr)
\] 
\begin{condition}[2.6]
    For some finite matrix $V_1, \quad \sqrt{n} \left[ M_n(\theta_0, h_0) + \Gamma_2(\theta_0, h_0)[\hat{h} - h_0] \right] \xrightarrow{d} \mathcal{N}[0, V_1]$.

$ M_n(\theta_0, h_0)$ is a sample mean,
\[
 M_n(\theta_0, h_0)=\mathbbm{E}_n g(Z,\theta_0,h_0)=\frac{1}{n}\sum_{i=1}^n g (Z_i,\theta_0,h_0)
\]
\end{condition}
By the classical CLT,
\[
\sqrt{n}[ M_n(\theta_0, h_0)- M(\theta_0, h_0)]\xrightarrow{d} N(0,Var(g(Z,\theta_0,h_o)))
\]
As $ M(\theta_0, h_0)=0$  we have,
\[
\sqrt{n} M_n(\theta_0, h_0)\xrightarrow{} N(0,\Sigma) \quad where \quad \Sigma=Var(g(Z,\theta_0,h_o))
\]
Now let us look at the distribution of $\hat{h}=(\hat{p},\hat{q}_{\hat{p}})$.  Let's consider the distribution of $\hat{p}$ . As we assume that the true trimming proportion lies strictly within the interior of  (0,1), we can treat $\hat{p}$ as a smooth function of two proportions $\hat{\alpha}$  and $\hat{\beta} $ , each $O_p(n^{-1/2})$, hence we have,
\begin{align}\label{dist_p}
    \sqrt{n}(\hat{p}-p_0) \xrightarrow{d}N(0,\Sigma_p)=\mathcal{Z}_1
\end{align}

 Let us consider the derivation of the asymptotic distribution of $\hat{q}_{\hat{p}}$,
\begin{align} \label{eq_breakdown}
    \sqrt{n}\left(\hat{q}_{\hat{p}}-q_{p_0}\right)=\sqrt{n}\left(\hat{q}_{\hat{p}}-\hat{q}_{p_0}\right)+\sqrt{n}\left(\hat{q}_{p_0}-q_{p_0}\right)
\end{align}
Using Donsker’s Theorem in classical empirical process theory, each empirical CDF satisfies
\begin{equation}\label{Gaussian}
\sqrt{n}\left(F_{n}-F\right) \rightsquigarrow \mathbb{G}
\end{equation}
Where $\mathbb{G}$  are mean-zero Gaussian processes and $n$ is the sample size. Thus, as $q_p$ is Hadamard differentiable and $\sqrt{n}\left(F_{n}-F\right) \rightsquigarrow \mathbb{G}$, we can apply the functional delta method to derive the asymptotic distribution of the sample quantile. Thus, by the classical result on the asymptotic distribution of sample quantiles (\citet{vandervaart1998}), we have,
\begin{align}\label{std_Q_asy}
\sqrt{n}\left(\hat{q}_{p_0}-q_{p_0}\right)\xrightarrow{d} \mathcal{N}\left(0, \frac{p_0(1-p_0)}{f(q_{p_0})^2 }\right)=\mathbb{Z}_2
\end{align}

 $f(q_{p_0})^2$ is the density of the outcome evaluated at the $p_0$-quantile. 
 Now by first-order Taylor expansion, we have,
\begin{align*}
    \sqrt{n}\left(\hat{q}_{\hat{p}}-\hat{q}_{p_0}\right) &\approx\sqrt{n}(\hat{p}-p_0)\cdot \frac{d}{dp_0} \hat{q}_{p_0}
\end{align*}
Asymptotically,
\begin{align*}
    \frac{d}{dp_0} \hat{q}_{p_0} \xrightarrow{} \frac{1}{f(q_{p_0})}
\end{align*}
Therefore, we have,
\begin{align*}
   \sqrt{n}\left(\hat{q}_{\hat{p}}-\hat{q}_{p_0}\right) &\approx\sqrt{n}(\hat{p}-p_0)\cdot \frac{1}{f(q_{p_0})}
\end{align*}
As $\sqrt{n}(\hat{p}-p_0) \xrightarrow{d}\mathbb{Z}_1$ by Slutsky's Theorem, we have,
\begin{align}\label{dist_2}
     \sqrt{n}\left(\hat{q}_{\hat{p}}-\hat{q}_{p_0}\right) &\xrightarrow{d} \frac{\mathbb{Z}_1}{f(q_{p_0})}
\end{align}
Now combining Equation \ref{eq_breakdown} with the distribution results obtained through Equation \ref{std_Q_asy} and \ref{dist_2} by  Slutsky's Theorem, we have,
\begin{align}\label{dist_combine}
     \sqrt{n}\left(\hat{q}_{\hat{p}}-q_{p_0}\right)&\xrightarrow{d} \frac{\mathbb{Z}_1}{f(q_{p_0})}+\mathbb{Z}_2=N(0,\Sigma_q)
\end{align}

Therefore, from Equation \ref{dist_p} and \ref{dist_combine} we have,
\[
\sqrt{n}(\hat{h} - h_0) = \sqrt{n} \begin{pmatrix}
\hat{p} - p_0 \\
\hat{ q}_{\hat{p}} - q_{p_0}
\end{pmatrix}
\xrightarrow{d} \mathcal{N}(0, \Sigma_h).
\]
The mapping $h \mapsto \Gamma_2(\theta_0, h_0)[h - h_0]$ is a fixed matrix (the directional derivative in $h$). Hence
\[
\sqrt{n} \, \Gamma_2(\theta_0, h_0)[\hat{h} - h_0] \xrightarrow{d} \mathcal{N}\left(0, \, \Gamma_2(\theta_0, h_0) \, \Sigma_h \, \Gamma_2(\theta_0, h_0)' \right).
\]
Because \( M_n(\theta_0, h_0) \) is a mean of functions of \( (Y_i, D_i, S_i) \) and \( \hat{h} \) is also asymptotically linear in the same data, a multivariate Lindeberg–Feller argument (or the Cramér–Wold device) gives
\[
\sqrt{n} \begin{pmatrix}
M_n(\theta_0, h_0) \\
\Gamma_2(\theta_0, h_0)[\hat{h} - h_0]
\end{pmatrix}
\xrightarrow{d} \mathcal{N} \left( 0, 
\begin{pmatrix}
\Sigma & \operatorname{Cov} \\
\operatorname{Cov}' & \Gamma_2 \Sigma_h \Gamma_2'
\end{pmatrix}
\right).
\]

Summing the two components satisfies condition 2.6
\[
\sqrt{n} \left[ M_n(\theta_0, h_0) + \Gamma_2(\theta_0, h_0)[\hat{h} - h_0] \right] \xrightarrow{d} \mathcal{N}[0, V_1]
\]
Where, $V_1 = \Sigma + \Gamma_2 \Sigma_h \Gamma_2' +  \operatorname{Cov'}\operatorname{Cov}$.

We have justified all six conditions to establish the asymptotic normality of $\hat{\theta}$.  Therefore, we can conclude that,
\[
\sqrt{n}(\hat{\theta}-\theta)\xrightarrow{d}N(0,\Omega_\theta)
\]

Note that the estimator for the bound  $LB_{\tau_{OOO}}$ as defined in Theorem \ref{nomono_bound} is a difference of two independent trimmed means of the same form. As we proved the asymptotic normality of a general trimmed mean estimator, it follows that the difference is also asymptotically normal. Thus we have,
\[
\sqrt{n} (\widehat{LB}_{\tau_{OOO}}-LB_{\tau_{OOO}}) \xrightarrow{d} N(0,\Omega_{LB})
\]
A symmetric argument can be developed for the upper bound. Hence, $\sqrt{n} (\widehat{UB}_{\tau_{OOO}}-UB_{\tau_{OOO}}) \xrightarrow{d} N(0,\Omega_{UB}) $.  Similar arguments can be made for bounds derived under Theorem \ref{mono_bound}.
\end{proof}
\subsubsection{Bootstrap Consistency}\label{appendix:Bootstrap}
\begin{theorem}[\cite{chen2003estimation} Theorem B: Bootstrap Consistency]
Suppose that \(\{Z_i\}_{i=1}^n\) is i.i.d. and \(\theta_0 \in \operatorname{int}(\Theta)\) satisfies \(\mathbbm{E}[m(Z_i, \theta_0, h_0)] = 0\); that \(\hat{\theta} - \theta_0 = o_{a.s.}(1)\); that conditions (2.1), (2.4), (2.5'), and (2.6) hold with ‘in probability’ replaced by ‘almost surely’; that condition (2.2) holds with \(h_0\) replaced by \(h \in \mathcal{H}_\delta\), while condition (2.3) holds with \(h_0\) replaced by \(h \in \mathcal{H}_{\delta_n}\), and that \(\Gamma_1(\theta, h)\) is continuous (with respect to \(\|\cdot\|_{\mathcal{H}}\)) in \(h\) at \(\theta = \theta_0\), \(h = h_0\). Suppose:
\begin{itemize}
    \item[(2.4B)] With \(P^\ast\)-probability tending to one, \(\hat{h}^\ast \in \mathcal{H}\), and \(\|\hat{h}^\ast - \hat{h}\|_{\mathcal{H}} = o_{P^\ast}(n^{-1/4})\).
    \item[(2.5$^\prime$B)] \(\sup_{(\theta, h) \in \Theta_{\delta_n} \times \mathcal{H}_{\delta_n}} \left\| M_n^\ast(\theta, h) - M_n(\theta, h) - \left\{ M_n^\ast(\theta_0, h_0) - M_n(\theta_0, h_0) \right\} \right\| = o_{P^\ast}(n^{-1/2})\) for all positive values \(\delta_n = o(1)\).
    \item[(2.6B)] \(\sqrt{n} \left[ M_n^\ast(\hat{\theta}, \hat{h}) - M_n(\hat{\theta}, \hat{h}) + \Gamma_2(\hat{\theta}, \hat{h})[\hat{h}^\ast - \hat{h}] \right] = \mathcal{N}(0, V_1) + o_{P^\ast}(1)\).
\end{itemize}
Then, \(\sqrt{n}(\hat{\theta}^\ast - \hat{\theta})\) converges in distribution to a \(\mathcal{N}(0, \Omega)\) distribution in \(P^\ast\)- probability

\end{theorem}
\begin{lemma}[Bootstrap Consistency]
     Let $\Delta Y=Y_1-Y_0$ have a bounded support, and suppose $\mathbbm{P}[S_1=1|S_0=1,D=1]>0$ and $p_0>0$. Let superscript $\ast$ denote the estimators computed under the bootstrap distribution, conditional on the original data set. Then
    \begin{align*}
      \sqrt{n} (\widehat{LB^\ast}_{\tau_{OOO}}-\widehat{LB}_{\tau_{OOO}}) \xrightarrow{d} N(0,\Omega_{LB})  \quad &and \quad \sqrt{n} (\widehat{UB}_{\tau_{OOO}}-\widehat{UB^\ast}_{\tau_{OOO}}) \xrightarrow{d} N(0,\Omega_{UB})
    \end{align*}
\end{lemma}

\begin{proof}
    We first prove the Bootstrap Consistency of the trimmed mean estimator. Suppose superscript $\ast$ denotes a probability or moment computed under the bootstrap distribution conditional on the original data set $\{Z_i\}^n_{i=1}.$  Let $\hat{\theta}^\ast$ and $\hat{h}^\ast$ be the method of moment and nuisance estimators computed based on a bootstrap sample $\{Z^\ast_i\}$ drawn i.i.d. with replacement from the original data. $M^\ast_n(\theta,h)$ is the bootstrap moment function.
   \begin{condition}[2.4B]
       With \(P^\ast\)-probability tending to one, \(h^\ast \in \mathcal{H}\), and \(\|\hat{h}^\ast - \hat{h}\|_{\mathcal{H}} = o_{P^\ast}(n^{-1/4})\)
   \end{condition}
 
 This condition implies that our bootstrap-based estimator $\hat{h}^\ast$ falls into the same function space $\mathcal{H}$ and converges to the original $\hat{h}$ fast enough. Recall $h=(p,q_p)$ and we measure $\big\| h - h'\big\|_{\mathcal H}
= \max\bigl\{\lvert p - p'\rvert,\;\lvert q_{p} - q'_{p}\rvert\bigr\}$

We draw a bootstrap sample of size $n$ from the original data. Let $\hat{\alpha}^\ast=\mathbbm{P_n^\ast}[S_{1}=1|D=0, S_{0}=1]$ and $\hat{\beta}^\ast=\mathbbm{P_n^\ast}[S_{1}=1|D=1, S_{0}=1]$ so that  $\hat{p}^\ast = \max\left\{ 0, \frac{\hat{\alpha}^\ast+\hat{\beta}^\ast-1}{\hat{\beta}^\ast} \right\}$. 
Given the original sample, $\{Z^\ast_i\}^n_{i=1}$ are i.i.d. Bernoulli draws with success probability $\hat{\alpha}$ (or $\hat{\beta}$) in the resample, hence by CLT we have,
\[
\hat{\alpha}^\ast - \hat{\alpha} = O_{P^\ast}\bigl(n^{-\tfrac12}\bigr), 
\quad
\hat{\beta}^\ast - \hat{\beta} = O_{P^\ast}\bigl(n^{-\tfrac12}\bigr).
\]
Assume $p_0\in(\epsilon,1-\epsilon)$ and $\hat{p}\xrightarrow{p}p_0$ implying the kink at zero is never binding. Therefore,  $\hat{p}^\ast= \frac{\hat{\alpha}^\ast+\hat{\beta}^\ast-1}{\hat{\beta}^\ast}$ w.p. $\xrightarrow{}1$ and by delta method we have,
\[
\hat{p}^\ast - \hat{p} = O_{p^\ast}\bigl(n^{-\tfrac12}\bigr)
\]
This implies that $n^a(\hat p^\ast - \hat{p}) \xrightarrow{P^\ast} 0$ for any $a<1/2$.  Thus,  $\hat{p}^\ast - \hat{p}=o_{P^\ast}\bigl(n^{-1/4}\bigr)$

We compute the $\hat{p}\ast$-th quantile of the bootstrap sample, $\hat{q}^\ast_{{\hat p}^\ast}$
 Now decompose  $\hat{q}^\ast_{{\hat p}^\ast}$,
\[
 \hat{q}^\ast_{{\hat p}^\ast} - \hat{q}_{\hat p}
= \bigl( \hat{q}^\ast_{{\hat p}^\ast} - \hat{q}_{\hat p^\ast}\bigr)
  + \bigl( \hat{q}_{\hat p^\ast}- \hat{q}_{\hat p}\bigr).
\]
The first term $( \hat{q}^\ast_{{\hat p}^\ast} - \hat{q}_{\hat p^\ast})$ is $O_{P^\ast}\bigl(n^{-\tfrac12})$ as it is the bootstrap empirical‐quantile error \citep{vandervaart1996}, and the second term $( \hat{q}_{\hat p^\ast}- \hat{q}_{\hat p})$ is a smooth shift of the estimated quantile of the original sample under $\hat{p}^\ast - \hat{p} = O_{P^\ast}\bigl(n^{-\tfrac12}\bigr)$, hence it is $O_{P^\ast}\bigl(n^{-\tfrac12})$. Thus,
\[
\hat{q}^\ast_{{\hat p}^\ast} - \hat{q}_{\hat p}=O_{p^\ast}\bigl(n^{-\tfrac12}\bigr)
\]
This implies that $n^a(\hat{q}^\ast_{{\hat p}^\ast} - \hat{q}_{\hat p})\xrightarrow{P^\ast} 0$ for any $a<1/2$. Thus,  $(\hat{q}^\ast_{{\hat p}^\ast} - \hat{q}_{\hat p})$. Then by combining the components we have,
\[
\|\hat{h}^\ast - \hat{h}\|_\mathcal{H}
= \max\bigl\{\lvert\hat{p}^\ast - \hat p\rvert,\;\lvert\hat{q}^\ast_{\hat{p}^\ast} - \hat{q}_{\hat p}\rvert\bigr\}
= o_{P^\ast}\bigl(n^{-\tfrac14}\bigr)
\]
Therefore, condition 2.4B is satisfied. \\
\begin{condition}[2.5$^\prime$B]
   \(\sup_{(\theta, h) \in \Theta_{\delta_n} \times \mathcal{H}_{\delta_n}} \left\| M_n^\ast(\theta, h) - M_n(\theta, h) - \left\{ M_n^\ast(\theta_0, h_0) - M_n(\theta_0, h_0) \right\} \right\| = o_{P^\ast}(n^{-1/2})\) for all positive values \(\delta_n = o(1)\). 
\end{condition}
This condition means that in a small neighbourhood of the truth $(\theta_0,h_0)$, the difference between the bootstrap moment and the original sample moment after you adjust for the value at the true parameter shrinks faster than $n^{-1/2}$ with probability tending to one in the bootstrap DGP.

Using the same arguments given in the proof of condition 2.5 and following Equation \ref{2.5 decompose} we can establish that 
\begin{align*}
\sup_{(\theta, h) \in \Theta_{\delta_n} \times \mathcal{H}_{\delta_n}} &\left\| M_n^\ast(\theta, h) - M_n(\theta, h) - \left\{M_n^\ast(\theta_0, h_0) - M_n(\theta_0, h_0) \right\} \right\|\\
    &=\sup_{\|\theta - \theta_0\| \leq \delta_n, \|h - h_0\|_\mathcal{H}\leq \delta_n}\|\mathbbm{G_n^\ast}[g(Z, \theta, h)-g(Z, \theta_0, h_0)]\|
\end{align*}

where $\mathbbm{G_n^\ast}=\sqrt{n}(\mathbbm{P^\ast}_n-\mathbbm{P_n})$ is the bootstrap empirical process conditional on data. Therefore, conditional on data $g(Z_i^\ast,\theta,h)$ is a VC subgraph with a bounded envelope. The Bootstrap empirical process satisfies the same equicontinuity and ULLN  conditions \citep{vandervaart1996}. Hence, condition 2.5B follows immediately by the same proof as for the original sample, with each condition holding with $P^\ast$ probability.\\
\begin{condition}[2.6B]
    \(\sqrt{n} \left[ M_n^\ast(\hat{\theta}, \hat{h}) - M_n(\hat{\theta}, \hat{h}) + \Gamma_2(\hat{\theta}, \hat{h})[\hat{h}^\ast - \hat{h}] \right] = \mathcal{N}(0, V_1) + o_{P^\ast}(1)\).
\end{condition}
This implies that bootstrap version of our first order term $\sqrt{n} \left[ M_n(\theta_0, h_0) + \Gamma_2(\theta_0, h_0)[\hat{h} - h_0]\right]$ converges to the same $N(0,V_1)$ up to error $o_{P^\ast}(1)$

\[
M_n^\ast(\hat{\theta},\hat{ h}) - M_n(\hat{\theta},\hat{ h}) =(\mathbbm{P}^\ast_n-\mathbbm{P_n})g(Z,\hat{\theta},\hat{h})
\]
conditional on data, $\{g(Z_i^\ast,\hat{\theta},\hat{h})\}$ are i.i.d. draws from the empirical distribution of $\{g(Z_i,\hat{\theta},\hat{h})\}$. Hence, by bootstrap CLT(i.e. conditional on the observed data, the sample‐with‐replacement average of any fixed function converges to a normal law at the usual $n^{-1/2}$ rate.). We have,
\[
\sqrt{n}[M_n^\ast(\hat{\theta},\hat{ h}) - M_n(\hat{\theta},\hat{ h})] \;{\overset{d}{\underset{P^\ast}{\longrightarrow}}}\;N(0,\hat{\Sigma})
\]
where $\hat{\Sigma}=Var(g(Z,\hat{\theta},\hat{h})|Z_1,...,Z_n)$ and with consistency of $(\hat{\theta},\hat{h})$ together with continuous mapping theorem and LLN we have $\hat{\Sigma}\xrightarrow[p]{}Var(g(Z,\theta_0,h_0)=\Sigma$.  Then,
\begin{align} \label{2.6 MD}
    \sqrt{n}[M_n^\ast(\hat{\theta},\hat{ h}) - M_n(\hat{\theta},\hat{ h})] \;{\overset{d}{\underset{P^\ast}{\longrightarrow}}}\;N(0,\Sigma)
\end{align}

As $\Gamma_2(\theta,h)$ is continuous at $(\theta_0,h_0)$ and $(\hat{\theta},\hat{h})\xrightarrow[p]{}(\theta_0,h_0)$, we have $\|\Gamma_2(\hat{\theta},\hat{h})-\Gamma_2(\theta_0,h_0)\|=o_p(1)$. further, we already showed that $\hat{p}^\ast - \hat{p} = O_{p^\ast}\bigl(n^{-\tfrac12}\bigr)$ in condition 2.4B.  This gives,
\begin{align}\label{2.6 p1}
    \Gamma_2(\hat{\theta}, \hat{h})[\hat{h}^\ast - \hat{h}] 
= \Gamma_2(\theta_0, h_0)[\hat{h}^\ast - \hat{h}] + o_{P^\ast}(n^{-1/2}),
\end{align}

From our original Z--estimation of the nuisance $\hat{h} $ we have, in the original sample,
and each component admits an influence-function expansion in the original  sample:
\[
\hat{p} - p_0 = \frac{1}{n} \sum_{i=1}^n \psi_p(Z_i) + o_p(n^{-1/2}), \qquad 
\hat{q}_{\hat{p}} - q_{p_0} = \frac{1}{n} \sum_{i=1}^n \psi_q(Z_i) + o_p(n^{-1/2}).
\]

 Since $\hat{h}$ itself is fixed once the data are observed, and when you draw a bootstrap sample $\{Z_i^\ast\}$ and the bootstrap draws are i.i.d.\ from the empirical distribution. The same data delta-method arguments show
\[
\hat{p}^\ast - \hat{p} = \frac{1}{n} \sum_{i=1}^n \psi_p(Z_i^\ast) + o_{P^\ast}(n^{-1/2}), \qquad 
\hat{q}_{ \hat{p}}^\ast - \hat{q}_{\hat{p}} = \frac{1}{n} \sum_{i=1}^n \psi_q(Z_i^\ast) + o_{P^\ast}(n^{-1/2}).
\]

Hence the vector 
\[
\hat{h}^\ast - \hat{h}
\]
admits the linear representation
\begin{align} \label{2.6 p2}
    \hat{h}^\ast - \hat{h} = \frac{1}{n} \sum_{i=1}^n \psi_h(Z_i^\ast) + o_{P^\ast}(n^{-1/2}),
\end{align}
where $\psi_h = (\psi_p, \psi_q)'$ is the $2 \times 1$ influence-function vector.

Combining the results from Equations \ref{2.6 p1} and \ref{2.6 p2} we have,

\[
\sqrt{n} \, \Gamma_2(\hat{\theta}, \hat{h})[\hat{h}^\ast - \hat{h}]
= \frac{1}{\sqrt{n}} \sum_{i=1}^{n} \Gamma_2(\theta_0, h_0) \, \psi_h(Z_i^\ast)+o_{P^\ast}(n^{-1/2})
\]

Conditional on the data, $\left\{ \psi_h(Z_i^\ast) \right\}_{i=1}^n$ are i.i.d.\ with mean zero and covariance $\Sigma_h$. 
By the Lindeberg--Feller CLT in the bootstrap DGP (Cramér--Wold),
\[
\frac{1}{\sqrt{n}} \sum_{i=1}^{n} \Gamma_2(\theta_0, h_0) \, \psi_h(Z_i^\ast) 
\;{\overset{d}{\underset{P^\ast}{\longrightarrow}}}\;\mathcal{N}\left( 0, \, \Gamma_2(\theta_0, h_0) \Sigma_h \Gamma_2(\theta_0, h_0)' \right)
\]

Combining the above gives
\begin{align}\label{2.6 NS}
    \sqrt{n} \, \Gamma_2(\hat{\theta}, \hat{h}) \left[ \hat{h}^\ast - \hat{h} \right] 
\xrightarrow{d} \mathcal{N} \left( 0, \, \Gamma_2 \Sigma_h \Gamma_2' \right) + o_{P^\ast}(1),
\end{align}

As both pieces, the moment difference (Equation \ref{2.6 MD}) and nuisance shift (Equation \ref{2.6 NS}) are linear functionals of the same bootstrap sample, the multivariate  bootstrap CLT gives,

\[
\begin{pmatrix}
M_n^\ast(\hat{\theta}, \hat{h}) - M_n(\hat{\theta}, \hat{h}) \\
\Gamma_2(\hat{\theta}, \hat{h})[\hat{h}^\ast - \hat{h}]
\end{pmatrix}
 \;{\overset{d}{\underset{P^\ast}{\longrightarrow}}}\; \mathcal{N} \left( 0, 
\begin{pmatrix}
\Sigma & \operatorname{Cov} \\
\operatorname{Cov}' & \Gamma_2 \Sigma_h \Gamma_2'
\end{pmatrix}
\right).
\]
All of the above linearizations incur only $o_{P^\ast}(1)$ remainders (because 
$\hat{\theta} - \theta_0 = o_p(1)$, $\hat{h} - h_0 = o_p(1)$, and 
$\Gamma_2$ is continuous).  Hence
\[
\sqrt{n} \left[ M_n^\ast(\hat{\theta}, \hat{h}) - M_n(\hat{\theta}, \hat{h}) 
+ \Gamma_2(\hat{\theta}, \hat{h})[\hat{h}^\ast - \hat{h}] \right]
= \mathcal{N}(0, V_1) + o_{P^\ast}(1).
\]
This satisfies the condition 2.6B.

Therefore, \(\sqrt{n}(\hat{\theta}^\ast - \hat{\theta})\) converges in distribution to a \(\mathcal{N}(0, \Omega)\) in \(P^\ast\)- probability

Note that the estimator for the bound  $LB_{\tau_{OOO}}$ as defined in Theorem \ref{nomono_bound} is a difference of two independent trimmed means of the same form. As we proved the bootstrap consistency of a general trimmed mean estimator, the difference is also bootstrap consistent.
\[
\sqrt{n} (\widehat{LB^\ast}_{\tau_{OOO}}-\widehat{LB}_{\tau_{OOO}}) \xrightarrow{d} N(0,\Omega_{LB})
\]
A symmetric argument can be developed for the upper bound. Hence, $\sqrt{n} (\widehat{UB^\ast}_{\tau_{OOO}}-\widehat{UB}_{\tau_{OOO}}) \xrightarrow{d} N(0,\Omega_{UB})$. Similar arguments can be made for the estimators of bounds defined in Theorem \ref{mono_bound}.

Note that the estimator for the bound  $LB_{\tau_{OOO}}$ as defined in Theorem \ref{nomono_bound} is a difference of two independent trimmed means of the same form (say $\hat{\theta}$ and $\hat{\alpha}$). As we proved the Bootstrap consistency of a general trimmed mean estimator, we have in $P^\ast$ probability,
\[
\sqrt{n} (\widehat{\theta^\ast}-\widehat{\theta}) \xrightarrow{d} N(0,\Omega_{\theta}) \quad and \quad \sqrt{n} (\widehat{\alpha^\ast}-\widehat{\alpha}) \xrightarrow{d} N(0,\Omega_{\alpha}).
\]
\end{proof}

\subsection{Estimation of Bounds in Theorems \ref{ONO_bound}, \ref{NNO_bound} and \ref{NOO_bound} }
\label{Estimation_other latent groups}
This section outlines the estimation of the bounds defined in Theorem \ref{ONO_bound}, \ref{NNO_bound} and \ref{NOO_bound}, which are based on the sample analogues of the population counterparts.

\subsubsection{Estimation of Bounds under Theorem \ref{ONO_bound}}
First we estimate the required mixing proportion $p_{OOO1}$ as given in Equation \ref{Estimation_p} and then estimate the bounds for $LB_{ONO1}$, $UB_{ONO1}$ as follows,
\begin{align*}
     \widehat{LB}_{ONO1} &= \frac{\sum_{i=1}^n (Y_{i1}-Y_{i0}) \cdot S_{i0}\cdot S_{i1}\cdot D_{i}  \cdot I\left\{(Y_{i1}-Y_{i0}) \leqslant \hat{y}_{1-\hat{p}_{OOO1}}\right\}}{\sum_{i=1}^n S_{i0}\cdot S_{i1}\cdot D_{i}  \cdot I\left\{(Y_{i1}-Y_{i0}) \leqslant \hat{y}_{1-\hat{p}_{OOO1}}\right\}}\\
    \widehat{UB}_{ONO1} &= \frac{\sum_{i=1}^n (Y_{i1}-Y_{i0}) \cdot S_{i0}\cdot S_{i1}\cdot D_{i}  \cdot I\left\{(Y_{i1}-Y_{i0}) > \hat{y}_{\hat{p}_{OOO1}}\right\}}{\sum_{i=1}^n S_{i0}\cdot S_{i1}\cdot D_{i}  \cdot I\left\{(Y_{i1}-Y_{i0}) > \hat{y}_{\hat{p}_{OOO1}}\right\}}
\end{align*}
Where $\hat{y}_{\hat{p}_{OOO1}}$ and $\hat{y}_{1-\hat{p}_{OOO1}}$ are $\hat{p}_{OOO1}$-th and $(1-\hat{p}_{OOO1})$-th quantile of the conditional distribution $Y_{1}-Y_{0}$ for the treated individuals observed in both time periods.

Next, to estimate the bounds for $LB^{0}_{ONO0}$, $UB^{0}_{ONO0}$ we estimate the required mixing proportion $p_{ONO0}$ as follows,
\begin{align*} 
    \hat{p}_{ONO0} &= \frac{\hat{\pi}_{ONO0}}{\hat{\pi}_{ONN0}+\hat{\pi}_{ONO0}} \quad \text{Lemma \ref{lemma: pro 4(joint)} in Appendix \ref{prop_condi}}\\
    &=\frac{\hat{\mathbbm{P}}[S_0=1, S_1=0, D=0]-\hat{\mathbbm{P}}[S_1=0|S_0=1,D=1]\cdot \hat{\mathbbm{P}}[S_{0}=1,D=0]}{\hat{\mathbbm{P}}[S_0=1, S_1=0, D=0]} \\
    &=1-\frac{\hat{\mathbbm{P}}[S_1=0|S_0=1,D=1]}{\hat{\mathbbm{P}}[S_1=0|S_0=1,D=0]}\\
    \mathbb{\hat{P}}[S_{1}=0|S_{0}=1,D=1]&= \frac{\sum_{i=1}^{n}S_{i0}\cdot (1-S_{i1})\cdot D_{i}}{\sum_{i=1}^{n}S_{i0}\cdot D_{i}} \\
    \mathbb{\hat{P}}[S_{1}=0|S_{0}=1,D=0] &= \frac{\sum_{i=1}^{n}S_{i0}\cdot (1-S_{i1})\cdot (1-D_{i})}{\sum_{i=1}^{n}S_{i0}\cdot (1-D_{i})}
\end{align*}

Then estimate the bounds for $LB^{0}_{ONO0}$, $UB^{0}_{ONO0}$ as follows,
\begin{align} \label{estimation_bound_ONO0t_0}
     \widehat{LB}^{0}_{ONO0} &= \frac{\sum_{i=1}^n Y_{i0} \cdot S_{i0}\cdot (1-S_{i1})\cdot (1-D_{i})  \cdot I\left\{Y_{i0} \leqslant \hat{y}_{\hat{p}_{ONO0}}\right\}}{\sum_{i=1}^n S_{i0}\cdot (1-S_{i1})\cdot (1- D_{i})  \cdot I\left\{Y_{i0} \leqslant \hat{y}_{\hat{p}_{ONO0}}\right\}} \\
    \widehat{UB}^{0}_{ONO0} &= \frac{\sum_{i=1}^n Y_{i0} \cdot S_{i0}\cdot (1-S_{i1})\cdot (1-D_{i})  \cdot I\left\{Y_{i0} > \hat{y}_{1-\hat{p}_{ONO0}}\right\}}{\sum_{i=1}^n S_{i0}\cdot (1-S_{i1})\cdot (1-D_{i})  \cdot I\left\{Y_{i0} > \hat{y}_{1-\hat{p}_{ONO0}}\right\}}\nonumber
\end{align}
Where $\hat{y}_{\hat{p}_{ONO0}}$ and $\hat{y}_{1-\hat{p}_{ONO0}}$ are $\hat{p}_{ONO0}$-th and $(1-\hat{p}_{ONO0})$-th quantile of the control group pre-treatment period outcome distribution conditional on being observed in pre-treatment period and unobserved in post-treatment period.

Next, $\mathbbm{E}[Y_{1}|D=0,S_{0}=1,S_{1}=1]$ can be estimated using its sample analogues as,
\begin{align*}
     \hat{\mathbbm{E}}[Y_{1}|D=0,S_{0}=1,S_{1}=1] &= \frac{\sum_{i=1}^n Y_{i1} \cdot S_{i0}\cdot S_{i1}\cdot (1-D_{i})}{\sum_{i=1}^n S_{i0}\cdot S_{i1}\cdot (1-D_{i}) }
\end{align*}
Then, $Y_{1}^{LB}$ is the theoretical lower bound of potential outcomes in the post-treatment period,

Finally, using these sample analogues instead of the population parameters, the bounds defined for $\tau_{ONO}$ under Theorem \ref{ONO_bound} can be estimated as,
\begin{align*}
    \widehat{LB}_{\tau_{ONO}} &= \widehat{LB}_{ONO1}-\hat{\mathbbm{E}}[Y_{1}|D=0,S_{0}=1, S_{1}=1]+\widehat{LB}^{0}_{ONO0},\\
    \widehat{UB}_{\tau_{ONO}} &= \widehat{UB}_{ONO1}-Y_{1}^{LB}+\widehat{UB}^{0}_{ONO0}
\end{align*}

\subsubsection{Estimation of Bounds under Theorem \ref{NNO_bound}}

First, to estimate the bounds for $LB_{NNO1}$, $UB_{NNO1}$ we estimate the required mixing proportion $p_{NNO1}$ as follows,
\begin{align} \label{estimation_pNNO1}
    \hat{p}_{NNO1} &= \frac{\hat{\pi}_{NNO1}}{\hat{\pi}_{NNO1}+\hat{\pi}_{NOO1}} \quad \text{Lemma \ref{lemma: pro 4(joint)} in Appendix \ref{prop_condi}} \\
    &=\frac{\mathbb{\hat{P}}[S_0=0, S_1=1, D=1]-\mathbb{\hat{P}}[S_1=1|S_0=0,D=0]\cdot \mathbb{\hat{P}}[S_{0}=0,D=1]}{\hat{P}[S_0=0, S_1=1, D=1]} \nonumber\\
    &=1-\frac{\mathbb{\hat{P}}[S_1=1|S_0=0,D=0]}{\mathbb{\hat{P}}[S_1=1|S_0=0,D=1]}\nonumber\\
    \mathbb{\hat{P}}[S_1=1|S_0=0,D=0]&= \frac{\sum_{i=1}^{n}(1-S_{i0})\cdot S_{i1}\cdot (1-D_{i})}{\sum_{i=1}^{n}(1-S_{i0})\cdot (1-D_{i})} \nonumber\\
    \mathbb{\hat{P}}[S_1=1|S_0=0,D=1]&= \frac{\sum_{i=1}^{n}(1-S_{i0})\cdot S_{i1}\cdot D_{i}}{\sum_{i=1}^{n}(1-S_{i0})\cdot D_{i}}\nonumber
\end{align}

Then estimate the bounds for $LB_{NNO1}$, $UB_{NNO1}$ as follows,
\begin{align*}
     \widehat{LB}_{NNO1} &= \frac{\sum_{i=1}^n Y_{i1} \cdot (1-S_{i0})\cdot S_{i1}\cdot D_{i}  \cdot I\left\{Y_{i1} \leqslant \hat{y}_{\hat{p}_{NNO1}}\right\}}{\sum_{i=1}^n (1-S_{i0})\cdot S_{i1}\cdot D_{i}  \cdot I\left\{Y_{i1} \leqslant \hat{y}_{\hat{p}_{NNO1}}\right\}}\\
    \widehat{UB}_{NNO1} &= \frac{\sum_{i=1}^n Y_{i1} \cdot (1-S_{i0})\cdot S_{i1}\cdot D_{i} \cdot I\left\{Y_{i1} > \hat{y}_{1-\hat{p}_{NNO1}}\right\}}{\sum_{i=1}^n (1-S_{i0})\cdot S_{i1}\cdot D_{i}  \cdot I\left\{Y_{i1} > \hat{y}_{1-\hat{p}_{NNO1}}\right\}}
\end{align*}
Where $\hat{y}_{\hat{p}_{NNO1}}$ and $\hat{y}_{1-\hat{p}_{NNO1}}$ are $\hat{p}_{NNO1}$-th and $(1-\hat{p}_{NNO1})$-th quantile of the treatment group post-treatment period outcome distribution conditional on being unobserved in pre-treatment period and observed in post-treatment period.

Next, estimate  mixing proportion $p_{OOO1}$ as given in Equation \ref{Estimation_p}. Then, estimate the lower bound $LB^{0}_{ONO1}$ as follows,
\begin{align*}
     \widehat{LB}^{0}_{ONO1} &= \frac{\sum_{i=1}^n Y_{i0} \cdot S_{i0}\cdot S_{i1}\cdot D_{i}  \cdot I\left\{Y_{i0} \leqslant \hat{y}_{1-\hat{p}_{OOO1}}\right\}}{\sum_{i=1}^n S_{i0}\cdot S_{i1}\cdot D_{i}  \cdot I\left\{Y_{i0} \leqslant \hat{y}_{1-\hat{p}_{OOO1}}\right\}}
\end{align*}
Where $\hat{y}_{1-\hat{p}_{OOO1}}$ is $(1-\hat{p}_{OOO1})$-th quantile of the treated group pre-treatment period outcome distribution conditional on being observed in both periods.
Estimate $\widehat{LB}^{0}_{ONO0}$ as detailed in Equation \ref{estimation_bound_ONO0t_0}.
Next, $\mathbbm{E}[Y_{1}|D=0,S_{0}=0,S_{1}=1]$ can be estimated using its sample analogues as,
\begin{align}\label{E_1|001}
     \hat{\mathbbm{E}}[Y_{1}|D=0,S_{0}=0,S_{1}=1] &= \frac{\sum_{i=1}^n Y_{i1} \cdot (1-S_{i0})\cdot S_{i1}\cdot (1-D_{i})}{\sum_{i=1}^n (1-S_{i0})\cdot S_{i1}\cdot (1-D_{i}) }
\end{align}

The $Y_{0}^{LB}$ and $Y_{1}^{LB}$ are the theoretical lower bounds of the potential outcomes in the pre-treatment period and the post-treatment period.
Finally, using these sample analogues instead of the population parameters, the bounds defined for $\tau_{NNO}$ under Theorem \ref{NNO_bound} can be estimated as,
\begin{align*}
    \widehat{LB}_{\tau_{NNO}} &= \widehat{LB}_{NNO1}-\widehat{LB}^{0}_{ONO1}-\hat{\mathbbm{E}}[Y_{1}|D=0,S_{0}=0, S_{1}=1]+Y_{0}^{LB},\\
    \widehat{UB}_{\tau_{NNO}} &= \widehat{UB}_{NNO1}-Y_{0}^{LB}- Y_{1}^{LB}+\widehat{LB}^{0}_{ONO0}
\end{align*}

\subsubsection{Estimation of Bounds under Theorem \ref{NOO_bound}}

First, we estimate the required mixing proportion $p_{NNO1}$ as given in Equation \ref{estimation_pNNO1} and then estimate the bounds for $LB_{NOO1}$, $UB_{NOO1}$ as follows,
\begin{align*}
     \widehat{LB}_{NOO1} &= \frac{\sum_{i=1}^n Y_{i1} \cdot (1-S_{i0})\cdot S_{i1}\cdot D_{i}  \cdot I\left\{Y_{i1} \leqslant \hat{y}_{1-\hat{p}_{NNO1}}\right\}}{\sum_{i=1}^n (1-S_{i0})\cdot S_{i1}\cdot D_{i}  \cdot I\left\{Y_{i1} \leqslant \hat{y}_{1-\hat{p}_{NNO1}}\right\}}\\
    \widehat{UB}_{NOO1} &= \frac{\sum_{i=1}^n Y_{i1} \cdot (1-S_{i0})\cdot S_{i1}\cdot D_{i} \cdot I\left\{Y_{i1} > \hat{y}_{\hat{p}_{NNO1}}\right\}}{\sum_{i=1}^n (1-S_{i0})\cdot S_{i1}\cdot D_{i}  \cdot I\left\{Y_{i1} > \hat{y}_{\hat{p}_{NNO1}}\right\}}
\end{align*}
Where $\hat{y}_{\hat{p}_{NNO1}}$ and $\hat{y}_{1-\hat{p}_{NNO1}}$ are $\hat{p}_{NNO1}$-th and $(1-\hat{p}_{NNO1})$-th quantile of the treatment group's post-treatment period outcome distribution conditional on being unobserved in pre-treatment period and observed in post-treatment period.

Next, estimate  mixing proportion $p_{OOO1}$ as given in Equation \ref{Estimation_p}. Then, estimate the lower bound $LB^{0}_{OOO1}$ as follows,
\begin{align*}
    \widehat{LB}^{0}_{OOO1} &= \frac{\sum_{i=1}^n Y_{i0} \cdot S_{i0}\cdot S_{i1}\cdot D_{i}  \cdot I\left\{Y_{i0} \leqslant \hat{y}_{\hat{p}_{OOO1}}\right\}}{\sum_{i=1}^n S_{i0}\cdot S_{i1}\cdot D_{i}  \cdot I\left\{Y_{i0} \leqslant \hat{y}_{\hat{p}_{OOO1}}\right\}}
\end{align*}
Where $\hat{y}_{\hat{p}_{OOO1}}$ is $\hat{p}_{OOO1}$-th quantile of the treatment group's pre-treatment period outcome distribution conditional on being observed in both periods.

Then $\mathbbm{E}[Y_{1}|D=0,S_{0}=0, S_{1}=1]$  can be estimated  using its sample analogues as in Equation  \ref{E_1|001}, while and $\mathbbm{E}[Y_{0}|D=0,S_{0}=1,S_{1}=1]$ can be estimated using its sample analogues as,
\begin{align*}
     \hat{\mathbbm{E}}[Y_{0}|D=0,S_{0}=1,S_{1}=1] &= \frac{\sum_{i=1}^n Y_{i0} \cdot S_{i0}\cdot S_{i1}\cdot (1-D_{i})}{\sum_{i=1}^n S_{i0}\cdot S_{i1}\cdot (1-D_{i}) }
\end{align*}
The $Y_{0}^{LB}$ is the theoretical lower bound of the potential outcomes in the pre-treatment period.
Finally, the bounds defined for $\tau_{NOO}$ under Theorem \ref{NOO_bound} can be estimated as,
\begin{align*}
    \widehat{LB}_{\tau_{NOO}} &= \widehat{LB}_{NOO1}-\widehat{LB}^{0}_{OOO1}-\hat{\mathbbm{E}}[Y_{1}|D=0,S_{0}=0, S_{1}=1]+Y_{0}^{LB}\\
    \widehat{UB}_{\tau_{NOO}} &= \widehat{UB}_{NOO1}-Y_{0}^{LB}-\hat{\mathbbm{E}}[Y_{1}|D=0,S_{0}=0, S_{1}=1]+\hat{\mathbbm{E}}[Y_{0}|D=0,S_{0}=1, S_{1}=1]]
\end{align*}
\subsection{Derivation of Numerical Bounds}
\label{numerical bounds}

The true ATT for this DGP can be defined as,
\begin{align*}
    \tau &= \mathbbm{E}\left[ Y_{i1}^\ast(1) - Y_{i1}^\ast(0) \mid D_{i}=1 \right] \nonumber \\
    &=\mathbbm{E}[\ \tau_i\mid  D_{i}=1]\\
    &=\sum_g\tau_g \cdot \mathbbm{P}[g_i=g \mid D_{i}=1]
\end{align*}

For this DGP, we have set $\tau_{OOO}=5$, $\tau_{ONO}=4$, $\tau_{ONN}=1$, $\tau_{NOO}=3$, $\tau_{NNO}=1$, and $\tau_{NNN}=-1$. Let $\pi_g^1\equiv \mathbbm{P}[g_i=g \mid D_i=1]$
\begin{align*}
    \tau= -1\cdot \pi_{NNN}^1+1\cdot \pi_{NNO}^1+3\cdot \pi_{NOO}^1+1\cdot \pi_{ONN}^1+4\cdot \pi_{ONO}^1+5\cdot \pi_{OOO}^1
\end{align*}
 \begin{align*}
     \pi_g^1=\mathbbm{P}[g_i=g \mid D_i=1]=\frac{\pi_{g1}}{\mathbbm{P}[D_i=1]}=2 \cdot \pi_{g1}
 \end{align*}
 As in this DGP $ D_i=\mathbbm{1}\bigl\{\,a_i + w_{i1}\ge0\bigr\}$ and $a_i\sim N(0,1)$ and $w_{i1}\sim N(0,1)$ so $a_i + w_{i1}\sim (0,2)$ which gives $\mathbbm{P}[D_i=1]=\mathbbm{P}[a_i + w_{i1}\geq 0]=0.5$.
 \begin{align*}
     \pi_{g1}=\int_{\mathbbm{R}^5} \mathbbm{1}\{g\}\cdot \mathbf1\{a + w_{1}\geq 0\}\Phi_5(a,b,v_0,v_1,w_1) da,db,dv_0,dv_1,dw_1
 \end{align*}
where, $\Phi_X(a,b,v_0,v_1,w_1)$ is the joint $N_5(0,\Sigma_X)$ density of $(a,b,v_0,v_1,w_1)$. Even though $(a,v_0,v_1)$ are drawn from  truncated normals, they can be approximated by standard normals as the truncation threshold ($M=5$) is chosen such that the probability mass we are cut off is negligible.
$$\begin{pmatrix}a_i\\[4pt]b_i\\[2pt]v_{i0}\\[2pt]v_{i1}\\[2pt]w_{i1}\end{pmatrix}
   \sim
   N\!\Bigl(\mathbf0,\;
     \Sigma_X\Bigr),
   \quad
   \Sigma_X = 
   \begin{pmatrix}
     1 & 0 & 0 & 0&0\\
     0 & 1 & 0 & 0& 0\\
     0 & 0 & 1 & 0& 0\\
     0 & 0 & 0 & 1& 0\\
     0 & 0 & 0 & 0& 1
   \end{pmatrix},$$
In this setting an individual will belong to $OOO1$ group if $S_{i0}(0)=1,S_{i1}(1)=1,S_{i1}(0)=1,D_i=1$. This implies,
\begin{align*}
    S_{i0}(0) &= \mathbbm{1}\{ b_i+ v_{i0}>0\}=1\implies b_i+ v_{i0}>0\\
    S_{i1}(0) &= \mathbbm{1}\{b_i+ v_{i1}>0\}=1\implies b_i+ v_{i1}>0 \\
    S_{i1}(1) &= \mathbbm{1}\{1.5+ b_i+ v_{i1}>0\}=1\implies b_i+ v_{i1}>-1.5\\
    D_i &=\mathbbm{1}\{a_i+w_{i1}>0\}=1 \implies a_i+w_{i1}>0
\end{align*}

 Let $z_1=b_i+ v_{i0}$ ,$z_2=b_i+ v_{i1}$, $z_3=b_i+ v_{i1}$ and $z_4=a_i+w_{i1}$ then, $\pi_{OOO1}=\mathbbm{P}[z_1>0,z_2>0,z_3>-1.5,z_4>0]$. Thus, we need the joint distribution of $Z=(z_1,z_2,z_3,z_4)$.
 Define 
 $$
    A = 
   \begin{pmatrix}
     0& 1 & 1 & 0&0\\
     0 & 1 & 0 & 1& 0\\
   0 & 1 & 0 & 1& 0\\
     1 & 0 & 0 & 0& 1
   \end{pmatrix}
   $$
   Then $Z=A\cdot (a,b,v_0,v_1,w_1)\sim N_4(0, A\Sigma_X A')$,
   \begin{align*}
		\begin{pmatrix}
			z_1 \\
			z_2\\
   z_3\\
   z_4
		\end{pmatrix}&\sim N\left(\begin{pmatrix}0\\0\\0\\0\end{pmatrix},\begin{pmatrix}2&1&1&0\\1&2&2&0\\1&2&2&0\\0&0&0&2\end{pmatrix}\right)
	\end{align*}
    
   Then we can obtain $\pi_{OOO1}=\mathbbm{P}[z_1>0,z_2>0,z_3>-1.5,z_4>0]$ from this distribution as,
$$
\pi_{OOO1}=\mathbbm{P}[z_1>0,z_2>0,z_3>-1.5,z_4>0]= \int_{0}^{\infty}\int_{0}^{\infty}\int_{-1.5}^{\infty}\int_{0}^{\infty} f_{z_1z_2z_3z_4}(z_1,z_2,z_3,z_4) dz_4 dz_3 dz_2 dz_1
$$
Similarly, an individual will belong to $ONO1$ group if $S_{i0}(0)=1,S_{i1}(1)=0,S_{i1}(0)=1,D_i=1$. This implies,
$$
\pi_{ONO1}=\mathbbm{P}[z_1>0,z_2<0,z_3>-1.5,z_4>0]= \int_{0}^{\infty}\int_{-\infty}^{0}\int_{-1.5}^{\infty}\int_{0}^{\infty} f_{z_1z_2z_3z_4}(z_1,z_2,z_3,z_4) dz_4 dz_3 dz_2 dz_1
$$
Similarly,
$$
\pi_{ONN1}=\mathbbm{P}[z_1>0,z_2<0,z_3<-1.5,z_4>0]= \int_{0}^{\infty}\int_{-\infty}^{0}\int_{-\infty}^{-1.5}\int_{0}^{\infty} f_{z_1z_2z_3z_4}(z_1,z_2,z_3,z_4) dz_4 dz_3 dz_2 dz_1
$$
$$
\pi_{NOO1}=\mathbbm{P}[z_1<0,z_2>0,z_3>-1.5,z_4>0]= \int_{-\infty}^{0}\int_{0}^{\infty}\int_{-1.5}^{\infty}\int_{0}^{\infty} f_{z_1z_2z_3z_4}(z_1,z_2,z_3,z_4) dz_4 dz_3 dz_2 dz_1
$$
$$
\pi_{NNO1}=\mathbbm{P}[z_1<0,z_2<0,z_3>-1.5,z_4>0]= \int_{-\infty}^{0}\int_{-\infty}^{0}\int_{-1.5}^{\infty}\int_{0}^{\infty} f_{z_1z_2z_3z_4}(z_1,z_2,z_3,z_4) dz_4 dz_3 dz_2 dz_1
$$
$$
\pi_{NNN1}=\mathbbm{P}[z_1<0,z_2<0,z_3<-1.5,z_4>0]= \int_{-\infty}^{0}\int_{-\infty}^{0}\int_{-\infty}^{-1.5}\int_{0}^{\infty} f_{z_1z_2z_3z_4}(z_1,z_2,z_3,z_4) dz_4 dz_3 dz_2 dz_1
$$

Then $\tau_{true}=\sum_g\tau_g \cdot \pi^1_{g1}=\sum_g\tau_g \cdot 2\cdot \pi_{g1}=2.850444$

Latent group-specific parallel trend holds as, in this DGP, the mean difference in untreated potential outcomes $Y^\ast_{i1}(0)-Y^\ast_{i0}(0)$ for any $g$ and treatment status $D$ is,
\begin{align*}
    \mathbbm{E}[Y^\ast_{i1}(0)-Y^\ast_{i0}(0)|g,D]&=t^{g_i}_1-t^{g_i}_0+\mathbbm{E}[u_{i1}-u_{i0}|g,D]
\end{align*}
As $\Delta u_i$ is uncorrelated with $D$, $\mathbbm{E}[u_{i1}-u_{i0}|g,D=0]=\mathbbm{E}[u_{i1}-u_{i0}|g,D=1]$. Therefore,  group-specific parallel trend holds (Assumption \ref{PT_group}). Furthermore, the group-specific time effects are chosen such that the outcome mean dominance assumptions (Assumption \ref{mean dominance}) hold.

\subsubsection{True ATT Bounds for OOO Group}\label{TrueOOO}

In order to derive Theoretical bounds for $\tau_{OOO}$ as defined in Theorem \ref{mono_bound}. we need to derive the distribution of $Y_{1}-Y_{0}$ given $D_{1}=1,S_{0}=1,S_{1}=1$ which we denote by $w$.
\begin{align} \label{w}
    w &=(Y_{i1}-Y_{i0})|D_i=1,S_{i0}=1,S_{i1}=1\\ \nonumber
    &=(S_{i1}\cdot Y^\ast_{i1}-S_{i0}\cdot Y^\ast_{i0})\mid D_i=1,S_{i0}=1,S_{i1}=1\\ \nonumber
    &=(Y^\ast_{i1}(1)-Y^\ast_{i0}(0))|D_i=1,S_{i0}(0)=1,S_{i1}(1)=1\\ \nonumber
    &=(t^{g_i}_1 + \tau^{g_i}+ c_i+ u_{i1}-(t^{g_i}_0+c_i+ u_{i0})\mid D_i=1,S_{i0}(0)=1,S_{i1}(1)=1\\\nonumber
    &=(t^{g_i}_1-t^{g_i}_0+\tau^{g_i}+\Delta u_i)\mid D_i=1,S_{i0}(0)=1,S_{i1}(1)=1\nonumber
\end{align}
Under the joint selection $D_{1}=1,S_{0}=1,S_{1}(1)=1$ only latent group $OOO: (S_{0}=1,S_{1}(0)=1,S_{1}(1)=1)$ and $ONO: (S_{0}=1,S_{1}(0)=0,S_{1}(1)=1)$ exists. Therefore, $w$ will be a two‐component selection Gaussian mixture, which is approximated by Gaussian distributions for simplicity.
$$
f_W(w)=\frac{\pi_{OOO1}}{\pi_{OOO1}+\pi_{ONO1}}\cdot \varphi\bigl(w;\mu_{OOO},\sigma^2_{OOO}\bigr)+\frac{\pi_{ONO1}}{\pi_{OOO1}+\pi_{ONO1}}\cdot \varphi\bigl(w;\mu_{ONO},\sigma^2_{ONO}\bigr)
$$
where for this DGP $\mu_{OOO}=t^{OOO}_1-t^{OOO}_0+\tau^{OOO}+\mathbbm{E}[\Delta u_i|OOO]=6-5+5+\mathbbm{E}[\Delta u_i|OOO]=6+\mathbbm{E}[\Delta u_i|OOO]$, $\sigma^2_{OOO}=var(\Delta u_i|OOO)$, $\mu_{ONO}=t^{ONO}_1-t^{ONO}_0+\tau^{ONO}+\mathbbm{E}[\Delta u_i|ONO]=5-4+4+\mathbbm{E}[\Delta u_i|ONO]=5+\mathbbm{E}[\Delta u_i|ONO]$ and $\sigma^2_{ONO}=var(\Delta u_i|ONO)$. As $u_{i1}\sim N(0,1)$ and $u_{i0}\sim N(0,1)$ are uncorrelated, then $\Delta u_{i}\sim N(0,2)$. We have shown that the joint distribution of $Z_1$,$Z_2$ and $Z_3$ is multivariate  normal. Further,
\begin{align*}
    cov[\Delta u_i,z_1]&=E[(u_{i1}-u_{i0})(b_i+ v_{i0})]-E[u_{i1}-u_{i0}]E[b_i+ v_{i0}]\\
    &=E[u_{i1}b_i+u_{i1}v_{i0}-u_{i0}b_i-u_{i0}v_{i0}]\\
    &=-E[u_{i0}v_{i0}]=-0.7\\
    cov[\Delta u_i,z_2]&=E[(u_{i1}-u_{i0})(b_i+ v_{i1})]-E[u_{i1}-u_{i0}]E[b_i+ v_{i1})]\\
    &=E[u_{i1}b_i+u_{i1}v_{i1}-u_{i0}b_i-u_{i0}v_{i1}]\\
    &=E[u_{i1}v_{i1}]=0.6\\
     cov[\Delta u_i,z_3]&=E[(u_{i1}-u_{i0})(b_i+ v_{i1})]-E[u_{i1}-u_{i0}]E[b_i+ v_{i1}]\\
    &=E[u_{i1}b_i+u_{i1}v_{i1}-u_{i0}b_i-u_{i0}v_{i1}]\\
    &=E[u_{i1}v_{i1}]=0.6
\end{align*}
Then it follows that the joint distribution $f_{\Delta u_iz_1 z_2 z_3}(\Delta u_i,z_1,z_2,z_3)$ is multivariate normal as,
\begin{align*}
		\begin{pmatrix}
  \Delta u_i\\
			z_1 \\
                z_2 \\
			z_3
		\end{pmatrix}&\sim N\left(\begin{pmatrix}0\\0\\0\\0\end{pmatrix},\begin{pmatrix}2&-0.7&0.6&0.6\\-0.7&2&1&1\\0.6&1&2&2\\0.6&1&2&2\end{pmatrix}\right)
	\end{align*}
Then, the required conditional expectations can be evaluated as follows,
\begin{align*}
    E[\Delta u_i|OOO] &= E[\Delta u_i|z_1>0,z_2>0,z_3>-1.5]\\
    &=E[\Delta u_i|z_1>0,z_2>0]\\
    &=\frac{E[\Delta u_i\cdot I(z_1>0) \cdot I(z_2>0)]}{\mathbbm{P}[z_1>0,z_2>0]}\\
    &=\frac{\int_{-\infty}^{\infty}\int_{0}^{\infty}\int_{0}^{\infty} \Delta u_i f_{\Delta u_iz_1z_2}(\Delta u_i,z_1,z_2)dz_2 dz_1 d\Delta u_i}{\int_{0}^{\infty}\int_{0}^{\infty} f_{z_1z_2}(z_1,z_2) dz_2 dz_1}=\mu_1
\end{align*}
\begin{align*}
    E[\Delta u^2_i|OOO] &= E[\Delta u^2_i|z_1>0,z_2>0,z_3>-1.5]\\
    &=E[\Delta u^2_i|z_1>0,z_2>0]\\
    &=\frac{E[\Delta u^2_i\cdot I(z_1>0) \cdot I(z_2>0)]}{\mathbbm{P}[z_1>0,z_2>0]}\\
    &=\frac{\int_{-\infty}^{\infty}\int_{0}^{\infty}\int_{0}^{\infty} \Delta u^2_i f_{\Delta u_iz_1z_2}(\Delta u_i,z_1,z_2)dz_2 dz_1 d\Delta u_i}{\int_{0}^{\infty}\int_{0}^{\infty} f_{z_1z_2}(z_1,z_2) dz_2 dz_1}=\mu_2
\end{align*}
\begin{align*}
    E[\Delta u_i|ONO] &= E[\Delta u_i|z_1>0,z_2<0,z_3>-1.5]\\
    &=E[\Delta u_i|z_1>0,0>z_3>-1.5]\\
    &=\frac{E[\Delta u_i\cdot I(z_1>0) \cdot I(0>z_3>-1.5)]}{\mathbbm{P}[z_1>0,0>z_3>-1.5]}\\
    &=\frac{\int_{-\infty}^{\infty}\int_{0}^{\infty}\int_{-1.5}^{0} \Delta u_i f_{\Delta u_iz_1z_3}(\Delta u_i,z_1,z_3)dz_3 dz_1 d\Delta u_i}{\int_{0}^{\infty}\int_{-1.5}^{0} f_{z_1z_3}(z_1,z_3) dz_3 dz_1}=\mu_3
\end{align*}
\begin{align*}
    E[\Delta u^2_i|ONO] &= E[\Delta u^2_i|z_1>0,z_2<0,z_3>-1.5]\\
    &=E[\Delta u^2_i|z_1>0,0>z_3>-1.5]\\
    &=\frac{E[\Delta u^2_i\cdot I(z_1>0) \cdot I(0>z_3>-1.5)]}{\mathbbm{P}[z_1>0,0>z_3>-1.5]}\\
    &=\frac{\int_{-\infty}^{\infty}\int_{0}^{\infty}\int_{-1.5}^{0} \Delta u^2_i f_{\Delta u_iz_1z_3}(\Delta u_i,z_1,z_3)dz_3 dz_1 d\Delta u_i}{\int_{0}^{\infty}\int_{-1.5}^{0} f_{z_1z_3}(z_1,z_3) dz_3 dz_1}=\mu_4
\end{align*}
For notation simplicity let $p_{OOO1}=\frac{\pi_{OOO1}}{\pi_{OOO1}+\pi_{ONO1}}$ then,
$$
w \approx  p_{OOO1}N(\mu_{OOO},\sigma^2_{OOO})+(1-p_{OOO1}) N(\mu_{ONO},\sigma^2_{ONO})
$$
$$
w \approx  p_{OOO1}N(6+\mu_1,\mu_2-\mu^2_1)+(1-p_{OOO1}) N(5+\mu_3,\mu_4-\mu^2_3)
$$
Then the bounds defined in the Theorem 2,
\begin{align*}
 LB^{\prime}_{OOO1} &= E[w|w\leq F_{w}^{-1}(p_{OOO1})]\\
 UB^{\prime}_{OOO1}  &= E[w|w > F_{w}^{-1}(1-p_{OOO1})]
\end{align*}
Let \(c = F_w^{-1}(p_{OOO1})\) be the unique solution \(x\) to \(p_{OOO1}\,\Phi\bigl(\tfrac{x-\mu_{OOO}}{\sigma_{OOO}}\bigr)+(1-p_{OOO1})\,\Phi\bigl(\tfrac{x-\mu_{ONO}}{\sigma_{ONO}}\bigr)= p_{OOO1}\), which can be found numerically via R’s \texttt{uniroot} or a Newton--Raphson routine. A similar procedure can be used to get the numerical value for $c'=F_{w}^{-1}(1-p_{OOO1})$. With these known values for $c$ and $c'$ we can write,
\begin{align*}
    E[w|w\leq c]&=\frac{\int_{-\infty}^{c} w\,f(w)\,dw}{\int_{-\infty}^{c} f(w)\,dw}\\
    &=\frac{p_{OOO1} \int_{-\infty}^{c} w\,\varphi(w;\mu_{OOO},\sigma_{OOO}^{2})\,dw+(1-p_{OOO1}) \int_{-\infty}^{c} w\,\varphi(w;\mu_{ONO},\sigma_{ONO}^{2})\,dw}{p_{OOO1}\Phi\!\Bigl(\tfrac{c-\mu_{OOO}}{\sigma_{OOO}}\Bigr)+(1-p_{OOO1})\Phi\!\Bigl(\tfrac{c-\mu_{ONO}}{\sigma_{ONO}}\Bigr)}
\end{align*}
Note for any $N(\mu,\sigma^2)$, $\int_{-\infty}^{c} w\,\varphi(w;\mu,\sigma^{2})\,dw= \mu \Phi\!\Bigl(\tfrac{c-\mu}{\sigma}\Bigr)-\sigma\phi\Bigl(\tfrac{c-\mu}{\sigma}\Bigr) $ where $\phi$ is standard normal pdf. Denote $\alpha_{OOO}=\tfrac{c-\mu_{OOO}}{\sigma_{OOO}}$ and $\alpha_{ONO}=\tfrac{c-\mu_{ONO}}{\sigma_{ONO}}$. Then,
\begin{align}\label{c}
    E[w|w\leq c]&=\frac{p_{OOO1}[\mu_{OOO}\Phi(\alpha_{OOO})-\sigma_{OOO} \phi(\alpha_{OOO})]+(1-p_{OOO1})[\mu_{ONO}\Phi(\alpha_{ONO})-\sigma_{ONO} \phi(\alpha_{ONO})]}{p_{OOO1}\Phi(\alpha_{OOO})+ (1-p_{OOO1})\Phi(\alpha_{ONO})}
    \end{align}
Similarly, we can write,
    \begin{align*}
    E[w|w> c']&=\frac{\int_{c'}^{\infty} w\,f(w)\,dw}{\int_{c'}^{\infty}  f(w)\,dw}\\
    &=\frac{p_{OOO1} \int_{c'}^{\infty}  w\,\varphi(w;\mu_{OOO},\sigma_{OOO}^{2})\,dw+(1-p_{OOO1}) \int_{c'}^{\infty}  w\,\varphi(w;\mu_{ONO},\sigma_{ONO}^{2})\,dw}{p_{OOO1}(1-\Phi\!\Bigl(\tfrac{c'-\mu_{OOO}}{\sigma_{OOO}}\Bigr))+(1-p_{OOO1})(1-\Phi\!\Bigl(\tfrac{c'-\mu_{ONO}}{\sigma_{ONO}}\Bigr))}
\end{align*}
Note for any $N(\mu,\sigma^2)$, $\int_{c'}^{\infty}  w\,\varphi(w;\mu,\sigma^{2})\,dw= \mu \bigl[1-\Phi\!\Bigl(\tfrac{c'-\mu}{\sigma}\Bigr)\bigl]+\sigma\phi\Bigl(\tfrac{c'-\mu}{\sigma}\Bigr) $ where $\phi$ is standard normal pdf. Denote $\alpha'_{OOO}=\tfrac{c'-\mu_{OOO}}{\sigma_{OOO}}$ and $\alpha'_{ONO}=\tfrac{c'-\mu_{ONO}}{\sigma_{ONO}}$. Then,
{\footnotesize
\begin{align}\label{c'}
    E[w|w> c']&=\frac{p_{OOO1}[\mu_{OOO}[1-\Phi(\alpha'_{OOO})]+\sigma_{OOO} \phi(\alpha'_{OOO})]+(1-p_{OOO1})[\mu_{ONO}[1-\Phi(\alpha'_{ONO})]+\sigma_{ONO} \phi(\alpha'_{ONO})]}{p_{OOO1}[1-\Phi(\alpha'_{OOO})]+ (1-p_{OOO1})[1-\Phi(\alpha'_{ONO})]}
    \end{align}
}
    We already showed $w\approx  p_{OOO1}N(6+\mu_1,\mu_2-\mu^2_1)+(1-p_{OOO1}) N(5+\mu_3,\mu_4-\mu^2_3)$. Then $ \mathbbm{E}(w) =\mathbbm{E}[(Y_{i1}-Y_{i0})|D_i=1,S_{i0}=1,S_{i1}=1]=p_{OOO1}(6+\mu_1)+(1-p_{OOO1})(5+\mu_3)$. Further,
    {\footnotesize
    \begin{align*}
    \tau_{\textup{DiDs}} &=\mathbbm{E}[Y_{i1}-Y_{i0}|D_i=1,S_{i0}=1,S_{i1}=1]-\mathbbm{E}[Y_{i1}-Y_{i0}|D_i=0,S_{i0}=1,S_{i1}=1]\\
    &= p_{OOO1}(t^{OOO}_1-t^{OOO}_0+\tau^{OOO}+\mu_1)+(1-p_{OOO1})(t^{ONO}_1-t^{ONO}_0+\tau^{ONO}+\mu_3)-(t^{OOO}_1-t^{OOO}_0+\mu_1)
\end{align*}}
Finally, we can obtain the numerical bounds for $\tau_{OOO}$ as defined in Theorem \ref{mono_bound}
\begin{equation*}
            \begin{split}
                LB^{\prime}_{\tau_{OOO}} &= LB^{\prime}_{OOO1}-\mathbbm{E}[Y_{1}-Y_{0}|D=0,S_{0}=1,S_{1}=1],\\
                UB^{\prime}_{\tau_{OOO}} &=UB_{OOO1}^{\prime}-\mathbbm{E}[Y_{1}-Y_{0}|D=0,S_{0}=1,S_{1}=1]
            \end{split}
        \end{equation*}
			where,
			{\footnotesize
	\begin{align*}
					&LB_{OOO1}^{\prime} =\frac{p_{OOO1}[\mu_{OOO}\Phi(\alpha_{OOO})-\sigma_{OOO} \phi(\alpha_{OOO})]+(1-p_{OOO1})[\mu_{ONO}\Phi(\alpha_{ONO})-\sigma_{ONO} \phi(\alpha_{ONO})]}{p_{OOO1}\Phi(\alpha_{OOO})+ (1-p_{OOO1})\Phi(\alpha_{ONO})}\\
					&UB_{OOO1}^{\prime} =\frac{p_{OOO1}[\mu_{OOO}[1-\Phi(\alpha'_{OOO})]+\sigma_{OOO} \phi(\alpha'_{OOO})]+(1-p_{OOO1})[\mu_{ONO}[1-\Phi(\alpha'_{ONO})]+\sigma_{ONO} \phi(\alpha'_{ONO})]}{p_{OOO1}[1-\Phi(\alpha'_{OOO})]+ (1-p_{OOO1})[1-\Phi(\alpha'_{ONO})]}\\
                    &\alpha_{OOO} =\tfrac{c-\mu_{OOO}}{\sigma_{OOO}} \quad \alpha_{ONO}=\tfrac{c-\mu_{ONO}}{\sigma_{ONO}}\\
                    &\alpha'_{OOO} =\tfrac{c'-\mu_{OOO}}{\sigma_{OOO}} \quad \alpha'_{ONO}=\tfrac{c'-\mu_{ONO}}{\sigma_{ONO}}\\
                    &\mu_{OOO}=t^{OOO}_1-t^{OOO}_0+\tau^{OOO}+\mu_1=6+\mu_1\quad \sigma^2_{OOO}=\mu_2-\mu^2_1\\  &\mu_{ONO}=t^{ONO}_1-t^{ONO}_0+\tau^{ONO}+\mu_3=5+\mu_3\quad \sigma^2_{ONO}=\mu_4-\mu^2_3\\
                    &c = F_w^{-1}(p_{OOO1}) \quad  c' = F_w^{-1}(1-p_{OOO1})\\
                    &p_{OOO1}=\frac{\pi_{OOO1}}{\pi_{OOO1}+\pi_{ONO1}}\\
     &\mathbbm{E}[Y_{i1}-Y_{i0}|D_i=0,S_{i0}=1,S_{i1}=1] = t^{OOO}_1-t^{OOO}_0+\mu_1
				\end{align*}
    }
  \subsubsection{True ATT Bounds for ONO Group} \label{TrueONO}
  To derive Theoretical bounds for $\tau_{ONO}$ as defined in Theorem \ref{ONO_bound}, we can follow a similar procedure as in \ref{TrueOOO}. Following Equation \ref{c} and \ref{c'} we have 
       
{\footnotesize      
\begin{align*}
    LB_{ONO1} &=\mathbbm{E}[Y_{1}-Y_{0}|D=1,S_{0}=1,S_{1}=1, (Y_{1}-Y_{0})\leq F_{\Delta Y|111}^{-1}(1-p_{OOO1})]= E[w|w\leq c']\\
    &=\frac{p_{OOO1}[\mu_{OOO}\Phi(\alpha'_{OOO})-\sigma_{OOO} \phi(\alpha'_{OOO})]+(1-p_{OOO1})[\mu_{ONO}\Phi(\alpha'_{ONO})-\sigma_{ONO} \phi(\alpha'_{ONO})]}{p_{OOO1}\Phi(\alpha'_{OOO})+ (1-p_{OOO1})\Phi(\alpha'_{ONO})}\\
    UB_{ONO1} &=\mathbbm{E}[Y_{1}-Y_{0}|D=1,S_{0}=1,S_{1}=1, (Y_{1}-Y_{0}) > F_{\Delta Y|111}^{-1}(p_{OOO1})]=E[w|w> c] \\
    &=\frac{p_{OOO1}[\mu_{OOO}[1-\Phi(\alpha_{OOO})]+\sigma_{OOO} \phi(\alpha_{OOO})]+(1-p_{OOO1})[\mu_{ONO}[1-\Phi(\alpha_{ONO})]+\sigma_{ONO} \phi(\alpha_{ONO})]}{p_{OOO1}[1-\Phi(\alpha_{OOO})]+ (1-p_{OOO1})[1-\Phi(\alpha_{ONO})]}
\end{align*}}

$Y_{01}^{LB}$, which is the true lower bound of the untreated potential outcome's support in the post-treatment period, is $\min[Y^\ast_{i1}(0)]= \min[t^{g_i}_1 + c_i+ u_{i1}]=1-5-5=-9$
\begin{align*}
    \mathbbm{E}[Y_{i1}|D_i=0,S_{i0}=1,S_{i1}=1]
    &=\mathbbm{E}[S_{i1}\cdot Y^\ast_{i1}|D_i=0,S_{i0}=1,S_{i1}=1]\\
    &=\mathbbm{E}[Y^\ast_{i1}(1)|D_i=0,S_{i0}(0)=1,S_{i1}(0)=1] \quad \text{Assumption \ref{monotone}}\\
     &=\mathbbm{E}[Y^\ast_{i1}(1)|D_i=0,S_{i0}(0)=1,S_{i1}(0)=1,S_{i1}(1)=1] \\
     &=\mathbbm{E}[t^{OOO}_1 + c_i+ u_{i1}|D_i=0,S_{i0}(0)=1,S_{i1}(0)=1,S_{i1}(1)=1]\\
     &=t^{OOO}_1+\mathbbm{E}[c_i+u_{i1}|z_1>0,z_2>0,z_3>-1.5,z_4\leq 0]\\
     &=6+\mathbbm{E}[c_i+u_{i1}|z_1>0,z_2>0,z_4<0]
\end{align*}
Let $v_1=c_i+u_{i1}$ and define \[
Z_{124}=(z_1,z_2,z_4)'=(b_i+v_{i0},\, b_i+v_{i1},\, a_i+w_{1i})' 
   \sim N_3\!\left(0,\Sigma_{Z_{124}}\right),
\qquad
\Sigma_{Z_{124}}=
\begin{pmatrix}
2 & 1 & 0\\
1 & 2 & 0\\
0 & 0 & 2
\end{pmatrix}.
\]
 Then the joint $(V_1,Z_{124})$ is Gaussian with $\Sigma_{(V_1,Z_{124})}=(0,\,0.6,\,0.7)$. The event $\{D=0,\; S_0=1,\; S_1=1\}\ $ is equivalent to $Z_{124}\in\mathcal{R}$ with $\mathcal{R}_{OOO0}=\{\,z_1>0,\; z_2>0,\; z_4\le 0\,\}$. Under this truncation $V_1|Z_{124}\sim selection-normal(\mu_{V_1},\sigma^2_{V_1})$ where $\mu_{V_1}=E\bigl[V_1 \mid Z_{124}\in\mathcal R\bigr]=\Sigma_{V_1,Z_{124}}\,\Sigma_{Z_{124},Z_{124}}^{-1}\,E\bigl[Z_{124}\mid \mathcal R\bigr].$ Where $E[Z_{124}\mid \mathcal R]
=\frac{1}{\mathbbm{P}(Z_{124}\in\mathcal R)}\int_{\mathcal R} z_{124}\,\varphi_3\bigl(z_{124};0,\Sigma_{Z_{124}}\bigr)\,dZ_{124}$. Then $\mathbbm{E}[Y_{i1}|D_i=0,S_{i0}=1,S_{i1}=1]=6+\mu_{V_1R_{OOO0}}$.
 Now to derive true values of $LB^{0}_{ONO0}$ and $UB^{0}_{ONO0}$ we need to derive the distribution of $Y_{0}|D=0,S_{0}=1,S_{1}=0$

 \begin{align*} 
    Y_{i0}|D_i=0,S_{i0}=1,S_{i1}=0 
    &=S_{i0}\cdot Y^\ast_{i0}|D_i=0,S_{i0}=1,S_{i1}=0\\ 
    &=Y^\ast_{i0}(0)|D_i=0,S_{i0}(0)=1,S_{i1}(0)=0\\ 
    &=t^{g_i}_0+c_i+ u_{i0}|D_i=0,S_{i0}(0)=1,S_{i1}(0)=0\\
\end{align*}

Under the joint selection $D_{1}=0,S_{0}=1,S_{1}(1)=0$ only latent group $ONN: (S_{0}=1,S_{1}(0)=0,S_{1}(1)=0)$ and $ONO: (S_{0}=1,S_{1}(0)=0,S_{1}(1)=1)$ exists. Therefore, $Y_{0}|D=0,S_{0}=1,S_{1}=0$ will be a two‐component selection Gaussian mixture. Let $v_0=c_i+u_{i0}$. Then the joint $(V_0,Z_{124})$ is Gaussian with $\Sigma_{(V_0,Z_{124})}=(0.7,\,0,\,0.7)$. Under the event $Z_{124}\in\mathcal{R}$, $V_0|Z_{124}\sim selection-normal(\mu_{V_0},\sigma^2_{V_0})$ where $\mu_{V_0R}=E\bigl[V_0 \mid Z_{124}\in\mathcal R\bigr]=\Sigma_{V_0,Z_{124}}\,\Sigma_{Z_{124},Z_{124}}^{-1}\,E\bigl[Z_{124}\mid \mathcal R\bigr].$ Where $E[Z_{124}\mid \mathcal R]
=\frac{1}{\mathbbm{P}(Z_{124}\in\mathcal R)}\int_{\mathcal R} z_{124}\,\varphi_3\bigl(z_{124};0,\Sigma_{Z_{124}}\bigr)\,dZ_{124}$.
$\sigma^2_{V_0R}=\Sigma_{v_0v_0}-\Sigma_{(v_0,Z_{124})}\Sigma_{Z_{124}}^{-1}\Sigma_{(Z_{124},v_0)}+\Sigma_{(v_0,Z_{124})}\Sigma_{Z_{124}}^{-1}Var(Z_{124}\mid \mathcal R)\Sigma_{Z_{124}}^{-1}\Sigma_{(Z_{124},v_0)}$. Define the selection region for $ONN$ as $\mathcal R_{\mathrm{ONN0}}=\{\,z_1 > 0,\;z_2 < 0,\;z_4 < 0\}$ and the selection region for $ONO$ as $\mathcal R_{\mathrm{ONO0}}
=\{\,z_1 > 0,\;-1.5 < z_2 \le 0,\;z_4 < 0\}$. For notation simplicity let $p_{ONO0}=\frac{\pi_{ONO0}}{\pi_{ONO0}+\pi_{ONN0}}$, $\mu_{ONN0}=t_0^{ONN}+\mu_{v_0R_{ONN}}$, $\sigma^2_{ONN0}=\sigma^2_{V_0R_{ONN0}}$,$\mu_{ONO0}=t_0^{ONO0}+\mu_{v_0R_{ONO0}}$ and $\sigma^2_{ONO0}=\sigma^2_{V_0R_{ONO0}}$. 
$$
Y_0|010=Y_{0}|D=0,S_{0}=1,S_{1}=0 \approx  p_{ONO0}N(\mu_{ONO0},\sigma^2_{ONO0})+(1-p_{ONO0}) N(\mu_{ONN0},\sigma^2_{ONN0})
$$
 Following Equation \ref{c} and \ref{c'} we have,
 {\footnotesize
 \begin{align*}
    LB^{0}_{ONO0} &=\mathbbm{E}[Y_{0}|D=0,S_{0}=1,S_{1}=0, Y_{0}\leq F_{Y_0|010}^{-1}(p_{ONO0})]\\
    &=\mathbbm{E}[(Y_0|010)|(Y_0|010)<c_1] \quad c_1=F_{Y_0|010}^{-1}(p_{ONO0})\\
    &=\frac{p_{ONO0}[\mu_{ONO0}\Phi(\alpha_{ONO0})-\sigma_{ONO0} \phi(\alpha_{ONO0})]+(1-p_{ONO0})[\mu_{ONN0}\Phi(\alpha_{ONN0})-\sigma_{ONN0} \phi(\alpha_{ONN0})]}{p_{ONO0}\Phi(\alpha_{ONO0})+ (1-p_{ONO0})\Phi(\alpha_{ONN0})}\\
    &\alpha_{ONO0}=\frac{c_1-\mu_{ONO0}}{\sigma_{ONO0}} \quad \alpha_{ONN0}=\frac{c_1-\mu_{ONN0}}{\sigma_{ONN0}}\\
    UB^{0}_{ONO0} &=\mathbbm{E}[Y_{0}|D=0,S_{0}=1,S_{1}=0, Y_{0} > F_{Y_0|010}^{-1}(1-p_{ONO0})] \\
    &=\mathbbm{E}[(Y_0|010)|(Y_0|010)>c_2] \quad c_2=F_{Y_0|010}^{-1}(1-p_{ONO0})\\
    &=\frac{p_{ONO0}[\mu_{ONO0}[1-\Phi(\alpha'_{ONO0})]+\sigma_{ONO0} \phi(\alpha'_{ONO0})]+(1-p_{ONO0})[\mu_{ONN0}[1-\Phi(\alpha'_{ONN0})]+\sigma_{ONN0} \phi(\alpha'_{ONN0})]}{p_{ONO0}[1-\Phi(\alpha'_{ONO0})]+ (1-p_{ONO0})[1-\Phi(\alpha'_{ONN0})]}\\
    &\alpha'_{ONO0}=\frac{c_2-\mu_{ONO0}}{\sigma_{ONO0}} \quad \alpha'_{ONN0}=\frac{c_2-\mu_{ONN0}}{\sigma_{ONN0}}\\
\end{align*}}
Now we can get the true bounds for $\tau_{ONO}$ as,
{\footnotesize\begin{equation*}
            \begin{split}
                LB_{\tau_{ONO}} &= LB_{ONO1}-\mathbbm{E}[Y_{1}|D=0,S_{0}=1, S_{1}=1]+LB^{0}_{ONO0},\\
                UB_{\tau_{ONO}} &= UB_{ONO1}-Y_{01}^{LB}+UB^{0}_{ONO0}
            \end{split}
        \end{equation*}}
\subsubsection{True ATT Bounds for NNO Group} \label{trueNNO}

To derive Theoretical bounds for $\tau_{ONO}$ as defined in Theorem \ref{NNO_bound}, we can follow a similar procedure as in Section \ref{TrueONO}. First to derive true values of $LB_{NNO1}$ and $UB_{NNO1}$ we need to derive the distribution of $Y_{1}|D=1,S_{0}=0,S_{1}=1$
\begin{align*} 
    Y_{i1}|D_i=1,S_{i0}=0,S_{i1}=1
    &=S_{i1}\cdot Y^\ast_{i1}|D_i=1,S_{i0}=0,S_{i1}=1\\ 
    &=Y^\ast_{i1}(1)|D_i=1,S_{i0}(0)=0,S_{i1}(0)=0,S_{i1}(1)=1\\ 
    &=t^{g_i}_1+\tau^{g_i}+c_i+ u_{i1}|D_i=1,S_{i0}(0)=0,S_{i1}(1)=1
\end{align*}
Under the joint selection $D_{1}=0,S_{0}=0,S_{1}(1)=1$ only latent group $NOO: (S_{0}=0,S_{1}(0)=1,S_{1}(1)=1)$ and $NNO: (S_{0}=0,S_{1}(0)=0,S_{1}(1)=1)$ exists. Therefore, $Y_{1}|D=1,S_{0}=0,S_{1}=1$ will be a two‐component selection Gaussian mixture. As, we already defined $v_1=c_i+u_{i1}$. We already showed that $V_1|Z_{124}\sim selection-normal(\mu_{V_1R},\sigma^2_{V_1R})$ where $\mu_{V_1R}=E\bigl[V_1 \mid Z_{124}\in\mathcal R\bigr]=\Sigma_{V_1,Z_{124}}\,\Sigma_{Z_{124},Z_{124}}^{-1}\,E\bigl[Z_{124}\mid \mathcal R\bigr]$ and $\sigma^2_{V_1R}=\Sigma_{v_1v_1}-\Sigma_{(v_1,Z_{124})}\Sigma_{Z_{124}}^{-1}\Sigma_{(Z_{124},v_1)}+\Sigma_{(v_1,Z_{124})}\Sigma_{Z_{124}}^{-1}Var(Z_{124}\mid \mathcal R)\Sigma_{Z_{124}}^{-1}\Sigma_{(Z_{124},v_1)}$. Define the selection region for $NOO$ as $\mathcal R_{\mathrm{NOO}}=\{\,z_1 \le 0,\;z_2 > 0,\;z_4 > 0\}$ and the selection region for $NNO$ as $\mathcal R_{\mathrm{NNO}}
=\{\,z_1 \le 0,\;-1.5 < z_2 \le 0,\;z_4 > 0\}$. For notation simplicity let $p_{NNO1}=\frac{\pi_{NNO1}}{\pi_{NNO1}+\pi_{NOO1}}$, $\mu_{NNO1}=t_1^{NNO}+\tau^{NNO}+\mu_{v_1R_{NNO}}$, $\sigma^2_{NNO1}=\sigma^2_{V_1R_{NNO}}$, $\mu_{NOO1}=t_1^{NOO}+\tau^{NOO}+\mu_{v_1R_{NOO}}$ and $\sigma^2_{NOO1}=\sigma^2_{V_1R_{NOO}}$. Then,
$$
Y_1|101=Y_{1}|D=1,S_{0}=0,S_{1}=1 \approx  p_{NNO1}N(\mu_{NNO1},\sigma^2_{NNO1})+(1-p_{NNO1}) N(\mu_{NOO1},\sigma^2_{NOO1}).
$$
Then following Equation \ref{c}  and \ref{c'} we have,
\footnotesize{  \begin{align*}
    LB_{NNO1} &=\mathbbm{E}[Y_{1}|D=1,S_{0}=0,S_{1}=1, Y_{1}\leq F_{Y_{1}|101}^{-1}(p_{NNO1})]\\
    &=\mathbbm{E}[(Y_1|101)|(Y_1|101)<c_3] \quad c_3=F_{Y_{1}|101}^{-1}(p_{NNO1})\\
    &=\frac{p_{NNO1}[\mu_{NNO1}\Phi(\alpha_{NNO1})-\sigma_{NNO1} \phi(\alpha_{NNO1})]+(1-p_{NNO1})[\mu_{NOO1}\Phi(\alpha_{NOO1})-\sigma_{NOO1} \phi(\alpha_{NOO1})]}{p_{NNO1}\Phi(\alpha_{NNO1})+ (1-p_{NNO1})\Phi(\alpha_{NOO1})}\\
    &\alpha_{NNO1}=\frac{c_3-\mu_{NNO1}}{\sigma_{NNO1}} \quad \alpha_{NOO1}=\frac{c_3-\mu_{NOO1}}{\sigma_{NOO1}}\\
    UB_{NNO1} &=\mathbbm{E}[Y_{1}|D=1,S_{0}=0,S_{1}=1, Y_{1} > F_{Y_{1}|101}^{-1}(1-p_{NNO1})]\\
    &=\mathbbm{E}[(Y_1|101)|(Y_1|101)>c_4] \quad c_4=F_{Y_{1}|101}^{-1}(1-p_{NNO1})\\
    &=\frac{p_{NNO1}[\mu_{NNO1}[1-\Phi(\alpha'_{NNO1})]+\sigma_{NNO1} \phi(\alpha'_{NNO1})]+(1-p_{NNO1})[\mu_{NOO1}[1-\Phi(\alpha'_{NOO1})]+\sigma_{NOO1} \phi(\alpha'_{NOO1})]}{p_{NNO1}[1-\Phi(\alpha'_{NNO1})]+ (1-p_{NNO1})[1-\Phi(\alpha'_{NOO1})]}\\
    &\alpha'_{NNO1}=\frac{c_3-\mu_{NNO1}}{\sigma_{NNO1}} \quad \alpha'_{NOO1}=\frac{c_4-\mu_{NOO1}}{\sigma_{NOO1}}
\end{align*}}
From section \ref{TrueONO} we have,
 \begin{align*}
    LB^{0}_{ONO0} &=\mathbbm{E}[Y_{0}|D=0,S_{0}=1,S_{1}=0, Y_{0}\leq F_{Y_0|010}^{-1}(p_{ONO0})]\\
    &=\mathbbm{E}[(Y_0|010)|(Y_0|010)<c_1] \quad c_1=F_{Y_0|010}^{-1}(p_{ONO0})\\
    &=\frac{p_{ONO0}[\mu_{ONO0}\Phi(\alpha_{ONO0})-\sigma_{ONO0} \phi(\alpha_{ONO0})]+(1-p_{ONO0})[\mu_{ONN0}\Phi(\alpha_{ONN0})-\sigma_{ONN0} \phi(\alpha_{ONN0})]}{p_{ONO0}\Phi(\alpha_{ONO0})+ (1-p_{ONO0})\Phi(\alpha_{ONN0})}\\
    &\alpha_{ONO0}=\frac{c_1-\mu_{ONO0}}{\sigma_{ONO0}} \quad \alpha_{ONN0}=\frac{c_1-\mu_{ONN0}}{\sigma_{ONN0}}
\end{align*}
To derive true values of $LB^{0}_{ONO1}$ we need to derive the distribution of $Y_{0}|D=1,S_{0}=1,S_{1}=1$
\begin{align*} 
    Y_{i0}|D_i=1,S_{i0}=1,S_{i1}=1
    &=S_{i0}\cdot Y^\ast_{i0}|D_i=1,S_{i0}=1,S_{i1}=1\\ 
    &=Y^\ast_{i0}(1)|D_i=1,S_{i0}(0)=1,S_{i1}(0)=1,S_{i1}(1)=1\\ 
    &=t^{g_i}_0+c_i+ u_{i0}|D_i=1,S_{i0}(0)=1,S_{i1}(1)=1
\end{align*}
Under the joint selection $D_{1}=1,S_{0}=1,S_{1}(1)=1$ only latent group $OOO: (S_{0}=1,S_{1}(0)=1,S_{1}(1)=1)$ and $ONO: (S_{0}=1,S_{1}(0)=0,S_{1}(1)=1)$ exists. Therefore, $Y_{0}|D=1,S_{0}=1,S_{1}=1$ will be a two‐component selection Gaussian mixture. As, we already defined $v_0=c_i+u_{i0}$. We already showed that $V_0|Z_{124}\sim selection-normal(\mu_{V_0R},\sigma^2_{V_0R})$ where $\mu_{V_0R}=E\bigl[V_0 \mid Z_{124}\in\mathcal R\bigr]=\Sigma_{V_0,Z_{124}}\,\Sigma_{Z_{124},Z_{124}}^{-1}\,E\bigl[Z_{124}\mid \mathcal R\bigr]$ and $\sigma^2_{V_0R}=\Sigma_{v_0v_0}-\Sigma_{(v_0,Z_{124})}\Sigma_{Z_{124}}^{-1}\Sigma_{(Z_{124},v_0)}+\Sigma_{(v_1,Z_{124})}\Sigma_{Z_{124}}^{-1}Var(Z_{124}\mid \mathcal R)\Sigma_{Z_{124}}^{-1}\Sigma_{(Z_{124},v_0)}$. Define the selection region for $OOO$ as $\mathcal R_{\mathrm{OOO}}=\{\,z_1 > 0,\;z_2 > -1.5,\;z_4 > 0\}$ and the selection region for $ONO$ as $\mathcal R_{\mathrm{ONO}}
=\{\,z_1 > 0,\;-1.5 < z_2 \le 0,\;z_4 > 0\}$. For notation simplicity let $\mu_{ONO1}=t_0^{ONO}+\mu_{v_0R_{ONO}}$, $\sigma^2_{ONO1}=\sigma^2_{V_0R_{NNO}}$, $\mu_{OOO1}=t_0^{OOO}+\mu_{v_0R_{OOO}}$ and $\sigma^2_{OOO1}=\sigma^2_{V_0R_{OOO}}$. Then,
$$
Y_0|111=Y_{0}|D=1,S_{0}=1,S_{1}=1 \approx  p_{OOO1}N(\mu_{OOO1},\sigma^2_{OOO1})+(1-p_{OOO1}) N(\mu_{ONO1},\sigma^2_{ONO1}).
$$
Then following Equation \ref{c} we have,
{\footnotesize  \begin{align*}
     LB_{ONO1} &=\mathbbm{E}[Y_{0}|D=1,S_{0}=1,S_{1}=1, Y_{0}\leq F_{Y_{0}|111}^{-1}(1-p_{OOO1})]\\
     &=\mathbbm{E}[(Y_0|111)|(Y_0|111)<c_5] \quad c_5=F_{Y_{0}|111}^{-1}(1-p_{OOO1})\\
     &=\frac{(1-p_{OOO1})[\mu_{ONO1}\Phi(\alpha_{ONO1})-\sigma_{ONO1} \phi(\alpha_{ONO1})]+p_{OOO1}[\mu_{OOO1}\Phi(\alpha_{OOO1})-\sigma_{OOO1} \phi(\alpha_{OOO1})]}{(1-p_{OOO1})\Phi(\alpha_{ONO1})+ p_{OOO1}\Phi(\alpha_{OOO1})}\\
    &\alpha_{ONO1}=\frac{c_5-\mu_{ONO1}}{\sigma_{ONO1}} \quad \alpha_{NOO1}=\frac{c_5-\mu_{OOO1}}{\sigma_{OOO1}}
\end{align*}}
$Y_{01}^{LB}$, which is the true lower bound of the untreated potential outcome's support in the post-treatment period, is $\min[Y^\ast_{i1}(0)]= \min[t^{g_i}_1 + c_i+ u_{i1}]=1-5-5=-9$. $Y_{00}^{LB}$, which is the true lower bound of the untreated potential outcome's support in the pre-treatment period, is $\min[Y^\ast_{i0}(0)]= \min[t^{g_i}_0 + c_i+ u_{i0}]=0-5-5=-10$. $Y_{10}^{LB}$, which is the true lower bound of the treated potential outcome's support in the pre-treatment period, is $\min[Y^\ast_{i0}(1)=Y^\ast_{i0}(0)]= -10$.
\begin{align*}
    \mathbbm{E}[Y_{i1}|D_i=0,S_{i0}=0,S_{i1}=1]
    &=\mathbbm{E}[S_{i1}\cdot Y^\ast_{i1}|D_i=0,S_{i0}=0,S_{i1}=1]\\
    &=\mathbbm{E}[Y^\ast_{i1}(1)|D_i=0,S_{i0}(0)=0,S_{i1}(0)=1] \quad \text{Assumption \ref{monotone}}\\
     &=\mathbbm{E}[Y^\ast_{i1}(1)|D_i=0,S_{i0}(0)=0,S_{i1}(0)=1,S_{i1}(1)=1] \\
     &=\mathbbm{E}[t^{NOO}_1 + c_i+ u_{i1}|D_i=0,S_{i0}(0)=0,S_{i1}(0)=1,S_{i1}(1)=1]\\
     &=t^{NOO}_1+\mathbbm{E}[c_i+u_{i1}|z_1>0,z_2>0,z_3>0,Z_4\leq 0]\\
     &=4+\mathbbm{E}[c_i+u_{i1}|z_1<0,z_2>0,Z_4<0]
\end{align*}
As, we already defined $v_1=c_i+u_{i1}$. We already showed that $V_1|Z_{124}\sim selection-normal(\mu_{V_1R},\sigma^2_{V_1R})$ where $\mu_{V_1R}=E\bigl[V_1 \mid Z_{124}\in\mathcal R\bigr]=\Sigma_{V_1,Z_{124}}\,\Sigma_{Z_{124},Z_{124}}^{-1}\,E\bigl[Z_{124}\mid \mathcal R\bigr]$. Here the selection region $\mathcal R$ for $NOO$ as $\mathcal R_{\mathrm{NOO0}}=\{\,z_1 < 0,\;z_2 > 0,\;z_4 < 0\}$. Where $E[Z_{124}\mid \mathcal R_{\mathrm{NOO0}}]=\frac{1}{\mathbbm{P}(Z_{124}\in \mathcal R_{\mathrm{NOO0}})}\int_{\mathcal R_{\mathrm{NOO0}}} z_{124}\,\varphi_3\bigl(z_{124};0,\Sigma_{Z_{124}}\bigr)\,dZ_{124}$. Then $\mathbbm{E}[(Y_{i1}|D_i=0,S_{i0}=0,S_{i1}=1]=4+\mu_{V_1R_{NOO0}}$.

Combining this, we can derive true theoretical bounds for NNO group as 
{\footnotesize\begin{equation*}
    \begin{split}
        LB_{\tau_{NNO}} &= LB_{NNO1}-LB^{0}_{ONO1}-\mathbbm{E}[Y_{1}|D=0,S_{0}=0, S_{1}=1]+Y_{00}^{LB},\\
         UB_{\tau_{NNO}} &= UB_{NNO1}-Y_{10}^{LB}- Y_{01}^{LB}+LB^{0}_{ONO0}
    \end{split}
\end{equation*}}

\subsubsection{True ATT Bounds for NOO Group} 
To derive Theoretical bounds for $\tau_{NOO}$ as defined in Theorem \ref{NOO_bound}, we can follow a similar procedure as in section \ref{TrueONO}. We already showed in the sections \ref{TrueONO} and \ref{trueNNO} that $\mathbbm{E}[(Y_{i1}|D_i=0,S_{i0}=1,S_{i1}=1]=6+\mu_{V_1R_{OOO0}}$ , $\mathbbm{E}[Y_{1}|D=0,S_{0}=0, S_{1}=1]=4+\mu_{V_1R_{NOO}}$ and $Y_{00}^{LB}=Y_{10}^{LB}=-10$. To derive true values of $LB_{NOO1}$ and $UB_{NOO1}$ we need to derive the distribution of $Y_{1}|D=1,S_{0}=0,S_{1}=1$, which we derived in section \ref{trueNNO}. Thus, following that derivation, we have,
 \begin{align*}
    LB_{NOO1} &=\mathbbm{E}[Y_{1}|D=1,S_{0}=0,S_{1}=1, Y_{1}\leq F_{Y_{1}|101}^{-1}(1-p_{NNO1})]\\
    &=\mathbbm{E}[(Y_1|101)|(Y_1|101)<c_4]\\
    &=\frac{p_{NNO1}[\mu_{NNO1}\Phi(\alpha'_{NNO1})-\sigma_{NNO1} \phi(\alpha'_{NNO1})]+(1-p_{NNO1})[\mu_{NOO1}\Phi(\alpha'_{NOO1})-\sigma_{NOO1} \phi(\alpha'_{NOO1})]}{p_{NNO1}\Phi(\alpha'_{NNO1})+ (1-p_{NNO1})\Phi(\alpha'_{NOO1})}\\
     &\alpha'_{NNO1}=\frac{c_4-\mu_{NNO1}}{\sigma_{NNO1}} \quad \alpha'_{NOO1}=\frac{c_4-\mu_{NOO1}}{\sigma_{NOO1}}\\
    UB_{NOO1} &=\mathbbm{E}[Y_{1}|D=1,S_{0}=0,S_{1}=1, Y_{1} > F_{Y_{1}|101}^{-1}(p_{NNO1})] \\
    &=\mathbbm{E}[(Y_1|101)|(Y_1|101)>c_3]\\
    &=\frac{p_{NNO1}[\mu_{NNO1}[1-\Phi(\alpha_{NNO1})]+\sigma_{NNO1} \phi(\alpha_{NNO1})]+(1-p_{NNO1})[\mu_{NOO1}[1-\Phi(\alpha_{NOO1})]+\sigma_{NOO1} \phi(\alpha_{NOO1})]}{p_{NNO1}[1-\Phi(\alpha_{NNO1})]+ (1-p_{NNO1})[1-\Phi(\alpha_{NOO1})]}\\
    &\alpha_{NNO1}=\frac{c_3-\mu_{NNO1}}{\sigma_{NNO1}} \quad \alpha_{NOO1}=\frac{c_3-\mu_{NOO1}}{\sigma_{NOO1}}
\end{align*}
To derive true values of $LB^{0}_{OOO1}$ we need to derive the distribution of $Y_{0}|D=1,S_{0}=1,S_{1}=1$, which we derived in section \ref{trueNNO}. Thus, following that derivation, we have,
\begin{align*}
    LB^{0}_{OOO1} &=\mathbbm{E}[Y_{0}|D=1,S_{0}=1,S_{1}=1, Y_{0}\leq F_{Y_{0}|111}^{-1}(p_{OOO1})]\\
    &=\mathbbm{E}[(Y_0|111)|(Y_0|111)<c_6] \quad c_6=F_{Y_{0}|111}^{-1}(p_{OOO1})\\
     &=\frac{(1-p_{OOO1})[\mu_{ONO1}\Phi(\alpha_{ONO1})-\sigma_{ONO1} \phi(\alpha_{ONO1})]+p_{OOO1}[\mu_{OOO1}\Phi(\alpha_{OOO1})-\sigma_{OOO1} \phi(\alpha_{OOO1})]}{(1-p_{OOO1})\Phi(\alpha_{ONO1})+ p_{OOO1}\Phi(\alpha_{OOO1})}\\
    &\alpha_{ONO1}=\frac{c_6-\mu_{ONO1}}{\sigma_{ONO1}} \quad \alpha_{NOO1}=\frac{c_6-\mu_{OOO1}}{\sigma_{OOO1}}
\end{align*} 
\begin{align*}
    \mathbbm{E}[Y_{i0}|D_i=0,S_{i0}=1,S_{i1}=1]
    &=\mathbbm{E}[S_{i0}\cdot Y^\ast_{i1}|D_i=0,S_{i0}=1,S_{i1}=1]\\
    &=\mathbbm{E}[Y^\ast_{i0}(0)|D_i=0,S_{i0}(0)=1,S_{i1}(0)=1] \quad \text{Assumption \ref{monotone}}\\
     &=\mathbbm{E}[Y^\ast_{i0}(0)|D_i=0,S_{i0}(0)=1,S_{i1}(0)=1,S_{i1}(1)=1] \\
     &=\mathbbm{E}[t^{OOO}_0 + c_i+ u_{i0}|D_i=0,S_{i0}(0)=1,S_{i1}(0)=1,S_{i1}(1)=1]\\
     &=t^{OOO}_0+\mathbbm{E}[c_i+u_{i0}|z_1>0,z_2>0,z_3>-1.5,z_4\leq 0]\\
     &=5+\mathbbm{E}[c_i+u_{i0}|z_1>0,z_2>0,z_4<0]
\end{align*}  
Under the joint selection $D_{1}=0,S_{0}=1,S_{1}(1)=1$ only latent group $OOO: (S_{0}=1,S_{1}(0)=1,S_{1}(1)=1)$  exists.  As we already defined $v_0=c_i+u_{i0}$. We already showed that $V_0|Z_{124}\sim selection-normal(\mu_{V_0R},\sigma^2_{V_0R})$ where $\mu_{V_0R}=E\bigl[V_0 \mid Z_{124}\in\mathcal R\bigr]=\Sigma_{V_0,Z_{124}}\,\Sigma_{Z_{124},Z_{124}}^{-1}\,E\bigl[Z_{124}\mid \mathcal R\bigr]$ and $\sigma^2_{V_0R}=\Sigma_{v_0v_0}-\Sigma_{(v_0,Z_{124})}\Sigma_{Z_{124}}^{-1}\Sigma_{(Z_{124},v_0)}+\Sigma_{(v_1,Z_{124})}\Sigma_{Z_{124}}^{-1}Var(Z_{124}\mid \mathcal R)\Sigma_{Z_{124}}^{-1}\Sigma_{(Z_{124},v_0)}$. Define the selection region for $OOO$ as $\mathcal R_{\mathrm{OOO}}=\{\,z_1 > 0,\;z_2 > 0,\;z_4 < 0\}$ Then  $\mathbbm{E}[Y_{i0}|D_i=0,S_{i0}=1,S_{i1}=1]=\mu_{OOO1}=t_0^{OOO}+\mu_{v_0R_{OOO}}$.  
Combining this, we can derive true theoretical bounds for NOO group as 
{\footnotesize\begin{equation*}
    \begin{split}
        LB_{\tau_{NOO}} &= LB_{NOO1}-LB^{0}_{OOO1}-\mathbbm{E}[Y_{1}|D=0,S_{0}=0, S_{1}=1]+Y_{00}^{LB},\\
        UB_{\tau_{NOO}} &= UB_{NOO1}-Y_{10}^{LB}-\mathbbm{E}[Y_{1}|D=0,S_{0}=0, S_{1}=1]+\mathbbm{E}[Y_{0}|D=0,S_{0}=1, S_{1}=1]],
    \end{split}
\end{equation*}}
 
\subsection{Simulation Results for Other Latent Groups}\label{simu_other}
\begin{figure}[h]
     \centering
         \centering
         \caption{Distribution of estimated mixture proportion, $\hat{p}_{ONO0}$, with monotonicity}
         \includegraphics[width=0.5\linewidth]{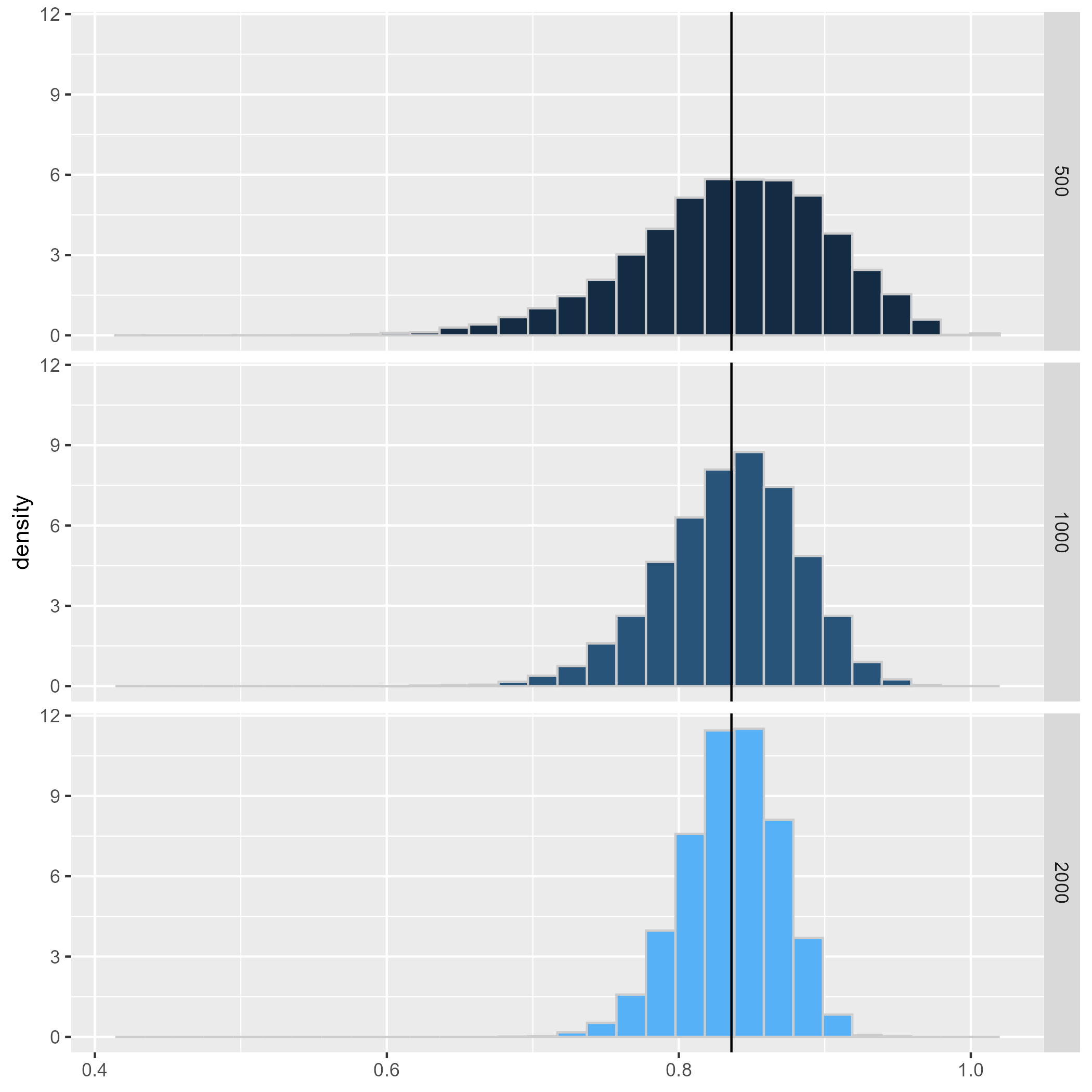} 
        \label{fig: mixp_ONO0}
        \begin{minipage}{0.7\textwidth}
        \footnotesize \textit{Note:} This figure presents the empirical sampling distribution of $\hat{p}_{ONO0}$ for sample sizes $n \in \{500, 1000, 2000\}$, based on 10,000 replications. The vertical black line is the true value of the mixing proportion, $p_{ONO0} = 0.8360$.
        \end{minipage}
\end{figure}
\begin{figure}[h]
     \centering
         \centering
         \caption{Distribution of estimated mixture proportion, $\hat{p}_{NNO1}$, with monotonicity}
         \includegraphics[width=0.5\linewidth]{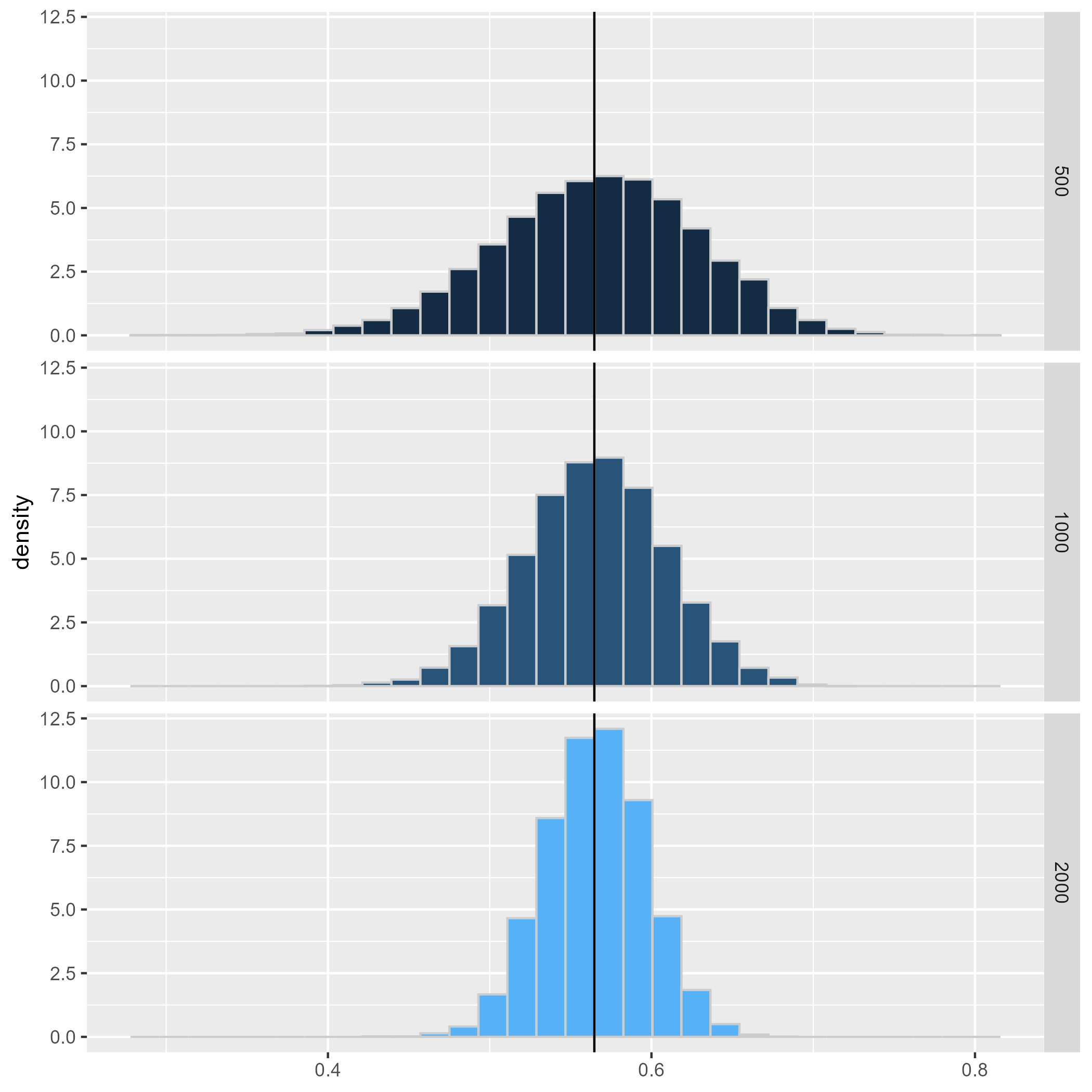} 
        \label{fig: mixp_NNO1}
        \begin{minipage}{0.7\textwidth}
        \footnotesize \textit{Note:} This figure presents the empirical sampling distribution of $\hat{p}_{NNO1}$ for sample sizes $n \in \{500, 1000, 2000\}$, based on 10,000 replications. The vertical black line is the true value of the mixing proportion, $p_{NNO1} = 0.5647$.
        \end{minipage}
\end{figure}

\begin{table}[H]
\footnotesize
  \caption{Estimated bounds for other latent groups} 
  \centering
  \label{simu bounds_other groups}%
  \begin{threeparttable}
    \begin{tabular}{lccccccc}
    \toprule
         & $[\widehat{LB}_{\tau_{ONO}}, \widehat{UB}_{\tau_{ONO}}]$ & Coverage & $[\widehat{LB}_{\tau_{NOO}}, \widehat{UB}_{\tau_{NOO}}]$ & Coverage & $[\widehat{LB}_{\tau_{NNO}}, \widehat{UB}_{\tau_{NNO}}]$ & Coverage \\
    \midrule
    $n=500$    & [0.9673,20.3819] && [-15.0792,18.7858] && [-13.7444,29.6986] &\\
    95\% CI & (0.0353,21.1901) & 0.9197 &(-15.8642,19.6515)  & 0.9324&(-14.5928,30.6509)  & 0.9291\\
    IM 95\% CI &  (0.1850,21.0603) & 1 & (-15.7380,19.5124)  & 1  &(-14.4565,30.4978)  &  1\\
    \hline
    $n=1000$   & [0.9609,20.3909] && [-15.0970,18.7926] &&[-13.7642,29.7208] & \\
    95\% CI &  (0.3138,20.9597) & 0.9219 &(-15.6439,19.4053)    & 0.9356&(-14.3596,30.3888) & 0.9265\\
    IM 95\% CI &   (0.4178,20.8683) & 1 &(-15.5561,19.3068)    & 1&(0.4178,20.8683) & 1\\
    \hline
    $n=2000$   & [0.9684,20.3974] & & [-15.1076,18.7972] &&  [-13.7724,29.7317] &  \\
    95\% CI    & (0.5148,20.7984) & 0.9294 & (-15.4910,19.2305)   & 0.9392& (-14.1919,30.2028) & 0.9263\\
    IM 95\% CI   & (0.5877,20.7340) & 1 & (-15.4294,19.1609)   & 1& (-14.1245,30.1271) & 1\\
\hline 
\end{tabular}%
\end{threeparttable}
    \begin{tablenotes}[flushleft]
    \item\footnotesize Notes: The table presents estimated bounds $\tau_{ONO}$,$\tau_{NOO}$ and $\tau_{NNO}$ for three sample sizes, $n \in \{500,1000,2000\}$. The estimated bounds have been averaged across 10,000 replications.  The true bounds for the simulated data for the latent groups are as follows: $[LB_{\tau_{ONO}}, UB_{\tau_{ONO}}]=[0.9661,20.4025]$, , $[LB_{\tau_{NOO}}, UB_{\tau_{NOO}}] = [-15.1222,18.8031]$, and  $[LB_{\tau_{NNO}}, UB_{\tau_{NNO}}]  = [-13.7919,29.7483]$, respectively. The 95\% confidence interval (CI) reflects the coverage probability of the true interval. The Imbens and Manski (IM) 95\% CI gives the coverage probability of the true parameter.
    \end{tablenotes}
\end{table}%

\subsection{DGP with Mixing Proportion Approaching One}\label{change in mixp}
 \begin{table}[H]
\centering
\caption{Na\"{i}ve DiD estimates and estimated bounds for $\tau_{OOO}$ when $p_{OOO1}\rightarrow 1$}\label{tbl:change Mixp}
\begin{threeparttable}
\begin{tabular}{ccccccc} 
\toprule
$\zeta$  & $p_{OOO1}$ & $\tau_{OOO}$ & $n$ & $\hat{p}_{OOO1}$  & $\hat{\tau}_{\textup{DiDs}}$ & $[\widehat{LB}_{\tau_{OOO}}, \widehat{UB}_{\tau_{OOO}}]$ \\
\midrule
\multirow{3}{*}{0.1}  & \multirow{3}{*}{0.9601} & \multirow{3}{*}{2.0703} & 500  & 0.9166 & 4.9388 & [4.7199, 5.1623] \\
&&& 1000 & 0.9353 & 4.9374 & [4.7627, 5.1166] \\
&&& 2000 & 0.9469 & 4.9368 & [4.7896, 5.0881] \\
\midrule
\multirow{3}{*}{0.4}  & \multirow{3}{*}{0.8649} & \multirow{3}{*}{2.2747} & 500  & 0.8600 & 4.7712 & [4.4181, 5.1387] \\
&&& 1000 & 0.8651 & 4.7688 & [4.4268, 5.1252] \\
&&& 2000 & 0.8656 & 4.7673 & [4.4258, 5.1240] \\
\midrule
\multirow{3}{*}{0.6}  & \multirow{3}{*}{0.8178} & \multirow{3}{*}{2.4029} & 500  & 0.8173 & 4.6812 & [4.2261, 5.1532] \\
&&& 1000 & 0.8186 & 4.6791 & [4.2268, 5.1505] \\
&&& 2000 & 0.8184 & 4.6802 & [4.2277, 5.1522] \\
\midrule
\multirow{3}{*}{0.8}  & \multirow{3}{*}{0.7807} & \multirow{3}{*}{2.5220} & 500  & 0.7816 & 4.6046 & [4.0624, 5.1659] \\
&&& 1000 & 0.7815 & 4.6040 & [4.0610, 5.1667] \\
&&& 2000 & 0.7808 & 4.6066 & [4.0622, 5.1717] \\
\midrule
\multirow{3}{*}{1.0}  & \multirow{3}{*}{0.7517} & \multirow{3}{*}{2.6306} & 500  & 0.7528 & 4.5387 & [3.9231, 5.1743] \\
&&& 1000 & 0.7519 & 4.5403 & [3.9228, 5.1789] \\
&&& 2000 & 0.7518 & 4.5434 & [3.9255, 5.1823] \\
\midrule
\multirow{3}{*}{1.5}  & \multirow{3}{*}{0.7052} & \multirow{3}{*}{2.8506} & 500  & 0.7052 & 4.4297 & [3.6864, 5.1922] \\
&&& 1000 & 0.7053 & 4.4301 & [3.6871, 5.1937] \\
&&& 2000 & 0.7053 & 4.4311 & [3.6875, 5.1954] \\
\midrule
\multirow{3}{*}{2.0}  & \multirow{3}{*}{0.6824} & \multirow{3}{*}{2.9979} & 500  & 0.6827 & 4.3677 & [3.5592, 5.1959] \\
&&& 1000 & 0.6825 & 4.3657 & [3.5560, 5.1960] \\
&&& 2000 & 0.6826 & 4.3679 & [3.5573, 5.1996] \\
\midrule
\multirow{3}{*}{3.0}  & \multirow{3}{*}{0.6685} & \multirow{3}{*}{3.1313} & 500  & 0.6687 & 4.3216 & [3.4669, 5.1976] \\
&&& 1000 & 0.6681 & 4.3234 & [3.4657, 5.2030] \\
&&& 2000 & 0.6689 & 4.3212 & [3.4651, 5.1998] \\
\bottomrule
\end{tabular}
\begin{tablenotes}[flushleft]
    \item\footnotesize Notes: The table reports the estimated bounds for $\tau_{OOO}$, and point estimates of the na\"{i}ve DiD estimand $\tau_{\textup{DiDs}}$ for values of the true mixing proportion $p_{OOO1}$ approaching one, across sample sizes $n \in \{500, 1000, 2000\}$. All estimates are averaged across 10,000 replications. 
    \end{tablenotes}
    \end{threeparttable}
\end{table}

 \subsection{DGP that Violates Monotonicity} \label{without mono DGP}

The DGP given in section \ref{simulation} was modified to violate monotonicity as follows. For each individual, the parameter $\zeta$ is drawn at random from the set \{-1.5, 0, 1.5\}, with equal probability assigned to each value. This generates positive selection for some units, negative selection for others, and no selection for the remainder, such that the monotonicity assumption is violated, in aggregate, for the overall sample. This construction also ensures that all eight latent groups are present in the population. Recall that latent group $g$ is defined by the tuple $(S_{i0}, S_{i1}(0), S_{i1}(1)) \in \{NNN,NNO,NOO,ONN,ONO,OOO,NON,OON\}$. For this modified DGP, we set $\tau_g=(-1,1,3,1,4,5-1,-2)'$, $t_0=(0,2,3,1,4,5,0,1)'$ and $t_1=(1,3,4,2,5,6,1,2)'$, while all other parameter values are kept the same as in Section \ref{simulation}. With this modification, the true overall ATT is $\tau = 1.5947$. The true latent-group-specific ATTs also change for two groups: $\tau_{NON} = -1$ and $\tau_{OON} = -2$, while the ATTs for all remaining latent groups remain the same as in Section \ref{simulation}. Table \ref{simu bounds nomono OOO} reports the simulated bounds for this DGP based on Theorem \ref{nomono_bound}. The bounds contain the true ATT for the OOO group but they are wider than the corresponding bounds obtained under the DGP satisfying monotonicity in Section \ref{simulation}.

\begin{table}[H]
     \caption{Estimated bounds for $\tau_{OOO}$ with monotonicity violations}
     \centering
     \label{simu bounds nomono OOO}
     \small
     \begin{threeparttable}
    \begin{tabular}{ccccc}
    \toprule
       $n$   &$\hat{\tau}_{\textup{DiDs}}$     &$[\widehat{LB}_{\tau_{OOO}}, \widehat{UB}_{\tau_{OOO}}]$  & 95\% CI & IM 95\% CI\\  
    \hline
    $500$               &4.7703        & [2.2637,7.2626] & (1.2469,8.1949)  & (1.4103,8.0450)     \\
    $1000$               &4.7469       & [2.3403,7.1448] & (1.6721,7.7522)   & (1.7795,7.6546)       \\
    $2000$               &  4.7652    & [2.3309,7.1902]    & (1.8574,7.6170) & (1.9335,7.5484)       \\
    \bottomrule
    \end{tabular}
    \begin{tablenotes}[flushleft]
    \item\footnotesize Notes: This table reports estimated bounds for $\tau_{OOO}$ for three sample sizes, $n=\{500,1000,2000\}$. The bounds have been averaged across 10,000 replications. The 95\% confidence interval (CI) reflects the coverage probability of the true interval. The Imbens and Manski (IM) 95\% CI gives the coverage probability of the true parameter. 
    \end{tablenotes}
    \end{threeparttable}
\end{table}

\subsection{DGP with Assumption \ref{Partialselection}(a), \ref{Partialselection}(b), and no monotonicity} \label{DGP_4ab}

The post-treatment counterfactual selection indicators in the DGP from Section \ref{simulation} are modified to construct a setting where Assumptions \ref{Partialselection}(a) and \ref{Partialselection}(b) hold, while monotonicity and \ref{inde_conditional} fail. In particular, we let the correlation between the selection shocks at $t=1$ to differ by treatment status. Define $\rho_i=\rho_0\cdot (1-D_i)+ \rho_1\cdot D_i $. Then define $v_{i1}(0)=v_{i1}$ and $v_{i1}(1)=\rho_i\cdot v_{i1}+\sqrt{1-\rho_i^2}\cdot k_{i1}$ where $k_{i1}\sim N(0,1)$. Selection at time $t=1$ is then given by $S_1(0)=\mathbbm{I}(b_i+v_{i1}(0)>0)$ and  $S_1(1)=\mathbbm{I}(b_i+v_{i1}(1)>0)$. This structure would ensure 4(a) and 4(b) hold while the joint distribution of $(S_{i1}(0),S_{i1}(1))$ depends on treatment status $D_i$ through $\rho_i$. Here we let $\rho_0=0.3$ and $\rho_1=0.9$ to impose a strong correlation when treated and low correlation when untreated. As we do not impose monotonicity here, this will allow the population to consist of all eight latent groups.  We set $\tau_g=(-1,1,3,1,4,5-1,-2)'$, $t_0=(0,2,3,1,4,5,0,1)'$ and $t_1=(1,3,4,2,5,6,1,2)'$, while all other parameter values are kept the same as in Section \ref{simulation}. With this modification, the true $\tau_{OOO}$ remains unchanged while the new overall ATT is $\tau=1.8494$. 

\begin{table}[H]
    \caption{Estimated bounds for $\tau_{OOO}$ under Assumption \ref{Partialselection}(a) \& (b)}
    \centering
    \label{simu bounds_OOO_4ab}
    \small
    \begin{threeparttable}
    \begin{tabular}{cccc}
    \toprule
    $n$      & $[\widehat{LB}_{\tau_{OOO}}, \widehat{UB}_{\tau_{OOO}}]$  & 95\% CI & IM 95\% CI\\
    \hline
    $500 $                      & [2.7831,7.0075]  & (1.9383,7.8162) & (2.0741,7.6862) \\
    $1000 $                   & [2.7837,7.0090]   & (2.1968,7.5703)& (2.2911,7.4801) \\
    $2000$                     & [2.7898,7.0081] & (2.3807,7.4051)& (2.4465,7.3413)\\
    \bottomrule
    \end{tabular}
    \begin{tablenotes}[flushleft]
    \item\footnotesize Notes: The table presents estimated bounds for $\tau_{OOO}$ for three sample sizes $n\in \{500,1000,2000\}$. The bounds have been averaged across 10,000 replications. The 95\% confidence interval (CI) reflects the coverage probability of the true interval. The Imbens and Manski (IM) 95\% CI gives the coverage probability of the true parameter. 
    \end{tablenotes}
    \end{threeparttable}
\end{table}
\subsection{DGP with Assumption \ref{Partialselection}(a) and Monotonicity } \label{DGP_4a} 

The post-treatment counterfactual selection indicators in the DGP from section \ref{simulation} are modified to allow Assumption \ref{Partialselection}(a) and monotonicity to hold, while assumptions \ref{Partialselection}(b) and \ref{inde_conditional} fail. First, we draw a latent selection index with a treatment shift as follows: $s_0=b_i+v_{i1}$ and $s_1=s_0+\delta D$, where $\delta>0$ and both variables have thresholds at zero: $S_1(0)=\mathbbm{1}(s_0>0)$, $S_1(1)=\mathbbm{1}(s_1>0)$. Here we let $\delta=5$ to impose a strong treatment effect on selection. With this modification, true $\tau_{OOO}$ remains unchanged, while the new overall ATT is $\tau=3.1659$. 

\begin{table}[H]
    \caption{Estimated  bounds for $\tau_{OOO}$ under Assumption \ref{Partialselection}(a) and Monotonicity}
    \centering
    \label{simu bounds_OOO_4a}
    \small
    \begin{threeparttable}
    \begin{tabular}{ccccc}
    \toprule
    $n$      & $[\widehat{LB}_{\tau_{OOO}}, \widehat{UB}_{\tau_{OOO}}]$ &95\% CI  &IM 95\% CI  \\
    \hline
    $500$                      & [3.4530,5.1974]   &(2.9515,5.6639)  & (3.0321,5.5889)       \\
    $1000 $                    & [3.4497,5.2007]  & (3.0916,5.5298)  &(3.1492,5.4770)        \\
    $2000$                      & [3.4490,5.2018]  &(3.1976,5.4331) & (3.2380,5.3959)       \\
    \bottomrule
    \end{tabular}
    \begin{tablenotes}[flushleft]
    \item\footnotesize Notes: The table presents estimated bounds for $\tau_{OOO}$ for three sample sizes $n\in \{500,1000,2000\}$. The bounds have been averaged across 10,000 replications. The 95\% confidence interval (CI) reflects the coverage probability of the true interval. The Imbens and Manski (IM) 95\% CI gives the coverage probability of the true parameter.
    \end{tablenotes}
    \end{threeparttable}
\end{table}
\subsection{NSW Training Program: Sample Selection Based on Unemployment}
    \label{NSW_unemployed}
\begin{table}[H]
\caption{Unemployment rates in NSW survey }
 \label{unemployment rate}
\begin{adjustbox}{center=\textwidth}
\begin{tabular}{lllllll}
\toprule
                      & \multicolumn{3}{l}{Pre-treatment period}                                                           & \multicolumn{3}{l}{Post-treatment period}                                                      \\ \cmidrule{2-7}
\multicolumn{1}{l}{} & \multicolumn{1}{l}{Controls} & \multicolumn{1}{l}{Treated} & \multicolumn{1}{l}{Total} & \multicolumn{1}{l}{Controls} & \multicolumn{1}{l}{Treated} & \multicolumn{1}{l}{Total} \\ 
\midrule
Unemployed            & 437                           & 440                          & 877                        & 270                           & 270                          & 540                        \\
Total                 & 585                           & 600                          & 1185                       & 585                           & 600                          & 1185                       \\
Percentage unemployed & 74.7                          & 73.3                         & 74.0                       & 46.2                          & 45.0                         & 45.6  \\
\bottomrule
\end{tabular}
\end{adjustbox}
\end{table}

\begin{table}[H]
\caption{Observed counts for NSW survey employed/unemployed }
\label{NSW missing obs groups}
\begin{adjustbox}{center=\textwidth}
\begin{tabular}{ccccc}
\toprule
\textbf{$S_{0}$} & \textbf{$S_1$} & \textbf{$D=0$} & \textbf{$D=1$} &  Total\\ \hline
0           & 0           & 216  & 212   & 428  \\
0           & 1           & 221   & 228   & 449   \\
1           & 0           & 54 & 58 & 112 \\
1           & 1           & 94 & 102  & 196 \\
\midrule
\multicolumn{2}{l}{Total} & 585 & 600 & 1185   \\ 
\bottomrule         
\end{tabular}
\end{adjustbox}
\end{table}


\begin{table}[H]
\caption{Covariate means and p-values from the test of equality of two means for the employed and unemployed samples in pre-treatment period }
\label{summary NSW unemployed base}
\begin{threeparttable}
\scriptsize
\begin{tabularx}{0.9\textwidth}{lcccccc}
\toprule
\multirow{2}{*}{Covariates}        & \multicolumn{3}{l}{Control}                  & \multicolumn{3}{l}{Treatment}                \\ 
\cmidrule{2-7}
& Unemployed & Employed &  $\small \mathbbm{P}[|T|>|t|]$
& Unemployed & Employed &  $\small \mathbbm{P}[|T|>|t|]$  \\ 
\midrule
Real earnings in 1975       &            & 3475.5180  &                         &            & 3231.1860  &                         \\
                            &            & (3170.9910)  &                         &            & (2729.1130)  &                         \\
Proportion Hispanic         & 0.1579       & 0.0405     & 0.0002                    & 0.1295       & 0.07500     & 0.0642                    \\
                            & (0.3651)       & (0.1979)     &                         & (0.3362)       & (0.2642)     &                         \\
Age, years                  &  33.9977      & 32.9865    & 0.1374                     & 34.3841      & 32.0625    & 0.0006                    \\
                            & (7.1953)       & (7.0014)     &                         & (7.4406)       &  (7.0199)     &                         \\
Proportion Black            & 0.7918       & 0.8919     & 0.0064                    & 0.8250       & 0.8688     & 0.2005                    \\
                            & (0.4065)       & (0.3116)     &                         & (0.3804)       & (0.3387)     &                         \\
Years of education          & 10.1373      & 10.6216    & 0.0121                    & 10.2364      & 10.5063    & 0.1201                    \\
                            & (2.1661)       & (1.5226)     &                         & (1.8944)       & (1.8326)     &                         \\
Proportion of high          & 0.7002       & 0.6351     & 0.1415                    & 0.7091       & 0.6750     & 0.4212                    \\
school dropouts             & (0.4587)       & (0.4830)     &                         & (0.4547)       & (0.4698)     &                         \\
Proportion married          & 0.0320       & 0.0608     & 0.1199                    & 0.0227       & 0.0188     & 0.7678                    \\
                            & (0.1763)       & (0.2398)     &                         & (0.1492)       & (0.1361)     &                         \\
Number of children   & 2.2220       & 2.2568     & 0.7845                    & 2.2636       & 1.9688     & 0.0134                    \\
    in 1975                        & (1.3233)       & (1.3758)     &                         & (1.3376)       & (1.1406)     &                         \\
\midrule    
Observations                                    & 437        & 148      & 585        & 440        & 160      & 600             \\
\bottomrule
\end{tabularx}
\begin{tablenotes}[flushleft]
    \item \footnotesize{Notes: Standard deviations are in parentheses. Reported p-values are from the equality test for two means between the observed and missing samples. Real earnings in 1975 are expressed in terms of 1982 dollars.}
\end{tablenotes}
\end{threeparttable}
\end{table}

We observe a significant difference in the share of Hispanics between the employed and unemployed samples in both the treated and control groups. Furthermore, unemployed controls differ significantly from employed controls in education and racial self-identification. At the same time, unemployed-treated individuals are older and have slightly more children living in the household than employed-treated individuals.


\begin{table}[H]
\caption{Covariate means and p-values from the test of equality of two means for the employed and unemployed samples  in post-treatment period }
\label{summary NSW unemployed follow up}
\begin{threeparttable}
\scriptsize
\begin{tabularx}{0.95\textwidth}{lcccccc}
\toprule
\multirow{2}{*}{Covariates}        & \multicolumn{3}{l}{Control}                  & \multicolumn{3}{l}{Treatment}                \\ \cmidrule{2-7} 
& Unemployed & Employed & \small $\mathbbm{P}[|T|>|t|]$
& Unemployed & Employed & \small $\mathbbm{P}[|T|>|t|]$  \\ 
\midrule
Real earnings in 1979       &            & 7119.2490  &                         &            & 8463.5030  &                         \\
                            &            & (4873.4320)  &                         &            & (4880.2950)  &                         \\
Proportion Hispanic         & 0.1333       & 0.1238     & 0.7318                    & 0.1259       & 0.1061     & 0.4488                    \\
                            & (0.3406)       & (0.3299)     &                         & (0.3324)       & (0.3084)     &                         \\
Age, years                  & 34.3963      & 33.1810    & 0.0404                    & 33.8037      & 33.7333    & 0.9078                    \\
                            & (7.2734)       & (7.0138)     &                         & (7.7399)       & (7.1153)     &                         \\
Proportion Black            & 0.8148       & 0.8190     & 0.8952                    & 0.8296       & 0.8424     & 0.6738                    \\
                            & (0.3892)       & (0.3856)     &                         & (0.3767)       & (0.3649)     &                         \\
Years of education          & 9.9852       & 10.4952    & 0.0024                    & 10.1148      & 10.4667    & 0.0225                    \\
                            & (2.1504)       & (1.8977)     &                         & (1.9486)       & (1.8102)     &                         \\
Proportion of high          & 0.7519       & 0.6254     & 0.0010                    & 0.7296       & 0.6758     & 0.1525                    \\
school dropouts             & (0.4327)       & (0.4848)     &                         & (0.4450)       & (0.4688)     &                         \\
Proportion married          & 0.0259       & 0.0508     & 0.1233                    & 0.0259       & 0.0182     & 0.5177                    \\
                            & (0.1592)       & (0.2199)     &                         & (0.1592)       & (0.1338)     &                         \\
Number of children          & 2.3593       & 2.1206     & 0.0311                    & 2.1889       & 2.1818     & 0.9470                    \\
in 1975                     & (1.4223)       & (1.2485)     &                         & (1.2867)       & (1.3013)     &                         \\
\midrule
Observations                                     & 270        & 315      & 585 & 270        & 330      & 600 \\                  
\bottomrule
\end{tabularx}
    \begin{tablenotes}[flushleft]
        \item \footnotesize Notes: Standard deviations are in parentheses. Reported p-values are from the test of equality for two means between the observed and missing samples. Real earnings in 1979 are expressed in terms of 1982 dollars.
    \end{tablenotes}
\end{threeparttable}
\end{table}

\subsection{Impact of Work From Home on Employee Performance: Sample Selection Based on Attrition}
\label{WFH}
\begin{table}[H]
\caption{Observed counts WFH experiment }
\label{WFH_counts}
\begin{adjustbox}{center=\textwidth}
\begin{tabular}{ccccc}
\toprule
$S_{0}$ & $S_1$ & $D=0$ & $D=1$ &  Total\\ 
\midrule
0           & 0           & 0  & 0   & 0  \\
0           & 1           & 0   & 0   & 0   \\
1           & 0           & 41 & 21 & 62 \\
1           & 1           & 77 & 110  & 187 \\
\midrule
\multicolumn{2}{l}{Total} & 118 & 131 & 249   \\ 
\bottomrule         
\end{tabular}
\end{adjustbox}
\end{table}
\setlength{\tabcolsep}{3pt}
\renewcommand{\arraystretch}{0.8}
{\scriptsize
\begin{table}[H]
\caption{Covariate means and p-values from the test of equality of two means for employees in the base period in the WFH application}
\label{summary WFH base}
\centering
\begin{threeparttable}
\begin{tabular}{lccc} 
\toprule
Covariates & Control & Treatment & $\mathbbm{P}[|T|>|t|]$ \\
\midrule
Prior performance z score       & -0.0401 & -0.0285 & 0.8746 \\
          & (0.5396) &(0.6187) &\\
Male        & 0.4661 & 0.4656 & 0.9943 \\
          & (0.5010) &(0.5007) &\\
Age        & 24.3475 & 24.4351 & 0.8460 \\
          & (3.5358) &(3.5672) &\\
Cost of commute        &  8.3378 & 7.8917  &  0.6144  \\
          & (5.5542) &(8.0313) &\\  
 Children        &  0.2373 & 0.1145  &   0.0104  \\
          & (0.4272) &(0.3196) &\\          
Married       &  0.3220 & 0.2214   &  0.0742    \\
          & (0.4692) &(0.4168) &\\  
Prior experience       &  16.7533 & 18.9618   &  0.5025     \\
          & (23.8181) &(27.6582) &\\ 
Tenure  &  28.2542  & 26.1374   &  0.4479       \\
          & (21.9382) &(21.9424) &\\ 
Age of Youngest Child  &  0.7119   & 0.5267   &  0.4507        \\
          & (1.9220) &(1.9388) &\\ 
Rent apartment  &  0.2034   & 0.2443    &  0.4424         \\
          & (0.4042) &(0.4313) &\\ 
Own bedroom  &   0.9915    & 0.9695    &  0.2169         \\
          & (0.0921 ) &(0.1727 ) &\\ 
Secondary technical school  &   0.4746    & 0.4580    &   0.7946        \\
          & (0.5019) &(0.5001) &\\ 
High school  &   0.1356    & 0.1756    &   0.3881        \\
          & (0.3438) &(0.38192) &\\ 
Tertiary,University  &   0.3559    & 0.3511   &    0.9374        \\
          & (0.4808) &(0.4792) &\\ 
Internet  &   1  & 0.9924  &     0.3436         \\
          & (0) &(0.0076) &\\ 
\midrule
Observations       &118 &131 &\\  
\bottomrule
\end{tabular} 
\begin{tablenotes}[flushleft]
    \item \footnotesize{Notes: Standard deviations are in parentheses. Reported p-values are from the equality test for two means between the controls and treated. }
\end{tablenotes}
\end{threeparttable}
\end{table}
}

\setlength{\tabcolsep}{3pt}
\renewcommand{\arraystretch}{0.8}
{\scriptsize
\begin{table}[H]
\caption{Covariate means and p-values from the test of equality of two means for employees who stayed and who left in the treatment period}
\label{summary WFH follow up}
\begin{threeparttable}
\small
\begin{tabular}{lcccccc}
\toprule
\multirow{2}{*}{Covariates}        & \multicolumn{3}{l}{Control}                  & \multicolumn{3}{l}{Treatment}                \\ \cmidrule{2-7}
& Left & Stayed & \small $\mathbbm{P}[|T|>|t|]$
& Left & Stayed & \small $\mathbbm{P}[|T|>|t|]$  \\ 
\midrule
Prior performance z score & -0.1619 &  0.0247 & 0.0736 & -0.1756 & -0.0004 & 0.2358 \\
                          & (0.5080) & (0.5479) &        &        & (0.5734) & (0.6254) \\
Male                      &  0.5122 &  0.4416 & 0.4682 &  0.5714 &  0.4455 & 0.2925 \\
                          & (0.5061) & (0.4998) &        & (0.5071) & (0.4993) &        \\
Age                       & 23.6342 & 24.7273 & 0.1101 & 22.9524 & 24.7182 & 0.0371 \\
                          & (3.2921) & (3.6224) &        & (3.2631) & (3.5662) &        \\
Cost of commute           &  8.9401 &  8.0171 & 0.3923 & 10.7619 &  7.3438 & 0.0738 \\
                          & (6.3002) & (5.1284) &        & (13.2360) & (6.5494) &        \\
Children                  &  0.2195 &  0.2468 & 0.7431 &  0.1429 &  0.1091 & 0.6591 \\
                          & (0.4191) & (0.4339) &        & (0.3586) & (0.3132) &        \\
Married                   &  0.2683 &  0.3506 & 0.3662 &  0.1905 &  0.2273 & 0.7123 \\
                          & (0.4486) & (0.4803) &        & (0.4024) & (0.4210) &        \\
Tenure                    & 20.7195 & 32.2662 & 0.0060 & 16.7143 & 27.9364 & 0.5200 \\
                          & (17.5841) & (23.0489) &      & (12.3537) & (22.9315) &        \\
Prior experience          & 17.9756 & 16.1025 & 0.6860 & 18.2381 & 19.1000 & 0.0312 \\
                          & (21.9608) & (24.8652) &      & (18.8677) & (29.1018) &        \\
Age of Youngest Child     &  0.6829 &  0.7273 & 0.9056 &  0.4286 &  0.5455 & 0.8013 \\
                          & (1.8363) & (1.9777) &        & (1.2071) & (2.0527) &        \\
Rent apartment            &  0.1707 &  0.2208 & 0.5242 &  0.3333 &  0.2273 & 0.3036 \\
                          & (0.3809) & (0.4175) &        & (0.4830) & (0.4210) &        \\
Secondary technical school&  0.5122 &  0.4545 & 0.5544 &  0.4762 &  0.4545 & 0.8566 \\
                          & (0.5061) & (0.5012) &        & (0.5118) & (0.5002) &        \\
Own bedroom               &  1.0000 &  0.9870 & 0.4679 &  1.0000 &  0.9636 & 0.9500 \\
                          & (0.0000) & (0.1140) &        & (0.0000) & (0.1881) &        \\
High school               &  0.0976 &  0.1558 & 0.3829 &  0.0476 &  0.2000 & 0.0940 \\
                          & (0.3004) & (0.3651) &        & (0.2182) & (0.4018) &        \\
Tertiary, University      &  0.3902 &  0.3377 & 0.5739 &  0.4286 &  0.3364 & 0.4212 \\
                          & (0.4939) & (0.4760) &        & (0.5071) & (0.4746) &        \\
Internet                  &  1.0000 &  1.0000 &   --   &  1.0000 &  0.9909 & 0.6639 \\
                          & (0.0000) & (0.0000) &        & (0.0000) & (0.0953) &        \\
\midrule                          
Observations              & 41 & 77 & 118 & 21 & 110 & 131 \\
\bottomrule
\end{tabular}
    \begin{tablenotes}[flushleft]
    \item \footnotesize Notes: Standard deviations are in parentheses. Reported p-values are from the test of equality for two means between the observed and missing samples. 
    \end{tablenotes}
\end{threeparttable}
\end{table} }

\end{document}